\documentclass[a4,10pt,twoside,openright,italian,english]{book}

\usepackage[twoside=true]{geometry}
\usepackage{graphicx}

\usepackage[cam,center,a4,pdflatex,axes]{crop}

\usepackage{makecell}

\usepackage{phdthesis}

\newglossary[not]{notation}{not}{ntn}{Notation}

\usepackage{adjustbox}
\usepackage{fancyhdr}
\usepackage{color}
\usepackage{array}
\usepackage{mdwtab}
\usepackage{amsmath,amssymb}
\usepackage{cite}
\usepackage{listings}
\usepackage{booktabs}
\usepackage{latexsym}
\usepackage{url}
\usepackage{bnf}
\usepackage{rotating}
\usepackage{multirow}
\usepackage{phdtitle}
\usepackage{paralist}
\usepackage{physics}
\newcommand{\INSTR}{\mathrm{INSTR}}
\usepackage[bookmarks=true,
bookmarksopen=true,hidelinks]{hyperref}

\usepackage{lscape}
\usepackage{algorithmic}
\usepackage{algorithm}
\usepackage{longtable}
\usepackage[T1]{fontenc} 
\usepackage[utf8]{inputenc}
\usepackage{subcaption}

\usepackage{bookmark}   
\usepackage[nottoc]{tocbibind} 

\usepackage{microtype}        
\usepackage{libertinus}       
\usepackage[libertine]{newtxmath} 

\usepackage{amsthm}
\theoremstyle{definition}
\newtheorem{definition}{Definition}[chapter]
\newtheorem{assumption}{Assumption} [chapter]
\newtheorem{remark}{Remark} [chapter]

\theoremstyle{plain}
\newtheorem{theorem}{Theorem}[chapter]
\newtheorem{lemma}{Lemma}[chapter]
\newtheorem{corollary}{Corollary}[chapter]
\newtheorem{proposition}{Proposition}[chapter]
\newtheorem{problem}{Problem}[chapter]

\usepackage{chngcntr}
\makeatletter
\counterwithout{footnote}{chapter}
\makeatother

\usepackage[english,italian]{babel}
\usepackage{ulem}
\hypersetup{pdftitle={Title}, pdfauthor={Author}}

\department {Dottorato di ricerca in Ingegneria dell'Informazione}

\author{Vinay Kumar} 

\title{Making Quantum Networks Work\\
\large Routing, Calibration, and Programmable Quantum Repeaters}

\tutor{Dr. Raffaele Bruno \\
Dr. Claudio Cicconetti \\
Dr. Andrea Passarella}

\titleimage{img/unipi.png} 
\phdyear{2026}
\phdmonth{January}
\phdcycle{Cycle XXXVIII}

\makenoidxglossaries
\newacronym{qkdLabel}{QKD}{Quantum Key Distribution}

\newacronym{eprLabel}{EPR}{Einstein--Podolsky--Rosen}
\newacronym{ghzLabel}{GHZ}{Greenberger--Horne--Zeilinger}
\newacronym{bbpsswLabel}{BBPSSW}{Bennett--Brassard--Popescu--Schumacher--Smolin--Wootters}

\newacronym{qloLabel}{QLO}{Quantum Link Orchestration}
\newacronym{isaLabel}{ISA}{Instruction Set Architecture}
\newacronym{kaLabel}{KA}{Knowledge Aware}
\newacronym{kspLabel}{KSP}{K-Shortest Path}
\newacronym{spLabel}{SP}{Shortest Path}

\newacronym{bpLabel}{BP}{Blocking Probability}
\newacronym{pmfLabel}{PMF}{Probability Mass Function}

\newacronym{nvLabel}{NV}{Nitrogen Vacancy}
\newacronym{mwLabel}{MW}{Microwave}
\newacronym{rfLabel}{RF}{Radio Frequency}

\newacronym{hqLavel}{HQ}{High Quality}

\newacronym{lqLavel}{LQ}{Low Quality}

\newacronym{sdLabel}{SD}{Source--Destination}
\newacronym{e2eLabel}{E2E}{End-to-End}

\newacronym{sdnLabel}{SDN}{Software Defined Networking}

\newacronym{lcuLabel}{LCU}{Linear Combination of Unitaries}
\makeatletter

\newglossaryentry{identity}{
  name={$I$},
  description={Identity matrix or identity operator}
}

\newglossaryentry{ket}{
  name={$|\psi\rangle$},
  description={Ket vector representing a quantum state in Hilbert space}
}

\newglossaryentry{bra}{
  name={$\langle\psi|$},
  description={Bra vector corresponding to the dual of a ket}
}

\newglossaryentry{hilbert}{
  name={$\mathcal{H}$},
  description={Hilbert space associated with a quantum system}
}

\newglossaryentry{fidelity}{
  name={$F$},
  description={Fidelity of a quantum state or entangled pair}
}

\newglossaryentry{fidelity_threshold}{
  name={$F_{\mathrm{th}} \ \text{or} \ \bar{F}$},
  description={Minimum required fidelity threshold}
}

\newglossaryentry{end_to_end_fidelity}{
  name={$F_{\mathrm{ete}}$},
  description={End-to-end fidelity of an entangled path}
}

\newglossaryentry{nv center throughput}{
  name={$R$},
  description={Per-electron round throughput}
}

\newglossaryentry{nsd}{
  name={$n_{\mathrm{sd}}$},
  description={Number of source-destination pairs}
}

\newglossaryentry{path_order}{
  name={$\theta$},
  description={Path establishment order for a set of requests}
}

\newglossaryentry{efficiency}{
  name={$\eta$},
  description={Efficiency of the qubit measurement in the basis states}
}

\newglossaryentry{efficiency_class}{
  name={$\eta_g$},
  description={Efficiency associated with repeater class $g$}
}

\newglossaryentry{activation_time}{
  name={$a_e$},
  description={Activation time allocated to link $e$}
}

\newglossaryentry{calibration_time}{
  name={$c_e \ \text{or} \ \tau_c$},
  description={Calibration time required for link $e$ (constanst for each link)}
}

\newglossaryentry{gamma}{
  name={$\Gamma_e$},
  description={Fidelity decay or drift parameter of link $e$}
}

\newglossaryentry{tf}{
  name={$T_f$},
  description={Fidelity decay time window}
}

\newglossaryentry{lagrange}{
  name={$\lambda$},
  description={Robustness parameter}
}

\newglossaryentry{instruction_vector}{
  name={$I(t)$},
  description={Instruction vector broadcast by the classical controller}
}

\newglossaryentry{reset_time}{
  name={$\tau_{\mathrm{reset}}$},
  description={Nuclear spin re-initialisation time}
}

\newglossaryentry{xi}{
  name={$\xi$},
  description={Fraction of high-quality nodes in the network}
}

\newglossaryentry{average_degree}{
  name={$\beta$},
  description={Average node degree parameter of the network topology}
}

\newglossaryentry{k_paths}{
  name={$k$},
  description={Number of candidate paths considered in $k$-shortest path routing}
}

\newglossaryentry{initial_fidelity}{
  name={$F_M^e$},
  description={Initial maximum fidelity achievable on link $e$ after calibration}
}

\newglossaryentry{link_budget}{
  name={$L_e$},
  description={Fidelity allocation in L-space for Link $e$}
}

\newglossaryentry{omega}{
  name={$\Omega_L$},
  description={Equal activation optimum for a path (from end to end constraint) in L-space}
}

\newglossaryentry{omega1}{
  name={$\Omega_{\pi}$},
  description={Optimal point for a specific path}
}
\makeatother
\makeatletter
\newcommand{\@reviewer}{Prof. Michele Amoretti \newline Prof. Alfredo Grieco}
\makeatother

\makeatletter
\renewcommand*{\glsnumberformat}[1]{} 
\makeatother

\begin{document}
\selectlanguage{english}

\maketitle

\pagestyle{empty}

\cleardoublepage
\newpage

\thispagestyle{empty}
    \null\vspace{\stretch {1}}
        \begin{flushright}
                \emph{This thesis is dedicated to my late grandfather.}
        \end{flushright}
\vspace{\stretch{2}}\null

\cleardoublepage
\newpage

\pagestyle{empty}

\thispagestyle{empty}
    \null\vspace{\stretch {1}}
        \begin{flushright}
    
    ``The history of the universe is, in effect, a huge and ongoing quantum computation. The universe is a quantum computer.'' \\ -- Seth Lloyd
        \end{flushright}
\vspace{\stretch{2}}\null

\cleardoublepage
\newpage

\pagestyle{empty}
\setcounter{page}{1}
\pagenumbering{Roman}

\tableofcontents
\chapter*{Acknowledgements}
\addcontentsline{toc}{chapter}{Acknowledgements}
\markboth{Acknowledgements}{Acknowledgements}

\lettrine{I} would like to express my sincere gratitude to all those who have contributed, directly and indirectly, to the completion of this doctoral thesis. First and foremost, I would like to thank my supervisors and research group at the Institute for Informatics and Telematics of the National Research Council (IIT-CNR), Dr. Claudio Cicconetti, Dr. Marco Conti, and Dr. Andrea Passarella. Their constant feedback, insights, and guidance played a fundamental role in shaping this work and, more importantly, in shaping me as a researcher. From the very beginning of my PhD, they fostered an environment of intellectual freedom and a stimulating research culture that encouraged curiosity, independence, and the confidence to think beyond immediate problems.

When I started my PhD, this was my first experience living away from my home country and moving to Europe. During this period, my supervisors became a strong reference point for me, both professionally and personally. Our regular Wednesday meetings quickly became something I looked forward to every week, to the point that Wednesday has since become my favourite day of the week. I am also grateful to them for handling logistical and administrative aspects smoothly throughout the PhD, allowing me to focus fully on research.

Beyond this collective guidance, I would like to thank Dr. Andrea Passarella and Dr. Marco Conti individually for their complementary roles throughout my doctoral journey. Dr. Andrea Passarella acted as a guiding reference within the group, frequently drawing on his broad experience in networking research to provide insightful perspectives, particularly valuable given my background in physics. His ability to identify subtle points and raise considerations that might otherwise have been overlooked significantly strengthened the depth and direction of my work. Dr. Marco Conti offered a steady and reassuring presence, often helping to validate ideas and discussions through his long-standing experience. His thoughtful oversight and calm guidance contributed to maintaining coherence and confidence across different stages of the research.

In particular, I would like to thank Dr. Claudio Cicconetti, who has been more of a research companion than a supervisor in the traditional sense. He has been a constant presence throughout my PhD, patiently engaging with both my well-formed questions and my less well-thought-out ones, and providing steady guidance at every stage. If I were to go back and choose again, I would still choose the same research group for my PhD without hesitation.

I gratefully acknowledge the institutional and financial support provided by the University of Pisa and IIT-CNR, which made this doctoral work possible.

Alongside my work at IIT-CNR, I had the opportunity to undertake a research visit at the Deutsche Telekom Chair of Communication Networks, Technische Universität Dresden, in Germany. I would like to thank Dr. Riccardo Bassoli for the invitation and for creating a calm, open, and intellectually generous environment throughout the visit. He gave me the freedom to explore ideas independently while always being available for discussion, and his thoughtful feedback and ability to highlight key conceptual points greatly enriched the scientific quality of the work carried out during this period.

I would also like to thank Dr. Viktor Krutianskii for the many insightful discussions and for generously sharing his experimental perspective. His expertise in quantum networks played an important role in grounding the calibration model developed in this work in experimentally realistic conditions. In particular, his guidance and the experimental data he provided were crucial for shaping the calibration aspect of this work, helping to bridge the gap between theoretical abstraction and experimentally relevant performance considerations.

I am grateful to the staff at the Deutsche Telekom Chair of Communication Networks, Technische Universität Dresden for their support and for handling the visit logistics smoothly. In particular, I would like to thank Leon R\"{o}scher for his exceptional help during my stay, from assisting with settling in to engaging in technical discussions and providing a genuinely hands-on research environment. His openness, practical support, and willingness to share insights, including direct access to and experience with the quantum computing infrastructure available in the group, made the visit both productive and intellectually stimulating.

I would also like to thank Vignesh Raman for bringing a unique collaborator-and-friend energy to our discussions. I greatly enjoyed our long and detailed technical conversations, which were both engaging and motivating, and often helped refine ideas through open and rigorous exchange. I would also like to thank Benedikt Baier for his collaboration and for the many insightful ideas shared during our discussions, often in connection with ongoing work with Vignesh. I am further grateful to Krishan Joshi for the many deep technical exchanges and parallel lines of work we pursued together, which, while not directly reflected in this thesis, were intellectually important. His experimental mindset and openness to engaging in theory-driven projects made these interactions particularly valuable. I am also grateful to Hilal Sultan Duranoglu Tunc for her enthusiasm and initiative in approaching me for collaboration. Her energy and openness to joint work are sincerely appreciated, and I hope that our collaboration will continue and develop further in the future.

I would also like to acknowledge the valuable exchanges and interactions that took place during the Quantum Internet Alliance (QIA) hackathon in Delft. In particular, I would like to thank Anna Beata Kalisz Hedegaard, Mateo Maximiliano Blanco Rodríguez, David Pérez Castro, and Arthur Witt for the stimulating discussions and idea exchanges during the event. I am especially grateful to Anna Beata Kalisz Hedegaard for the invitation to give a talk at Quantum Security Defence, which I greatly appreciated as an opportunity to share and discuss my work with a broader audience.

Finally, I would like to thank my close friends and my family for being there whenever I needed support and encouragement. Their presence has been invaluable throughout this journey.

\selectlanguage{english}
\selectlanguage{english}

\cleardoublepage
\newpage

\pagestyle{fancy}

\selectlanguage{english}
\chapter*{Summary}
\addcontentsline{toc}{chapter}{Summary}
\markboth{Summary}{Summary}

\lettrine{Q}{uantum} networks promise the distribution of entanglement across long distances, enabling applications such as secure communication, distributed quantum computation, and networked quantum sensing. While significant progress has been made in individual components of the quantum internet, including physical platforms, entanglement generation protocols, and theoretical performance bounds, the design of scalable quantum networks remains an open challenge. In particular, existing approaches often treat routing, link operation, and node design as largely independent problems, overlooking the coupling that arises when these aspects must operate together to enable a functioning quantum network or a global quantum internet.

This thesis addresses the operation of quantum networks under realistic physical and technological assumptions by adopting a cross-layer perspective. Rather than assuming full and static knowledge of network properties, the proposed framework embraces partial knowledge, time-dependent link behaviour, and programmable hardware as some of the fundamental characteristics of near-term and medium-scale quantum networks. The central thesis is that scalable quantum networking requires coordinated design across the network, link, and node layers, with explicit mechanisms to expose and manage their interactions.

The thesis begins by establishing the conceptual and architectural foundations of quantum networking across Chapter~\ref{ch:foundations} and Chapter~\ref{ch:applications_architecture}. After introducing the relevant quantum information primitives and communication mechanisms, the work situates quantum networks as engineered systems whose global behaviour emerges from the interaction of hardware, local control, and network-level coordination. This perspective motivates the need for new abstractions and optimisation strategies that go beyond classical networking models.

The first technical part of the thesis, spanning Chapter~\ref{ch:routing} and Chapter~\ref{ch:performance_evaluation}, focuses on routing in quantum networks under partial knowledge and heterogeneity. Routing problems are formulated for networks composed of repeaters with differing efficiency figures, showing how heterogeneity directly impacts blocking probability and fairness. Building on this, a grey-box routing paradigm is introduced in which path selection relies only on network topology and end-to-end performance estimates. Through analytical reasoning and extensive simulation, these routing strategies are shown to achieve low blocking probability while maintaining fidelity constraints, without requiring detailed or perfectly accurate network properties.

The thesis then addresses the impact of calibration and hardware drift on quantum link operation in Chapter~\ref{ch:calibration}. Motivated by experimental evidence of fidelity degradation during sustained operation, a calibration-aware model of quantum links is developed that explicitly distinguishes between activation and calibration phases. For linear quantum repeater chains, optimal calibration schedules are derived analytically under fidelity constraints. The problem is subsequently extended to general network topologies with shared links, where a greedy quantum link orchestration strategy is proposed to coordinate activation across competing paths. These results demonstrate how calibration policies dynamically reshape the effective network topology and emerge as a central limiting factor for throughput and reliability.

The final technical part of the thesis in Chapter~\ref{ch:programmable_repeaters} bridges network-level protocols and hardware implementation by introducing an instruction set architecture for programmable quantum repeater nodes based on nitrogen-vacancy centers in diamond. A controller-driven abstraction is proposed in which local quantum operations are selected via a nuclear-spin program register, enabling both deterministic and coherent programmability. This model provides a concrete interface between physical operations and network-level decisions, and enables diagnostics and calibration techniques that have no classical analogue.

The thesis concludes by synthesising insights across routing, calibration, and hardware programmability in Chapter~\ref{ch:crosslayer}, highlighting their mutual dependencies and design trade-offs. These contributions provide a unified framework for designing and operating quantum networks as software-defined, calibration-aware systems. The results suggest that future quantum networks should prioritise partial knowledge, explicit management of calibration and drift, and programmable node architectures, providing a foundation for both experimental testbeds and scalable quantum internet deployments.

\selectlanguage{english}


\selectlanguage{italian}
\chapter*{Sommario}
\addcontentsline{toc}{chapter}{Sommario}
\markboth{Sommario}{Sommario}

\lettrine{L}{e} reti quantistiche (``quantum networks'') promettono la distribuzione di entanglement su lunghe distanze, abilitando applicazioni quali la comunicazione sicura, il calcolo quantistico distribuito e il quantum sensing in rete. Sebbene siano stati compiuti progressi significativi sui singoli componenti dell’internet quantistico, inclusi le piattaforme fisiche, i protocolli di generazione di entanglement e i limiti teorici di prestazione, la progettazione di reti quantistiche scalabili rimane una sfida aperta. In particolare, gli approcci esistenti tendono a trattare il routing, l’operazione dei collegamenti e la progettazione dei nodi come problemi in larga misura indipendenti, trascurando il forte accoppiamento che emerge quando tali aspetti devono operare congiuntamente per consentire il funzionamento di una rete quantistica o di un internet quantistico globale.

Questa tesi affronta il problema dell’operazione delle reti quantistiche sotto assunzioni fisiche e tecnologiche realistiche adottando una prospettiva cross layer. Anziché assumere una conoscenza completa e statica delle proprietà della rete, il quadro proposto abbraccia la conoscenza parziale, il comportamento temporale variabile dei collegamenti e l’hardware programmabile come caratteristiche fondamentali delle reti quantistiche di breve e medio termine. La tesi centrale è che la scalabilità delle reti quantistiche richiede una progettazione coordinata tra i livelli di rete, collegamento e nodo, con meccanismi espliciti in grado di esporre e gestire le loro interazioni.

La tesi inizia stabilendo le basi concettuali e architetturali delle reti quantistiche nei Capitoli~\ref{ch:foundations} e~\ref{ch:applications_architecture}. Dopo aver introdotto i principali primitivi dell’informazione quantistica e i meccanismi di comunicazione, il lavoro inquadra le reti quantistiche come sistemi ingegnerizzati il cui comportamento globale emerge dall’interazione tra hardware, controllo locale e coordinamento a livello di rete. Questa prospettiva motiva la necessità di nuove astrazioni e strategie di ottimizzazione che vadano oltre i modelli del networking classico.

La prima parte tecnica della tesi, che comprende i Capitoli~\ref{ch:routing} e~\ref{ch:performance_evaluation}, si concentra sul routing nelle reti quantistiche in presenza di conoscenza parziale ed eterogeneità. I problemi di routing sono formulati per reti composte da ripetitori con differenti figure di efficienza, mostrando come l’eterogeneità influisca direttamente sulla probabilità di blocco e sull’equità. Su queste basi viene introdotto un paradigma di routing grey box, in cui la selezione dei cammini si basa esclusivamente sulla topologia di rete e su stime delle prestazioni end to end. Attraverso analisi teoriche e simulazioni estensive, tali strategie di routing dimostrano di ottenere una bassa probabilità di blocco mantenendo i vincoli di fedeltà, senza richiedere una conoscenza dettagliata o perfettamente accurata delle proprietà della rete.

La tesi affronta quindi l’impatto della calibrazione e della deriva hardware sull’operazione dei collegamenti quantistici nel Capitolo~\ref{ch:calibration}. Motivato da evidenze sperimentali di degradazione della fedeltà durante l’operazione prolungata, viene sviluppato un modello di collegamento quantistico calibration aware che distingue esplicitamente tra fasi di attivazione e fasi di calibrazione. Per catene lineari di ripetitori quantistici, vengono derivate analiticamente politiche di calibrazione ottimali sotto vincoli di fedeltà. Il problema viene successivamente esteso a topologie di rete generali con collegamenti condivisi, dove viene proposta una strategia greedy di orchestrazione dei collegamenti quantistici per coordinare l’attivazione tra cammini concorrenti. Questi risultati mostrano come le politiche di calibrazione rimodellino dinamicamente la topologia effettiva della rete ed emergano come un fattore limitante centrale per throughput e affidabilità.

L’ultima parte tecnica della tesi, presentata nel Capitolo~\ref{ch:programmable_repeaters}, collega i protocolli a livello di rete con l’implementazione hardware introducendo un’architettura a insieme di istruzioni per nodi ripetitori quantistici programmabili basati su centri di azoto vacanza nel diamante. Viene proposta un’astrazione guidata da un controller, in cui le operazioni quantistiche locali sono selezionate tramite un registro di programma a spin nucleari, consentendo sia la programmabilità deterministica sia quella coerente. Questo modello fornisce un’interfaccia concreta tra le operazioni fisiche e le decisioni a livello di rete, e abilita tecniche di diagnostica e calibrazione prive di analoghi classici.

La tesi si conclude sintetizzando, nel Capitolo~\ref{ch:crosslayer}, gli approfondimenti ottenuti su routing, calibrazione e programmabilità hardware, mettendone in evidenza le dipendenze reciproche e i compromessi di progettazione. I contributi presentati forniscono un quadro unificato per la progettazione e l’operazione delle reti quantistiche come sistemi software defined e calibration aware. I risultati suggeriscono che le future reti quantistiche dovrebbero dare priorità alla conoscenza parziale, alla gestione esplicita della calibrazione e della deriva, e ad architetture di nodo programmabili, fornendo una base solida sia per testbed sperimentali sia per implementazioni scalabili dell’internet quantistico.

\selectlanguage{english}


\selectlanguage{english}
\chapter*{List of publications}
\addcontentsline{toc}{chapter}{List of Publications}
\markboth{List of Publications}{List of Publications}

\section*{International Journals}
\begin{enumerate}
    \item Kumar, V., Cicconetti, C., Conti, M., \& Passarella, A. (2026). Optimal calibration of quantum network links. (In preparation).

    \item Kumar, V., Cicconetti, C., Conti, M., \& Passarella, A. (2025). Quantum internet: Technologies, protocols, and research challenges. International Journal of Networked and Distributed Computing, 13(2), 22.
            
    \item Kumar, V., Cicconetti, C., Conti, M., \& Passarella, A. (2025). Routing in quantum networks with end‐to‐end knowledge. IET Quantum Communication, 6(1), e70000.

\end{enumerate}

\section*{International Conferences with Peer Review}
\begin{enumerate}
    \item Kumar, V., Cicconetti, C., Bassoli, R., Conti, M., \& Passarella, A. (2026). Instruction-Set Architecture for Programmable NV-Center Quantum Repeater Nodes. In Proceedings of the IEEE QCNC 2026 Workshops (WQNC). IEEE. (Accepted). arXiv:2602.14995.
    
    \item Kumar, V., Cicconetti, C., Conti, M., \& Passarella, A. (2024, October). Routing in Quantum Repeater Networks with Mixed Efficiency Figures. In 2024 IEEE Future Networks World Forum (FNWF) (pp. 198-203). IEEE.

\end{enumerate}
\selectlanguage{english}




\glsaddall

\printnoidxglossary[type=\acronymtype,title=List of Abbreviations]
\let\cleardoublepage\clearpage
\printnoidxglossary[type=notation,sort=def,title=List of Notations]
\cleardoublepage

\newpage

\listoffigures
\cleardoublepage

\listoftables
\cleardoublepage
\newpage

\setcounter{page}{1}
\pagenumbering{arabic}

\cleardoublepage
\chapter{Introduction}


\section{Vision: making quantum networks work}
The quantum internet\footnote{
In current literature, ``quantum internet" and ``quantum networks" are often used interchangeably due to the nascent state of the field. In this work, however, when we use ``quantum internet" we refer to its broader vision, impact, and potential applications, whereas ``quantum networks" denotes the specific protocols and operational mechanisms. It is important to note that today, only the concept of a quantum internet exists, and almost all studies refer to quantum networks.} has garnered increasing attention as quantum computing technology starts emerging in the market. The quantum internet aspires to extend quantum information processing beyond isolated devices, enabling the distribution and manipulation of quantum states across geographically separated nodes by leveraging fundamental quantum phenomena. The concept of integrating the quantum internet alongside the classical internet has resonated with the research community, driving efforts to explore ways to capitalise on the principles of quantum mechanics. This interest has sparked research into the architecture of quantum networks, the identification of potential applications and use cases, efficient entanglement distribution\footnote{There are two methods for distributing entanglement. The first involves directly sending one of the qubits of an entangled pair to the target location. The second method utilises entanglement swapping to distribute entangled pairs to two endpoints. Directly sending entangled pairs to the target is generally not advisable. Therefore, in this work, when we refer to ``distribution," we specifically mean the use of the entanglement swapping procedure to distribute the entanglement.} within quantum networks. Such a network would support applications ranging from provably secure communication and distributed quantum computing to network assisted sensing and coordinated quantum control. Over the past two decades, foundational work has established the physical feasibility of entanglement distribution and quantum repeaters, and experimental demonstrations have progressed from laboratory scale links to metropolitan and inter city testbeds \cite{kimble2008quantum, wehner2018quantum, kumar2025quantum}.

While physical quantum computing devices are essential for the global quantum internet, some early applications do not require fault-tolerant quantum devices and are simpler to implement. Significant progress has been made in the early stages of quantum internet, particularly in the development of quantum key distribution (QKD)-based networks, both terrestrial and satellite-based. Steady advancements are evident in the enhancement and deployment of QKD-based networks. At the same time, later stages of the quantum internet, particularly those that rely on quantum entanglement, have also seen progress in various areas. Notably, one of the critical aspects that has become a focus of considerable research in recent years is end-to-end entanglement distribution. Yet despite this progress, the quantum internet today remains largely a conceptual construct. The limiting factor is no longer the absence of protocols, but rather the absence of operational frameworks that can translate physical capabilities into scalable network behavior. Unlike classical networks, quantum networks operate under fundamental constraints imposed by quantum mechanics, including probabilistic operations, decoherence, and the no cloning theorem. These constraints invalidate many assumptions that underpin classical networking abstractions and require fundamentally different design principles.

This thesis is driven by a \emph{central premise}: 

\textit{``Quantum networks will only scale if routing, resource allocation, and hardware operation are designed in a coordinated manner under realistic physical constraints."}

Rather than treating hardware imperfections as secondary effects, this work places them at the center of network design.

\section{Quantum networks as engineered systems}
A quantum network is not merely a collection of entangled links, but an \emph{engineered system} whose global behaviour emerges from the interaction of physical hardware, local control protocols, and network-wide coordination. Entanglement distribution across a multi-hop network requires the successful execution of entanglement generation, storage, swapping, and possibly purification, each of which is probabilistic and sensitive to noise \cite{briegel1998quantum, dur1999quantum}.

From a systems perspective, certain fundamental properties distinguish quantum networks from classical ones. First, operations are probabilistic. Entanglement generation
succeeds with a probability that depends on channel loss, detector efficiency, and
timing synchronisation. As a result, network resources cannot be deterministically reserved, and routing decisions must be made under uncertainty.

Second, quantum states degrade over time. Quantum memories are imperfect, and stored entanglement loses fidelity due to decoherence and repeated operations. This introduces a direct coupling between time, resource usage, and performance.

Third, hardware is heterogeneous and dynamic. Quantum repeater nodes differ in efficiency, memory lifetime, and gate fidelity. Even within a single node, performance drifts over time due to calibration errors and environmental fluctuations.

Fourth, classical and quantum control are inseparable. Measurement outcomes must be communicated reliably and promptly, and delays in the classical control plane directly impact quantum performance.

These properties imply that abstractions separating network protocols from hardware behaviour, which were highly successful in classical networking, cannot be adopted wholesale in the quantum domain.

\section{Limitations of existing quantum networking approaches}

A substantial body of work has explored routing \cite{van2013path, caleffi2017optimal, pant2019routing, li2021effective, shi2020concurrent, zhang2021fragmentation, zhao2021redundant} and resource allocation \cite{cicconetti2022resource, wang2025quantum, kaewpuang2023entangled, gauthier2025demand} in quantum networks, often adapting classical shortest-path or flow-based formulations to entanglement distribution. While these approaches provide valuable insights, many rely on assumptions that are unlikely to hold in near-term deployments.

Common assumptions include homogeneous repeater capabilities, static and perfectly calibrated links, and full knowledge of link-level performance across the network. In practice, however, quantum networks are expected to be heterogeneous, partially observable, and continuously evolving due to hardware drift and environmental fluctuations.

In our work on routing with mixed efficiency figures, it is shown that ignoring heterogeneity can lead to increased blocking probability\footnote{Blocking probability is a routing metric used in the literature and can be interpreted as the inverse of achievable throughput.} and unfair resource allocation. At the same time, even partial knowledge of node quality is shown to significantly improve end-to-end performance and reduce path blocking, particularly in scenarios with incremental hardware upgrades \cite{kumar2024routing,kumar2025routing}.

Similarly, many routing proposals assume access to accurate and stable link-level parameters, such as per-link fidelity or success probability. In realistic settings, however, such information may be unavailable, noisy, or outdated. In contrast, our end-to-end knowledge routing framework demonstrates that robust routing decisions can be made using only network topology and aggregate end-to-end performance estimates, thereby avoiding the fragility of white-box models that depend on detailed physical parameters \cite{kumar2025routing}.

At the link layer, calibration is typically treated as a static preprocessing step, and quantum links are assumed to be continuously available for entanglement generation. Experimental evidence, however, indicates that sustained operation leads to fidelity degradation over time, requiring periodic recalibration \cite{krutyanskiy2023entanglement, peranic2023study, dowling2023non, zhou2025efficient}. Neglecting this dynamic behaviour results in overly optimistic throughput estimates and schedules that are infeasible in practice. In our calibration-aware modelling of quantum links, we explicitly capture the trade-off between activation time and calibration overhead, and show that optimal operation requires jointly optimising link usage and recalibration under fidelity constraints \cite{kumar2026calibration}.

Finally, most existing quantum networking architectures treat repeater nodes as fixed-function devices executing predetermined protocols \cite{pant2019routing, chakraborty2019distributed}. This rigid view limits adaptability and hinders the ability of the network to respond to changing conditions, application requirements, or control-plane decisions. To address this limitation, we introduce an instruction-set architecture for programmable quantum repeater nodes, in which local quantum operations are exposed through a controller-driven abstraction. This approach enables both deterministic and coherent programmability at the node level, bridging hardware capabilities and network-level protocols in a manner analogous to programmability in classical networks \cite{kumar2026nvcenter}.

Collectively, these limitations highlight the need for a \emph{cross-layer approach} to quantum networking, in which routing, link operation, and node architecture are designed with explicit awareness of realistic physical and operational constraints.

\section{Scope and objectives}\label{sec:scope_objective}
The objective of this thesis is to develop a cross-layer \emph{design approach} for making quantum networks operational under realistic physical and technological constraints.

The focus is on bipartite entanglement distribution over quantum repeater networks, which constitutes the foundational service for most envisioned quantum internet applications. Rather than proposing a single protocol, the thesis develops models, algorithms, and abstractions that together enable scalable operation.

The scope spans three tightly interconnected layers:
\begin{enumerate}
    \item Network layer routing, under uncertainty, heterogeneity, and limited knowledge.
    \item Link layer operation, accounting for calibration overhead and fidelity drift.
    \item Node level programmability, bridging physical operations and network control.
\end{enumerate}

This holistic scope is deliberate: isolating any one layer leads to designs that could fail when deployed in realistic settings.

\section{Research questions}\label{sec:research_questions}
This thesis addresses the following research questions:

\begin{itemize}
    \item How can routing in quantum networks remain effective when detailed link-level information is unavailable or unreliable?
    \item How does heterogeneity in repeater efficiency affect fidelity, throughput, and fairness, and how can routing exploit partial upgrades?
    \item How should quantum links be operated and calibrated over time to maximise throughput under fidelity constraints?
    \item How can quantum repeater nodes be made programmable in a way that exposes physical capabilities to network controllers?
\end{itemize}

\section{Contributions}
The main contributions of this thesis are:

\begin{itemize}

\item A routing framework for quantum networks with heterogeneous repeater efficiencies, demonstrating performance gains from partial knowledge and incremental upgrades \cite{kumar2024routing, kumar2025routing}.

\item A grey box routing paradigm based on end-to-end performance estimates, shown to be robust to estimation errors while maintaining fairness \cite{kumar2025routing}.

\item A calibration-aware model of quantum link operation motivated by experimental observations of fidelity drift, enabling realistic throughput analysis \cite{kumar2026calibration}.

\item Analytical solutions for optimal calibration scheduling in linear quantum repeater chains under both initial and end-to-end fidelity constraints, formalised through the Quantum Link Orchestration theorem \cite{kumar2026calibration}.

\item A global optimisation framework for shared link orchestration, proposed as greedy quantum link orchestration heuristic while using minimum selection rule as benchmark \cite{kumar2026calibration}.

\item An instruction set architecture for programmable quantum repeater nodes based on NV centers, introducing deterministic and coherent programmability and linking hardware-level operations to network protocols \cite{kumar2026nvcenter}.
\end{itemize}

\section{Broader impact and outlook}
Beyond its immediate technical contributions, this work advances a \emph{design philosophy} for quantum networks that emphasises operability, adaptability, and realism. By treating hardware constraints as first-class design parameters rather than nuisances, the proposed design approach aligns closely with current experimental capabilities while remaining scalable toward future deployments.

The ideas developed here are directly relevant to emerging testbeds and provide a foundation for future work on multi-partite entanglement, fault-tolerant networking, and quantum native control planes.

\section{Organisation of the thesis}
The remainder of the thesis is organised as follows.

Chapter~\ref{ch:foundations} introduces the foundations of quantum networking and the key physical and protocol-level concepts required for the rest of the thesis.

Chapter~\ref{ch:applications_architecture} discusses the applications and architecture of the quantum internet, and the problem of entanglement distribution leading to the role of routing, forwarding, and scheduling in quantum networks.

Chapters~\ref{ch:routing} and \ref{ch:performance_evaluation} focus on routing in quantum networks, including heterogeneous repeater models, grey box routing, and extensive performance evaluation.

Chapter~\ref{ch:calibration} develops calibration-aware models for quantum link operation and presents optimal orchestration strategies for both linear and general network topologies.

Chapter~\ref{ch:programmable_repeaters} introduces programmable quantum repeater nodes and presents an instruction set architecture that connects hardware level control to network protocols.

Chapter~\ref{ch:crosslayer} discusses cross-layer insights and design principles derived from the preceding chapters.

Chapter~\ref{ch:conclusion} concludes the thesis and outlines open research directions toward large-scale programmable quantum networks. 
\chapter{Foundations o\label{ch:foundations}f Quantum Networking}
The quantum internet is fundamentally different from the classical internet due to the properties acquired from quantum mechanical phenomena. These acquired counterintuitive properties in the classical world provide the backbone of the quantum internet. In this chapter, we explain the basic concepts relevant to the quantum internet as quantum information preliminaries, quantum communication primitives, followed by a component-by-component explanation of the quantum internet.

\section{Quantum information preliminaries}\label{sec:prelim}
  \subsection{Qubits and quantum gates}
  \textbf{Qubit:} A qubit\footnote{While a qubit is a two-level quantum system, a generalised version with
n-levels is known as a \textit{qudit}. Despite having interesting properties that could be exploited for enhanced systems or applications, qudit technologies lag significantly behind their equivalent, that is, qubits \cite{wang2020qudits}.} is fundamental to quantum computing, similar to a bit in classical computing. While a classical bit on a normal computer can be in a state of 0 or 1, the state of a quantum bit is defined by
    \begin{equation}\label{eq:sample_psi}
        |\psi \rangle = \alpha |0 \rangle + \beta |1 \rangle.
    \end{equation}
    The state $|\psi \rangle $ is said to be in a linear combination of states or a \textit{superposition state} which collapses to 0 or 1 upon \textit{measurement} with a probability of $|\alpha|^2$ and $|\beta|^2$ respectively depending upon the \textit{probability amplitudes} that is, $\alpha$ and $\beta$, which are complex coefficients. Hence by definition $| \alpha |^2 + |\beta|^2 = 1$. 
    
The notation $|\psi \rangle$ used here is a standard for representing quantum mechanical states called \textit{bra-ket} notation introduced by Paul Dirac. Here the represented $|\psi \rangle$ is a \textit{ket} 
which is a vector in a complex vector space called \textit{Hilbert space} $\mathcal{H}$. A Hilbert space is a linear vector space with three additional properties that is, strictly positive scalar product, separability, and completeness \cite{zettili2009quantum}. For every ket $| \psi \rangle $, there exists a unique \textit{bra} $\langle \psi |$, which belongs to the corresponding \textit{dual-Hilbert space} $\mathcal{H}_d$. While the Eq.~\eqref{eq:sample_psi} above describes a single qubit system, multi-qubit systems can be represented by extending this notation. In general, an \( n \)-qubit system is described by
\begin{equation}
|\psi \rangle = \sum_{i=0}^{2^n-1} c_i\, |i\rangle,
\end{equation}
where \( |i\rangle \) represents the computational basis states corresponding to the binary representation of \( i \), and \( c_i \in \mathbb{C} \) are the complex probability amplitudes satisfying the normalisation condition
\begin{equation}
\sum_{i=0}^{2^n-1} |c_i|^2 = 1.
\end{equation}
For example, a two-qubit system is expressed as
\begin{equation}
    |\psi \rangle = \alpha |00 \rangle + \beta |01 \rangle + \gamma |10 \rangle + \delta |11 \rangle,
\end{equation} 
where $|\alpha|^2+|\beta|^2+|\gamma|^2+|\delta|^2=1$.

\vspace*{0.5cm}

\textbf{Quantum gates:} A quantum gate is a fundamental building block of quantum circuits that can manipulate a quantum state or qubit. Quantum gates, for example, $X$, Hadamard, Toffoli, $CZ$, etc., can be considered analogous to logic gates like AND, OR, XOR, etc, in classical computing. In quantum mechanical terms, a quantum gate is an \emph{operator} applied to a ket $|\psi \rangle = \alpha |0 \rangle + \beta |1 \rangle$ that changes it to a ket $|\psi^{'} \rangle = \alpha^{'} |0 \rangle + \beta^{'} |1 \rangle$. The quantum gates can be represented by square matrices which act linearly on the quantum states \cite{nielsen2010quantum}. For example, the quantum NOT gate, that is, $X$, is expressed as follows,
    \begin{equation}\label{eq:x_gate}
     X = \begin{pmatrix}
     0 & 1 \\
     1 & 0 
     \end{pmatrix},
    \end{equation}
and it has the same effect as its classical counterpart: the classical NOT gate flips a $0$ into a $1$ (and a $1$ into a $0$), whereas the quantum X gate flips a $|0\rangle$ into a $|1\rangle$ (and a $|1\rangle$ into a $|0\rangle$).
For a matrix $U$ to be a quantum gate, it must be unitary; that is, it should satisfy the condition $U^{\dagger}U = I$. Here, the notation $U^{\dagger}$ is called the \textit{adjoint}, that is, the transpose conjugate of the matrix $U$. This is necessary to maintain the sum of probabilities to be one after the operation of the quantum gate, that is, $| \alpha^{'} |^2 + |\beta^{'}|^2 = 1$.
    
A quantum gate can be categorised as an $n$-qubit gate where $n$ is the number of qubits the gate can operate on. The $X$ gate represented in Eq.~\eqref{eq:x_gate} is a 1-qubit gate. An example of a 2-qubit gate is $CZ$, which is represented as follows,
    \begin{equation}\label{eq:cz_gate}
        CZ = \begin{pmatrix}
         1 & 0 & 0 & 0 \\
         0 & 1 & 0 & 0 \\
         0 & 0 & 1 & 0 \\
         0 & 0 & 0 & -1
        \end{pmatrix}.
    \end{equation}
    In general, an $n$-qubit quantum gate corresponds to a matrix of dimension of $2^n$.

\subsection{Entanglement and Bell states}
Entanglement is an important quantum phenomenon that is core to the quantum internet. Till now, we have been introducing concepts with an analogy to their classical counterparts. But from now on, that approach would not be possible as the following concepts do not have a classical counterpart. Entanglement refers to a system prepared so that there are correlations among quantum bits or states independent of distance, famously dubbed by Einstein as \textit{spooky action at a distance}. In principle, entanglement can be established between any number of qubits known as Greenberger–Horne–Zeilinger (\textit{GHZ-state}) \cite{greenberger1989going, greenberger1990bell} and \textit{W-state} \cite{dur2000three} or even in multiple degrees of freedoms known as hyperentanglement\footnote{Due to the complexity of managing multiple degrees of freedom, the use of hyperentanglement for entanglement distribution within the quantum internet is rarely addressed. However, protocols involving teleportation, swapping, and purification—which could leverage hyperentanglement—are well-documented in the literature and remain an active area of research. For further reading on this topic, a relevant review article is available at \cite{deng2017quantum}.} \cite{kwiat1997hyper} as shown in Fig.~\ref{fig:entanglement_type}.
But for the sake of the introduction of the concept, let's consider the case of a two-qubit system with a single degree of freedom.
Throughout this work, our discussion will concentrate on the workings of the quantum internet using bipartite entanglement, which has been a primary focus in the literature. In particular, the quantum internet is concerned with the distribution of a special two-qubit entangled system, named \textit{EPR pair} \cite{einstein1935can} or \textit{Bell state} \cite{bell1964einstein}. An example of a Bell state is as follows,

\begin{figure}
\centering
\includegraphics[width=\columnwidth]{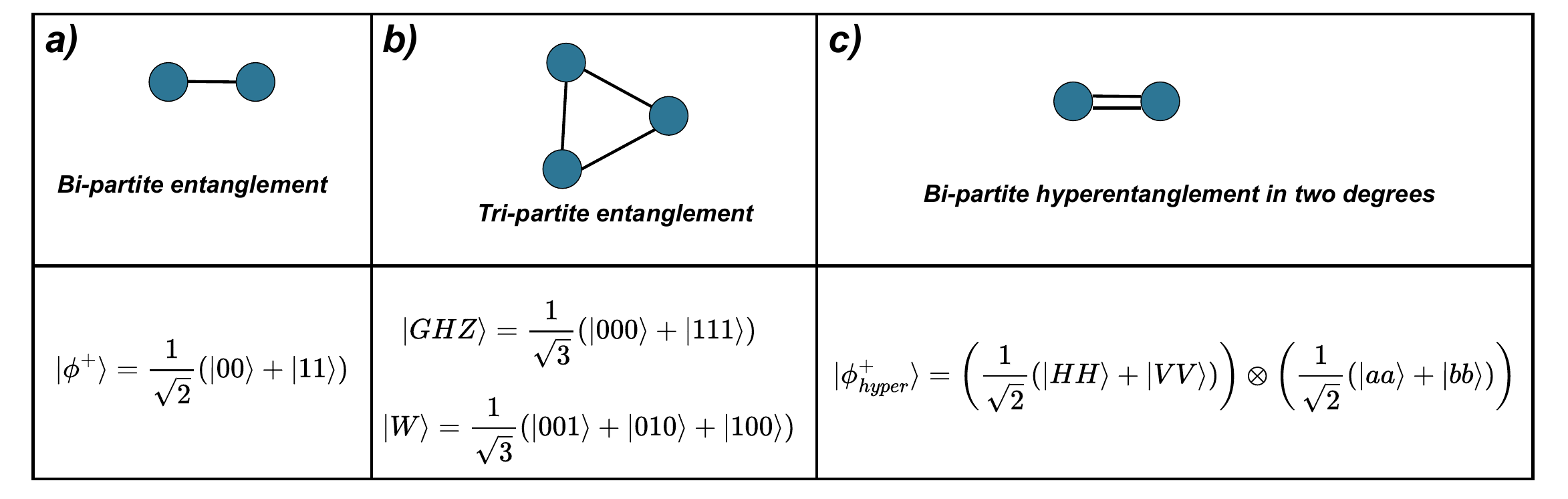}
\caption{Examples of different types of entanglement. \textbf{a)} a bipartite entangled state depicted by the Bell state; \textbf{b)} two forms of tripartite entanglement, one following the GHZ state and the other the W state; \textbf{c)} a bipartite hyperentanglement in two degrees of freedom, combining polarisation and spatial modes (taken from \cite{kumar2025quantum})}
    \label{fig:entanglement_type}
\end{figure}

    \begin{equation}\label{eq:bell_state}
        | \phi^+ \rangle = \frac{1}{\sqrt{2}} \left( |00 \rangle + |11 \rangle \right).
    \end{equation}
Upon measurement of the two-qubit state $|\phi^+ \rangle$, the whole system collapses to $|00 \rangle$ or $|11 \rangle$ with a 50\% probability. Independent of the state the system collapses to, the two-qubit measurement result is correlated: it is impossible to know beforehand the result of the measurement, but if the measurement corresponding to the first qubit is $0$, then the second qubit measurement will be $0$, and the same happens if the first qubit is measured as a $1$, in which case the second qubit will also be measured as $1$.
    
The other Bell states are as follows,
        \begin{equation}\label{eq:other_bell_states}
        \begin{split}
            | \phi^- \rangle = \frac{1}{\sqrt{2}} \left( |00 \rangle - |11 \rangle \right) \\
            | \psi^+ \rangle = \frac{1}{\sqrt{2}} \left( |01 \rangle + |10 \rangle \right) \\
            | \psi^- \rangle = \frac{1}{\sqrt{2}} \left( |01 \rangle - |10 \rangle \right).
        \end{split}
    \end{equation}

The \( + \) and \( - \) symbols in the Bell states indicate the relative phase between the two basis states that make up each Bell state. Bell states are important mainly for two reasons.
First, they are maximally entangled, which means informally that it is not possible to prepare any quantum state where the entanglement between two qubits is stronger.
The formal definition of this property would require introducing additional and unnecessary notation.
Therefore, we suggest interested readers seek further information in textbooks, such as~\cite{zettili2009quantum, nielsen2010quantum}.
Second, it is possible to transform a Bell state into any arbitrary state of choice via local operations only, that is, the application of quantum gates.
In summary, while one could distribute arbitrary quantum states in the quantum internet, doing so with Bell pairs only is an efficient and practical alternative, which is universally accepted by the research community.

  \subsection{Decoherence, Fidelity and No-cloning theorem}
\textbf{Decoherence:} In quantum information, \emph{decoherence} refers to the gradual loss of ``quantum-ness" of a quantum state, usually caused by its interaction with the environment. This disruption occurs because the quantum system loses information
to its surroundings. Over time or due to specific interactions, such as qubit measurements or quantum gates, this loss of coherence leads the quantum state to behave more like a classical system, losing the features that made it quantum \cite{schlosshauer2019quantum}.

For the quantum internet, we can think of decoherence as the natural degradation of quantum information as it travels or undergoes operations. This loss of coherence means that information can become noisy or unusable, and it places limits on how long and how far we can reliably use a quantum state for communication or computation. Addressing and minimising decoherence is essential to maintaining the quality of information in the quantum internet, as even small disturbances from the environment can lead to a breakdown in the system’s ability to preserve entanglement and other quantum correlations.


\textbf{Fidelity:} It is a metric which quantifies the closeness of two quantum states. Let $|\psi_i \rangle$ and $|\psi_j \rangle$ be two pure quantum states\footnote{A \emph{pure quantum state} is a state that can be described by a single state vector in the Hilbert space, representing a system with complete information and no statistical mixture. In contrast, a \emph{mixed state} must be described by a density matrix, representing a statistical ensemble of pure states.} then the fidelity $F_{ij}$ of these states is given by,
    \begin{equation}\label{eq:fidelity_bra_ket}
        F_{ij} = | \langle \psi_i | \psi_j \rangle|^2,
    \end{equation}
where $\langle\psi_i|\psi_j\rangle$ is the inner product between the states. Note that Eq.~\eqref{eq:fidelity_bra_ket} applies specifically to pure quantum states. For general quantum states (including mixed states), fidelity is defined using density matrices as $F(\rho, \sigma) = \left(\text{Tr}\sqrt{\sqrt{\rho}\sigma\sqrt{\rho}}\right)^2$.
 
The fidelity always satisfies $0 \leq F \leq 1$ regardless of whether the states are pure or mixed, where $F = 0$ indicates orthogonal states (maximally distinguishable) and $F = 1$ indicates identical states. For two identical mixed states, $F = 1$ still holds.


\textbf{No cloning theorem:} The \emph{no-cloning theorem} states that it is not possible to copy an unknown quantum state. More formally, there does not exist a universal quantum operation (unitary transformation) that takes any arbitrary quantum state $|\psi \rangle $ along with a standard ``blank'' state $|e \rangle$ and produces two copies of $|\psi \rangle$, that is, there is no unitary operation $U$ such that for all $|\psi \rangle$ \cite{wootters1982single},
    \begin{equation}
        U(|\psi\rangle \otimes |e\rangle) = |\psi\rangle \otimes |\psi\rangle.
    \end{equation}

This theorem is one of the fundamental concepts that distinguishes the quantum internet from the classical internet. In classical systems, copying information is extensively utilised for various purposes, including assisting data transmission, error correction, caching, load balancing, and ensuring backup \& redundancy.

\section{Quantum communication primitives}
  \subsection{Teleportation}
  Teleportation of a quantum state refers to sending an arbitrary quantum state to an arbitrary distance using quantum correlations \cite{bennett1993teleporting}. In Fig.~\ref{fig:teleport_swap_chain}a Alice wants to send an arbitrary quantum state $|\psi \rangle_0^{A_1} = \alpha |0 \rangle + \beta |1 \rangle$ to Bob. Let Alice and Bob initially have an EPR pair $| \phi^+ \rangle^{A_2 B}$ shared among them. Then, the state of the collective three-qubit system can be written as follows,

\begin{equation}
        | \psi \rangle_0^{A_1} | \phi^+ \rangle^{A_2 B} = \left( \alpha |0 \rangle + \beta |1 \rangle \right)^{A_1} \ \frac{1}{\sqrt{2}} \left( |00 \rangle + |11 \rangle \right)^{A_2 B},
    \end{equation}
    \begin{equation}
        \Rightarrow | \psi \rangle_0^{A_1} | \phi^+ \rangle^{A_2 B} = \frac{1}{\sqrt{2}} (\alpha |000 \rangle + \alpha |011 \rangle + \beta | 100 \rangle + \beta|111\rangle)^{A_1 A_2 B}.
    \end{equation}
Using the definitions of the Bell states in Eq.~\eqref{eq:bell_state} and \eqref{eq:other_bell_states} we have:
\begin{equation}
\begin{split}
     | \psi \rangle_0^{A_1} | \phi^+ \rangle^{A_2 B} = \frac{1}{2} (\alpha (|\phi^+ \rangle + |\phi^- \rangle) |0 \rangle + \alpha (|\psi^+ \rangle + |\psi^- \rangle) | 1 \rangle + \\
    \beta (|\psi^+\rangle -|\psi^- \rangle)|0\rangle + \beta (|\phi^+ \rangle - |\phi^-\rangle)|1\rangle)^{A_1 A_2 B},
    \end{split}
\end{equation}
    \begin{equation} \label{eq:teleport_expanded}
    \begin{split}
        \Rightarrow | \psi \rangle_0^{A_1} | \phi^+ \rangle^{A_2 B} = \frac{1}{2} | \phi^+ \rangle^{A_1 A_2} \left( \alpha |0 \rangle + \beta |1  \rangle \right)^B + \frac{1}{2} | \phi^- \rangle^{A_1 A_2} \left( \alpha |0 \rangle - \beta |1  \rangle \right)^B \\
        + \frac{1}{2} | \psi^+ \rangle^{A_1 A_2} \left( \alpha |1 \rangle + \beta |0  \rangle \right)^B + \frac{1}{2} | \psi^- \rangle^{A_1 A_2} \left( \alpha |1 \rangle - \beta |0  \rangle \right)^B,
    \end{split}
    \end{equation}
From Eq.~\eqref{eq:teleport_expanded}, a simultaneous Bell state measurement by Alice that is, on qubits $A_1$ and $A_2$ would result in either of four bell states that is, $|\phi^+ \rangle^{A_1 A_2}, |\phi^- \rangle^{A_1 A_2}, |\psi^+ \rangle^{A_1 A_2}$, and $|\psi^- \rangle^{A_1 A_2}$. Meanwhile the corresponding state of Bob's qubit that is, $B$ would be in the state: $\left( \alpha |0 \rangle + \beta |1 \rangle \right)^B, \left( \alpha |0 \rangle - \beta |1 \rangle \right)^B, \left( \alpha |1 \rangle + \beta |0 \rangle \right)^B$, and $\left( \alpha |1 \rangle - \beta |0 \rangle \right)^B$. The measurement results obtained by Alice are communicated to Bob via a classical channel. Depending upon these measurement results Bob would perform a single-qubit gate that is, $I$ for $| \phi^+ \rangle$, $\sigma_z$ for $|\phi^- \rangle$, $\sigma_x$ for $| \psi^+ \rangle$, and $\sigma_x \sigma_z$ for $|\psi^- \rangle$ which gets the state of Bob's qubit transformed to the state that was intended to teleport by Alice \cite{weinfurter1994experimental}. This concludes the teleportation procedure as summarised in table~\ref{tab:bell_state_analysis}.
The net effect is that the source quantum state has been transferred to a remote party through the consumption of a Bell state shared by the parties and with the collapse of the origin state.
Note that this procedure does not violate speed-of-light constraints on the transfer of matter or information, as it relies on both the pre-distribution of a Bell state between Alice and Bob and the transmission of measurement results via a classical communication channel.

\begin{table}[h!]
\centering
\caption{\textbf{Bell state analysis for Teleportation:} Bell state measurement at Alice's qubits $A_1$ and $A_2$ \textit{\textbf{(column: $A_1A_2$ results)}} collapses (projects) the Bob's qubit $B$ to several different states \textit{\textbf{(column: $B$ result)}}. Hence, a single-qubit gate(s) is applied at Bob's qubit $B$ \textit{\textbf{(column: applied single-qubit gate at Bob ($B$))}} to arrive at correct final state of Bob's qubit $B$ \textit{\textbf{(column: final state at Bob ($B$))}}.}
\label{tab:bell_state_analysis}
\begin{tabular}{llll}
\hline
\textbf{$A_1 A_2$ results} & \textbf{$B$ result} & \textbf{applied single-qubit gate at Bob ($B$)} & \textbf{final state at Bob ($B$)} \\
\hline
$|\phi^+ \rangle^{A_1 A_2}$ & $\left( \alpha |0 \rangle + \beta |1 \rangle \right)^B$ & $I$ & $\left( \alpha |0 \rangle + \beta |1 \rangle \right)^B$ \\
$|\phi^- \rangle^{A_1 A_2}$ & $\left( \alpha |0 \rangle - \beta |1 \rangle \right)^B$ & $\sigma_z$ & $\left( \alpha |0 \rangle + \beta |1 \rangle \right)^B$ \\
$|\psi^+ \rangle^{A_1 A_2}$ & $\left( \alpha |1 \rangle + \beta |0 \rangle \right)^B$ & $\sigma_x$ & $\left( \alpha |0 \rangle + \beta |1 \rangle \right)^B$ \\
$|\psi^- \rangle^{A_1 A_2}$ & $\left( \alpha |1 \rangle - \beta |0 \rangle \right)^B$ & $\sigma_x \sigma_z$ & $\left( \alpha |0 \rangle + \beta |1 \rangle \right)^B$ \\
\hline
\end{tabular}
\end{table}

\begin{figure}
\centering
\includegraphics[width=\columnwidth]{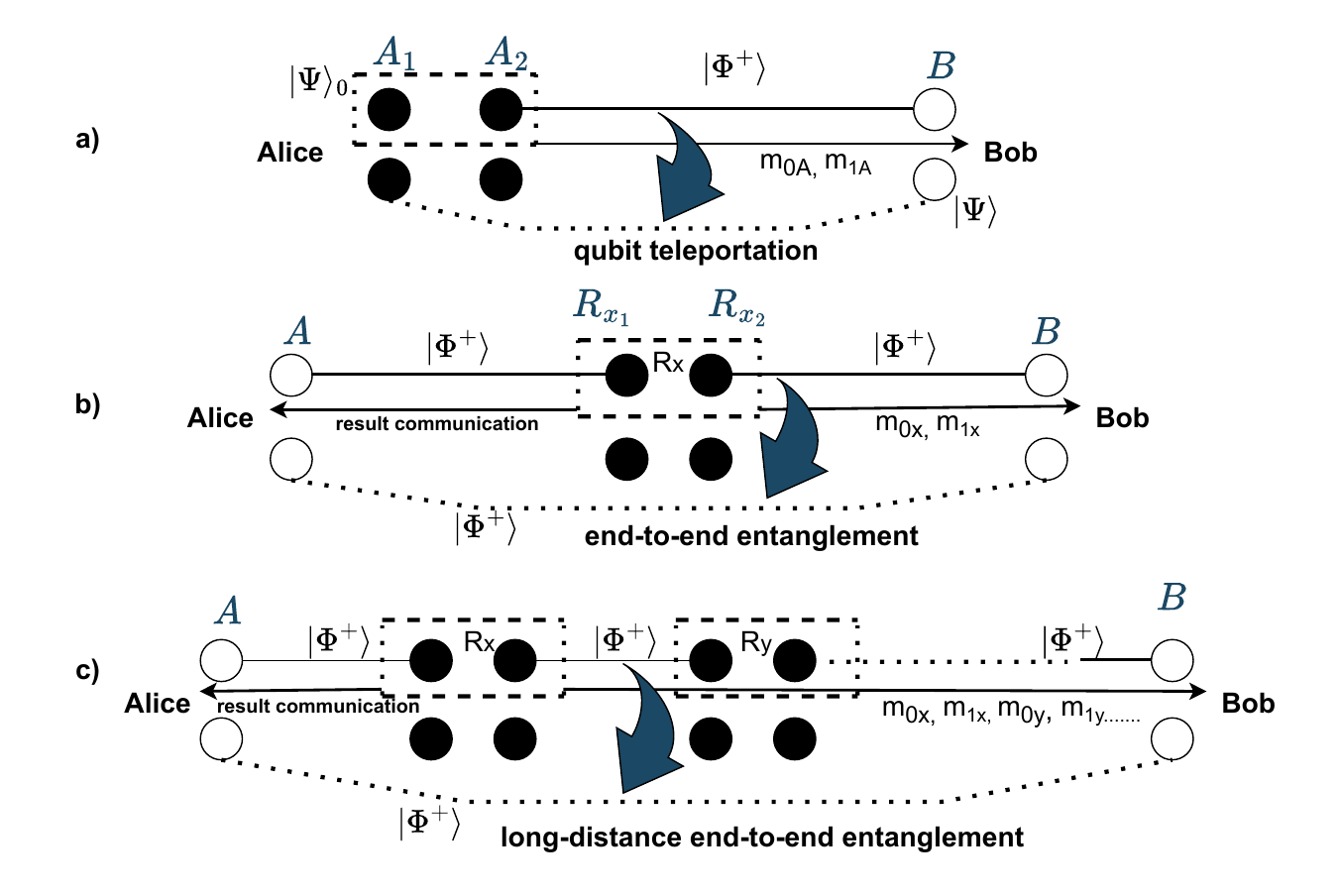}
\caption{\textbf{a) Teleportation:} Teleportation of Alice's qubit \(A_1\) from state \(|\Psi\rangle_0\) to Bob's qubit \(B\) (yielding state \(|\Psi\rangle\) with fidelity \( |\langle \Psi_0 | \Psi \rangle|^2 \)) is achieved via a Bell state measurement on qubits \(A_1\) and \(A_2\). Alice then communicates the resulting classical bits (\(m_{0A}\) and \(m_{1A}\)) to Bob, who applies the appropriate quantum operations on \(B\) based on these outcomes. \textbf{b) Entanglement Swapping:} Entanglement swapping between Alice and Bob is achieved by converting two bipartite entangled pairs—one between Alice and repeater \(R_x\) (qubits \(A\) and \(R_{x_1}\)) and one between repeater \(R_x\) and Bob (qubits \(R_{x_2}\) and \(B\))—into a direct (end-to-end) entanglement between Alice and Bob (qubits \(A\) and \(B\)). This process is executed by performing a Bell state measurement on \(R_x\)'s qubits \(R_{x_1}\) and \(R_{x_2}\), communicating the resulting classical bits (\(m_{0A}\) and \(m_{1A}\)) to Bob, who then applies the appropriate quantum operations on his qubit \(B\). \textbf{c) A linear network:} End-to-end bipartite entanglement between two distant end nodes, Alice and Bob (qubits \(A\) and \(B\)), is established via a cascade of entanglement swapping along a linear chain of quantum repeaters. At each repeater, a Bell state measurement is performed, and the corresponding classical bits (\(m_{0i}\) and \(m_{1i}\)) from the \(i^{th}\) quantum repeater are communicated to Bob, who applies the appropriate quantum operations on his qubit \(B\). (modified from \cite{kumar2025routing}).}
    \label{fig:teleport_swap_chain}
\end{figure}

\subsection{Entanglement swapping}\label{ssec:swapping}
Entanglement swapping can be considered as an extended version of the teleportation procedure above: entanglement swapping essentially \textit{swaps} two EPR pairs distributed at a shorter distance with a single EPR distributed at a longer distance \cite{zukowski1993event}. In contrast, the teleportation procedure is used to teleport an arbitrary quantum state, as shown above. In Fig.~\ref{fig:teleport_swap_chain}b, two EPR pairs $|\phi^+ \rangle$ are initially distributed over three stations, that is, Alice, Repeater $R_x$, and Bob. Then, the state of the collective four-qubit system can be written as follows,
    \begin{equation}
        |\phi^+ \rangle^{A R_{x_1}} |\phi^+ \rangle^{R_{x_2} B} = \frac{1}{\sqrt{2}} \left( |00 \rangle + |11 \rangle \right)^{A R_{x_1}} \ \frac{1}{\sqrt{2}} \left( |00 \rangle + |11 \rangle \right)^{R_{x_2} B},
    \end{equation}
    \begin{equation}
        \Rightarrow |\phi^+ \rangle^{A R_{x_1}} |\phi^+ \rangle^{R_{x_2} B} = \frac{1}{2} (|0000 \rangle + |0011 \rangle + |1100 \rangle + |1111 \rangle)^{A R_{x_1} R_{x_2} B}.
    \end{equation}
    Using the definitions of the Bell states in Eq.~\eqref{eq:bell_state} and \eqref{eq:other_bell_states}, we have:
    \begin{equation}
        \begin{split}
            \Rightarrow |\phi^+ \rangle^{A R_{x_1}} |\phi^+ \rangle^{R_{x_2} B} = \frac{1}{2 \sqrt{2}} \left[ |0\rangle (|\phi^+ \rangle + |\phi^- \rangle) |0 \rangle + |0 \rangle (| \psi^+ \rangle +| \psi^- \rangle)|1 \rangle + \right. \\
            \left. |1 \rangle (|\psi^+ \rangle - |\psi^- \rangle) |0 \rangle + |1 \rangle (|\phi^+ \rangle - |\phi^- \rangle)|1 \rangle \right]^{A R_{x_1} R_{x_2} B},
        \end{split}
    \end{equation}

        \begin{equation}
        \begin{split}
            \Rightarrow |\phi^+ \rangle^{A R_{x_1}} |\phi^+ \rangle^{R_{x_2} B} = \frac{1}{2 \sqrt{2}} \left[ (|\phi^+ \rangle + |\phi^- \rangle) |00 \rangle + (| \psi^+ \rangle +| \psi^- \rangle)|01 \rangle + \right. \\
            \left. (|\psi^+ \rangle - |\psi^- \rangle) |10 \rangle + (|\phi^+ \rangle - |\phi^- \rangle)|11 \rangle \right]^{ R_{x_1} R_{x_2} A B},
        \end{split}
    \end{equation}
\begin{equation}\label{eq:swap_expanded}
\begin{split}
    \Rightarrow |\phi^+ \rangle^{A R_{x_1}} |\phi^+ \rangle^{R_{x_2} B} = \frac{1}{2} \left[
    | \phi^+ \rangle^{R_{x_1} R_{x_2}} \frac{1}{\sqrt{2}}\left( |00 \rangle + |11 \rangle \right)^{A B} \right. \\
    \left. + | \phi^- \rangle^{R_{x_1} R_{x_2}} \frac{1}{\sqrt{2}}\left( |00 \rangle - |11 \rangle \right)^{A B} \right. \\
    \left. + | \psi^+ \rangle^{R_{x_1} R_{x_2}} \frac{1}{\sqrt{2}}\left( |01 \rangle + |10 \rangle \right)^{A B} \right. \\
    \left. + | \psi^- \rangle^{R_{x_1} R_{x_2}} \frac{1}{\sqrt{2}}\left( |01 \rangle - |10 \rangle \right)^{A B} \right],
\end{split}
\end{equation}
where the Bell basis are given by Eq.~\eqref{eq:bell_state} and Eq.~\eqref{eq:other_bell_states} similar to the teleportation procedure.

From Eq.~\eqref{eq:swap_expanded}, a simultaneous bell state measurement at repeater station $R_x$ that is, on qubits $R_{x_1}$ and $R_{x_2}$ would result in either of four Bell states that is, \\ $|\phi^+ \rangle^{R_{x_1} R_{x_2}}, |\phi^- \rangle^{R_{x_1} R_{x_2}}, |\psi^+ \rangle^{R_{x_1} R_{x_2}}$, and $|\psi^- \rangle^{R_{x_1} R_{x_2}}$. Meanwhile, the collective state of qubits of Alice and Bob that is, $A$ and $B$ would be in the state: \\ $ \frac{1}{\sqrt{2}}\left( \alpha |00 \rangle + \beta |11 \rangle \right)^{AB}, \frac{1}{\sqrt{2}}\left( \alpha |00 \rangle - \beta |11 \rangle \right)^{AB}, \frac{1}{\sqrt{2}}\left( \alpha |01 \rangle + \beta |10 \rangle \right)^{AB}$, and \\ $\frac{1}{\sqrt{2}}\left( \alpha |01 \rangle - \beta |10 \rangle \right)^{AB}$. The measurement results obtained at repeater station $R_x$ are communicated to Alice or Bob via a classical channel.
Depending upon these measurement results a single-qubit gate that is, $I$ for $| \phi^+ \rangle$, $\sigma_z$ for $|\phi^- \rangle$, $\sigma_x$ for $| \psi^+ \rangle$, and $\sigma_x \sigma_z$ for $|\psi^- \rangle$ which gets the collective state of Alice and Bob qubit transformed to an EPR pair $| \phi^+ \rangle$. This concludes the entanglement-swapping procedure as summarised in table~\ref{tab:bsa_swap}.

\begin{table}[h!]
\centering
\caption{\textbf{Bell state analysis for Entanglement swapping:} Bell state measurement at quantum repeater's ($R_x$) qubits $R_{x_1}$ and $R_{x_2}$ \textit{\textbf{(column: $R_{x_1}R_{x_2}$ results)}} collapses (projects) the combined Alice's qubit $A$ and Bob's qubit $B$ to several different bell states \textit{\textbf{(column: $AB$ result)}}. Hence, a single-qubit gate(s) is applied at Bob's qubit $B$ \textit{\textbf{(column: applied single-qubit gate at Bob (B))}} to arrive at correct end-to-end entanglement state between Alice's qubit $A$ and Bob's qubit $B$ \textit{\textbf{(column: final state ($AB$))}}.}
\label{tab:bsa_swap}
\begin{tabular}{llll}
\hline
\textbf{$R_{x_1} R_{x_2}$} \textbf{results} & \textbf{$AB$ result} & \textbf{applied single-qubit gate at Bob ($B$)} & \textbf{final state ($AB$)} \\
\hline
$|\phi^+ \rangle^{R_{x_1} R_{x_2}}$ & $\frac{1}{\sqrt{2}}\left( \alpha |00 \rangle + \beta |11 \rangle \right)^{AB}$ & $I$ & $\frac{1}{\sqrt{2}}\left( \alpha |00 \rangle + \beta |11 \rangle \right)^{AB}$ \\
$|\phi^- \rangle^{R_{x_1} R_{x_2}}$ & $\frac{1}{\sqrt{2}}\left( \alpha |00 \rangle - \beta |11 \rangle \right)^{AB}$ & $\sigma_z$ & $\frac{1}{\sqrt{2}}\left( \alpha |00 \rangle + \beta |11 \rangle \right)^{AB}$ \\
$|\psi^+ \rangle^{R_{x_1} R_{x_2}}$ & $\frac{1}{\sqrt{2}}\left( \alpha |01 \rangle + \beta |10 \rangle \right)^{AB}$ & $\sigma_x$ & $\frac{1}{\sqrt{2}}\left( \alpha |00 \rangle + \beta |11 \rangle \right)^{AB}$ \\
$|\psi^- \rangle^{R_{x_1} R_{x_2}}$ & $\frac{1}{\sqrt{2}}\left( \alpha |01 \rangle - \beta |10 \rangle \right)^{AB}$ & $\sigma_x \sigma_z$ & $\frac{1}{\sqrt{2}}\left( \alpha |00 \rangle + \beta |11 \rangle \right)^{AB}$ \\
\hline
\end{tabular}
\end{table}

Entanglement swapping is the basic concept implemented by quantum repeaters, introduced below, which are the fundamental building blocks of the quantum internet.

\subsection{Entanglement purification}\label{ssec:purification}
The entanglement purification procedure uses \textit{sacrificial EPR pairs} of low-fidelity to attain higher-fidelity EPR pairs or entanglement. For example, as shown in Fig.~\ref{fig:purification_errorcorrection}a, eight EPR pairs (fidelity $F_0$) are consumed over three layers of purification to extract a single higher fidelity EPR pair (fidelity $F_3$).
For simplicity, we assume that the 1 $\&$ 2-bit operations always result in the success of the purification protocol.

Let the initial EPR pair that is, $|\phi^+ \rangle = \frac{1}{\sqrt{2}} \left( |00 \rangle + |11 \rangle \right)$ with qubits 1 and 2 be distributed by two neighbouring nodes via a noisy quantum channel, which degrades the quality of the original \emph{pure} quantum states, hence diminishing their fidelity compared to the original Bell state and leading to a \emph{mixed}, that is, non-pure, state. The resulting mixed Bell state after exposure to this channel can be written using the following Werner's state \cite{werner1989quantum}:
\begin{equation}
    \rho = F | \phi^+ \rangle \langle \phi^+ | + \frac{1-F}{3} \left( | \phi^- \rangle \langle \phi^- | + | \psi^+ \rangle \langle \psi^+ | + | \psi^- \rangle \langle \psi^- |
    \right).
\end{equation}
This means that the probability of finding the initial EPR pair ($\rho$) shared on exposure to a noisy quantum channel intact with respect to bell state $| \phi^+ \rangle$ is $F$. Meanwhile, the probability of finding any other state is $1-F$.

To purify this EPR pair, we share another EPR pair with qubits 3 and 4 of the same Bell state as before. If node A has qubits 1 and 3 while node B has qubits 2 and 4, then CNOT quantum gates with qubits 1 and 2 as sources and qubits 3 and 4 as targets are operated. Node A measures qubit 3, and Node B measures qubit 4, and they exchange their measured results through the classical channel (\textit{two-way signalling}).
If the measurement results match, then the EPR pair with qubits 1 and 2 is kept, or else it is discarded. Due to the measurement, the EPR pair is no longer entangled as soon as qubits 3 and 4 are measured. However, by using this \textit{sacrificial EPR pair} the fidelity of retained EPR pair is given by \cite{bennett1996purification}:
\begin{equation}
    F_1 = \frac{F^2 + \frac{1}{9} (1-F)^2}{F^2 + \frac{2}{3}F(1-F) + \frac{5}{9}(1-F)^2}.
\end{equation}
It is to be noted that $F_1 > F$ for only $F > 0.5$.

The purification protocol discussed above represents one of the initial approaches to this technology. Subsequent protocols have introduced various enhancements. A recent survey detailing advancements in entanglement purification is available in \cite{yan2023advances}.

Entanglement purification has no equivalent in classical digital systems because the data stored or transferred are either fully correct (a 0 is a 0, a 1 is a 1) or flipped due to errors (a 0 is a 1, a 1 is a 0).

\begin{figure}
\centering
\includegraphics[width=0.9\columnwidth]{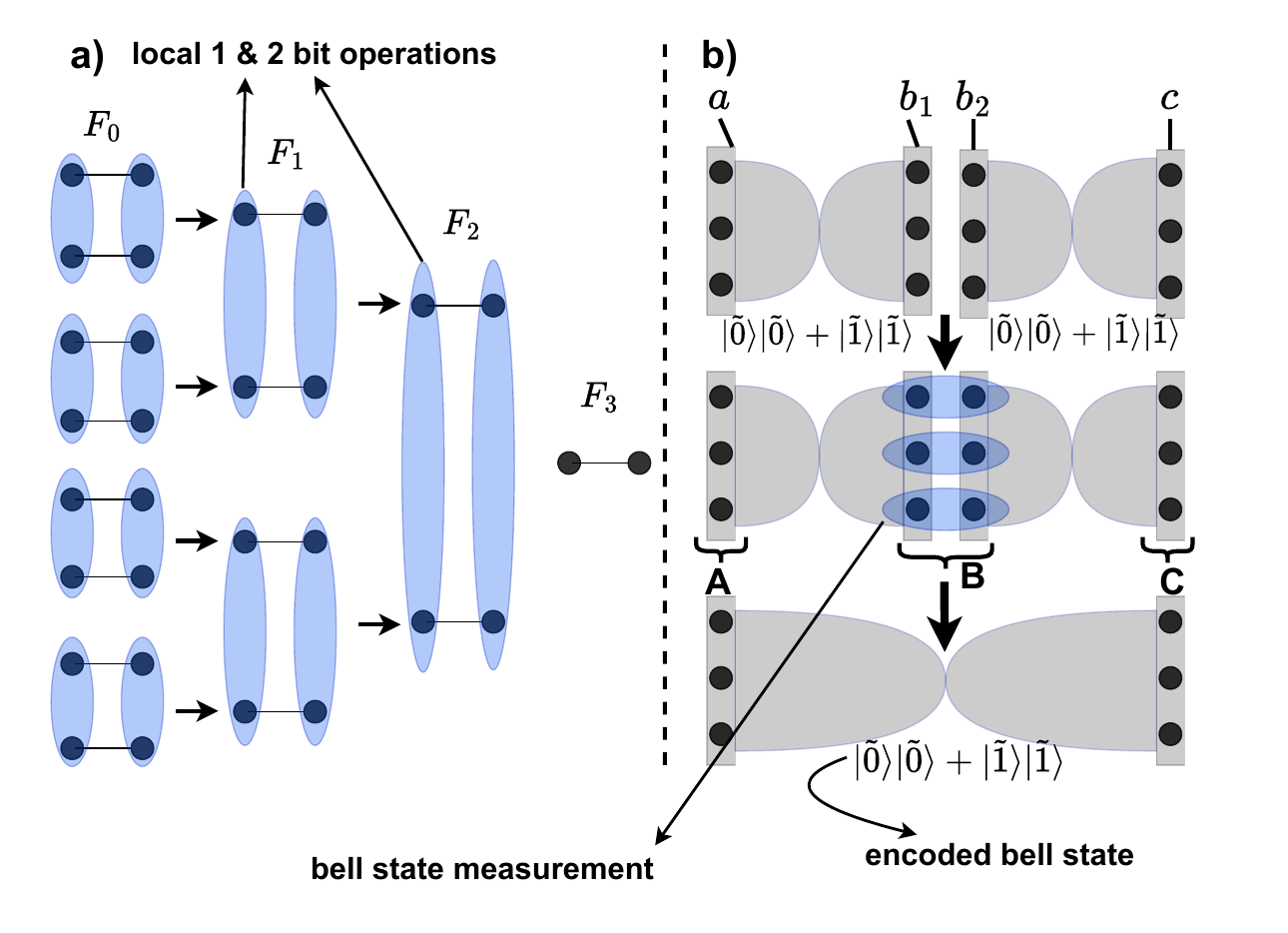}
\caption{
Using extra qubit resources to boost error resilience. \textbf{a) Entanglement purification:} Each EPR pair is paired with an additional, \textit{sacrificial} EPR pair to enhance its fidelity (modified from \cite{dur1999quantum}); \textbf{b) Quantum error correction:} Each EPR pair is supported by extra qubits arranged as \textit{encoded} EPR pairs to reduce the error rate (modified from \cite{jiang2009quantum})}
    \label{fig:purification_errorcorrection}
\end{figure}

\subsection{Quantum error correction}
Quantum error correction (QEC) is a critical technique used to encode quantum states in a way that makes them robust against errors, analogous to Forward Error Correction (FEC) in classical systems. In classical FEC, redundancy is introduced to recover information lost due to errors during storage (e.g., on hard drives) or transmission (e.g., over noisy wireless links). To illustrate quantum error correction in the context of quantum networks, consider the example of entanglement swapping using encoded Einstein-Podolsky-Rosen (EPR) pairs.

Imagine a three-node setup: \textbf{Node A} (with a qubit array \( a \)), \textbf{Node B} (with qubit arrays \( b_1 \) and \( b_2 \)), and \textbf{Node C} (with a qubit array \( c \)), as depicted in Fig.~\ref{fig:purification_errorcorrection}b. This scenario builds upon the previously discussed entanglement swapping procedure (Fig.~\ref{fig:teleport_swap_chain}b) but incorporates quantum error correction by using \textit{logical} qubits instead of \textit{physical} qubits.

A \textit{logical qubit} is an abstract representation of quantum information encoded within a set of physical qubits, making it resistant to errors through redundancy. In this setup, each qubit in a standard entanglement swapping protocol is replaced by an array of \( n \) physical qubits utilising an $n$-qubit repetition code. In our case (Fig.~\ref{fig:purification_errorcorrection}b), we have \( n = 3 \), which corresponds to the use of a three-qubit repetition code.

In the broader context of error correction, it is also instructive to introduce the general notation $[[n,k,d]]$, which succinctly characterises an error-correcting code. Here, $n$ denotes the total number of physical qubits used in the encoding, $k$ represents the number of logical qubits encoded, and $d$ is the code distance, the minimum number of physical qubit errors required to cause an undetectable error. This distance determines the error-correcting capability of the code; specifically, a code with distance $d$ can detect up to $d-1$ errors and correct up to $\lfloor (d-1)/2 \rfloor$ errors. This notation mirrors the classical code notation \([n,k,d]\). In our example utilising the three-qubit repetition code, the parameters can be expressed as $[[3,1,3]]$; a single logical qubit is redundantly encoded into three physical qubits, and the code is capable of correcting any single-qubit error.

Instead of standard EPR pairs, entanglement swapping with error correction employs \textit{encoded EPR pairs}, requiring \( n = 3 \) physical qubits for each logical qubit. For example, as shown in Fig.~\ref{fig:purification_errorcorrection}b, encoded EPR pairs are established between \textbf{Node A} and \textbf{Node B} (\( a \) and \( b_1 \)) and between \textbf{Node B} and \textbf{Node C} (\( b_2 \) and \( c \)). These encoded EPR pairs are represented as:

\[
| \Tilde{\phi}^+ \rangle_{ab_1} = \frac{1}{\sqrt{2}} \left( |\Tilde{0} \rangle |\Tilde{0} \rangle + |\Tilde{1} \rangle |\Tilde{1} \rangle \right)_{ab_1}
\]
and
\[
| \Tilde{\phi}^+ \rangle_{b_2c} = \frac{1}{\sqrt{2}} \left( |\Tilde{0} \rangle |\Tilde{0} \rangle + |\Tilde{1} \rangle |\Tilde{1} \rangle \right)_{b_2c}.
\]

Here, each logical qubit is encoded using the three-qubit repetition code, where \( |\Tilde{0} \rangle = |000 \rangle \) and \( |\Tilde{1} \rangle = |111 \rangle \). Establishing a single encoded EPR pair between two nodes involves a three-step process and requires an additional ancilla qubit for each physical qubit in the encoded EPR pair, as described in \cite{gottesman1999demonstrating, zhou2000methodology}. This process is known as the \textit{encoded generation} of EPR pairs.

Analogous to standard entanglement swapping, the combined state of the four logical qubits can be expressed as \cite{jiang2009quantum}:

\begin{equation}\label{eq:error_correction}
\begin{split}
    |\Tilde{\phi}^+ \rangle_{ab_1} |\Tilde{\phi}^+ \rangle_{b_2c} = \frac{1}{2} \Big( 
|\Tilde{\phi}^+ \rangle_{ac} |\Tilde{+} \rangle_{b_1} |\Tilde{0} \rangle_{b_2} +  |\Tilde{\phi}^- \rangle_{ac} |\Tilde{-} \rangle_{b_1} |\Tilde{0} \rangle_{b_2} \\ 
+  |\Tilde{\psi}^+ \rangle_{ac} |\Tilde{-} \rangle_{b_1} |\Tilde{0} \rangle_{b_2} + |\Tilde{\psi}^- \rangle_{ac} |\Tilde{-} \rangle_{b_1} |\Tilde{1} \rangle_{b_2} \Big).
\end{split}
\end{equation}

This formulation enables a modified form of entanglement swapping, where Bell-state measurements are performed on logical qubits (groups of physical qubits). Specifically, the logical qubit \( b_1 \) at \textbf{Node B} is measured in the \( \{ |\Tilde{+} \rangle, |\Tilde{-} \rangle \} \) basis, while \( b_2 \) is measured in the \( \{ |\Tilde{0} \rangle, |\Tilde{1} \rangle \} \) basis. This step is referred to as the \textit{encoded connection}.

The results of these measurements yield a two-bit classical message, which is used to establish the encoded EPR pair between \textbf{Node A} and \textbf{Node C}.

Notably, these measurement results do not need to be communicated to other nodes, as is required in standard entanglement-swapping protocols. Instead, they are utilised locally at \textbf{Node B} to determine the state of the encoded qubit.

\section{Components of Quantum Internet}\label{sec:components_of_qi}
While a fully functional quantum internet or network is not yet available, current understanding allows us to identify its key components. As the field advances, additional components may be introduced. In this discussion, we outline the essential components required for the operation of a quantum internet based on our present knowledge.

\textbf{Quantum Repeaters:} The quantum repeater\footnote{An alternative approach to entanglement distribution, utilising percolation theory, has been proposed \cite{perseguers2008entanglement, das2018robust}. However, the prevailing focus in current literature has been on the development of quantum repeaters rather than the percolation theory approach. Nonetheless, the percolation approach continues to be studied and may represent a potential area for future investigation.} is a fundamental component of the quantum internet, playing a crucial role in the entanglement swapping process, which extends entanglement over long distances, as illustrated in Fig.~\ref{fig:teleport_swap_chain}c. In a linear chain of repeaters connected by shared EPR pairs, simultaneous Bell state measurements at the quantum repeaters entangle the end nodes. This is accomplished by communicating the classical results to one of the end nodes and applying the appropriate single-qubit gate at that node. It should be noted that this process is a generalised version of entanglement swapping, involving multiple repeater nodes instead of just one. 

A quantum repeater may be equipped with a \emph{quantum memory}, that is, a device that can store quantum states as qubits for a limited amount of time, and may also include a \emph{quantum processor}, that is, equipment for the execution of local operations through quantum gates. The requirement for quantum memory in a quantum repeater depends on the intended application and the entanglement distribution protocols employed by the quantum network. For instance, in prepare-and-measure quantum key distribution (QKD) networks, quantum memories are not necessary, whereas more advanced quantum applications, such as distributed quantum computing, rely on quantum memory to function effectively.
From a protocol perspective, the need for quantum memory also varies. In networks where entanglement generation and swapping proceed sequentially or with minimal delay to directly entangle end nodes, quantum memory support may not be essential. However, if the protocol is designed such that, following initial entanglement generation, some links must wait for entanglement swapping to proceed, then quantum memory support becomes crucial. Quantum memory is also indispensable for protocols that involve purification or quantum error correction.
Similarly, the inclusion of a quantum processor depends on the network protocols. If certain protocols are implemented that leverage a quantum processor to optimise the performance of the quantum network, then it becomes necessary. In the absence of such protocols, a quantum processor may not be required. An \emph{entanglement generator} can be part of a quantum repeater if the entanglement generation scheme follows either the \textit{MeetInTheMiddle} or \textit{SenderReceiver} configuration \cite{jones2016design}. In the MeetInTheMiddle scheme, entanglement generation occurs at the outer nodes, with one qubit of the entangled pairs from each outer node sent to an intermediate station where entanglement swapping takes place. In contrast, in the SenderReceiver scheme, entanglement generation also occurs at the outer nodes, as in the previous case. However, one qubit of the entangled pair from one outer node is sent directly to the other outer node, eliminating the need for an intermediate station. For the \textit{MidpointSource} scheme, a separate entanglement generation component positioned at the midpoint of a quantum network link would be required, as shown in Fig.~\ref{fig:entanglement_generation}.

\begin{figure}
\centering
\includegraphics[width=\columnwidth]{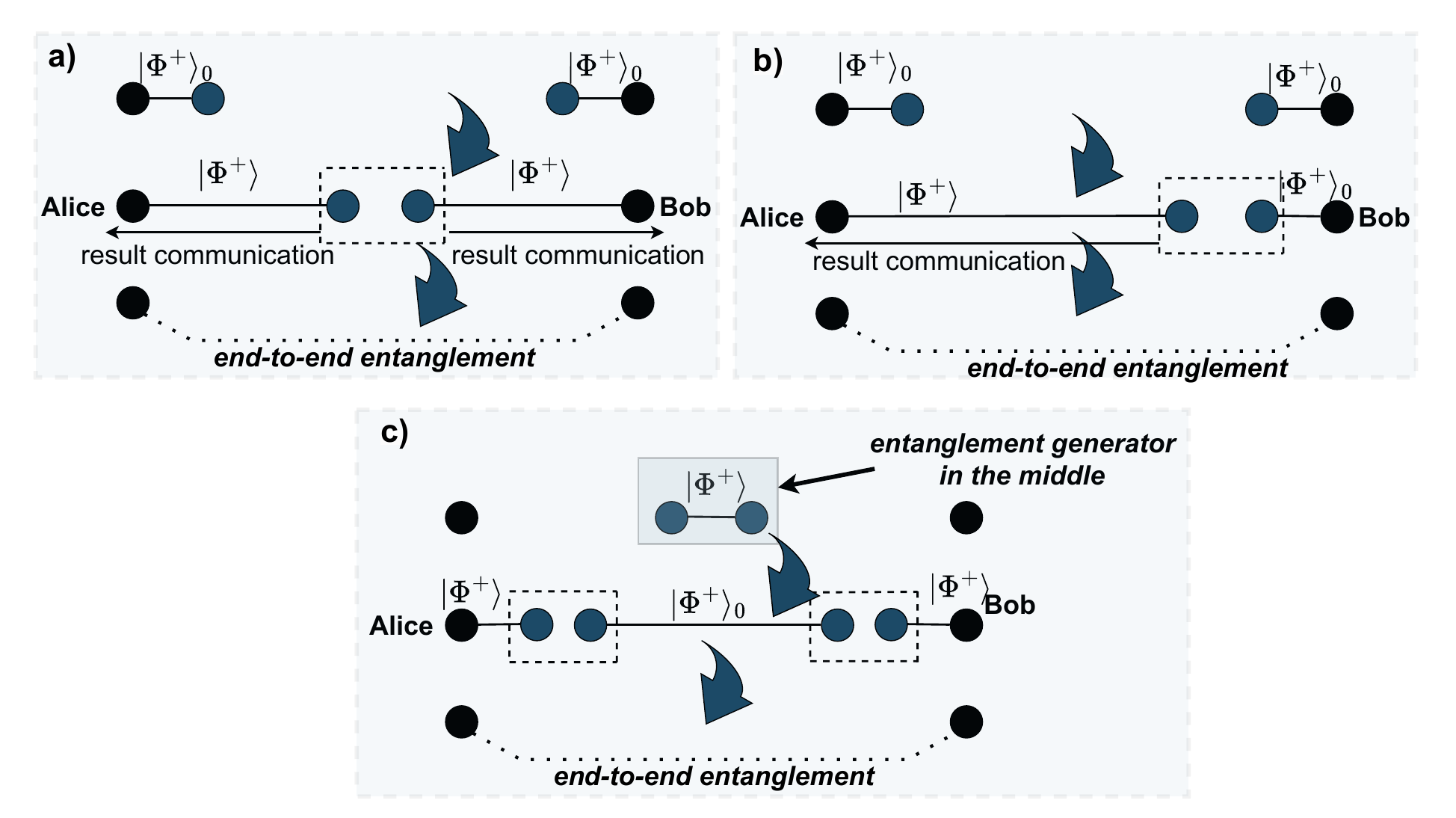}
\caption{
Entanglement generation schemes. \textit{\textbf{a) MeetInTheMiddle:}} Each of two distant stations holds one half of an EPR pair. The entangled qubits are sent to an intermediate station, where entanglement swapping establishes a direct end-to-end entanglement between the remote stations; \textit{\textbf{b) SenderRecever:}} An entangled qubit generated at the source station is transmitted to the destination station. Here, entanglement swapping between this qubit and one from the destination’s EPR pair creates a direct entanglement link between the stations; \textit{\textbf{c) MidPointSource:}} An intermediate station produces an EPR pair and sends one qubit to the source station and the other to the destination station. At each station, entanglement swapping between the received qubit and a qubit from the local EPR pair sets up end-to-end entanglement (taken from \cite{kumar2025quantum}). }
    \label{fig:entanglement_generation}
\end{figure}

\begin{figure}
\centering
\includegraphics[width=\columnwidth]{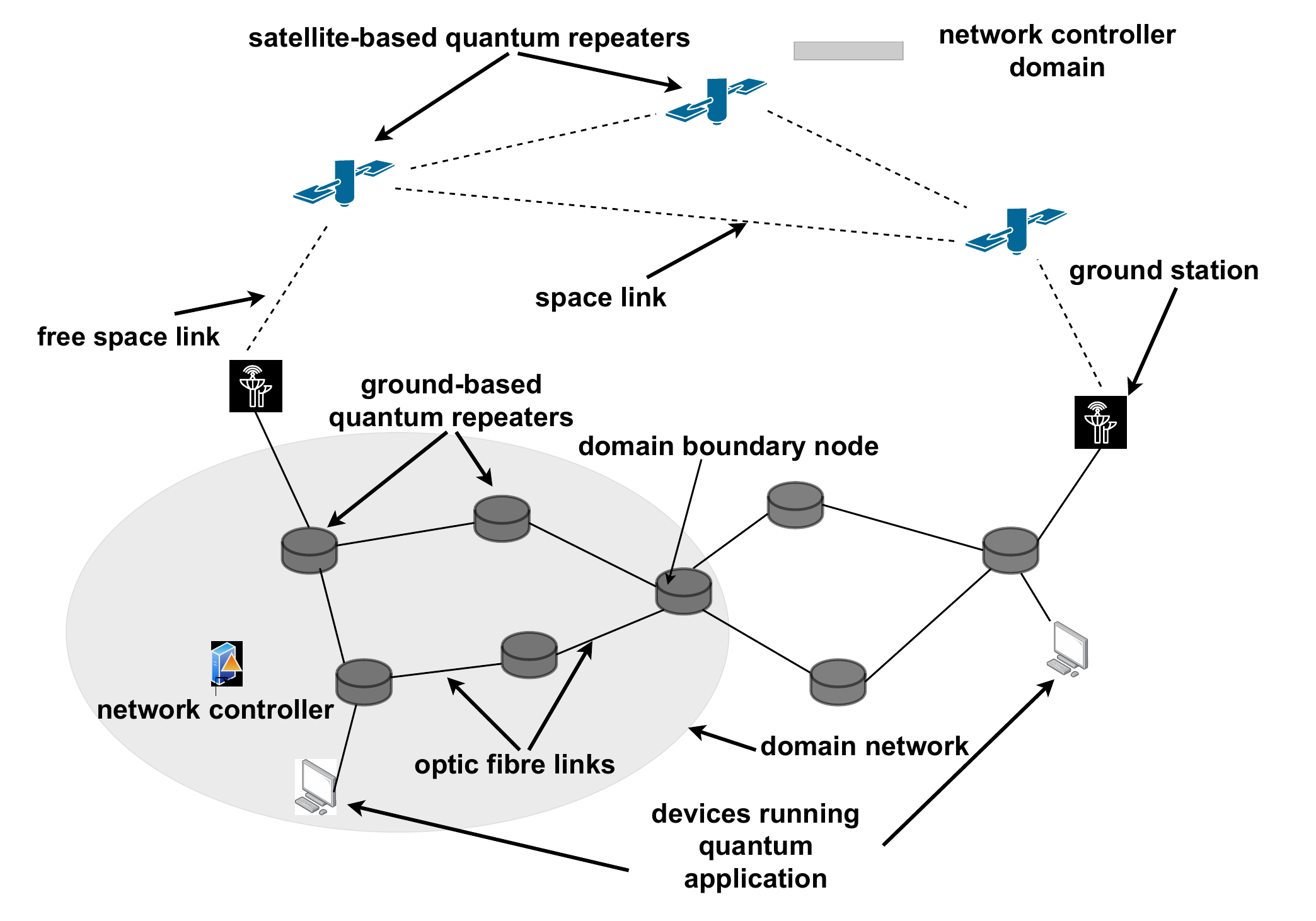}
\caption{
A schematic of the \textit{global} quantum internet.}
    \label{fig:network}
\end{figure}

\textbf{Quantum Device:} A quantum device, often referred to as an \textit{end node}, is essentially a quantum computer capable of running specific quantum applications while connected to the quantum internet. These devices are equipped with quantum processors to handle the execution of quantum applications and quantum memories to manage incoming entangled quantum states.


\textbf{Quantum links:} The quantum links are the quantum and classical channels that connect two neighbouring quantum repeaters or a quantum repeater and a quantum device. These links can be realised through fibre optic cables (represented as solid lines) or via free-space communication (represented as dotted lines), commonly used in ground-satellite networks, as shown in Fig.~\ref{fig:network}. Because of current technology limitations, the generation of EPR pairs between neighbouring components is a probabilistic process, whose success rate depends on the specific technology used and the distance between the nodes, but in general is rather small ($\ll 0.5$). This aspect highlights the inherent variability and complexity in the structure of quantum networks.


\textbf{Network Controller:} A network controller is a logically centralised entity responsible for overseeing a specific segment of the quantum network within its jurisdiction. It manages the communication of Bell state measurement results from quantum repeaters to the end nodes, facilitating the completion of the entanglement swapping process. Additionally, the network controller handles routing decisions for incoming requests, ensuring efficient network operation.
The network controller is commonly found in many proposed architectures of quantum networks, even if it may be an obstacle to scalability as the network grows, both in geographical size and the number of nodes.
Therefore, decentralised alternatives are also under study to cover such use cases, even if they are currently less explored due to the additional complexity required.
Only time will tell which approach will dominate the future of the quantum internet.

\chapter{Applications and Architecture of the Quantum Internet}\label{ch:applications_architecture}

\section{Applications and Use Cases}\label{sec:application}
Just as quantum computing provides advantages over classical computing, the potential applications of the quantum internet are expected to unfold as the technology advances. However, several applications have already been identified. These applications and use cases of the quantum internet can be categorised as follows.

\subsection{Enhanced security or privacy} The applications in this category essentially hinge on providing uncompromising security or privacy compared to their classical counterparts. Introduced in chapter~\ref{ch:foundations}, these applications use the fundamental principles of quantum mechanics, that is, the no-cloning theorem and quantum entanglement.

The advancements in quantum computing pose a significant threat to traditional public key cryptographic systems such as RSA \cite{rivest1978method}, Diffie-Hellman \cite{diffie2022new}, and ECC \cite{miller1985use, koblitz1987elliptic}. To address this challenge, two primary solutions are proposed. The first involves utilising quantum cryptography, specifically \textbf{quantum key distribution} (QKD) \cite{nurhadi2018quantum, mehic2020quantum, cao2022evolution}, which is one of its most successful applications. QKD protocols allow two parties, such as quantum devices within the quantum internet, to share symmetric secret keys securely. These keys can then be used with encryption algorithms to transmit messages securely. The second solution is the adoption of post-quantum cryptography \cite{bernstein2025post}, which employs classical systems designed to be secure against quantum attacks.

Once a quantum device connects to the quantum internet, it can utilise this network much like the classical internet we use today. In this environment, the device can offload quantum computation tasks to an untrusted device without risking the privacy or integrity of its data. This security method is termed \textbf{blind quantum computing} (BQC) \cite{broadbent2009universal, fitzsimons2017private}. Essentially, a client quantum device can employ one or more servers to perform computational tasks while concealing the nature of the computations from the servers themselves. Furthermore, some BQC protocols enhance security by incorporating the ability to verify the computations performed by the servers. This is achieved by embedding hidden tests within the computations, ensuring their correctness and integrity.

Following the theme of enhancing security and privacy, quantum technology offers various applications in distributed systems related to \textbf{consensus and verification}. A prime example is the quantum Byzantine agreement, developed by Ben-Or et al. \cite{ben2005fast}, which ensures that a group of participants can reach consensus on a bit value, robust against faulty or malicious behaviours. Building on this foundation of trust, quantum technology extends its utility to secure voting and surveying systems. As outlined by Vaccaro et al. \cite{vaccaro2007quantum}, these systems allow participants to cast votes or respond to surveys with guaranteed anonymity, enhancing the integrity of collective decision-making processes. Moreover, the field of quantum cryptography has made significant strides in enabling secure multiparty communications. Quantum Conference Key Agreement (CKA), as discussed in recent studies by Hahn et al. \cite{hahn2020anonymous} and Murta et al. \cite{murta2020quantum}, facilitates the establishment of a shared secret key among multiple parties, crucial for coordinating actions across different nodes without compromising security. Another innovative application is certified deletion, introduced by Broadbent et al. \cite{broadbent2020quantum}, which allows for the verifiable destruction of encrypted information, ensuring that once a ciphertext is deleted, it remains irrecoverable, even if the decryption key is compromised. Further expanding the horizon, quantum leader election protocols, such as those explored by Ganz et al. \cite{ganz2017quantum}, enable a group of distant, mutually distrustful entities to democratically elect a leader, ensuring fairness and transparency in critical decision-making scenarios. Additionally, the integration of quantum technologies into broader infrastructures is underway, with applications in the Internet of Things (IoT) being actively researched by Rahman et al. \cite{rahman2019quantum} and prospective uses in future 6G networks as envisioned by Rozenman et al. \cite{rozenman2023quantum}. These advancements illustrate the expanding role of quantum technology in shaping modern technological landscapes, promising unprecedented levels of security and efficiency in digital communications and beyond.

\subsection{Enhanced computing}
Applications within this category significantly enhance computing capabilities, enabling the completion of tasks that are challenging for classical computers or even a single quantum computer to perform. This augmentation is crucial for solving complex problems that exceed the processing power of conventional computing paradigms.

The leading concept in this category is \textbf{distributed quantum computing} \cite{cacciapuoti2019quantum, caleffi2024distributed}. Distributed quantum computing involves leveraging the computational resources of multiple quantum devices interconnected via the quantum internet to perform complex computational tasks. This approach is vital given the current presence of noise in practically realised qubits and the inherent limitations on the number of qubits per computer. In the absence of fault-tolerant qubits, distributed quantum computing offers a viable solution to scale up the number of qubits, enabling the execution of tasks that are too complex for single devices. Distributed quantum computing promises exponential speed-ups as compared to linear scaling in distributed classical computing \cite{cuomo2020towards}.

Another application of great potential interest is \textbf{quantum federated learning}\cite{sunkel2024towards}: each client trains their model on their local dataset using a quantum computer, thereby keeping their data private. Instead of sharing private data, clients only exchange their model's weights. The global model is then trained by aggregating these weights from all participating clients. While this is done today by exchanging only classical data, there are huge opportunities that can be unlocked by entanglement via the quantum internet, since this would remove the need for the data to be decoded and re-encoded at each iteration.

Finally, the quantum internet enables quantum-as-a-service (QaaS) models 
that extend beyond current classical-communication-based systems \cite{garcia2021quantum, moguel2022quantum}. 
The evolution of QaaS follows the staged development of quantum networks:

\textbf{Near-term QaaS (Stages 1-2):} Current quantum-as-a-service platforms 
provide remote access to quantum processors via classical channels, with a 
gateway selecting suitable quantum computers for specific tasks in real time 
\cite{garcia2021quantum, moguel2022quantum}. The immediate enhancement from quantum networks is secure 
quantum key distribution as a service (QKD-aaS), where users obtain 
cryptographic keys with information-theoretic security guarantees. Several 
metropolitan QKD networks already operate in this model, providing secure 
communication services to commercial and government clients \cite{chen2021integrated, 
liao2017satellite}.

\textbf{Medium-long-term QaaS (Stages 3-4):} As quantum repeater networks mature, 
distributed quantum computing becomes viable, enabling computational tasks 
to span multiple interconnected quantum processors.

The practical deployment timeline for quantum-enhanced QaaS depends 
critically on the maturity of quantum repeater technology (Chapter~\ref{sec:repeater}) 
and the development of robust network protocols (Chapter~\ref{sec:routing_forwarding_scheduling} and \ref{sec:qi_stack}). Current 
experimental demonstrations achieve point-to-point entanglement 
distribution \cite{pompili2021realization}, with multi-user QaaS requiring the 
routing and scheduling mechanisms addressed in subsequent chapters of 
this thesis.

\subsection{Specialised applications}

In addition to the general applications of the quantum internet, several specialised applications also play a crucial role in leveraging quantum mechanical capabilities, especially for scientific experiments.

One such specialised application is \textbf{quantum sensing} \cite{degen2017quantum}, which utilises the ability to establish quantum entanglement across networks to enhance the precision of measurements beyond the classical limits. This application exploits quantum properties to achieve superior measurement accuracy in various scientific and technological fields.

Another important application is \textbf{time synchronisation} in digital clocks \cite{chuang2000quantum, ilo2018remote}. Unlike classical methods that require $O{ (2^{2n})}$ messages to determine the $n$ digits of time difference $\Delta$ between two separated clocks in space, quantum algorithms can achieve the same with only $O (n)$ quantum messages. This reduction in message complexity makes quantum time synchronisation significantly more efficient.

Additionally, there is the application of \textbf{enhanced-baseline length for telescopes} \cite{gottesman2012longer, czupryniak2023optimal}. Traditional optical interferometers are limited in their resolution due to restricted baseline lengths, noise, and signal loss during photon transmission between telescopes. The use of quantum internet can potentially overcome these limitations, allowing for interferometers with arbitrarily long baselines and thus dramatically improving their resolving power.

\begin{table}[h!]
\centering
\caption{Stages of quantum internet development}
\label{tab:qi_stages}
\begin{tabular}{p{0.6cm}p{3.9cm}p{7.5cm}}
\toprule
Stage & Network & Functionality  \\
\midrule
\textit{S-1}  & QKD networks \newline
& Supports basic quantum key distribution between any nodes using trusted repeaters and post-selected prepare-and-measure techniques without end-to-end entanglement. \\

\textit{S-2} & Entanglement distribution networks & end-to-end entanglement between any nodes with no quantum memories \\
\textit{S-3} & Quantum memory networks &  end-to-end entanglement between any nodes with the capability of storing in quantum memories \\
\textit{S-4} & Few-qubits fault-tolerant networks & end-to-end entanglement between any nodes with few-qubits fault-tolerant capability on the quantum memory qubits \\
\textit{S-5} & Quantum computing networks & end-to-end entanglement between any nodes with full-fledged capability of using qubits in the quantum memory for computation and communication  \\
\bottomrule
\end{tabular}
\end{table}

\section{Stages of quantum internet}\label{sec:stages}
As with any developing technology, especially as sophisticated as the quantum internet, the implementation is only possible in stages. The different stages of the quantum internet have been categorised by the amount of incremental functionality available to the quantum devices or end nodes in~\cite{wehner2018quantum}. The summary of the stages of quantum internet development is given in table~\ref{tab:qi_stages}. 

Stage 1 of the quantum internet initiates the quantum internet with basic point-to-point quantum key distribution (QKD) setups that depend on trusted intermediate nodes for key relay and security \cite{scarani2009security, salvail2010security}. It primarily features networks where end-to-end quantum communication is absent, relying on secure key generation between adjacent nodes using trusted nodes. To bolster security against potentially untrusted nodes, measurement-device-independent QKD protocols are employed, enhancing security without relying on the trustworthiness of measurement devices \cite{lo2012measurement}. At this stage, the lack of end-to-end qubit entanglement prevents support for distributed quantum computing or quantum sensing. However, progress toward these goals begins with the ability of nodes to prepare and transmit a one-qubit state to any other node in the network. The transmission and measurement processes leverage post-selection, wherein only successful events—those where qubits are detected and measured correctly—are retained. Undetected or ``lost" qubits are disregarded.

This post-selected distribution of entanglement works as follows: a sender node prepares a pair of entangled qubits, retains one, and sends the other to a receiving node. If the receiving node successfully detects the transmitted qubit, the entanglement is confirmed and preserved. Although this method does not enable the deterministic transmission of arbitrary quantum states, it establishes a foundation for more sophisticated quantum operations in future stages.

Stage 2 of the quantum internet enables end-to-end entanglement without the need for post-selection during transmission or measurement, as in the case of Stage 1 above. However, due to the absence of quantum memory in the network, the entanglement must be used immediately after its creation. This stage allows for the successful distribution of end-to-end entanglement with a probability approaching 1. 

Stage 3 quantum internet upgrades to having the support of quantum memories at the local nodes in the network for application purposes. The quantum memories at network nodes allow more complex operations such as entanglement purification, quantum error correction and the creation of multi-partite states from bi-partite entanglement. However, due to the limited decoherence capabilities of the quantum memories, fault tolerance remains an issue in such networks. A functioning quantum memory network should have a decoherence time which encompasses the phase of entanglement generation and the time it takes for the classical signal to complete the entanglement distribution. This stage also provides the capability of deterministically sending arbitrary quantum states from one node to another.

Stage 4 of the quantum internet introduces fault-tolerance capabilities for quantum memory qubits, though these capabilities are limited to a finite number of qubits. Fault tolerance refers to the suppression of errors through the use of additional resources, namely, increased quantum memory. A group of error-corrected physical qubits is referred to as a \textit{logical qubit}. However, the suppression of errors through the combination of physical qubits is only feasible if the physical error rate remains below a critical threshold. Recent advancements in this area have demonstrated that surface code memories, even when operating below this threshold, can effectively suppress the logical error rate \cite{google2025quantum}.

Stage 5 of the quantum internet represents the realisation of a fully developed quantum internet, enabling the complete range of applications in both computation and communication. At this stage, fault tolerance is achieved for all available quantum memory qubits, unlocking the full potential of the quantum internet for diverse and robust applications.

The roadmap above was published in 2018~\cite{wehner2018quantum}, but at the time of writing, it can be considered still valid and useful to identify the upcoming milestones.
Today, stage~1 operational networks have been deployed in semi-commercial environments, while all other stages appear more elusive, in particular, due to the lack of commercial-grade products for components such as the quantum repeater.
However, significant progress has been demonstrated for many enabling technologies and quantum networks that can be categorised as stage~2 have been implemented in controlled environments, which gives us hope that technology will soon catch up with high expectations from the scientific community.

\section{Types of quantum network}\label{sec:physical_layer}

According to the current consensus in the field, the quantum internet is envisioned to operate alongside the classical internet, particularly during its initial phases \cite{garcia2024strategies}. Therefore, the categorisation of quantum networks that collectively constitute the quantum internet parallels that of the classical internet, based on the operational area's size. Broadly, quantum networks can be categorised into \textit{modular} networks, \textit{ground-based} networks, and \textit{satellite-based} networks. 

\textit{Modular networks} are distinct in their construction, often associated with a multi-core quantum architecture. They are predominantly used in distributed quantum computing \cite{caleffi2024distributed} and are typically implemented on a chip to interconnect various quantum computing modules \cite{rodrigo2021double, jnane2022multicore}. These networks are crucial for the scalability of quantum computing technologies, providing essential links within and between quantum processors.

On the other hand, \textit{ground-based networks} can be further subdivided according to their operational distances, similar to classical networks. The most frequently discussed type of ground-based networks in current research—and those that often have experimental test beds—are metropolitan quantum networks. Metropolitan quantum networks operate over metro-scale distances \cite{chen2021implementation, chung2022design}, facilitating regional connectivity within a more confined geographic area compared to their long-range counterparts.

The motivation for including \textit{satellite-based networks} in the global internet stems from the fact that free space links suffer polynomial loss compared to an exponential loss in optical fibre links \cite{wallnofer2022simulating}. Therefore, satellite-based quantum networks unlock quantum communication over continental and intercontinental ranges, which is otherwise challenging with purely ground-based quantum networks. However, the free space link (Forward $\&$ Return link) passes through the atmosphere, which induces several effects, such as diffraction, absorption, and scintillation effects \cite{chiti2024survey}. While research efforts are required on this front, the perseverance of entanglement in good weather conditions can be seen as a potential to include satellite-based networks in the quest for the quantum internet \cite{armengol2008quantum}. Given the vast distance of operation in satellite-based networks, latency also plays a huge part when considering the decoherence of quantum states. Quantum memories would play a vital role in such scenarios. Studies are ongoing on this front for the use of quantum memories in space \cite{gundougan2021topical}.

\section{Generations of quantum repeater}\label{sec:repeater}
Quantum repeaters, as introduced in Section~\ref{sec:components_of_qi}, are essential for the distribution of end-to-end entanglement, especially in a long-distance regime. However, two major challenges complicate this task: \textit{photon loss} and \textit{operational errors}. Photon loss occurs when photons, the carriers of quantum information, are absorbed or scattered during transmission, preventing successful entanglement. Operational errors, on the other hand, stem from imperfections in the devices that manipulate and measure quantum states, potentially leading to incorrect results.

To address these issues, quantum repeaters are classified into three generations based on the methods they employ to correct and mitigate errors \cite{munro2015inside, muralidharan2016optimal}.

\begin{figure}
\centering
\includegraphics[width=0.9\columnwidth]{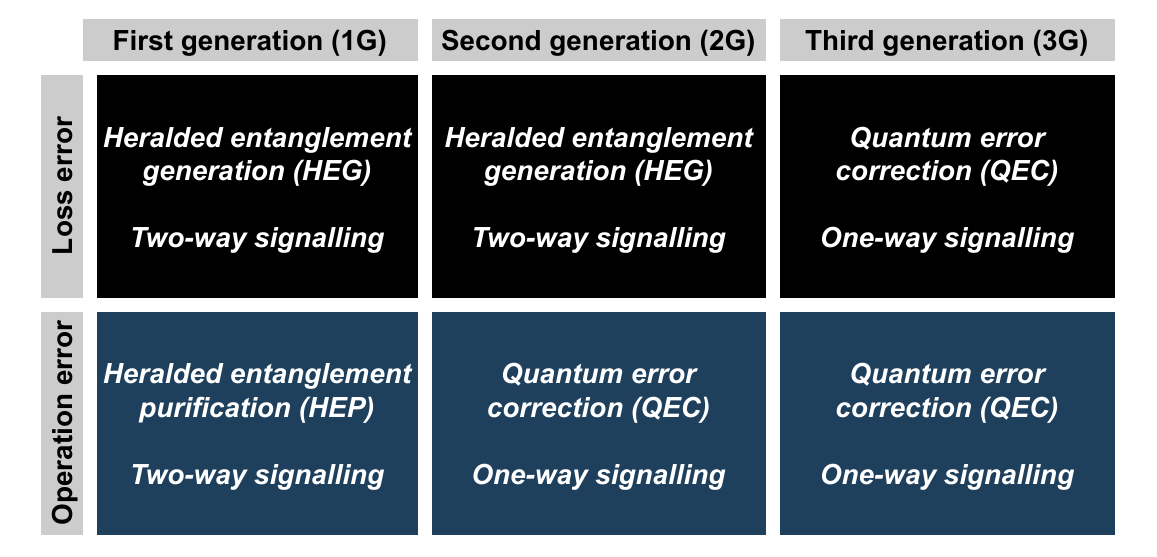}
\caption{Generations of quantum repeaters based on the method used to correct the error (modified from \cite{muralidharan2016optimal}).}
    \label{fig:repeater}
\end{figure}

The first generation of quantum repeaters employs two key techniques: \textit{heralded entanglement generation} (HEG) and \textit{heralded entanglement purification} (HEP). HEG is responsible for correcting loss errors through a deterministic process that ensures the reliable delivery of entanglement. It utilises two-way signalling to confirm the successful establishment of end-to-end entanglement. Similarly, HEP addresses operational errors by using two-way signalling to guarantee the effectiveness of the entanglement purification process. This technique has been further elaborated in Section~\ref{sec:prelim}. 

The second generation of quantum repeaters uses HEG to correct loss errors while \textit{quantum error correction} (QEC) to correct operational errors. HEG in the second generation of quantum repeaters still uses two-way signalling, while QEC only requires one-way signalling, which reduces the time consumption in the establishment of the end-to-end entanglement.

The third generation of quantum repeaters uses QEC to correct loss and operational errors. Subsequently, only one-way signalling is enough for the operation of the third generation of quantum repeaters, which greatly reduces the time consumption in the end-to-end entanglement establishment procedure. 

Today, the industry and academy are working towards realising stable 1G quantum repeaters, which are not yet ready for mass production and field deployment. For this reason, many of the scientific papers published, including those cited in this paper, focus on 1G repeaters only, while 2G/3G repeaters are currently mainly a matter of long-term speculation.

\begin{figure}
\centering
\includegraphics[width=0.9\columnwidth]{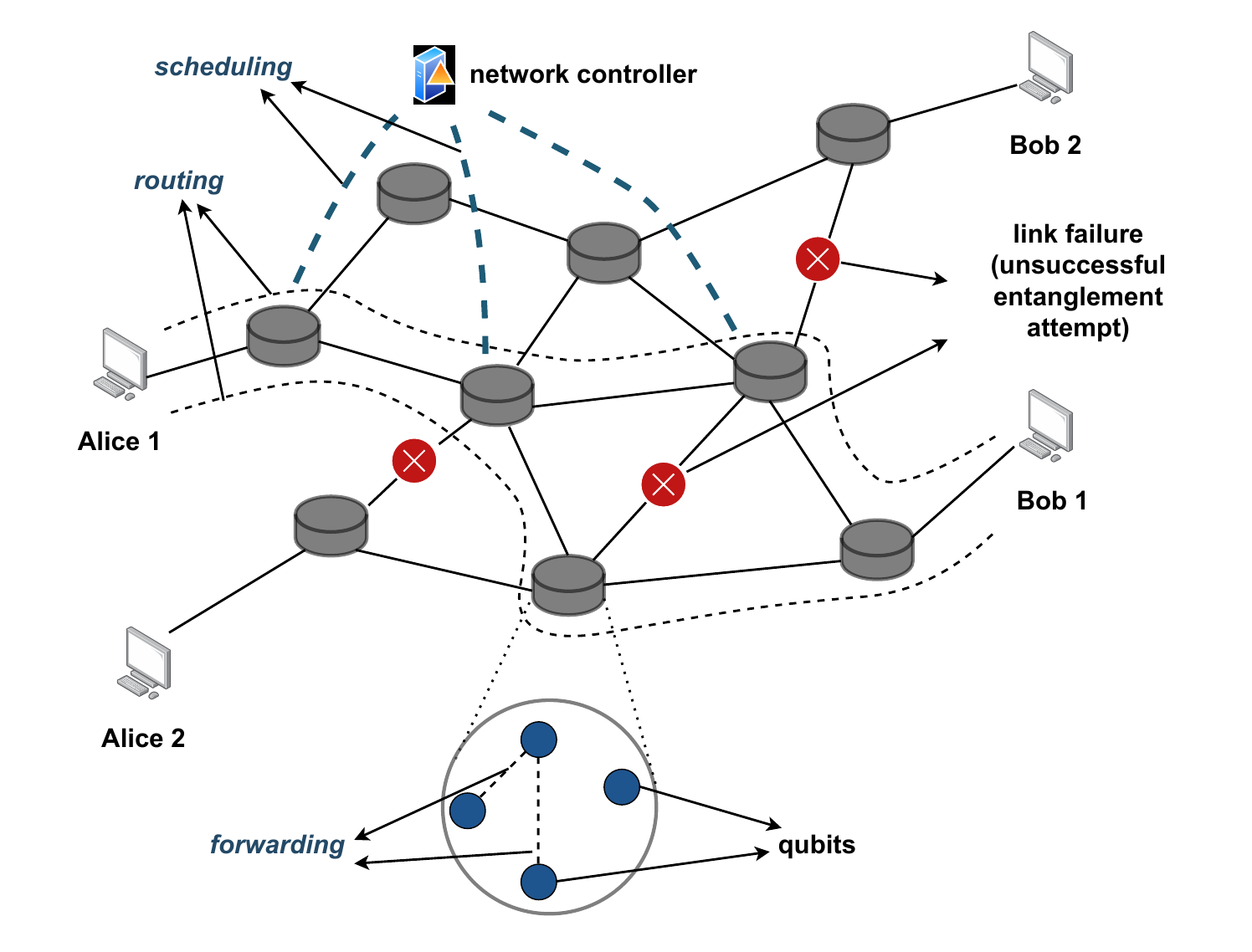}
\caption{\textbf{Routing:} selection of route. \textbf{Forwarding:} execution of entanglement swapping and purification procedure. \textbf{Scheduling:} selection of end-to-end entanglement in a time-slot (taken from \cite{kumar2025quantum}).}
    \label{fig:routing}
\end{figure}

\section{Routing, forwarding, and scheduling}\label{sec:routing_forwarding_scheduling}
As introduced in Section~\ref{sec:prelim}, the applications in Section~\ref{sec:application} utilise two processes. First, transporting an arbitrary quantum state using a quantum teleportation protocol. Second, swapping entanglement is distributed over smaller distances with an end-to-end entanglement between two nodes or devices using an entanglement-swapping protocol. While quantum teleportation is necessary for quantum communication, establishing end-to-end entanglement unlocks most applications. It is to be noted that the two protocols are almost similar and inter-convertible, with only the difference of an extra qubit and Bell state measurement involved, as depicted in Fig.~\ref{fig:teleport_swap_chain}.

Achieving end-to-end entanglement between two quantum devices within a network that includes quantum repeater nodes is a complex challenge under practical constraints. Let us delve into the intricacies of this problem. Imagine a network of quantum repeaters connected by quantum links, as depicted in Fig.~\ref{fig:routing}. The network controller receives requests to establish end-to-end entanglement between the connected quantum devices. The network operates using specific protocols for entanglement generation \cite{caleffi2017optimal}, entanglement purification \cite{victora2020purification, li2022fidelity, hu2024high}, and quantum error correction \cite{patil2024entanglement}. Additionally, the components of the network, including quantum links \cite{van2013path}, repeaters \cite{kumar2024routing, kumar2025routing}, and devices, could be heterogeneous; that is, they vary either in quality or involve different physical systems.

Given these protocols and assumptions, the \textit{problem} involves determining how to \textit{satisfy} the requests received by the network controller\footnote{For simplicity in explanation, we assume the presence of a network controller, as this is a common approach in the literature. However, it is worth noting that approaches not requiring a network controller are also feasible.} in a \textit{practical scenario} by selecting the most efficient routes, swapping orders, and scheduling policy within the network to effectively serve the requests referred to as \textit{routing}, \textit{forwarding}, and \textit{scheduling} respectively. To \textit{satisfy} means to serve these requests with the highest possible throughput \cite{zhao2021redundant} and fidelity—meeting or exceeding the threshold fidelity set by the requests—while minimising latency, ensuring fairness, and optimising resource \cite{zhang2021fragmentation}. The \textit{practical scenario} means taking into account operational challenges such as signal attenuation due to distance, noise during the application of quantum gates in the procedure, quantum state decoherence over time, and depolarisation due to the quantum channel. 

\begin{figure}
\centering
\includegraphics[width=0.7\columnwidth]{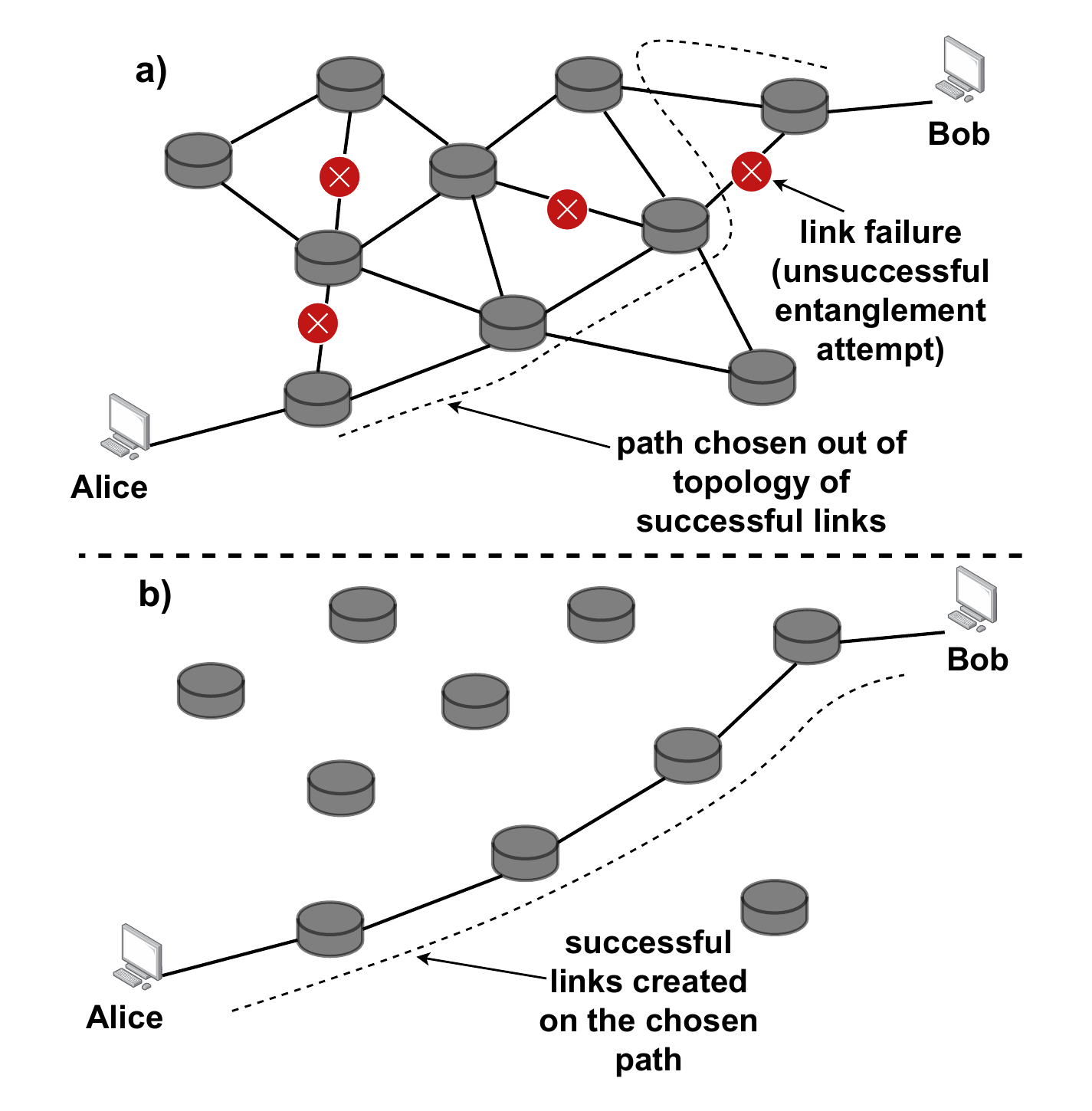}
\caption{\textbf{a) Advanced entanglement generation:} The Path is chosen out of a reduced topology constructed out of successful entanglement links. \textbf{b) On-demand entanglement generation:} Successful entanglement links are established on a chosen path (taken from \cite{kumar2025quantum}).}
    \label{fig:advanced_and_on_demand}
\end{figure}

For multiple end-to-end entanglement deliveries between source-destination pairs in a quantum network, finding an optimal path using some routing metric (discussed in Section~\ref{sec:routing_metric}) is referred to as \textit{routing}. Meanwhile, the actual execution of the entanglement swapping procedure and purification (depending upon the protocol), which considers the path found using the routing algorithm, is referred to as \textit{forwarding}. The routing phase can further be categorised based on entanglement generation utilised, that is, on-demand or advanced entanglement generation (depicted in Fig.~\ref{fig:advanced_and_on_demand}). On-demand entanglement generation calculates routing paths before initiating entanglement, whereas advanced entanglement generation bases its routes on the network topology that emerges following the probabilistic success or failure of initial link establishment. Each approach offers distinct benefits and limitations. On-demand generation allows routing decisions to be made before quantum states begin to decohere, which is advantageous. However, due to the inherently probabilistic nature of entanglement, it may require multiple attempts to secure all necessary links for a successful route. On the other hand, advanced entanglement generation relies on post-link-establishment topology data. While this method integrates more recent network information, it risks the routing decisions occurring within the decoherence window of the entangled photon pairs, which could degrade the quality of the entanglement. The choice of entanglement generation scheme significantly impacts the quantum network's requirements. For on-demand entanglement generation, a study \cite{khatri2019practical} has proposed two figures of merit. The first is the average connection time, which dictates the quantum memory requirements. The second is the average largest entanglement cluster size, which indicates the scalability of the quantum networks. In the literature, the routing phases used in on-demand and advanced entanglement generation are also referred to as proactive and reactive, respectively. 

The issue of \textit{scheduling} has recently begun to attract attention in the field of quantum networking. Scheduling involves selecting which end-to-end entangled EPR pairs to allocate to a current path request within a specific time-slot (discussed in Section~\ref{sec:slotted_model}) while reserving EPR pairs for future path requests \cite{cicconetti2021request, wang2023efficient, fittipaldi2022linear, gu2023esdi}. This process is crucial for managing latency, particularly due to the decoherence of quantum states. Effective scheduling ensures that the quantum states are utilised efficiently, minimising the impact of decoherence on network performance and maximising the fidelity of quantum communications.

Addressing all potential constraints to deliver entanglement in a multi-user quantum network presents substantial difficulties. For instance, optimising end-to-end entanglement for multiple source-destination pairs is an NP-hard problem, as identified in \cite{chakraborty2020entanglement}. Additionally, the specific type of application running on the network significantly influences the demands placed on it, which suggests that distinct routing problems could be formulated for each application class.

The routing problem described earlier, which focuses on delivering end-to-end entanglement between specified endpoints, is particularly relevant for applications that enhance security or privacy, as well as for other specialised applications discussed in Section~\ref{sec:application}. Conversely, application classes that primarily enhance computing capabilities, such as those found in distributed quantum computing, have distinct requirements that necessitate alternative routing solutions \cite{cicconetti2022resource, cicconetti2023service}. These variations underscore the complexity and specialised nature of routing in quantum networks.

\begin{table}[h!]
\centering
\caption{A general recipe for end-to-end entanglement distribution in a quantum network}
\label{tab:recipe_entanglement}
\begin{tabular}{p{3.6cm}p{8.5cm}}
\toprule
Stage & Choices  \\
\midrule
Ingredients & Bell (or GHZ) states  \\

Routing Algorithms & Dijkstra-based, optimisation programs (linear, integer, stochastic, etc), greedy-based, AI-based, etc.  \\
Evaluation metrics & Throughput, fidelity, hop count, etc. \\

Performance enhancers & Purification, error correction  \\

\bottomrule
\end{tabular}
\end{table}

\section{Quantum internet protocol stack}\label{sec:qi_stack}
For efficient and scalable network development, a protocol stack is indispensable, as it offers a structured framework enabling the independent development of each layer. Numerous research groups have put forth proposals for a quantum internet protocol stack, similar to the classical internet’s Open Systems Interconnection (OSI) model, which segregates network functions into distinct layers. In this paper, we will examine three quantum internet protocol stacks that are most frequently referenced in the scholarly literature. In Fig.~\ref{fig:stack}, we provide a graphical visualisation of the stacks, highlighting their differences. For more details on this specific topic, interested readers are invited to check recent relevant surveys, like~\cite{illiano2022quantum, li2024survey}.

\begin{figure}
\centering
\includegraphics[width=\columnwidth]{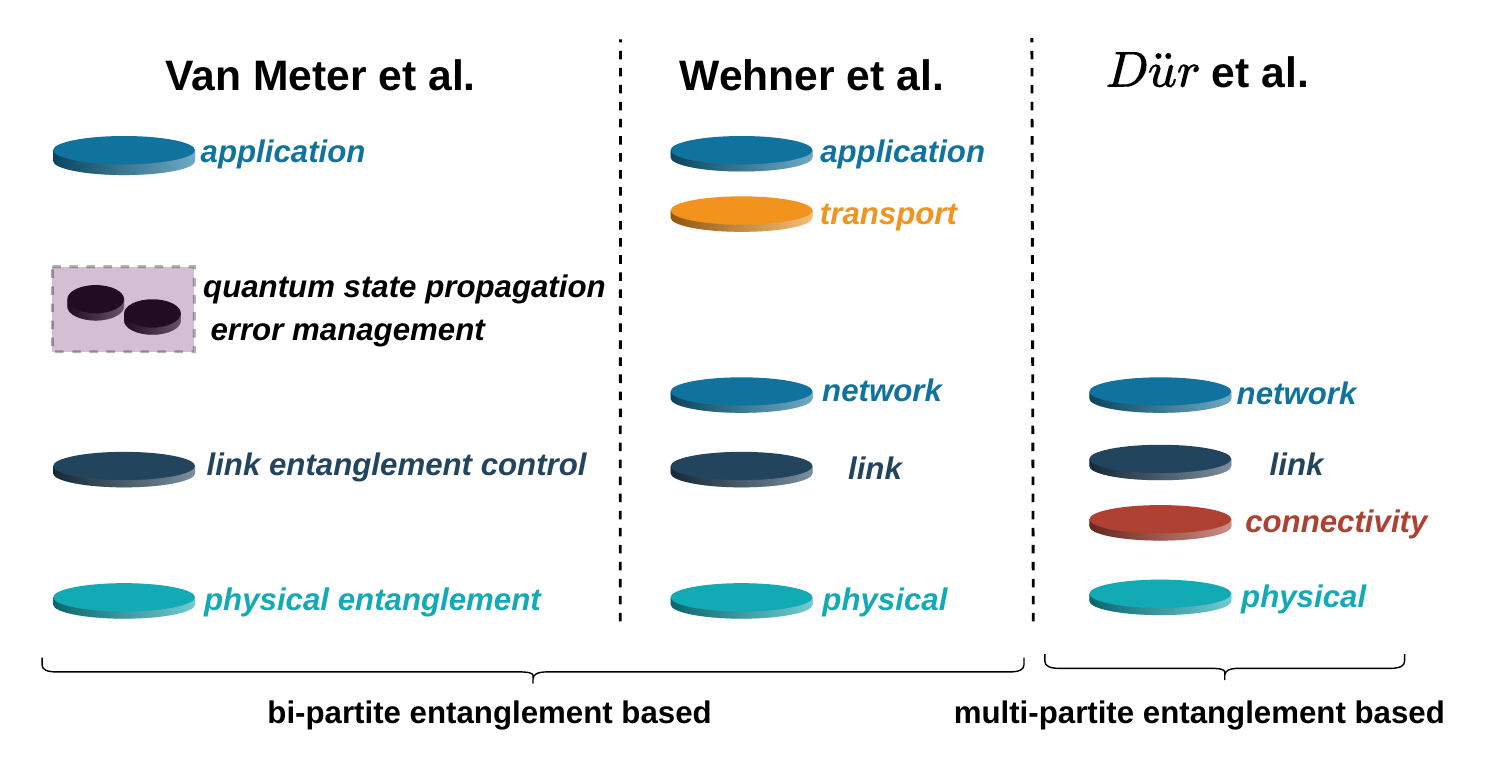}
\caption{State-of-the-art quantum internet protocol stacks. Van Meter et al. \cite{aparicio2011protocol}, Wehner et al. \cite{kozlowski2020designing}, and D\"{u}r et al. \cite{pirker2019quantum} (taken from \cite{kumar2025quantum}).}
    \label{fig:stack}
\end{figure}

The initial comprehensive proposal for a layered quantum internet protocol stack dates back to 2009, introduced by Van Meter et al. \cite{van2008system, aparicio2011protocol}. This protocol stack comprises five layers: \textit{physical entanglement}, \textit{link entanglement control}, \textit{error management}, \textit{quantum state propagation}, and \textit{application}. The \textit{physical entanglement} layer is tasked with generating EPR pairs between adjacent nodes, while the \textit{link entanglement control} layer monitors the success and failure of these pair establishment attempts. These two layers operate across every single hop and are applied recursively to maintain connectivity through shared EPR pairs. The \textit{error management} layer specifies the hops where purification is needed, which ensures the high fidelity of the EPR pairs and records the outcomes. The \textit{quantum state propagation} layer is responsible for establishing end-to-end entanglement by creating shared EPR pairs between the source and destination stations using the entanglement swapping procedure. This layer also communicates the results of the swapping to the end nodes, enabling them to perform the necessary single-qubit operations. Both the \textit{error management} and \textit{quantum state propagation} layers are similarly operated recursively across multiple hops to ensure high fidelity of the distributed entanglement. Finally, the \textit{application} layer primarily manages the applications run on the quantum internet using high-fidelity EPR pairs. Recent advancements in this protocol stack have focused on ensuring synchronicity between distant nodes \cite{matsuo2019quantum}. This study introduced a RuleSet-based quantum link bootstrapping protocol that assesses the fidelity of quantum links and their throughput. Another study implemented a quantum recursive network architecture (QRNA) alongside the RuleSet-based protocol to enhance scalability, specifically achieving multi-party entanglement and internetworking within quantum networks \cite{van2022quantum}.

The protocol stack proposed by Wehner et al. \cite{dahlberg2019link, kozlowski2019towards, kozlowski2020designing} features layers named \textit{physical}, \textit{link}, \textit{network}, \textit{transport}, and \textit{application}, drawing inspiration from classical protocol stacks. The \textit{physical} layer focuses on the hardware components required to generate initial entanglement within a predefined time slot. The \textit{link} layer enhances the robustness of connections between nodes in the quantum internet, leveraging a quantum entanglement generation protocol (QEGP). Notably, this model incorporates key hardware parameters, leading to the development of a \textit{hardware-abstraction} sub-layer that bridges the physical systems to the link layer. The \textit{network} layer utilises these connections to design network protocols that facilitate endpoint communication across the quantum internet while managing the network's entanglement resources. The \textit{transport} layer is responsible for managing quantum internet traffic, including congestion control \cite{leone2021cost, li2023swapping}, re-transmission, and monitoring quantum channel quality. Finally, the \textit{application} layer ensures the functionality of desired applications on quantum devices interconnected by the quantum internet. 

The protocol stack developed by D\"{u}r et al. \cite{pirker2018modular, pirker2019quantum} represents a significant departure from previously discussed stacks, as it employs multipartite states instead of the conventional Bell states. This stack includes the layers \textit{physical}, \textit{connectivity}, \textit{link}, and \textit{network}. The \textit{physical} layer in this framework undertakes multiple roles, including the generation and transmission of entangled states, as well as signal conversion between quantum channels. The \textit{connectivity} layer is dedicated to establishing long-distance entanglement through entanglement purification processes. Following this, the \textit{link} layer addresses incoming requests by providing the necessary quantum states. Lastly, the \textit{network} layer ensures the distribution and sharing of entangled states across various networks, thus facilitating broad quantum communication capabilities.
\chapter{Routing in Quantum Networks}\label{ch:routing}
As discussed in Section~\ref{sec:routing_forwarding_scheduling}, routing in quantum networks is one of the critical problems for the distribution of entanglement. Consequently, it has become an area of significant interest in recent years due to the obvious differences from the classical routing techniques. In this chapter, we delve into quantum routing from a perspective of considering realistic constraints that move beyond the idealised assumptions. To this, we will consider two aspects: routing with heterogeneous quantum repeaters and routing with end-to-end knowledge. 

With this chapter, we begin the quest into the first scope of this work (see Section~\ref{sec:scope_objective}). As a reminder, we will be focusing on the first two research questions (see Section~\ref{sec:research_questions}) defined for this work in this chapter and performance evaluation will be followed in Chapter~\ref {ch:performance_evaluation}.

\section{Metrics for quantum routing}\label{sec:routing_metric}
Given the complexity of the problem as described in Section~\ref{sec:routing_forwarding_scheduling}, a common way in the literature is to use a routing metric to allocate paths for entanglement distribution. In this regard, a wide range of routing metrics has been employed in the literature to tackle the defined problem statements, respectively. This includes throughput \cite{shi2020concurrent, chakraborty2020entanglement, zhang2021fragmentation, zhao2021redundant}, inverse throughput \cite{van2013path}, continuous fidelity curves (entanglement generation fidelity vs rate) \cite{coutinho2023entanglement}, end-to-end entanglement rate of the path \cite{caleffi2017optimal}, hop count \cite{li2021effective}, latency \cite{cicconetti2021request}, E2E fidelity \cite{zhao2022e2e, kumar2024routing, kumar2025routing}, and fairness \cite{yang2022online}. 

The selection of routing paths using the routing metric can also be categorised using different routing algorithms such as Dijkstra-based \cite{van2013path, kumar2024routing, kumar2025routing}, linear programs \cite{iacovelli2024probability}, stochastic programs \cite{kaewpuang2023entangled}, integer programming \cite{zeng2023entanglement}, greedy algorithms \cite{pant2019routing, pandey2023greedy}, and AI-based routing algorithms \cite{le2022dqra, chaudhary2023learning, islam2024reinforcement}. 

\section{Related works in quantum routing}\label{sec:relatedworks_routing} 

Among the initial investigations, Meter et al. \cite{van2013path} proposed the inverse of the throughput of the link as the routing metric, which is Bell pairs per second at a certain fidelity. The other possible candidates for routing metrics, such as the total number of laser pulses or the total number of measurement operations, were found to be unsuitable. The proposed metric (inverse of throughput) takes into account the quality of links, which arises from their length. Caleffi et al. \cite{caleffi2017optimal} put forward a routing protocol which used the end-to-end entanglement rate of the path as the routing metric. This metric takes into account physical mechanisms such as entanglement generation, swapping, Bell state measurement, and decoherence time. 

Pant et al. \cite{pant2019routing} and Li et al. \cite{li2021effective} proposed routing schemes that simply utilise the hop count as the routing metric, which we will use as a baseline reference in our performance evaluation in chapter~\ref{ch:performance_evaluation}.

Shi et al. \cite{shi2020concurrent} introduced the QCAST framework, which
adopts a time-slot structure, a modified version of the approach proposed by Pant et al. \cite{pant2019routing}. Each time-slot consists of four phases. During phase one, all nodes receive information regarding the current source-destination pair, intending to establish end-to-end entanglement. In phase two, routes are determined for the requests from phase one using the QCAST algorithm, and external entanglement links are established between neighbouring nodes, utilising the qubits and channels associated with these nodes. Phase three involves nodes exchanging link state information through classical channels. Finally, in phase four, each node establishes internal links to complete the route between the source and destination. This work has inspired the system defined in Section~\ref{sec:slotted_model}, which also involves multiple phases in a time-slotted system. Zhang et al. \cite{zhang2021fragmentation} introduced a novel fragmentation-aware algorithm designed to address resource contention issues across multiple paths. The core concept involves redefining network resources as `public' (shared among all paths) rather than `private' (dedicated to specific paths), contrasting with the approach taken by QCAST. Zhao et al. \cite{zhao2021redundant} introduced a method aimed at enhancing network throughput over QCAST by as much as 68.75\%, achieved through the provision of redundant external links established during phase two. Such optimisations are orthogonal to the problem addressed in this work and, in principle, they can be used in combination with the solutions proposed.

\section{System model}
In line with Section~\ref{sec:components_of_qi}, in this work, we consider the general quantum network structure shown in Figure~\ref{fig:network_architecture}. A summary of its key components is as follows:
\begin{enumerate}
    \item \textbf{Quantum Repeater:} Fundamental component in quantum networks, which allows for long-distance end-to-end entanglement.
    \item \textbf{Quantum Device:} Quantum computer capable of running quantum applications connected to the quantum network.
    \item \textbf{Storage and Controller:} A logically centralised entity that maintains the network topology and an estimate of the end-to-end fidelity between pairs of quantum devices (storage) and is in charge of making routing decisions to establish end-to-end entanglement between them. 
    \item \textbf{Quantum Network Link:} A link with quantum and classical channels between two neighbouring quantum repeaters or a quantum repeater and a quantum device.
\end{enumerate}
\begin{figure}[tb]
    \centering
    \includegraphics[width=0.8\columnwidth]{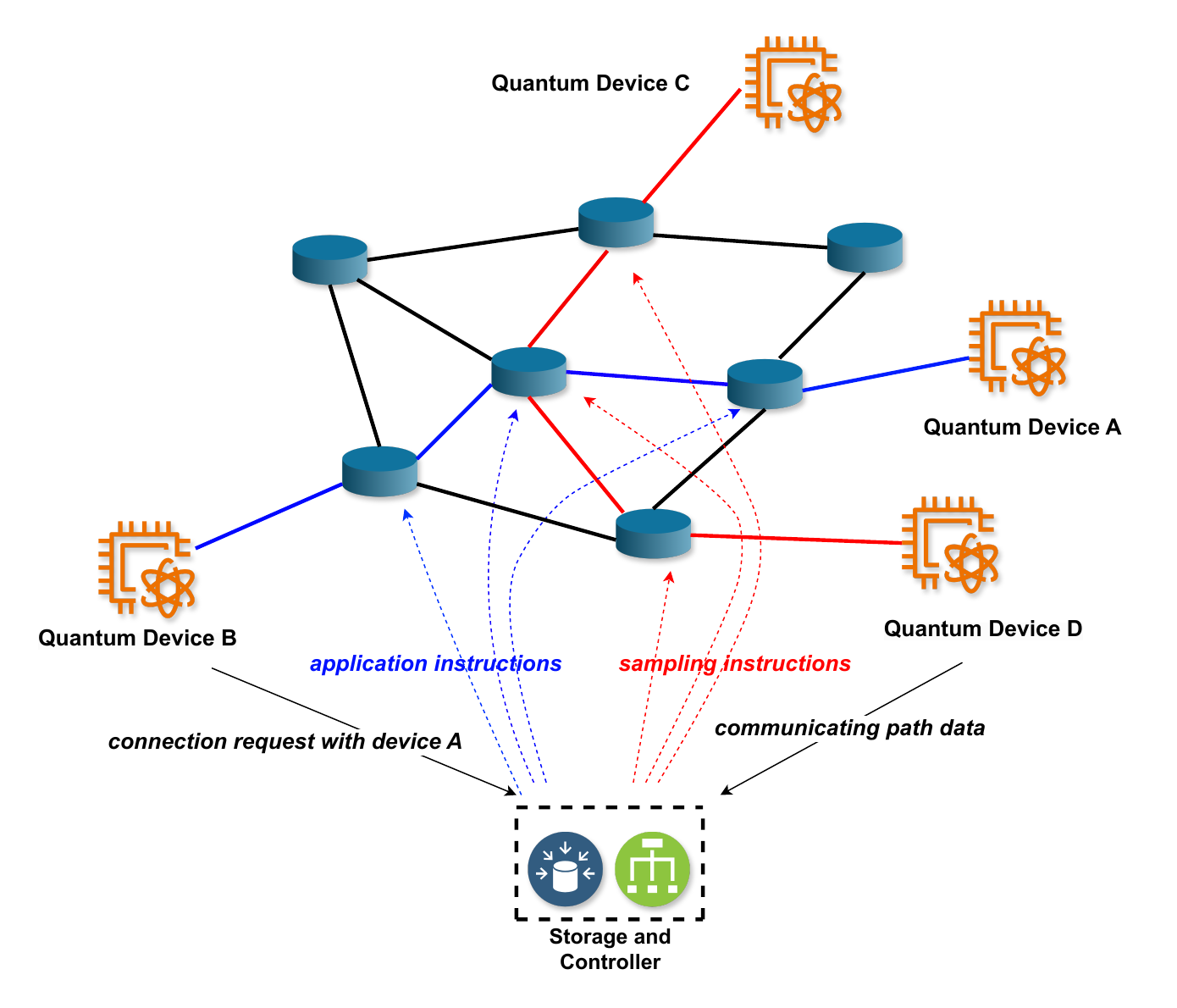}
    \caption{Quantum network architecture}
    \label{fig:network_architecture}
\end{figure}

It is assumed that for every link there is a heralded source of entangled photons, which are transmitted to each end of the link and enter a photon detector. Therefore, each node (quantum device or repeater) is assumed to be equipped with one photon detector for each of its links.  Furthermore, the quantum repeaters are required to have a sufficient number of Bell state analysers that can be used to perform entanglement swapping, as illustrated in Section~\ref{ssec:swapping}. Finally, for simplicity in design and analysis, it is assumed that no time multiplexing is performed; that is, for as long as two links of a quantum repeater are assigned to a specific end-to-end path, they cannot be used by other paths.
Two quantum devices, e.g., A and B, can make use of the quantum network to realise distributed quantum applications through end-to-end entanglement of (some of) their qubits.
We assume that this happens through a request to the controller, which is in charge of deciding the path between A and B among all the possible ones in the network.
Once such a path is selected, the controller is expected to notify all the quantum repeaters involved so that they perform appropriate operations to realise the end-to-end entanglement between quantum devices A and B until further notice.
Such notifications are referred to as ``application instructions'' in Fig.~\ref{fig:network_architecture}, and we assume here,
For simplicity, a link can be assigned a single path, i.e., it cannot be shared across applications between different pairs of quantum devices.

We now delve into the details of the establishment of end-to-end entanglement.
As exemplified in Fig.~\ref{fig:network_architecture}, end-to-end entanglement between quantum devices C and D or quantum devices B and A is carried out via entanglement swapping as shown in Fig.~\ref{fig:teleport_swap_chain}c). The process is as follows:
\begin{enumerate}
    \item All the EPR pairs between neighbouring quantum repeaters along the intended route are successfully shared via the respective quantum links.
    \item The quantum repeaters along the route perform a Bell state measurement on one qubit of the shared EPR pair.
    \item They send the classical results of Bell measurement to one of the end nodes (e.g., Bob) and inform the other one (Alice) about the procedure's success.
    \item Bob applies Pauli gates ($I$, $\sigma_x$, $\sigma_z$ or $\sigma_x \sigma_z$) depending upon the classical results of measurement received.
\end{enumerate}
Bell state measurement performed on each quantum repeater is central to establishing end-to-end entanglement, and it relies on the ability to project onto the basis states, which can be summarised through an $\eta$ parameter as follows \cite{helstrom1969quantum, dur1999quantum}:
\begin{equation}\label{eq:P0}
P_0 = \eta |0 \rangle \langle 0| + (1 - \eta) |1 \rangle \langle 1|,
\end{equation}
\begin{equation}\label{eq:P1}
P_1 = \eta |1 \rangle \langle 1| + (1 - \eta) |0 \rangle \langle 0|.
\end{equation}

where $\eta$ is the efficiency of the qubit measurement in the basis states.
Consider a scenario where the correct measurement basis for a qubit is $|1 \rangle \langle 1|$. Then, during entanglement swapping in quantum repeaters, the measurement is characterised by Eq.~\eqref{eq:P1}. For an ideal measurement device in a quantum repeater $\eta = 1$, the measurement results are entirely reliable. However, if $0 \leq \eta < 1$, there exists a non-zero probability of $1-\eta$ that the results will be incorrect. The same probability of error applies to measurement in the $|0 \rangle \langle 0|$ basis described by Eq.~\eqref{eq:P0}.

For a linear path emerging from $N$ entangled pairs, the end-to-end fidelity of the path is given by \cite{dur1999quantum}:
\begin{equation}\label{eq:linear_chain_routing_fidelity}
F_N = \frac{1}{4}\left\{
1 + 3 \left[\frac{4\eta^2 - 1}{3}\right]^{N-1}
\left[\frac{4F - 1}{3}\right]^{N}
\right\},
\end{equation}
where
$N$ is the number of entangled pairs,
$\eta$ is the efficiency of the entanglement swapping procedure, and
$F$ is the initial fidelity of the generated EPR pairs.

In this study, we consider the efficiency of the measurement devices integral to the quantum repeater's overall performance. Consequently, the errors introduced by the Bell state measurements influence the categorisation of quantum repeaters, i.e., a quantum repeater belongs to a certain class $g$ depending upon the efficiency figure $\eta_g$ of its measurement devices, which, in turn, affects the fidelity of the end-to-end entangled pairs used by the applications running in the quantum devices.
In fact, for a linear route connecting two quantum devices through $\sum_i^G N_g$ quantum repeaters belonging to $G$ classes, each class containing $N_g$ quantum repeaters, the fidelity is given by:
\begin{equation}\label{eq:main_routing_fidelity}
F_p
= \frac{1}{4}\left\{
1 + 3
\prod_{g=1}^G
\left[
\left( \frac{4\eta_g^2 - 1}{3} \right)^{N_g}
\left( \frac{4F - 1}{3} \right)^{N_g}
\right]
\left( \frac{4F - 1}{3} \right)
\right\},
\end{equation}
where each pair of qubits in a link is assumed to be in a Werner state with initial fidelity $F$ upon generation of the local entanglement. It is worth noting that this equation models the heterogeneity across quantum repeaters, where each class of quantum repeater has a distinct efficiency figure \( \eta_g \). The end-to-end fidelity \( F_p \) thus becomes a function of both the number of repeaters in each class and their respective efficiency figures. This heterogeneity influences the cumulative fidelity of entanglement over the entire route, thereby impacting the performance of algorithms designed to optimise routing paths in quantum networks. In addition, the initial fidelity of EPR pairs in quantum links can vary, leading to heterogeneity in quantum links that impacts the end-to-end fidelity calculation. This end-to-end fidelity can be directly utilised in the \emph{grey-box} algorithms (introduced in Section~\ref{ssec:grey_white_box}) to guide path selection and optimisation. However, we assume uniform initial fidelity of EPR pairs $F$ across all quantum links in our analysis for simplicity.

\section{The time-slotted model}\label{sec:slotted_model}
The general dynamics of delivering entanglement between sources and destinations in a quantum network are depicted in Fig.~\ref{fig:time}. For routing, a connection-oriented model is assumed, where the applications on quantum devices must reserve the use of the quantum network through the controller, specifying the peer end device with which they wish to establish end-to-end entanglement of qubits and a minimum fidelity, which depends on the application's characteristics.
It is commonly understood that the use of connections unlocks the potential for higher utilisation of the (scarce) resources in the quantum network, especially when coupled with link-layer protocols for managing the local entanglement between quantum repeaters sharing a link \cite{li2022connection}.

\begin{figure}[tb]
\centering
\includegraphics[width=0.8\columnwidth]{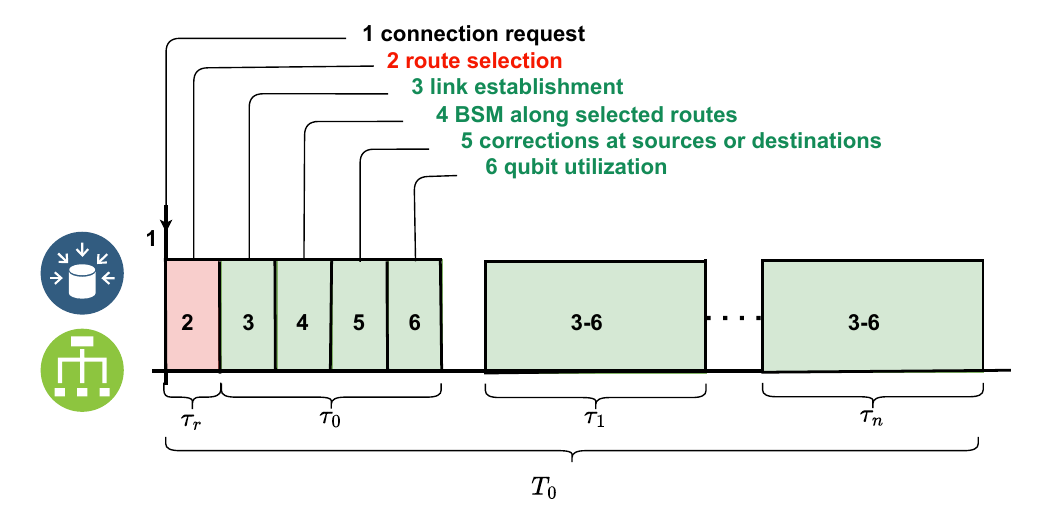}
    \caption{The timing diagram}
    \label{fig:time}
\end{figure}

In our system, we model the time evolution in \emph{periods}, each marked by a change in the set of active connections.
In other words, a new period begins ($T_0$ in Fig.~\ref{fig:time}), when either a new connection requests to be admitted by the controller or a previously active connection is terminated by the end devices because the application has been completed.

Upon starting a new period, the controller is responsible for assigning a path to the active (and new) connections, using a route selection algorithm such as that in Algorithm~\ref{alg:routing}. This phase is represented with duration \( \tau_{r} \) in Fig.~\ref{fig:time}. During this phase, all the quantum repeaters are also provided with a periodic schedule of instructions on how to perform entanglement swapping to materialise the routing decisions made by the controller, e.g., adopting the architecture and procedures illustrated in \cite{skrzypczyk2021architecture}.
After that, and until the end of the period, the routes remain statically assigned for the connections that have been accepted, while those rejected (or \emph{blocked}) will not be able to make use of the quantum network in the current period.

Typically, it is assumed that during the period the system will evolve as a sequence of evenly-spaced time slots, each including the phases from 3 to 6 in Fig.~\ref{fig:time}: establish local entanglement in a link (3), perform Bell state measurements (4), transfer the classical results to one of the quantum devices so that it can perform $\sigma_z / \sigma_x$ corrections (5), and consume the qubits on the quantum devices by the distributed applications running on them (6).
The duration of the time slot must be less than the decoherence time of the qubits, which, with today's state-of-the-art technology, is very challenging.
Furthermore, we note that local entanglements and Bell state measurements are stochastic processes, as they may fail with a given probability that depends on the specific technology used.
During the operation of the system, these failures may happen in Phases 3 and 4 and affect the long-term throughput, i.e., the number of end-to-end entangled qubits that can actually be used by the source and destination devices in a unit of time.

Instead, for the rest of this work, we will focus only on Phase 2, i.e., the route selection process, and we can consider the stochastic failures orthogonal to our work.

\begin{figure}[tb]
    \centering
    \includegraphics[width=0.8\columnwidth ]{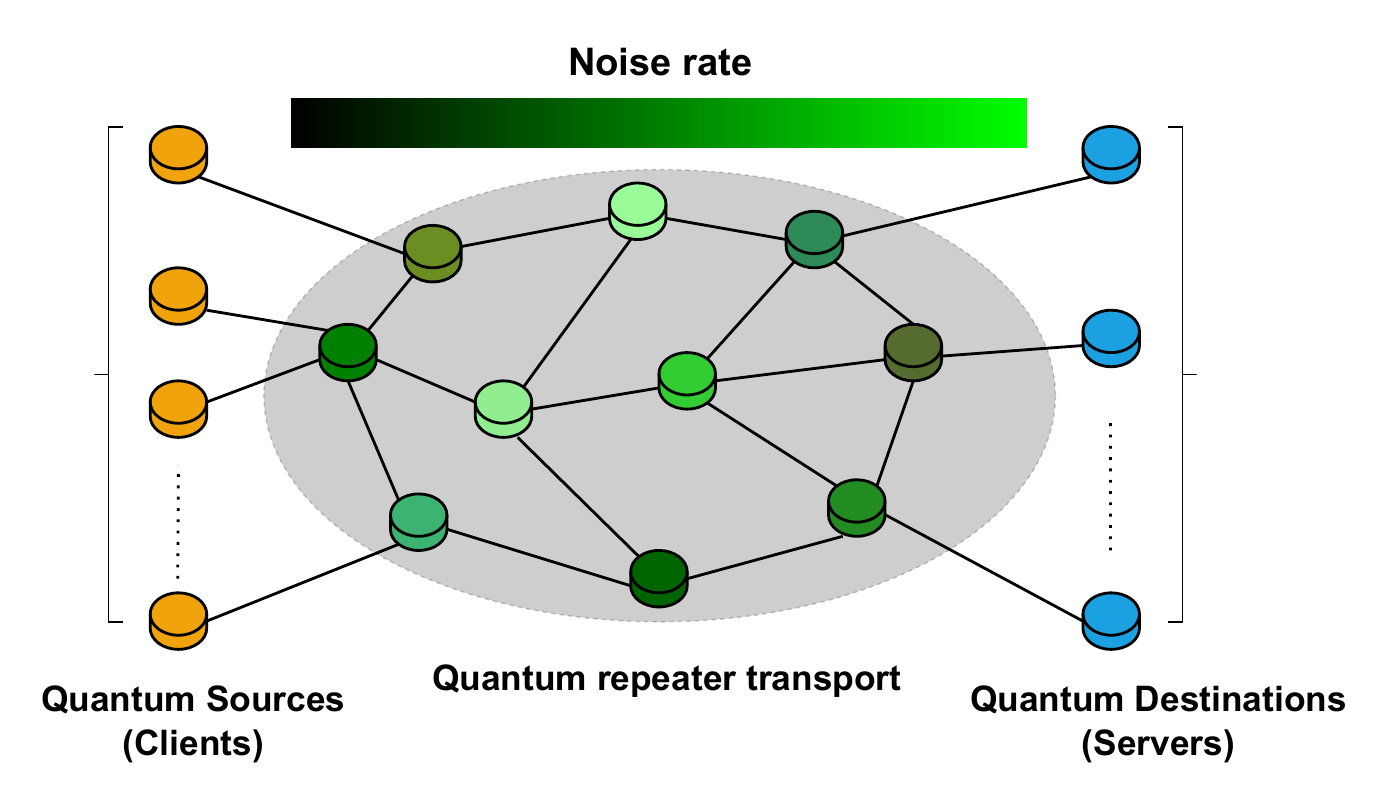}
    \caption{A general schematic of a quantum network with heterogeneous quantum repeaters.}
    \label{fig:schematic}
\end{figure}

\section{Routing with heterogeneous quantum repeaters}
We now consider the first aspect of the routing in quantum networks of this work. In this section, we will only focus on the heterogeneity of quantum repeaters. To this, the quantum network looks as shown in Fig.~\ref{fig:schematic}. Since the network is in the application phase, only application instructions are relevant to the general network architecture in Fig.~\ref{fig:network_architecture} for this aspect.

Across both aspects, in this work, the network topology is modelled as an undirected graph $G(V, E)$, which is static and known to the controller.
Such an assumption is reasonable for a local or metropolitan quantum network that is owned or operated by the same entity, and it is, in fact, valid for intra-domain routing protocols in classical networks, such as Open Shortest Path First (OSPF) and more recent solutions relying on Software Defined Network (SDN) concepts.

\subsection{Problem formulation}\label{ssec:problemformulation_hetero}
The \underline{Quantum Routing With Heterogeneous Quantum Repeaters Problem} is defined as follows:
\textit{Given a quantum network $G(V,E)$ where the repeaters belong to $G$ with characteristic $\eta_g$ and generate EPR pairs with fidelity $F$, a set of $P$ source-destination pairs, and a minimum fidelity threshold $\bar{F}$, find a set of non-overlapping paths such that the end-to-end fidelity computed according to Eq.~\eqref{eq:main_routing_fidelity} is above the threshold $\bar{F}$.}   

Depending on the network topology and the other requirements, it is possible that not all the source-destination pairs can be assigned a path.
When that happens, we say that the source-destination pair was \textit{blocked}, borrowing the term from call admission control schemes terminology.
Of course, during the system operation, it is desirable that blocking happens as sparingly as possible.
In principle, it is possible to convert the problem above to a minimisation problem, where the objective is, indeed, to minimise the number of blocked source-destination pairs, which would be very computationally challenging to solve due to the integrity of variables and the structure of constraints following Eq.~\eqref{eq:main_routing_fidelity}.
Instead, in the following, we adopt a simple and practical path selection procedure, described through the pseudo-code in Algorithm~\ref{alg:shortest-path-entanglement}, which is sufficient for our purposes, i.e., assessing how some of the key system factors affect the performance.
Since the procedure is greedy, i.e., it tries to assign a path to source-destination pairs following their ordering and without backtracking previous decisions, the ordering significantly affects the output of the procedure.
This bias will be removed in the analysis in Section~\ref{sec:performance_hetero_routing} by repeating multiple times the selection with random ordering, which allows us to gather performance in an average case.
The path selection algorithm can be easily extended to include different values of the initial fidelity $F$ of nodes, and a different fidelity threshold $\bar{F}$ for each source-destination pair, but we assume network-wide values for better readability and consistency with the performance evaluation below.

\begin{algorithm}
\caption{Vertex-weighted Dijkstra's algorithm}
\label{alg:shortest-path-entanglement}

\textbf{Input}:
Network as a graph $G(V,E)$, efficiency of all the nodes $\{ \eta_i \}, i \in V$, source-destination pairs $\{ s_k, d_k \}, k = 1..P$, where $P$ is the number of pairs, fidelity threshold $\bar{F}$, weight mapping function $f(\cdot) : \eta \rightarrow w$

\textbf{Output}:

List of end-to-end entangled paths and associated fidelity $\{ \pi_j, F_j \}$ and the number of source-destination pairs for which no path was allocated ($B$)

\begin{algorithmic}[1]
\STATE Paths = $\emptyset$
\STATE B = 0
\STATE Shuffle source-destination pairs
\STATE Derive weighted graph $G'(V',E')$ where $v \in V$ is assigned a weight based on its $\eta$ via $f(\cdot)$ and $E'=E$
\FOR {each pair $\{ s, d \}$}
\STATE Find the shortest path $\pi$ between $s$ and $d$ in $G'$
\STATE allocated=False
\IF {path $\pi$ is found}
\STATE Compute fidelity $F$ of the path $\pi$
\IF {$F \ge \bar{F}$}
\STATE Paths += $(\pi, F)$
\STATE Remove edges in $\pi$ from $E'$
\STATE allocated=True
\ENDIF
\ENDIF
\IF {allocated is False}
\STATE B++
\ENDIF
\ENDFOR
\STATE return \{ Paths, B \}
\end{algorithmic}
\vspace{-0.3em}
\end{algorithm}

\textit{By appropriately tuning $f(\cdot)$, it is possible to obtain a different path selection behaviour that is more or less sensitive to the nodes' efficiency figures (see Eq.~\ref{eq:f_sp_hetero} and \ref{eq:f_ea_hetero}).}

\newcommand{\colAsp}{4.0cm}
\newcommand{\colPol}{7.0cm}

\begin{table}[t]
\centering
\caption{Routing policies considered under each aspect of this work. Shortest-path approach (SP); Knowledge-aware (KA) aka. Efficiency-aware; $k$-shortest path approach (KSP).}
\label{tab:routing_policy_summary}
\begin{tabular}{p{\colAsp} p{\colPol}}
\hline
\textbf{Aspect} & \textbf{Routing policies considered} \\
\hline
Routing with heterogeneous quantum repeaters
& SP and KA \\
\hline
Routing with end-to-end knowledge
& SP, KA, KSP, and $kx$-path selection \\
\hline
\end{tabular}
\end{table}

\section{Routing with end-to-end knowledge}
We now consider the second aspect of the routing in quantum networks of this work. In this section, we introduce the white box and grey box models. This aspect directly builds upon the previous aspect, i.e., it considers the heterogeneous quantum repeater networks, and extends to include additional approaches on top of the shortest path and efficiency-aware (aka knowledge-aware) approaches considered in the first aspect as summarised in Table~\ref{tab:routing_policy_summary}.

\subsection{White box vs grey box models}\label{ssec:grey_white_box}
In the literature, as reviewed in Section~\ref{sec:relatedworks_routing}, it is common to assume that the controller will make use of detailed knowledge about all the quantum repeaters to select a path in a way that goes in the direction of maximising some performance objective, such as throughput or quality of entanglement.
In particular, it is typically assumed that the controller knows:

\begin{enumerate}
    \item The topology of the network, i.e., it knows which quantum repeaters share a link.
    \item Quantum repeater-specific information that can be used to estimate the nominal performance of end-to-end entanglement, usually measured through the \emph{fidelity}, for a given path.
\end{enumerate}

In this study, we question the latter assumption: estimating the fidelity of an end-to-end entanglement in real conditions would be an overly complex task since the latter is determined by a complex combination of time- and technology-dependent processes involving, among the others, the heralding of local entanglement on a link \cite{li2021heralding}, the purification of such an entanglement \cite{chen2024optimum}, the decoherence of quantum memories \cite{pompili2021realization}, the link-level protocols \cite{dahlberg2019link}.
We therefore advocate an alternative approach, in which such end-to-end fidelity is not estimated \emph{a~priori} but measured \emph{a~posteriori} through an in-band mechanism running in the network, which, e.g., assumes periodic activation of ``sampling instructions'' (Fig.~\ref{fig:network_architecture}) on unused links for the purpose of acquiring/refreshing the information on the measured fidelity between any two quantum devices in the network.
For instance, randomised benchmarking techniques could be used for this~\cite{helsen2023benchmarking}.
The specific procedure that can be used for this purpose is beyond the scope of this work, in which we merely assume that the so-called Storage component is made aware of the (average) fidelity that can be achieved for any given path, without knowing the details of how much each link/node contributes to the end-to-end result.

It is known that the number of possible paths in a generic network can be very large, which would render this approach practically unfeasible, but we make the following observation to advocate such a solution, at least in some scenarios. The topology of a network is expected to remain stable for a time that is much longer than the time scales relevant for execution of distributed quantum computing applications, as modifying the topology implies the installation of a device (quantum computer or repeater) and its physical interconnection with the quantum network or provisioning additional infrastructures (satellite or optical fibre).
Therefore, the information about end-to-end fidelity may be accumulated in the Storage over time through a background process, prioritising shorter paths between pairs of end nodes, which consume fewer resources and are more likely to exhibit a higher fidelity. However, it is crucial to consider that network topology information can be influenced by environmental factors beyond physical layout changes. Such environmental interactions may alter link performance, thereby impacting the accuracy of end-to-end fidelity predictions.

Consequently, developing efficient methods to manage the accumulation and updating of end-to-end fidelity knowledge remains an essential area for further research, as noted in recent studies (e.g., \cite{helsen2023benchmarking}). This topic, while outside the scope of this work, is complementary to our approach. In the meantime, given the early stages of development in this direction, we will introduce robustness tests to evaluate the methods under conditions with unreliable or outdated end-to-end knowledge, as will be discussed in the subsequent chapter.

\textbf{Implication for problem formulation:} While both the heterogeneous quantum repeater network (Section~\ref{ssec:problemformulation_hetero}) and the present grey-box approach operate on networks that physically contain nodes with different efficiency classes, \textbf{the key difference lies in what information is available to the routing algorithm.} In Section~\ref{ssec:problemformulation_hetero}, the controller knows the efficiency $\eta_i$ of each node $i$ and uses this to compute vertex weights. In the grey-box formulation that follows (Section~\ref{ssec:problemformulation_endtoend}), these efficiency values are \textbf{not available} to the routing controller. Instead, routing decisions are based solely on network topology $G(V,E)$ and end-to-end fidelity estimates obtained through sampling.

\subsection{Problem formulation}\label{ssec:problemformulation_endtoend}
For this aspect, the \underline{Routing Problem} is defined as follows: \textit{For a quantum network modelled as a graph $G(V, E)$ with $G$ classes of quantum repeaters, where each $v \in V$ is a quantum repeater with efficiency figure $\eta_g$ ($g \in G$) and $e \in E$ is a link between two quantum repeaters, a set of connections of cardinality $n_{sd}$ requesting to use the network to create end-to-end entanglement of qubits between a source and a destination with a fidelity threshold $F_{th}$, find a set of non-overlapping paths that minimises the blocking probability while satisfying the fidelity requirement}.

We address the problem in this aspect under the general framework illustrated by Algorithm~\ref{alg:routing}, by proposing several specific policies. The framework takes as input the quantum network topology represented as a graph \( G(V, E) \), the end-to-end (E2E) fidelity estimations, and a set of source-destination pairs \( \{s_k, d_k\} \) (where \( n_{sd} \) represents the total number of such pairs). Additionally, a path selection policy \( \mathcal{A} \) is employed to compute the optimal paths for the connections.

The algorithm begins by initialising the set of assigned paths \( P_a \) to an empty set (line 1). For each request to establish entanglement between source-destination pairs \( \{s, d\} \) (line 2), the policy \( \mathcal{A} \) is used to compute a feasible path \( \pi \) between the source and destination nodes in the network graph \( G(V, E) \), considering the network's current state and fidelity constraints (line 3). If a path \( \pi \) is successfully found, it is added to the set \( P_a \) (line 4). If no such path can be established, the connection between the source-destination pair \( \{s, d\} \) is marked as \emph{blocked} for that period (line 5).

The process repeats for all \( n_{sd} \) pairs, and at the end of the algorithm, the final set of assigned paths \( P_a \) is returned as the output (line 7).

\begin{algorithm}
\caption{Route selection algorithm}
\label{alg:routing}

\textbf{Input}:
Network as a graph $G(V,E)$, E2E fidelity estimations, source-destination pairs $\{ s_k, d_k \}, k = 1..n_{sd}$, where $n_{sd}$ is the number of connections, and $\mathcal{A}$ is the path selection policy

\textbf{Output}:

Set of assigned paths $P_a$

\begin{algorithmic}[1]
\STATE $P_a$ = $\emptyset$

\FOR {each pair $\{ s, d \}$}
\STATE compute the path $\pi$ using policy $\mathcal{A}$ between $s$ and $d$ in $G(V,E)$
\STATE if $\pi$ is found, add it to $P_a$
\STATE otherwise the connection with pair $\{s,d\}$ is \emph{blocked} in this period
\ENDFOR
\STATE return $P_a$
\end{algorithmic}
\vspace{-0.3em}
\end{algorithm}

\subsection{Routing policies}\label{ssec:routing_policies}
The policies are as follows:

\begin{enumerate}
    \item \textbf{Shortest-path approach:} This policy entails the straightforward use of Dijkstra's algorithm, i.e., selects one of the shortest paths between source and destination from the set of those with fidelity above the threshold. When more such shortest paths exist, one is picked randomly with uniform distribution. Considering the generic path $p$ characterised by a path length $L_p$, an end-to-end fidelity $F_p$, and considering a fidelity threshold $F_{th}$, the path selection under the shortest-path approach selects the best path $P_a$ as follows:
\begin{equation}\label{eq:sp} 
P_a = \arg\min_{p} \{L_p \mid F_p \geq F_{th}\}.
\end{equation}

\item \textbf{knowledge-aware approach:} This policy uses additional knowledge of quantum repeaters, i.e., their efficiency figures. Therefore, it is a \emph{white-box approach} solution to the quantum routing problem. Each node $v$ in the quantum network is assigned a cost $c(v)$ according to its efficiency figure. Then, a vertex-weighted Dijkstra's algorithm is employed for path-selection. This approach is also known as the efficiency-aware approach with respect to the first aspect~\cite{kumar2024routing}. 

Path selected:

\begin{equation}\label{eq:ka} 
P_a = \arg\min_{p} \{C_p = \sum_{v \in P} c(v) \mid F_p \geq F_{th}\}.
\end{equation}

\item \textbf{$k$-shortest path approach:} This policy selects the path with minimum end-to-end fidelity (yet above fidelity threshold) out of the first $k$ shortest-paths between the source and destination list, where $k$ is a configuration parameter of the algorithm. Note that those paths may be of different lengths (not all equal to the shortest one). The intuition behind this policy is that selecting paths with a lower fidelity, corresponding to quantum repeaters in low-quality classes, in the early rounds of the loop in Algorithm~\ref{alg:routing} should leave more resources available to connections under evaluation later on.
Since no per-node/per-link information is needed, the $k$ shortest-path is a \emph{grey-box approach} solution to the quantum routing problem. A more descriptive name for this algorithm could be ``min-k-shortest path", as it clarifies that the algorithm does not involve random selection among shortest paths or selection of multiple paths. However, for simplicity, we retain the term ``$k$-shortest path" throughout the paper. 

Path selected:
\begin{equation} \label{eq:ksp}
P_a = \arg\min_{p} \{F_p \mid F_p \geq F_{th}\}.
\end{equation}

\item \textbf{$kx$-path selection approach:} This policy is another \emph{grey-box approach} solution to the quantum routing problem and is a restrictive version of the $k$-shortest path approach: when selecting the path, it only considers candidate routes that do not deviate too much from the shortest possible one, depending on the configuration parameter $x$, which can be seen as an allowance factor. When $x = 0$, the path must be no longer than the shortest available; when $x = 1$ it is allowed to be one hop longer than the shortest one, and so on. Let us define the length of the shortest path as follows:

\begin{equation} \label{eq:kxb}
L_b = \min_{p} \{L_p \mid F_p \geq F_{th}\}.
\end{equation}

Path selected:
\begin{equation} \label{eq:kxa}
P_a = \arg\min_{p} \{F_p \mid L_p \leq (L_b + x) \wedge F_p \geq F_{th}\}.
\end{equation}
If more than one path satisfies Equation~\ref{eq:kxa}, the algorithm selects a random one with a uniform distribution.
\end{enumerate}

\chapter{Performance Evaluation and Simulation Framework}\label{ch:performance_evaluation}

In this chapter, the performance evaluation and the simulation framework for the system discussed in Chapter~\ref {ch:routing} are presented. While following a similar methodology, network models and topologies, we evaluate and discuss the results from both aspects of routing in quantum networks of this work, as categorised in Chapter~\ref{ch:routing}.

\section{Network models and topologies}\label{sec:topology}

\begin{figure}
\centering
    \begin{subfigure}{0.52\textwidth}
        \includegraphics[width=\linewidth, trim={3cm 3cm 3cm 3cm}, clip]{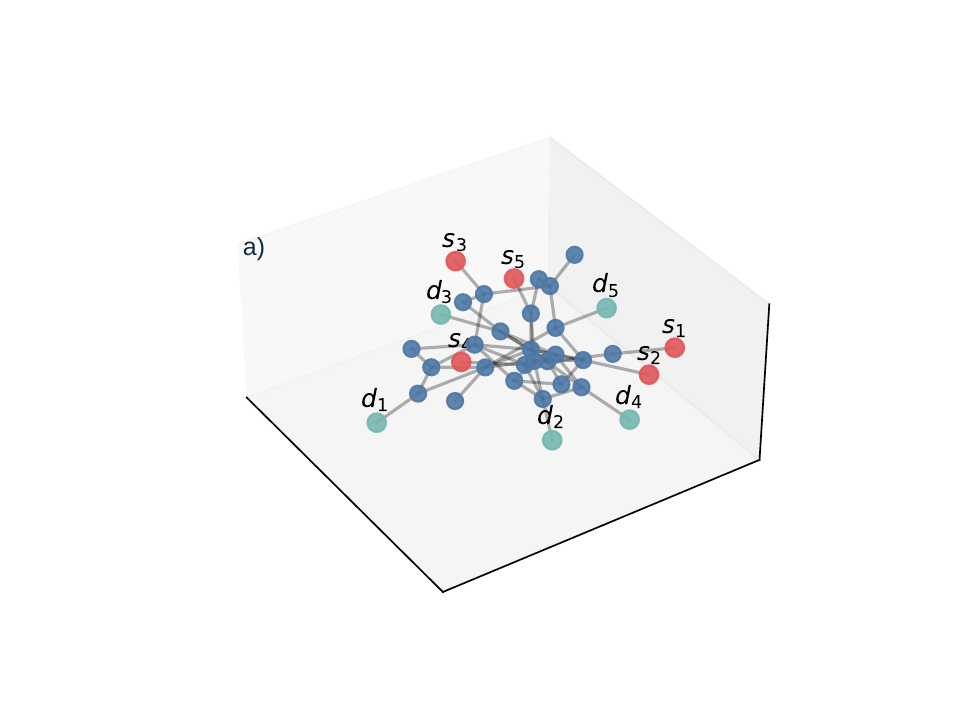}
        \caption{Random}
    \end{subfigure}
    \hspace{-1cm}
    \begin{subfigure}{0.52\textwidth}
        \includegraphics[width=\linewidth, trim={3cm 3cm 3cm 3cm}, clip]{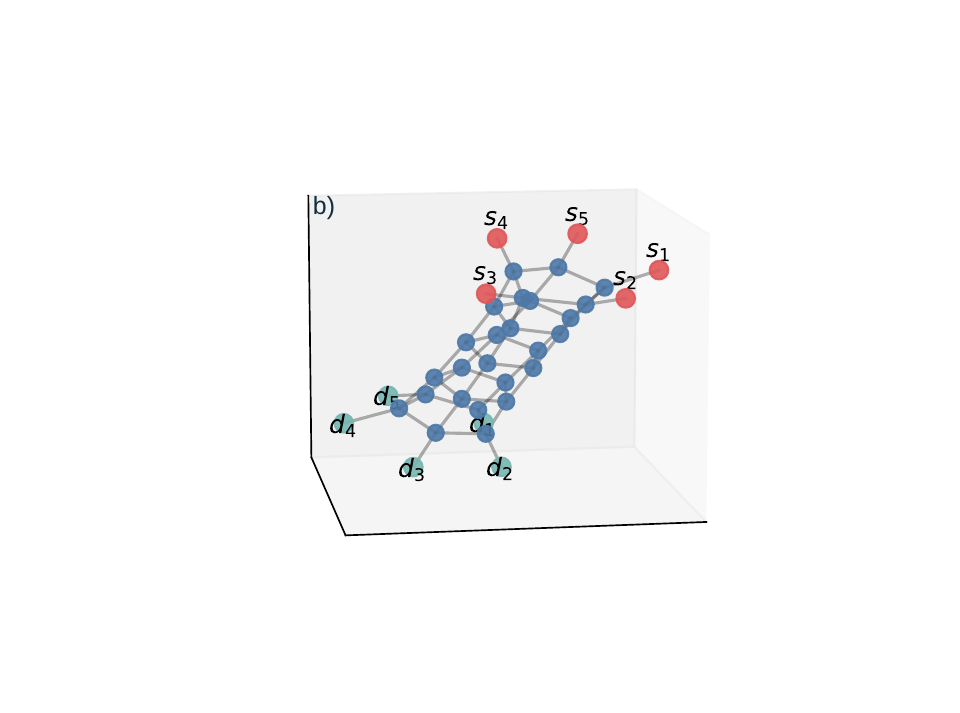}
        \caption{Regular}
    \end{subfigure}

    \caption{Topology of the quantum network for \( n_{sd} = 5 \) and 25 grid nodes. Source nodes \( s_n \) (in red), destination nodes \( d_n \) (in light blue), and grid nodes (in dark blue). In a regular network structure, the source-destination (SD) pairs are positioned farther apart, leading to a probability mass function (PMF) with less diversity in path lengths and a higher proportion of longer paths compared to a random network (taken from \cite{kumar2025routing}).
}
    \label{fig:topology}
\end{figure}

Given the absence of a functional quantum network and the limited experimental capability to manage only a few nodes, the design of quantum networks poses significant challenges, extending to the broader scope of the quantum internet. First, finding a ``realistic'' network topology is an issue \emph{per se}, which so far has been addressed inspired by classical networks.

A significant amount of current research utilises the Waxman model \cite{waxman2002routing} to acquire the random topology of the quantum network, as discussed later in this section. Others use lattices (grid and its variations) \cite{kumar2024routing, kumar2025routing}, rings \& spheres \cite{schoute2016shortcuts}, linear chains \cite{dai2020optimal}, and specialised topology, that is, SURFnet \cite{chakraborty2020entanglement, rabbie2022designing, nguyen2025maximizing} and US backbone network \cite{li2022fidelity, nguyen2025maximizing}.

\newcommand{\colA}{3.6cm}  
\newcommand{\colB}{2.4cm}  
\newcommand{\colC}{3.0cm}  
\newcommand{\colD}{3.2cm}  

\begin{table}[t]
\centering
\caption{Topology and quantum repeater class selection for the two routing aspects considered in this work.}
\label{tab:topology_summary}
\begin{tabular}{p{\colA} p{\colB} p{\colC} p{\colD}}
\hline
\textbf{Aspect} 
& \textbf{Topology class} 
& \textbf{Topology type} 
& \textbf{Number of quantum repeater classes} \\
\hline
\multirow{3}{\colA}{\parbox{\colA}{Routing with heterogeneous quantum repeaters}}
& \multirow{2}{\colB}{\parbox{\colB}{Regular}}
& Grid
& \multirow{3}{\colD}{\parbox{\colD}{2 (HQ, LQ)}} \\
\cline{3-3}
& 
& Cylindrical
&  \\
\cline{2-3}
& \parbox{\colB}{Random}
& Not considered
&  \\
\hline
\multirow{2}{\colA}{\parbox{\colA}{Routing with end-to-end knowledge}}
& \parbox{\colB}{Regular}
& Cylindrical
& \multirow{2}{\colD}{\parbox{\colD}{2 (HQ, LQ) and continuous}} \\
\cline{2-3}
& \parbox{\colB}{Random}
& Random graph
&  \\
\hline
\end{tabular}
\end{table}

With respect to the generic network shown in~Fig.~\ref{fig:network_architecture}, we consider two classes of network topologies, namely \emph{regular} and \emph{random}, as illustrated in Fig.~\ref{fig:topology} and summarised in Table~\ref{tab:topology_summary}. 

For the first aspect, namely routing with heterogeneous quantum repeaters, we restrict our analysis to regular topologies. In particular, we consider an $n \times n$ square \emph{grid} topology and its cylindrical variant, obtained by introducing wrap-around connections between the top and bottom transport nodes to eliminate boundary effects. The latter has the advantage that there
are no edge effects and it is always possible to find a path
between any two nodes in the source/destination tiers.

For the second aspect, namely routing with end-to-end knowledge, we retain the cylindrical topology from the class of regular networks and additionally introduce random topologies. The retention of cylindrical topology is guided by the insights obtained from the first aspect. 

The source and destination nodes are located on the opposite edge of the transport network grid. In the random topology, we use the well-known Waxman model~\cite{waxman2002routing} to generate the specific network topology. Specifically, we use the Waxman model to generate the network connecting the quantum repeaters among them. Then, we select $n_{sd}$ repeaters using a uniform random distribution and attach them to the source nodes, and another $n_{sd}$ repeaters to attach to the destination nodes.

When generating the network among repeaters according to the Waxman model, a link between any two nodes $u$ and $v$ is added with the following probability:

\begin{equation} \label{eq:wax}
P(u, v) = \beta e^{\left( \frac{-d(u, v)}{L \alpha} \right)},
\end{equation}

where, $d(u, v)$ represents the distance from node $u$ to node $v$, $L$ denotes the maximum distance between any two nodes, and $\alpha$ and $\beta$ are parameters that fall within the range $(0, 1)$. Higher values of $\beta$ lead to graphs with greater edge densities. Conversely, lower values of $\alpha$ tend to increase the proportion of shorter edges compared to longer ones.

We initially consider two classes of quantum repeaters for the first aspect and move towards continuous classes of quantum repeaters for the second aspect, as summarised in Table~\ref{tab:topology_summary}. Depending upon the efficiency of the quantum repeater, the node is categorised as high-quality (HQ), having efficiency $\eta_h$, or low-quality (LQ), with efficiency $\eta_l$. For what KA (Knowledge-aware approach) is concerned, we assign the costs so that $c(v_{\mathrm{HQ}}) \gg c(v_{\mathrm{LQ}})$.  In the first set of experiments, we vary the fraction of HQ and LQ repeaters. Specifically, a parameter $\xi$ is defined as the fraction of high-quality nodes among all quantum repeaters, i.e., $\xi = 1 (\xi = 0)$ means that all quantum repeaters are HQ (LQ). 

Moreover, to extend the analysis with respect to this standpoint, in another set of experiments, we allocate efficiency values over a continuous interval between the HQ and LQ values, according to the following considerations. If we selected efficiencies of repeaters according to a uniform distribution in this interval, as the end-to-end fidelity exponentially decays with respect to the efficiency figures (Eq.~\eqref{eq:main_routing_fidelity}), the fidelity of end-to-end paths would be predominantly influenced by the efficiency figures closer to LQ. Therefore, to obtain a better balance, we use the following algorithm to assign efficiency values. For a given repeater $i$, we use two ancillary parameters ($a$ and $x_i$) and write the efficiency $\eta_i$ as follows:

\begin{equation}
\eta_i = \frac{\ln \left( x_i  \right)}{a},
\end{equation}
where $a$ is a constant that tunes the bias towards higher efficiency values ($a=1$ means unbiased) and $x_i$ is drawn from a uniform distribution in the following range:

\begin{equation}\label{eq:range}
U\left( e^{a \eta_{l}}, e^{a \eta_{h}} \right).
\end{equation}

Finally, a parameter $\theta$, path establishment order, is defined that represents the order (from 1 to the number of source-destination (SD) pairs considered in the experiments, $n_{sd}$)  of a specific SD pair in the sequence used to serve all pairs.

\section{Evaluation metrics}
The performance of the proposed policies (Section~\ref{ssec:routing_policies}) is evaluated based on the following set of indices.

First, we consider the \emph{Blocking Probability} (BP) across the entire set of considered SD pairs, computed as the fraction of SD pairs to which the routing algorithm assigns a path, over the total number of SD pairs requesting connectivity. A lower blocking probability indicates better network performance. In addition, for some specific analyses, we also consider the Blocking Probability per edge (BP/E), defined as the Blocking Probability divided by the number of edges in the network. As with blocking probability, a lower BP/E reflects improved network performance. Additionally, we analyse the probability mass function (PMF) of path length, which describes the probability distribution of the path lengths selected to connect SD pairs. PMF varies with the routing approach, the network parameter $\xi$, and the network topology.

A second index we use is the \emph{fidelity} achieved by SD pairs over the paths allocated by the routing algorithm, computed as in Eq.~\eqref{eq:main_routing_fidelity}. We also consider the path establishment order ($\theta$), which represents the sequence in which SD pairs are served. In an optimised approach, the path order should minimally impact network performance. However, as shown in the results, this behaviour may not hold, particularly when the number of source-destination (SD) pairs increases, which can reduce network performance.

Finally, we characterise the \emph{fairness} achieved by the served SD pairs, as those served later may obtain ---if the routing algorithm is not cautious--- quite lower resources than those served first. To this end, we use Jain's fairness index~\cite{jain1984quantitative} computed over the number of paths successfully allocated to each SD pair across all replicas of a given experimental configuration, i.e.:

\begin{equation} \label{eq:jain}
J(x_1, x_2, ..., x_n) = \frac{\left( \sum_{i = 1}^{n} x_i \right)^{2}}{n \sum_{i = 1} n x_{i}^{2}},
\end{equation}
where, $x_i$ is the count of path delivered to the $i_{th}$ (source, destination) pair.

\section{Simulation and methodology}\label{sec:simulation}
To evaluate the system model and routing strategies introduced in Chapter~\ref{ch:routing} and above, we employ a specialised Python-based simulation framework, referred to as \emph{QuantaRoute}. The simulator was developed to model routing decisions under realistic quantum-network constraints and to enable controlled performance evaluation.

The implementation addressing the first aspect of routing in this work, namely routing with heterogeneous quantum repeaters, is provided in the original \emph{QuantaRoute} codebase\footnote{\url{https://github.com/vk9696/quantum-routing1}}. To study the second aspect, routing with end-to-end knowledge, we developed an extended simulator, \emph{QuantaRoute~2.0}, which builds upon the original framework and incorporates additional topology requirements, routing policies, and evaluation metrics\footnote{\url{https://github.com/vk9696/quantum-routing2}}.

 Throughout this study, a quantum network consisting of 25 quantum repeaters is used, irrespective of topology and the number of source-destination pairs involved. A total of 100 random combinations of source and destination nodes are considered in addition to 100 random assignments of HQ and LQ nodes to the transport network. This leads to a total of $10,000 \times n_{sd}$ potential paths i.e., $50,000$ for $n_{sd} = 5$ for each $\xi$ value. The performance results for each $\xi$ are obtained by averaging over random assignment of HQ and LQ along with t-Student's 95\% confidence intervals wherever required. While complete performance results are presented for the first aspect of this work, namely routing with heterogeneous quantum repeaters, the evaluation of the second aspect is restricted to the range $0.5 < \xi < 1$. Considering network configurations with $\xi < 0.5$ would further degrade performance, as such scenarios correspond to networks dominated by lower-quality nodes, offering limited additional insight. This restriction is adopted to improve clarity and facilitate a more focused analysis of the second aspect, whose behaviour and interactions are inherently more complex than those of the first. The fidelity of the initially shared EPR pairs should be as high as possible, ideally close to 1, to tolerate the reduction in fidelity caused by noisy entanglement swapping procedures, thus ensuring that the entanglement shared between the source-destination pair remains above the fidelity threshold. However, achieving a fidelity of 1 is unrealistic; therefore, we assume the initial EPR pairs shared among neighbouring nodes have a fidelity of 0.975. We set the efficiency figures for high-quality and low-quality nodes at $\eta_h = 0.999$ and $\eta_l = 0.8$, respectively. It is important to note that the typical fidelity required by quantum applications should exceed 0.5. Consequently, we establish the threshold fidelity required for quantum applications at 0.53 unless a study of its variations is conducted in the first aspect. However, applications running on end nodes may demand higher fidelity levels. Our choice of a lower fidelity threshold enables clearer differentiation of algorithmic performance and enhances the interpretability of the results. Notably, in the first aspect \cite{kumar2024routing}, algorithm performance may converge at higher fidelity thresholds. Additionally, we set $k = 10$ for both the $k$-shortest path and $kx$ path-selection approaches, providing a sufficient number of shortest paths from which to select. For the first aspect, the analysis consists of four macro scenarios, which tackle four different aspects: quantum network topology, low-quality efficiency sensitivity, efficiency awareness, and blocking probability. For the second aspect, the results are categorised into six categories, i.e., blocking probability, fairness, number of source-destination pairs, average degree, robustness, and range of efficiency figures. For each of the metrics, we evaluate the performance of all the path selection policies defined in Section~\ref{ssec:routing_policies}: shortest-path (SP) and knowledge-aware (KA), as baselines, and $k$-shortest path (KSP) and $kx$-path selection (with $x=0$ as $kx_0$ and $x=1$ as $kx_1$), as the proposed novel approaches.

\begin{figure*}[tb]
\centering
    \includegraphics[width=\columnwidth]{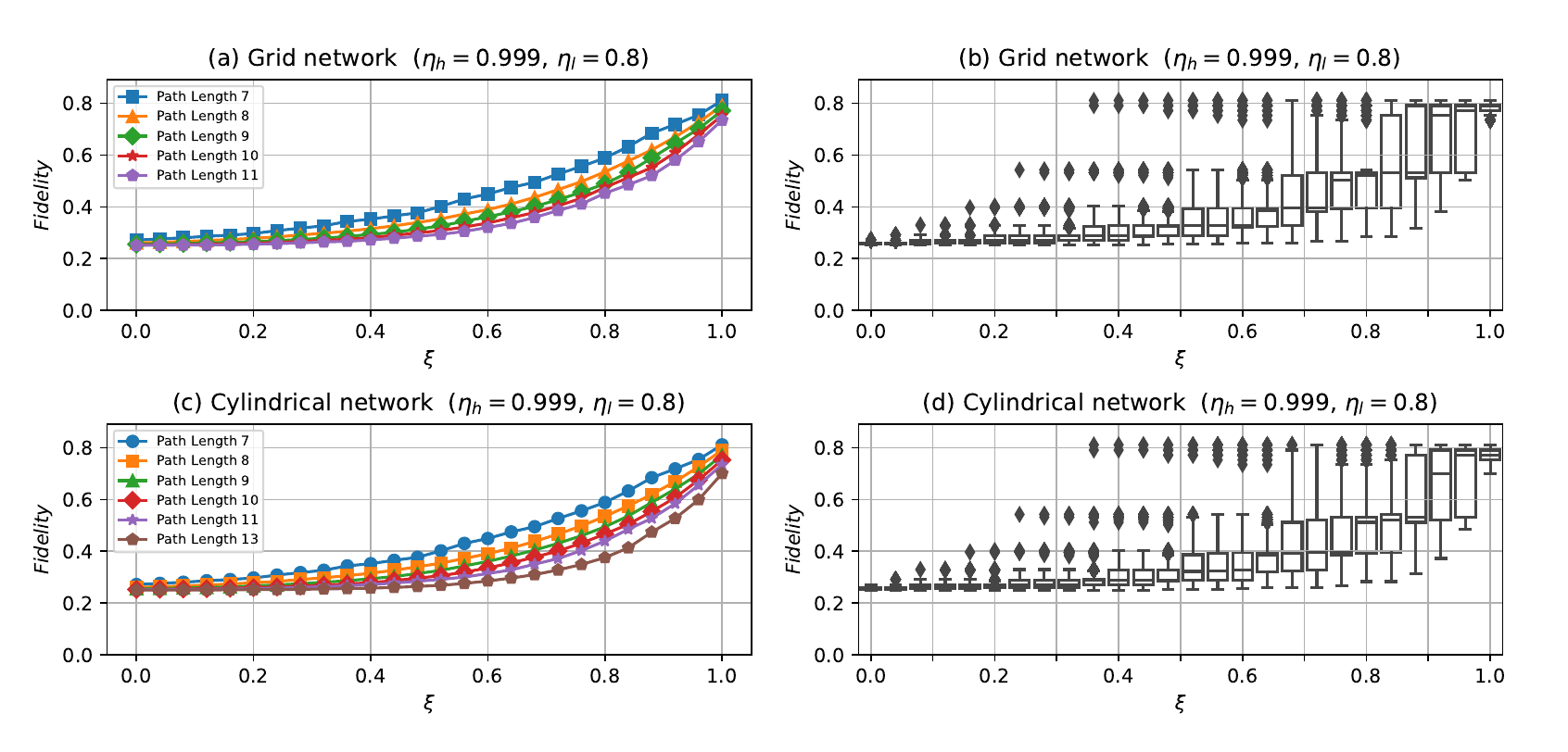}
    \caption{a) \& b) Fidelity vs $\xi$ for Grid network, c) \& d) Fidelity vs. $\xi$ for Cylindrical network}
    \label{fig:fid_vs_hq}
\end{figure*}

\section{Routing with heterogeneous quantum repeaters}\label{sec:performance_hetero_routing}

\begin{table}[h!]
\centering
\caption{Notation reference for chapter~\ref{ch:performance_evaluation}}
\label{tab:notation_table}
\begin{tabular}{cp{5cm}p{1.5cm}}
\toprule
Notation & Explanation & Default value \\
\midrule
$\xi$ & Fraction of high-quality nodes in the network &  \\
$\eta_h$ & Efficiency figure of a high-quality node & 0.999 \\
$\eta_l$ & Efficiency figure of a low-quality node & 0.8 \\
$F_{th} \ \text{or} \ \bar{F}$ & Threshold fidelity &  0.53\\
$F_p$ & End-to-end fidelity of the path &  \\
$F$ & Fidelity of initial EPR pairs shared & 0.975 \\
$L_p$ & Path length, i.e., number of nodes in the path &  \\
$\alpha$ & Waxman parameter that controls the proportion of shorter edges compared to longer &  0.85\\
$\beta$ & Waxman parameter that controls the edge density, i.e., average degree & 0.275 \\
$k$ & Algorithm parameter for number of candidate shortest paths & 10 \\
$\theta$ & Path establishment order &  \\
\bottomrule
\end{tabular}
\end{table}

In this aspect, the methodology described above is applied to both grid and cylindrical quantum networks as in Table~\ref{tab:topology_summary} and Section~\ref{sec:topology}.

\subsection{Quantum Network Topology Study}

As a recap, we set $\eta_h = 0.999$ and $\eta_l = 0.8$, and we use $f(\cdot) = 1$ (Eq.~\eqref{eq:f_sp_hetero}) and $\bar{F} = 0$, i.e., the path selection is independent of the efficiency figure of nodes and their fidelity.

Due to topology, the path length of grid vs. cylindrical networks is between 7 and 11 hops (or 13 only for the cylindrical case), with an average path length of 8.61 (grid) and 9 (cylindrical). As already introduced, paths are never blocked in a cylindrical network, while the blocking probability in the grid network, which has fewer resources (links), is found to be 0.28. This is a direct implication of more resources (i.e., additional links) in cylindrical networks. Importantly, it should be noted that the paths which are blocked in the grid network are redundant for any practical purposes of establishing end-to-end entanglement, and therefore, these are not considered in Fig.~\ref{fig:fid_vs_hq}.

In Fig.~\ref{fig:fid_vs_hq}a and \ref{fig:fid_vs_hq}c, we report the average fidelity when  $\xi$ increases from 0 to 1, i.e., when the network has an increasing fraction of HQ nodes.

As expected, all the curves increase with $\xi$, and shorter paths have a higher fidelity. We note that the presence of the 13-hop paths in the cylindrical case yields a slightly decreased average fidelity (0.3896 vs.\ 0.3972 for the grid case -- not shown in the plots).

The fidelity is also reported in Fig.~\ref{fig:fid_vs_hq}b and \ref{fig:fid_vs_hq}d as box plots. As can be seen, for $\xi < 0.32$ in the grid network and for $\xi < 0.36$ in the cylindrical network, fidelity is almost zero apart from outliers. Hence, it is of no use to upgrade 32\% in the grid network and 36\% in the cylindrical network of LQ nodes to the more expensive/sophisticated HQ nodes. The LQ nodes are observed to be bottlenecks in the performance for both grid and cylindrical networks. Fidelity is barely 0.5 even if $\xi$ is 0.8 in the grid network and 0.84 in the cylindrical network. However, a significant jump in fidelity, of about 50\%, is observed if one more LQ node is replaced with an HQ node for both grid and cylindrical networks. In conclusion, to have a useful impact on network performance from switching some of the LQ nodes to HQ nodes in a quantum network, $\xi$ should be very close to 1, i.e., in our simulations, at least 0.84 for the grid network and 0.88 for the cylindrical network.

\subsection{Low-quality efficiency sensitivity study}
In this case-study, we study the sensitivity of $\eta_l$ for a constant $\eta_h$ in the cylindrical core-based quantum network: while keeping $\eta_h = 0.999$ we sweep $\eta_l \in \{0.99, 0.8\}$.
Like in the previous scenario, we assume $f(\cdot) = 1$ and $\bar{F} = 0$.

\begin{figure}[tb]
    \centering
    \includegraphics[width=0.8\columnwidth]{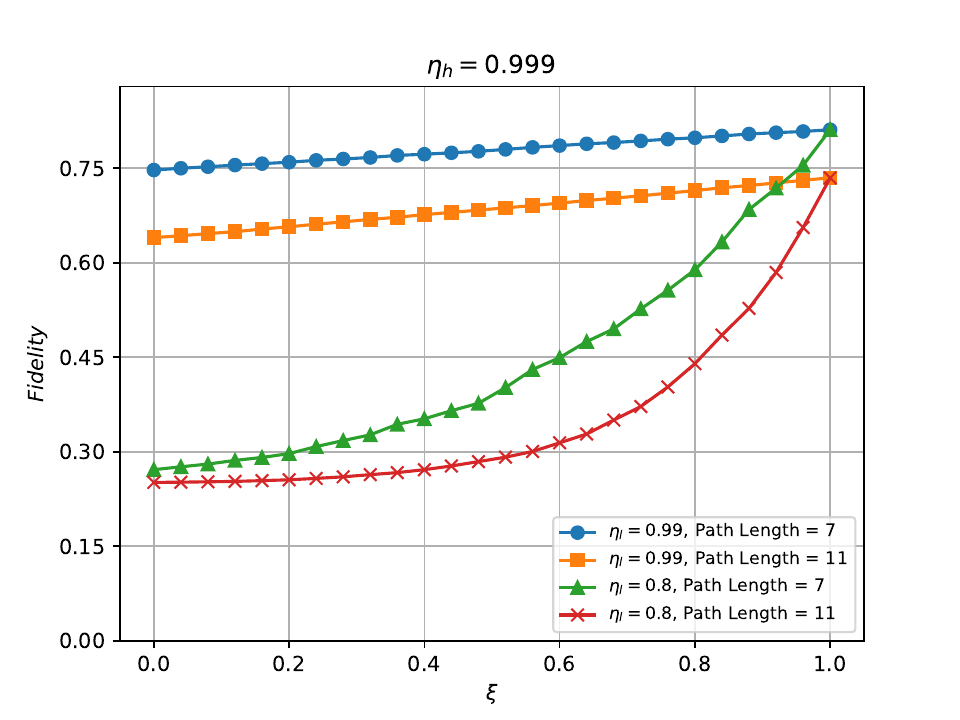}
    \caption{Fidelity vs. $\xi$}
    \label{fig:sub_fid_vs_hq}
\end{figure}

We show the fidelity in Fig.~\ref{fig:sub_fid_vs_hq}, only for paths of length 7 and 11, for better readability. As can be seen, a relatively higher $\eta_l$ can achieve considerably higher fidelities even if path length and $\xi$ are low, as shown by blue and orange squared points. Conversely, if $\eta_h$ is not high enough, then the network performance cannot be improved irrespective of path lengths and $\xi$ involved, as shown by green and red triangle points. In conclusion, for a network consisting of LQ nodes with a high efficiency figure, replacing some of the nodes with HQ ones is not effective.

\subsection{Efficiency Awareness Study}\label{sssec:noise study}

In this case-study, we study the efficiency awareness approach against the standard shortest-path approach in the cylindrical-core-based quantum network. We keep $\bar{F}=0$ but consider two possible mapping functions as follows:
\begin{equation}\label{eq:f_sp_hetero}
f_{\mathrm{shortest-path}} = 1 \ \forall \ \eta
\end{equation}
\begin{equation}\label{eq:f_ea_hetero}
f_{\mathrm{efficiency-aware}} = 
\begin{cases} 
100 & \text{if } \eta = \eta_l \\
1 & \text{if } \eta = \eta_h 
\end{cases}
\end{equation}
\begin{figure}[tb]
\centering
    \includegraphics[width=0.8\columnwidth]{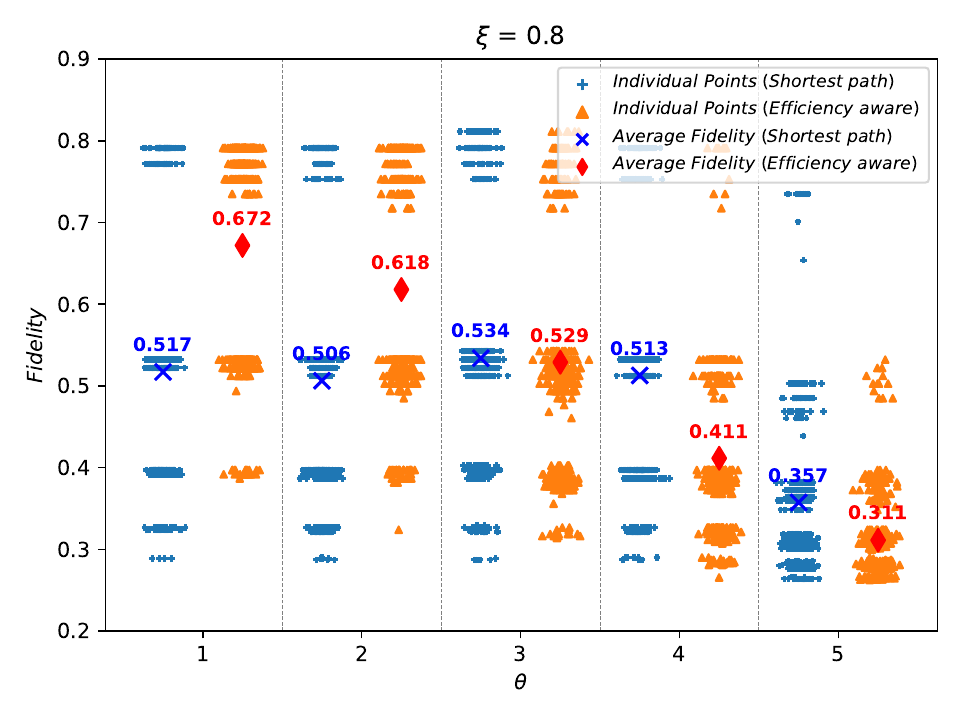}
    \caption{Fidelity vs. $\theta$ for $\xi$ = 0.8}
    \label{fig:fid_theta}
\end{figure}
In the standard shortest-path selection approach, the efficiency figures of quantum repeaters are not taken into account, leading to uniform weighting of all nodes. Conversely, in the efficiency-aware approach, nodes of lower quality are assigned disproportionately higher weights compared to those of higher quality.

The Fig.~\ref{fig:fid_theta} shows a scatter plot of the fidelity values for $\xi$ = 0.8 for the two mapping functions above. We consider the standard shortest-path case first. The study shows that for $\xi > 0.2$, the fidelity of the first four paths is found to be closer to each other with a considerably lower fidelity for the last path. This is clearly evident from Fig.~\ref{fig:fid_theta}
i.e., $\xi$ = 0.8, where the first four paths have an average fidelity of approximately 

0.51 while the last path has a fidelity much lower, i.e., 

0.31.

In the efficiency-aware scenario, there is a noticeable linear decrease in fidelity with increasing values of $\theta$. This trend is attributable to the consumption of high-quality (HQ) nodes by the initial paths, which subsequently leaves a predominance of lower-quality (LQ) nodes for later paths. Fig.~\ref{fig:fid_theta} illustrates significant improvements in higher fidelities for the efficiency-aware approach (depicted in orange) over the shortest-path approach (in blue), particularly at the initial stages of $\theta$. The contrast is starkly evident as there is a substantial increase in the number of paths served at higher fidelities for the efficiency-aware method compared to the shortest-path approach for $\theta = \{1, 2 \}$. However, for $\theta$ = 3, the average fidelity of the paths served by both approaches converges. Notably, for $\theta$ values of 4 and 5, the shortest-path approach outperforms the efficiency-aware method. This pattern indicates that the efficiency-aware approach prioritises enhancing the fidelity of initial paths, thereby affecting the overall distribution of fidelity in subsequent paths.

For the scenarios where $\xi = 0$ and $\xi = 1$, the fidelity distribution remains consistent across all values of $\theta$ for both shortest-path and efficiency-aware approaches, as anticipated. This consistency arises because, at $\xi = 0$, only low-quality (LQ) nodes are present, whereas at $\xi = 1$, only high-quality (HQ) nodes exist. In conclusion, the efficiency awareness technique can be used to strategically improve the fidelities of initial paths to manage workload in a quantum network.

\subsection{Blocking Probability Study}

\begin{figure}[tb]
    \centering
    \includegraphics[width=0.8\columnwidth]{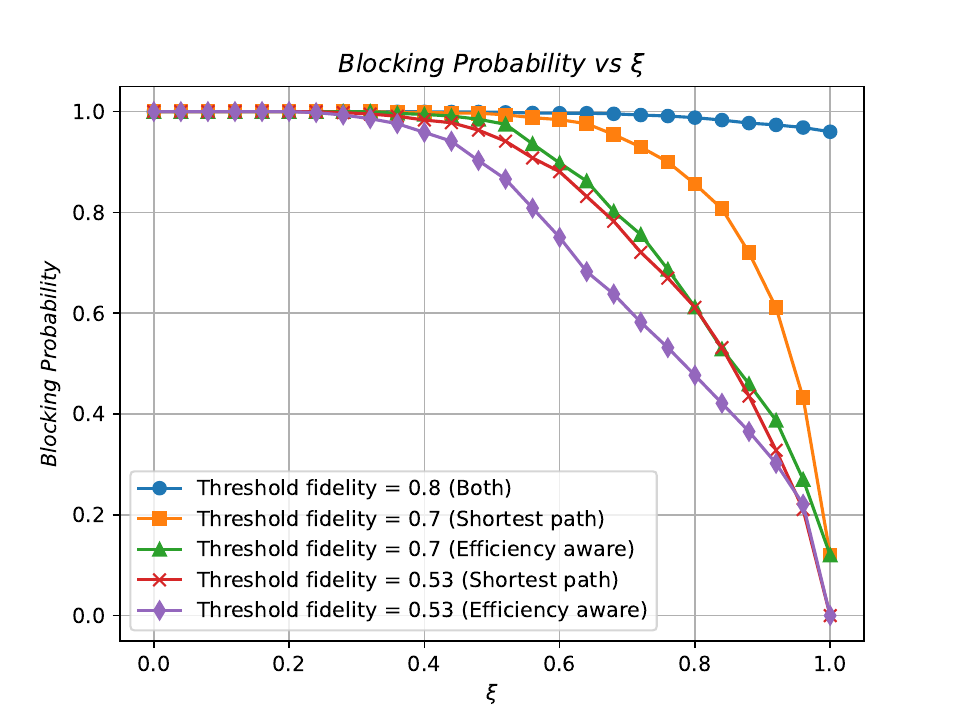}
    \caption{Blocking Probability vs. $\xi$}
    \label{fig:bp_vs_hq}
    \vspace{-0.5cm}
\end{figure}
\begin{figure}[tb]
    \centering
    \includegraphics[width=0.8\columnwidth]{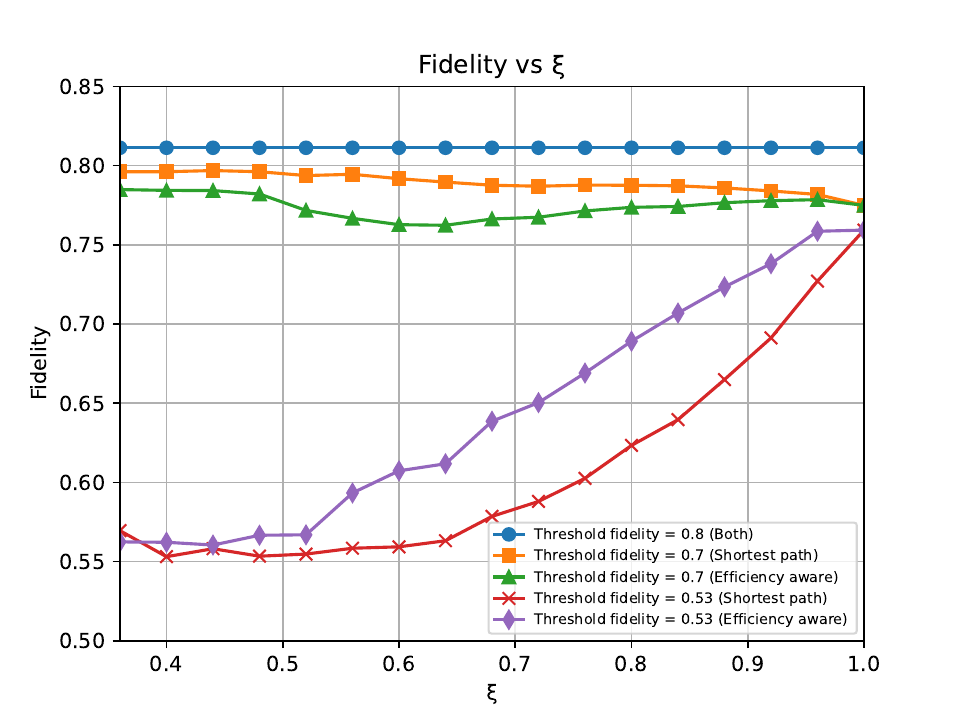}
    \caption{Fidelity vs. $\xi$ with fidelity threshold}
    \label{fig:fid_vs_xi_thres}
\end{figure}
In this final study, we introduce the threshold fidelity $\bar{F}$ into the simulation, which is set to 0.53, 0.7, and 0.8, respectively, in a cylindrical network topology. In addition to the methodology followed in the above study, we reject paths under specific threshold fidelities.
Note that values below 0.5 are too low for any practical use cases.

Fig.~\ref{fig:bp_vs_hq} illustrates the relationship between blocking probability and $\xi$ across various values of $\bar{F}$ with the corresponding fidelity delivered in Fig.~\ref{fig:fid_vs_xi_thres}. It is important to highlight that when $\xi < 0.35$, none of the paths achieved fidelity levels exceeding their respective thresholds. Consequently, these paths are designated as blocked and are subsequently excluded from the plot. Notably, a lower $\bar{F}$ corresponds to a reduced blocking probability, as fewer paths fall below the threshold value regardless of the approach employed.

In the comparative analysis depicted in Fig.~\ref{fig:bp_vs_hq}, it is evident that the efficiency-aware approach yields a markedly lower blocking probability than the shortest-path approach when the fidelity threshold, $\bar{F}$, is set at ${0.53, 0.7}$. As $\bar{F}$ approaches 0.8, however, the disparity between the two methods diminishes. This convergence in performance can be attributed to the high fidelity threshold nearing 0.8, at which point the blocking probability approaches unity. Consequently, at such elevated thresholds, the opportunity for the efficiency-aware approach to further improve performance becomes negligible.  

In conclusion, the efficiency awareness technique improves the network performance in terms of blocking probabilities, i.e., fewer paths would be blocked upon using the efficiency awareness technique.

\subsection{Conclusions and future works}
In the above aspect, insights into architecting and operating quantum networks with quantum repeaters characterised by different efficiency figures are provided. It is observed that incorporating the knowledge of the quality of the nodes into the routing process increases the network performance considerably in terms of fidelity boost to initial paths and lowering of blocked routing paths.  This technique is helpful not only in better network performance but also in motivating better traffic management by assigning requests to path orders depending upon their fidelity threshold. The major bottleneck in the performance of quantum networks lies with the low-quality nodes. Both their number and their efficiency figure play a major role in reduced network performance. For instance, the simulations show that a single replacement of a low-quality node with a high-quality node at the bottleneck point could increase network performance by 50\%.

Among the ongoing research activities in routing for quantum networks, this work introduces a perspective from an application and network deployment standpoint by considering mixed efficiency figures of quantum nodes. This opens the door for an immense amount of future work for further investigations, including the following. Investigation into an approach which considers both heterogeneous links and nodes would be a major upgrade to the current model. Also, it is apparent to investigate how much the performance of the quantum network would increase considering the entanglement purification as introduced in Section~\ref{ssec:purification}.

\section{Routing with end-to-end knowledge}

 \begin{figure}[tb]
 \centering
    \includegraphics[width=0.8\columnwidth]{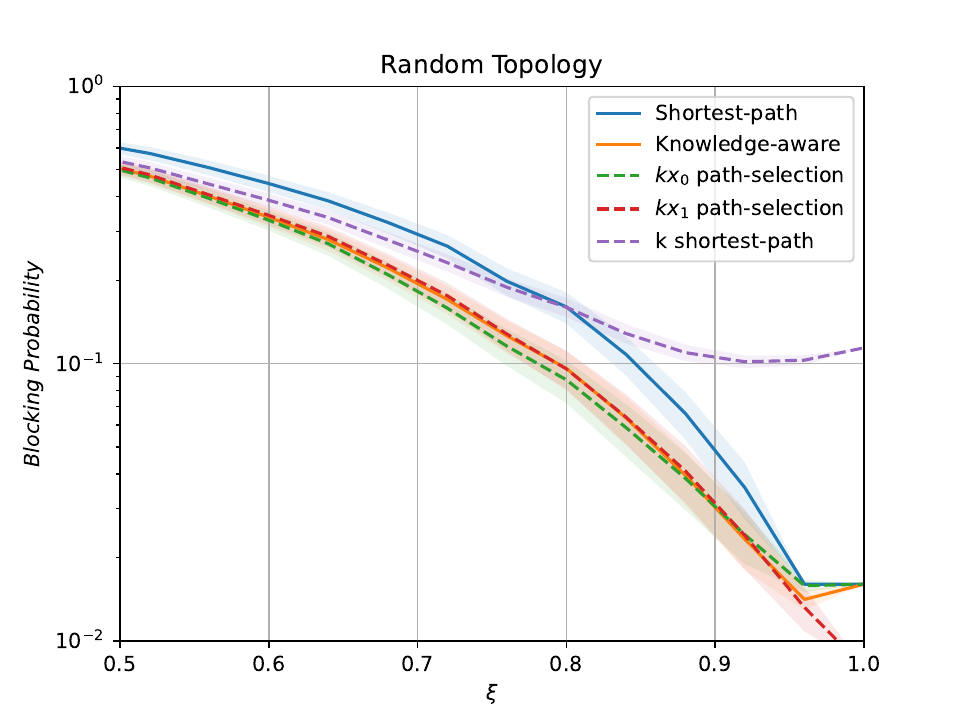}
    \caption{Blocking probability (BP) as a function of the fraction of high-quality nodes ($\xi$) in a randomised transport network. Parameters are set with a fidelity threshold ($F_{th}$) of 0.53 and a source-destination count ($n_{sd}$) of 5.}
    \label{fig:bp_wax_conf}
\end{figure}
\begin{figure}[tb]
    \centering
    \includegraphics[width=0.8\columnwidth]{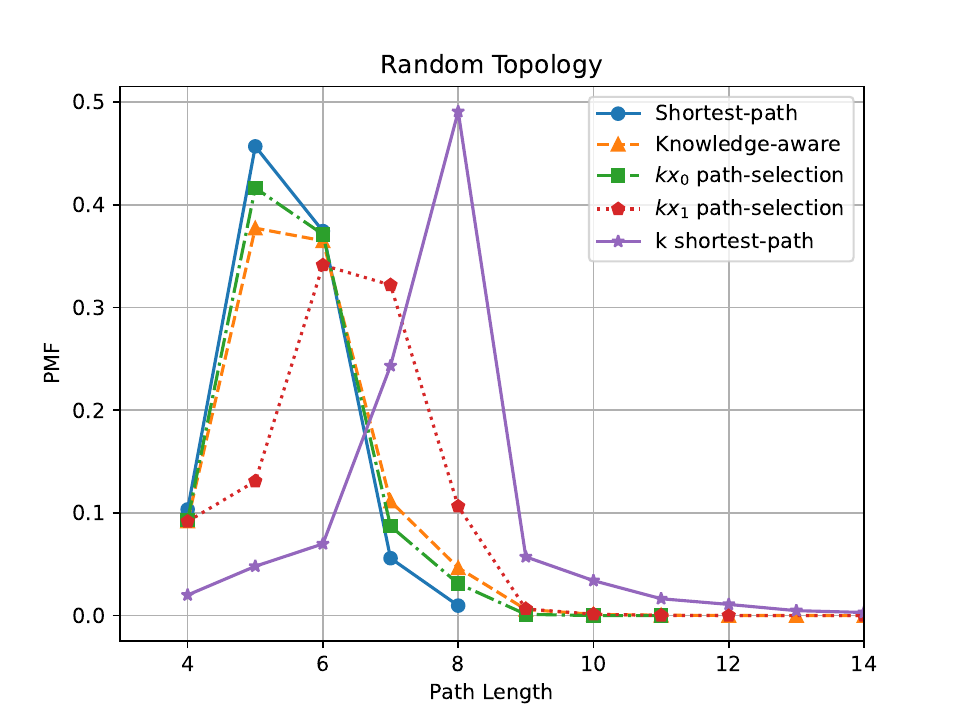}
    \caption{Probability mass function (PMF) of path length in a randomised transport network with a fidelity threshold ($F_{th}$) of 0.53 and a source-destination count ($n_{sd}$) of 5.}
    \label{fig:pl_wax}
\end{figure}

In this aspect, the methodology described in Section~\ref{sec:simulation} is applied to random and regular topologies with $n_{sd}$ = 5 as in Table~\ref{tab:topology_summary} and Section~\ref{sec:topology}. The parameters of the Waxman model are tuned to have a network of the same number of nodes and links as the regular topology, to compare the performance over the two networks on fair grounds.

\subsection{Blocking Probability}\label{ssec:bp}

\textit{\underline{Random Topology:}} Fig.~\ref{fig:bp_wax_conf} displays the trend of blocking probability versus the fraction of HQ repeaters ($\xi$) for random topology and the corresponding PMF of the path lengths in Fig.~\ref{fig:pl_wax}. The three evaluated approaches, i.e., $kx_{0}$, $kx_{1}$, and KA outperform the baseline SP approach, represented in a blue solid line. Although these three approaches demonstrate nearly identical blocking probabilities and PMF with the SP approach, the $kx_{0}$ approach is the most optimised. While KA requires the knowledge of network elements' characteristics to reduce the blocking probability, $kx_{0}$ achieves these results without it as described in Section~\ref{ssec:grey_white_box}. On the contrary, the KSP approach performed better than the shortest-path until $\xi \leq 0.8$ and became worse for $\xi \geq 0.8$. This phenomenon can be attributed to the fact that while the KSP method selects paths based on optimising end-to-end fidelity, it tends to choose routes with greater path lengths, as illustrated in Figure~\ref{fig:pl_wax}. These longer paths consequently consume more valuable network resources for served paths, eventually resulting in higher blocking probabilities. This behaviour is also the reason why the blocking probability with KSP \emph{increases} for high values of $\xi$, i.e., when the quality of the quantum repeaters increases. This may seem counterintuitive. Consider the extreme case of $\xi=1$, when all repeaters are of high quality. In this case, there would be a significant number of paths (of \emph{different} lengths) which are considered by KSP. KSP will pick the one with the minimum fidelity (above the threshold), which therefore will be one of the longest ones among the candidate set. This will consume more qubits (with respect to picking a shorter one), eventually leading to higher blocking probabilities.

\begin{figure}[tb]
    \centering
    \includegraphics[width=0.8\columnwidth]{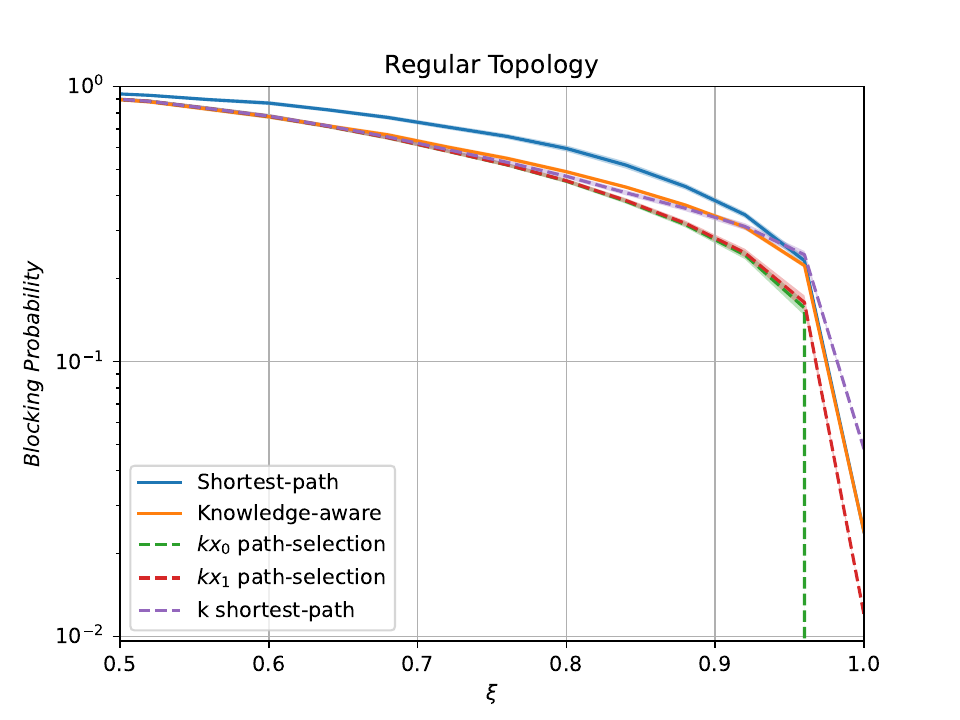}
    \vspace{-0.6cm}
    \caption{Blocking probability (BP) as a function of the fraction of high-quality nodes ($\xi$) in a regular transport network. Parameters are set with a fidelity threshold ($F_{th}$) of 0.53 and a source-destination count ($n_{sd}$) of 5.}
    \label{fig:bp_grid_conf}
\end{figure}
\begin{figure}[tb]
    \centering
    \includegraphics[width=0.8\columnwidth]{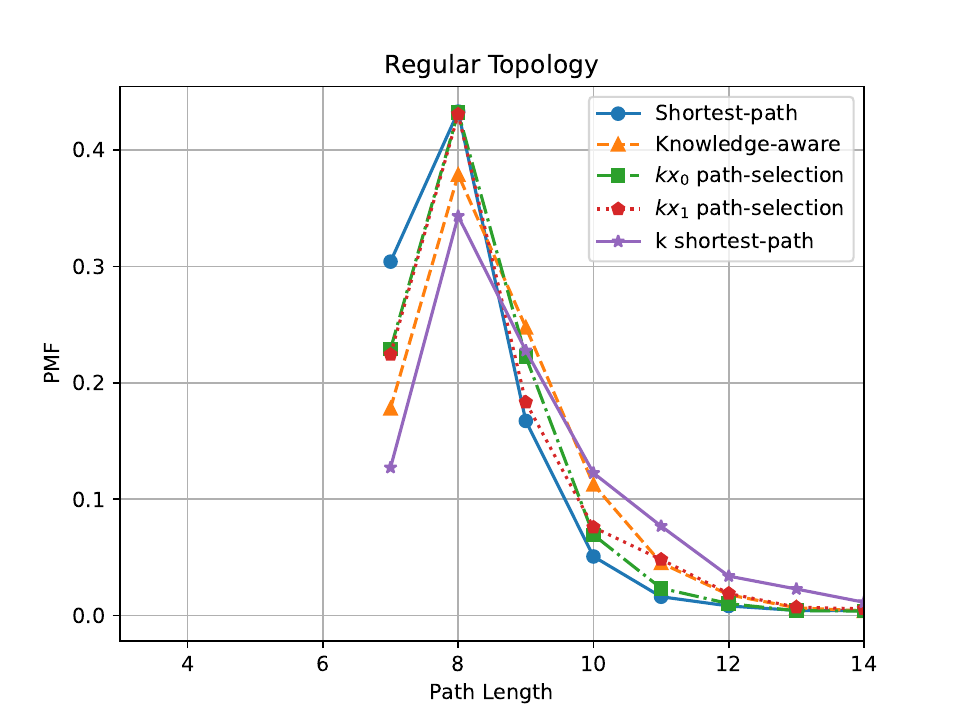}
    \caption{Probability mass function (PMF) of path length in a regular transport network with a fidelity threshold ($F_{th}$) of 0.53 and a source-destination count ($n_{sd}$) of 5.}
    \label{fig:pl_grid}
\end{figure}

\textit{\underline{Regular Topology:}}
Fig.~\ref{fig:bp_grid_conf} illustrates the trend in blocking probability as a function of $\xi$ for a regular topology, alongside the corresponding probability mass function (PMF) of path lengths depicted in Fig.~\ref{fig:pl_grid}. $kx_{0}$, $kx_{1}$, KA, and KSP consistently outperform SP, except with $\xi=1$ when SP and KA become identical due to the cost of all nodes being the same. The network's regular structure predominantly results in a path length of 8, which emerges as the most frequent outcome, although variations are observed depending on the strategic approach adopted by each method. Consistent with performances observed in random topology, the $kx_{0}$ strategy proves to be superior to KA and KSP, especially with higher values of $\xi$. We note that KSP in regular topology does not suffer from significant performance degradation as with a random topology: this is because, due to the network structure in regular topologies, KSP has less freedom to explore paths that are significantly longer than the shortest one (see Fig.~\ref{fig:pl_wax} vs.\ Fig.~\ref{fig:pl_grid}).

\begin{figure}[tb]
    \centering
    \includegraphics[width=0.8\columnwidth]{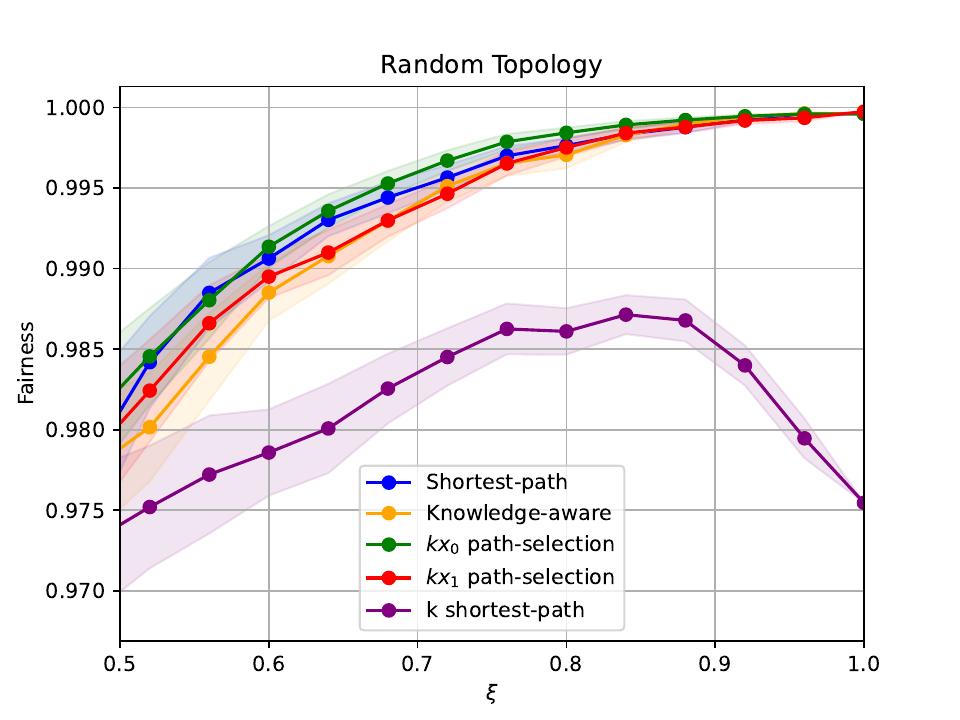}
    \caption{Fairness as a function of the fraction of high-quality nodes ($\xi$) in a randomised transport network, with a fidelity threshold ($F_{th}$) of 0.53 and a source-destination count ($n_{sd}$) of 5.}
    \label{fig:fair_wax_conf}
\end{figure}
\begin{landscape}
\begin{figure}[p]
    \centering
    \includegraphics[width=\linewidth]{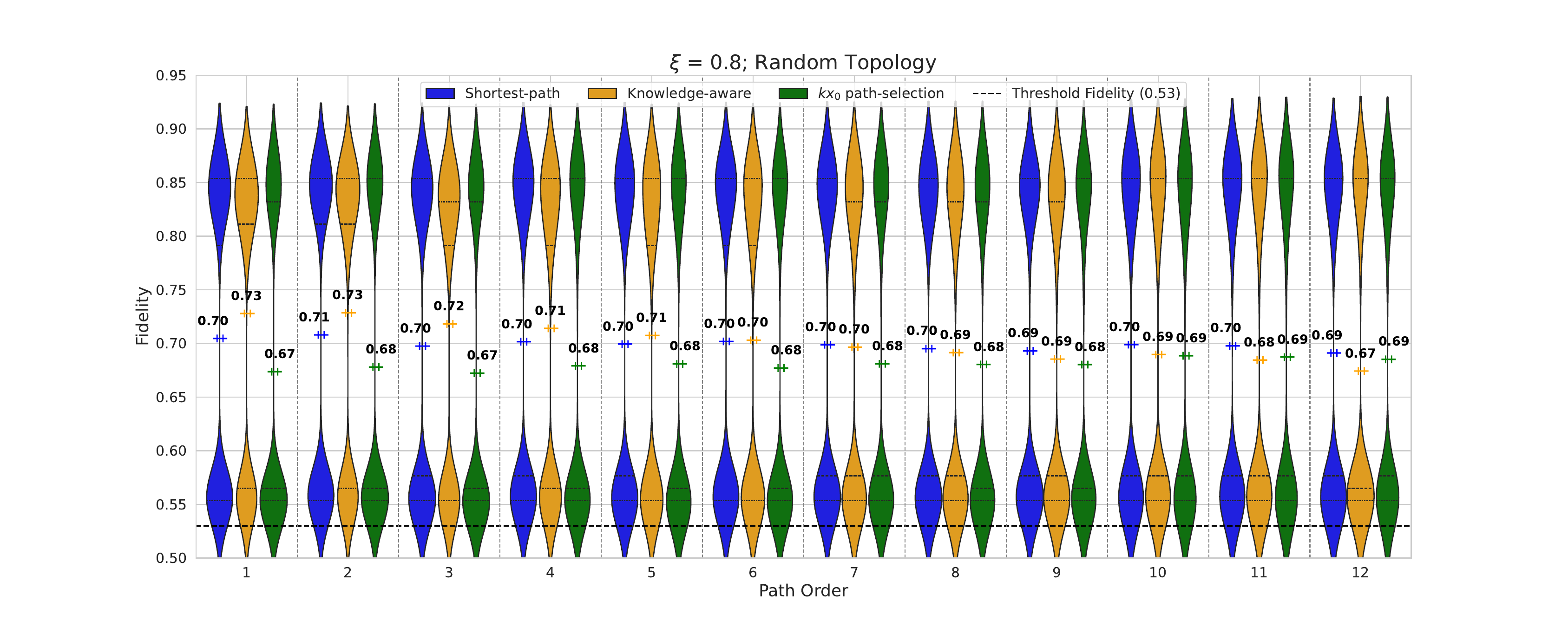}
    \caption{Fidelity as a function of path-order ($\theta$) in a randomised transport network, with a fidelity threshold ($F_{th}$) of 0.53 and a source-destination count ($n_{sd}$) of 12.}
    \label{fig:violin_wax_nsd12}
\end{figure}
\end{landscape}

\begin{table*}[h!]
\centering
\caption{Evaluation of Routing Performance. BP: Probability of a path being blocked, J: Jain's index for multiple source-destination serving, and $n_{sd}$: Number of source-destination pairs involved}
\label{tab:routing_evaluation}
\begin{adjustbox}{width=1\textwidth}
\begin{tabular}{lccc}
\toprule
\multirow{2}{*}{Property} & \multirow{2}{*}{Notation} & 
\multicolumn{2}{c}{Performance Comparison} \\
\cmidrule{3-4}
& & Random Topology & Regular Topology \\
\midrule
Blocking Probability & BP & $kx_0$ $\sim$ $kx_1$ $\sim$ KA $>$ SP $>$ KSP & $kx_0$ $\sim$ $kx_1$ $>$ KSP $\sim$ KA $>$ SP \\
Fairness & J & $kx_0$ $\sim$ SP $>$ $kx_1$ $\sim$ KA $>$ KSP & SP $>$ $kx_0$ $\sim$ $kx_1$ $>$ KSP $>>$ KA  \\

Number of SD pairs & $n_{sd}$ & $kx_0$ $>$ KA $\geq$ SP $>$ $kx_1$ $>$ KSP & N/A \\

\bottomrule
\end{tabular}
\end{adjustbox}
\end{table*}

Overall, the network topology exerts a significant influence on the performance outcomes of each approach. Blocking probability is notably higher in the regular topology compared to the random topology, attributed to the shifted path length distribution in the regular topology (see Fig.~\ref{fig:pl_wax} \& \ref{fig:pl_grid}). This effect is evident in the comparative slopes of the blocking probability curves, with the regular topology in Fig.~\ref{fig:bp_grid_conf} displaying a gentler slope than the random topology in Fig.~\ref{fig:bp_wax_conf}. The network's regular structure limits the available route options because it offers little variation in served path lengths. This results in a relatively smaller improvement in performance as compared to the case in random topology.

\begin{figure}[tb]
    \centering
    \includegraphics[width=0.8\columnwidth]{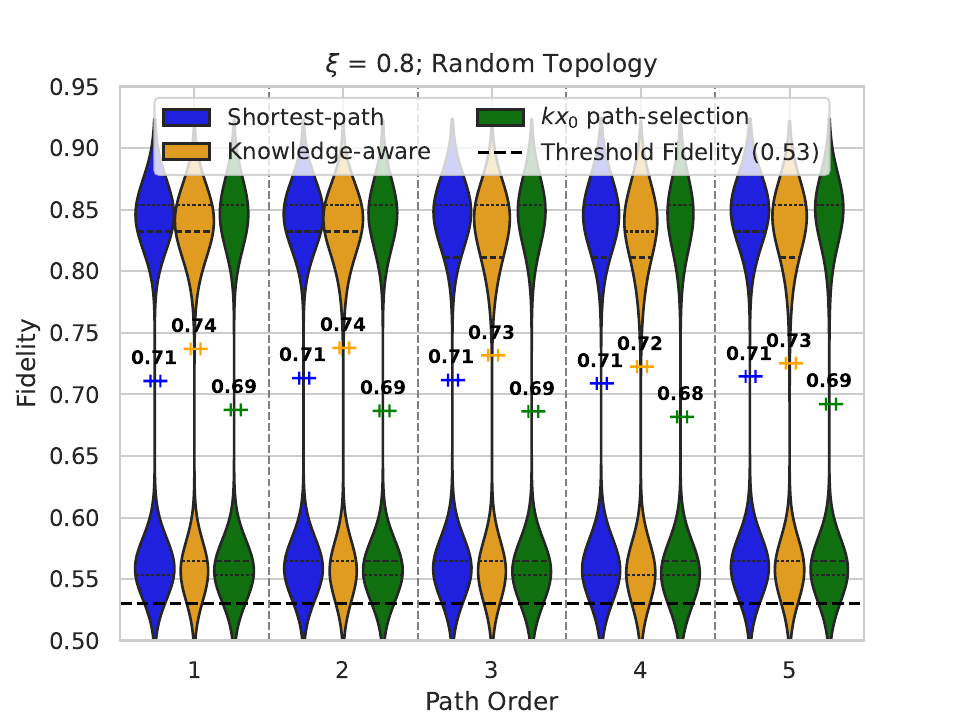}
    \caption{Fidelity as a function of path-order ($\theta$) in a randomised transport network, with a fidelity threshold ($F_{th}$) of 0.53 and a source-destination count ($n_{sd}$) of 5.}
    \label{fig:violin_wax}
\end{figure}
\begin{figure}[tb]
    \centering
    \includegraphics[width=0.8\columnwidth]{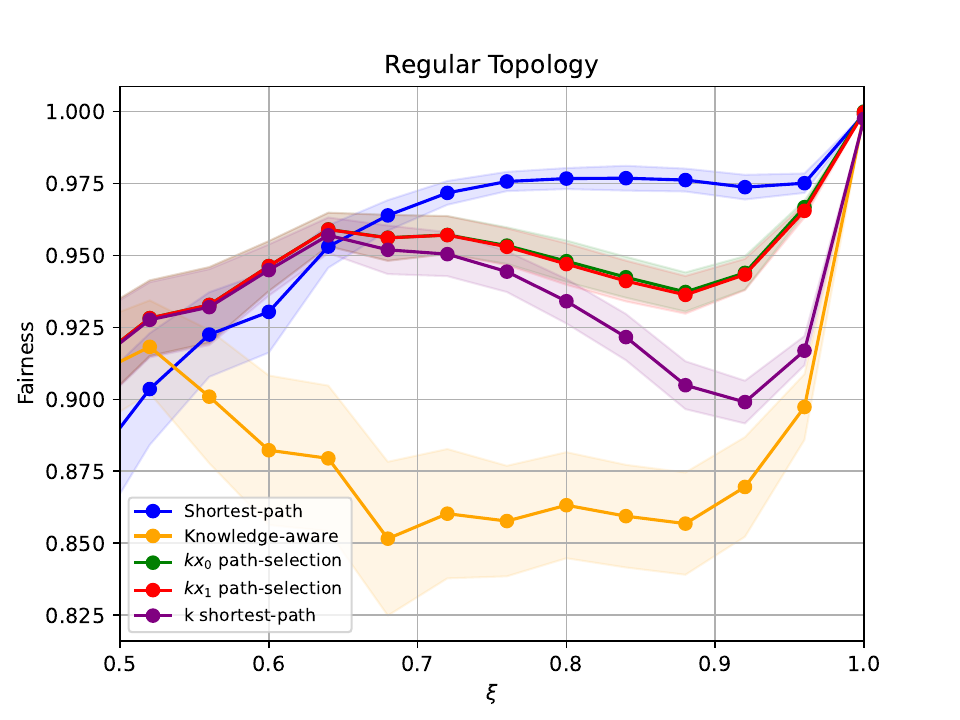}
    \caption{Fairness as a function of the fraction of high-quality nodes ($\xi$) in a regular transport network, with a fidelity threshold ($F_{th}$) of 0.53 and a source-destination count ($n_{sd}$) of 5.}
    \label{fig:fair_grid_conf}
\end{figure}
\begin{figure}[tb]
    \centering
    \includegraphics[width=0.8\columnwidth]{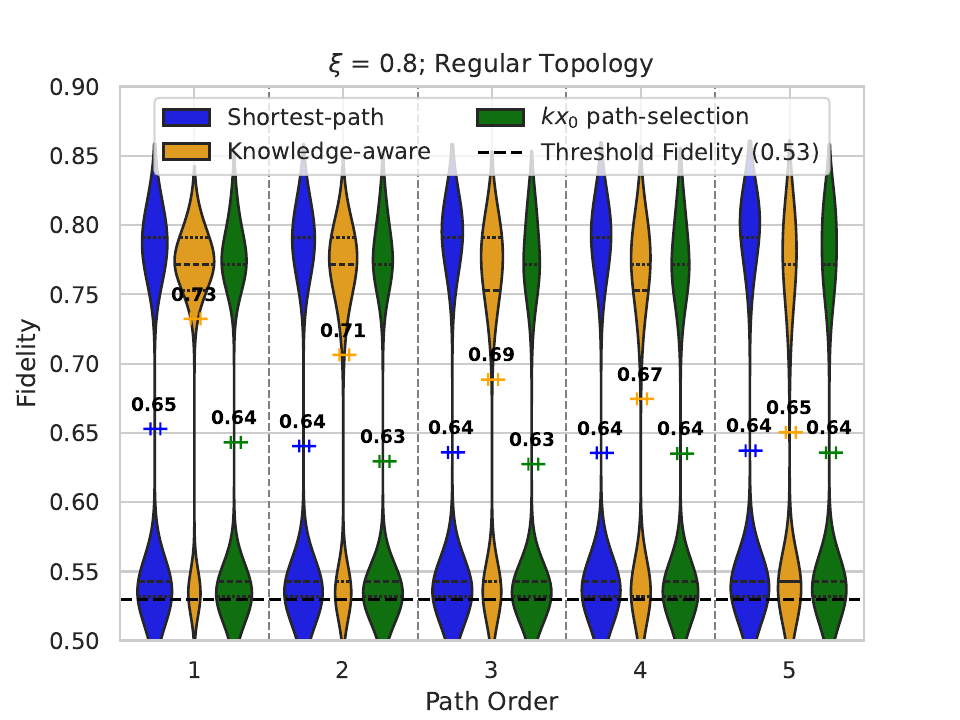}
    \caption{Fidelity as a function of path-order ($\theta$) in a regular transport network, with a fidelity threshold ($F_{th}$) of 0.53 and a source-destination count ($n_{sd}$) of 5.}
    \label{fig:violin_grid}
\end{figure}
\subsection{Fairness}\label{ssec:fairness}

\textit{\underline{Random Topology:}}
Fig.~\ref{fig:fair_wax_conf} presents the fairness among network users, quantified using Jain's index, in terms of the number of paths successfully delivered above the fidelity threshold of 0.53 for the random topology.

The majority of the approaches evaluated exhibit comparable levels of fairness and increase their fairness as $\xi$ increases, with the notable exception of the KSP approach, which demonstrates significantly lower fairness and decreases its fairness as $\xi$ increases. This means that, in general, these policies can fairly exploit the higher quality of repeaters. The reason for the drop of fairness experienced by KSP for high $\xi$ values is related to the corresponding increase of the blocking probabilities observed in Fig.~\ref{fig:bp_wax_conf}, i.e., KSP using unnecessarily long paths as shown in Fig.~\ref{fig:pl_wax} also for SD pairs that are topologically closer. In addition, those that are farther away (topologically) have thus a higher chance of being blocked, and this results in lower fairness values. Similarly, it is interesting to consider this result together with the performance in terms of blocking probability (Fig.~\ref{fig:bp_wax_conf}) even for slightly lower values of $\xi$, i.e., $\xi < 0.8$. Although KSP has better BP in that range than SP, it does serve the non-blocked SD pairs fairly. The $kx_{0}$ approach not only maintains equitable treatment among users but also performs the best (among the considered approaches) in terms of BP (see Fig.~\ref{fig:bp_wax_conf}). 

To better illustrate user fairness from another perspective, the fidelity distribution for each path order ($\theta$)—representing the distribution of served fidelity for requests in sequence (i.e., 1 to $n_{sd}$)—is presented. Fig.~\ref{fig:violin_wax_nsd12} ($n_{sd} = 12$) and Fig.~\ref{fig:violin_wax} ($n_{sd} = 5$) illustrate this fidelity distribution across path orders for various policies at $\xi = 0.8$, with the average fidelity served for each path order also highlighted. For presentation purposes, we only plot $kx_{0}$ path-selection from the `grey box' approaches, leaving behind the worse performing KSP and $kx_{1}$ path-selection as discussed above. It is to be noted that $\xi = 0.8$ is chosen for analysis as the corresponding blocking probabilities are lower enough to be considered for realistic network settings. Since the policies are designed to deliver just over the threshold fidelity of 0.53, achieving higher fidelity than the fidelity threshold would be an over-utilisation of resources and can be considered as not a fully optimised approach. The optimisation capability of the $kx_{0}$ path-selection approach is visible from Fig.~\ref{fig:violin_wax_nsd12}, where it delivers a lower average fidelity (yet still over the fidelity threshold) for each path order with almost equal value as compared to other approaches. This shows that the approach uses low-quality nodes if the fidelity threshold allows for it, which keeps the valuable resources to deliver the path for the upcoming path orders. The same is true for the SP approach, but the average fidelity values are more than those of the $kx_{0}$ path-selection approach, which means that the SP approach is not fully optimised. Contrary to this, KA has the worst case, where it has unnecessarily higher average fidelity than the fidelity threshold for initial path orders, and the average fidelity values gradually drop with path order. The same is true for $n_{sd} = 5$ as well in Fig.~\ref{fig:violin_wax}, but the effect is not clearly visible due to a smaller number of source-destination pairs involved.

\textit{\underline{Regular Topology:}}
Fig.~\ref{fig:fair_grid_conf} illustrates the fairness within the regular topology, while Fig.~\ref{fig:violin_grid} shows the corresponding fidelity distribution for each path order ($\theta$) in the regular topology at $\xi = 0.8$. The average fidelity served for each path order is highlighted, similar to the random topology case. While all approaches exhibit comparable levels of fairness at $\xi = 0.5$, the KA method demonstrates the least fairness as $\xi$ increases. This trend is attributable to the preferential consumption of high-quality nodes in the initial delivery paths, a phenomenon also evident from Fig.~\ref{fig:violin_grid}, where the average fidelity decreases almost linearly with increasing path order, represented in orange. Contrary to this, it is to be noted that KA did not exhibit the worst fairness in random topology as seen in Fig.~\ref{fig:fair_wax_conf} due to relatively shorter path length as seen in Fig.~\ref{fig:pl_wax} than regular topology in Fig.~\ref{fig:pl_grid}. The relatively longer path lengths, which arise from the inherent structure of the regular topology, enhance the effect of preferential consumption of high-quality nodes in the regular topology.

Although the blocking probability of the KA policy is closer to other policies, as in Fig.~\ref{fig:bp_grid_conf}, it does not serve the paths fairly as in Fig.~\ref{fig:fair_grid_conf}.

\begin{figure}[tb]
\centering

\begin{subfigure}[t]{0.49\textwidth}
    \centering
    \includegraphics[width=\linewidth]{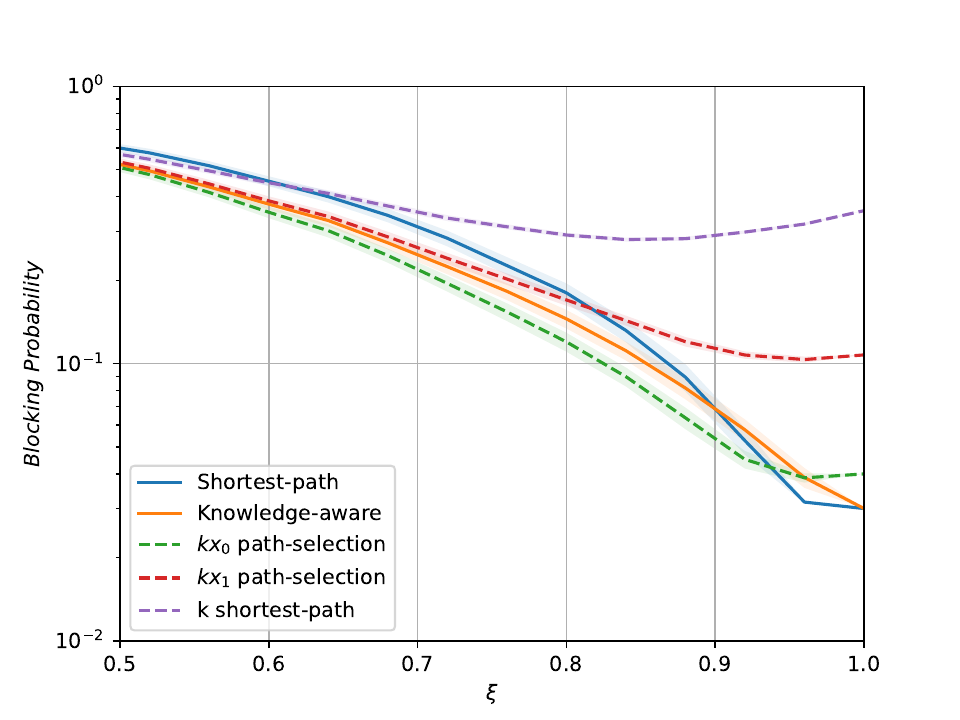}
    \caption{$n_{sd} = 8$}
    \label{fig:nsd8}
\end{subfigure}
\hfill
\begin{subfigure}[t]{0.49\textwidth}
    \centering
    \includegraphics[width=\linewidth]{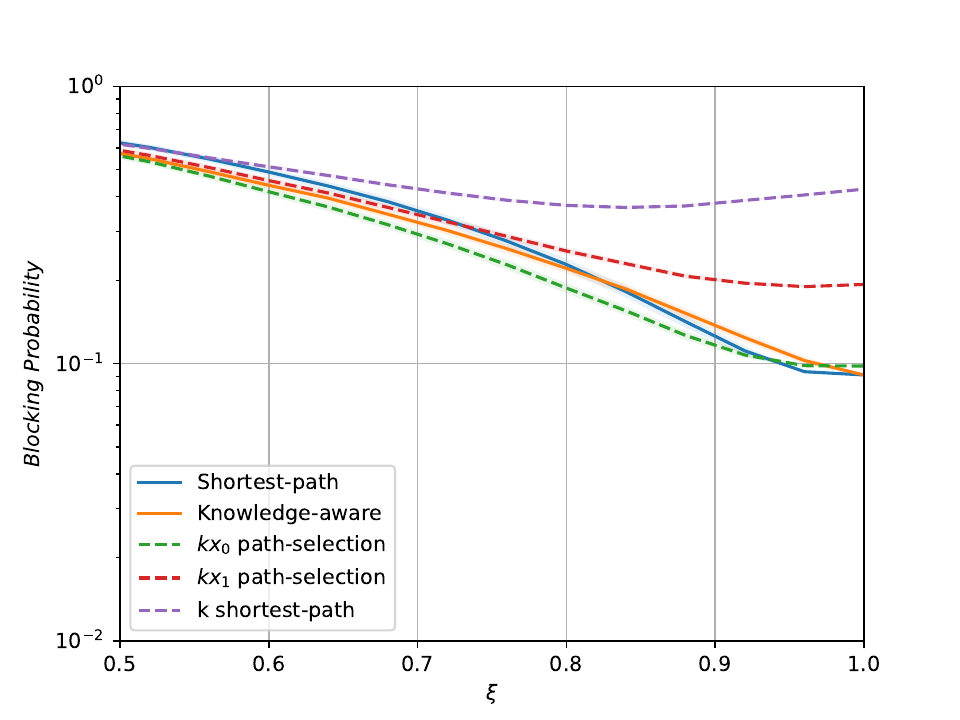}
    \caption{$n_{sd} = 10$}
    \label{fig:nsd10}
\end{subfigure}

\begin{subfigure}[t]{0.49\textwidth}
    \centering
    \includegraphics[width=\linewidth]{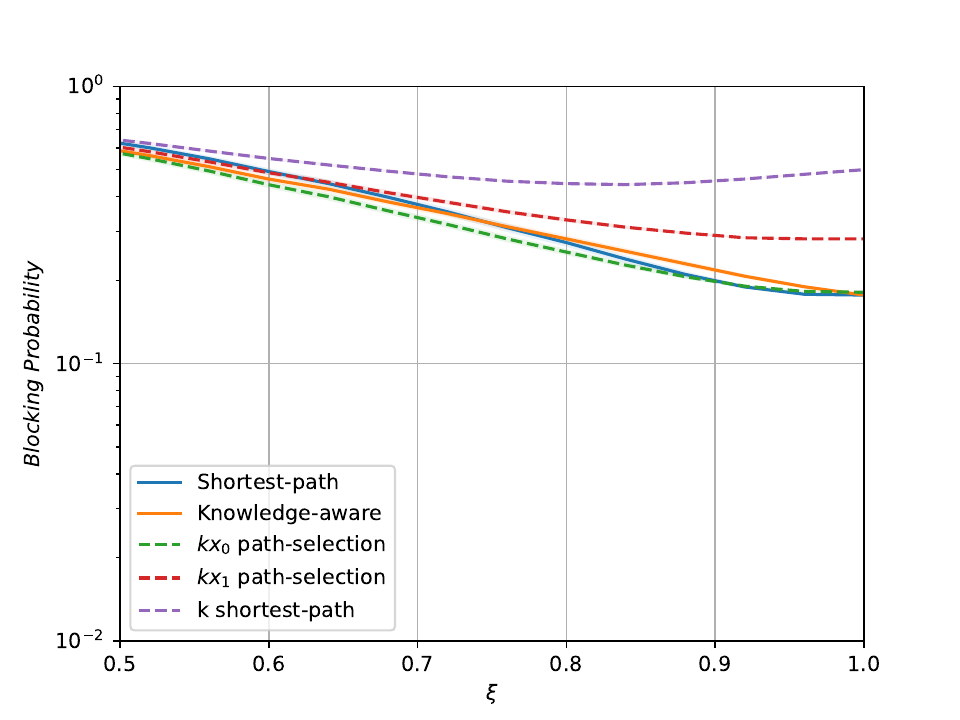}
    \caption{$n_{sd} = 12$}
    \label{fig:nsd12}
\end{subfigure}

\caption{Blocking probability (BP) as a function of the fraction of high-quality nodes ($\xi$) in a randomized transport network with a fidelity threshold ($F_{th}$) of 0.53, shown for source-destination counts ($n_{sd}$) of: a) 8, b) 10, and c) 12.}
\label{fig:load}
\end{figure}

An interesting observation in Fig.~\ref{fig:fair_grid_conf} is the relation between blocking probability and fairness of the approaches. For mid to slightly higher values of $\xi$, approaches excluding KA and SP sacrifice a small amount of fairness as seen in Fig.~\ref{fig:fair_grid_conf} to have a significant boost in the blocking probability performance as seen in Fig.~\ref{fig:bp_grid_conf}. It is to be noted that this behaviour is only limited to regular topology and arises from the lack of availability of different path lengths as observed in Fig.~\ref{fig:pl_grid} where only one peak is observed at path length of 8 while in case of random topology in Fig.~\ref{fig:pl_wax}, path length of 5 \& 6 are observed almost equally. Ultimately, the $kx_{0}$ approach achieves commendable fairness in regular topology, surpassed only by the SP method, which was observed to have significantly less performance in terms of blocking probability.

\subsection{Number of source-destination pairs}\label{ssec:nsd}
We now analyse the impact of varying the number of source-destination pairs. For this analysis, we exclusively consider a random topology with a fixed number of transport nodes and links, and we vary the number of SD pairs, considering $n_{\text{sd}} \in \{5, 8, 10, 12\}$.

Fig.~\ref{fig:bp_wax_conf} ($n_{sd} = 5$) and Fig.~\ref{fig:load} ($n_{sd} \in \{8, 10, 12\}$) collectively demonstrates the network's performance under varying number of source-destination pairs for the random topology, displaying the blocking probability as a function of $\xi$. With increasing $n_{sd}$, the blocking probability escalates due to the limited availability of shared entangled pairs at quantum repeaters. Among the methods assessed, the $kx_{0}$ approach consistently outperforms all others across the different numbers of source-destination pairs. Notably, the $kx_{1}$ and KSP approaches, which perform well under a lower number of source-destination pairs, fail to maintain superiority over the SP approach as the $n_{sd}$ increases, exhibiting a decline in performance. In the case of KSP, the BP begins to increase with \( \xi \) when \( \xi \) approaches values close to 1 (i.e., \( \xi \in [0.9, 1] \)) for higher numbers of source-destination pairs (\( n_{sd} \in \{ 8, 10, 12\} \)). This counterintuitive phenomenon arises from the KSP approach of selecting longer paths, which consumes valuable resources. This behaviour, explained in Section~\ref{ssec:bp}, is further intensified by the increased load resulting from a higher number of SD pairs operating within the network. This shows that the algorithm $kx_{i}$ is very sensitive to allowing paths even longer, only 1 hop more than the shortest possible, and this is more and more visible as the $n_{sd}$ increases. At $n_{sd} = 12$, the performance of SP, $kx_{0}$, and KA appears to converge, highlighting a uniform response to extreme network demands.

Table~\ref{tab:routing_evaluation} consolidates all results obtained thus far, providing a comprehensive summary of the performance comparison across the evaluated approaches, organised by the transport network topologies. The table includes the performance metrics discussed in Sections~\ref{ssec:bp}, \ref{ssec:fairness}, and \ref{ssec:nsd}, specifically blocking probability, fairness, and source-destination count.

\begin{figure}[tb]
\centering
\includegraphics[width=0.8\columnwidth]{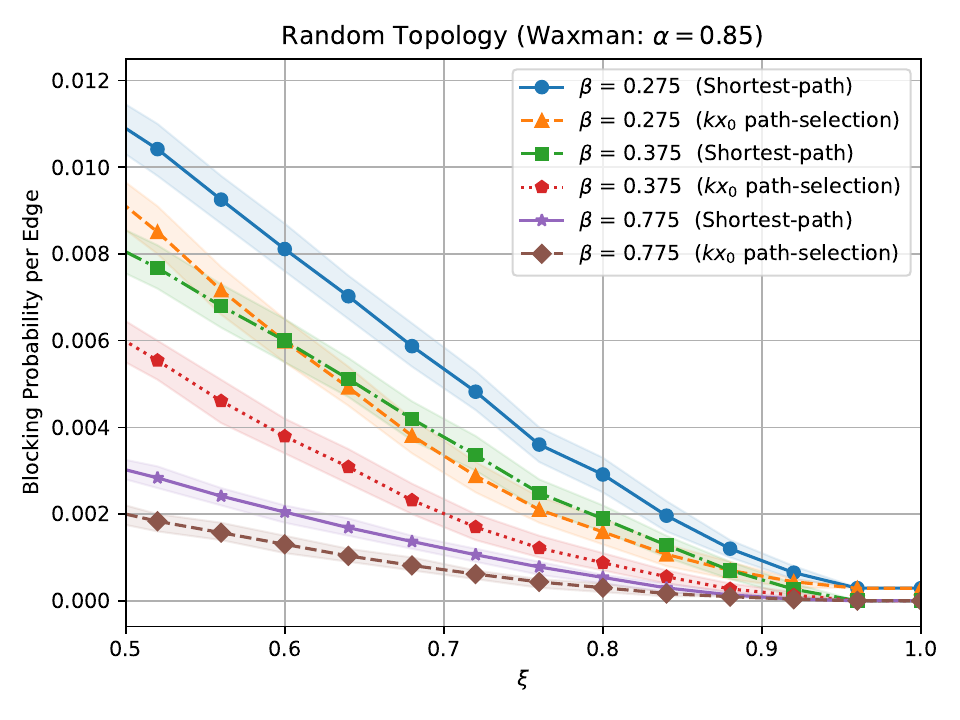}
    \caption{Blocking probability per edge (BP/E) as a function of the fraction of high-quality nodes ($\xi$) in a randomized transport network, with a fidelity threshold ($F_{th}$) of 0.53 and a source-destination count ($n_{sd}$) of 5, for varying Waxman parameter values ($\beta$).}
    \label{fig:bpe_wax_conf}
\end{figure}

\begin{figure}[tb]
    \centering
\includegraphics[width=0.8\columnwidth]{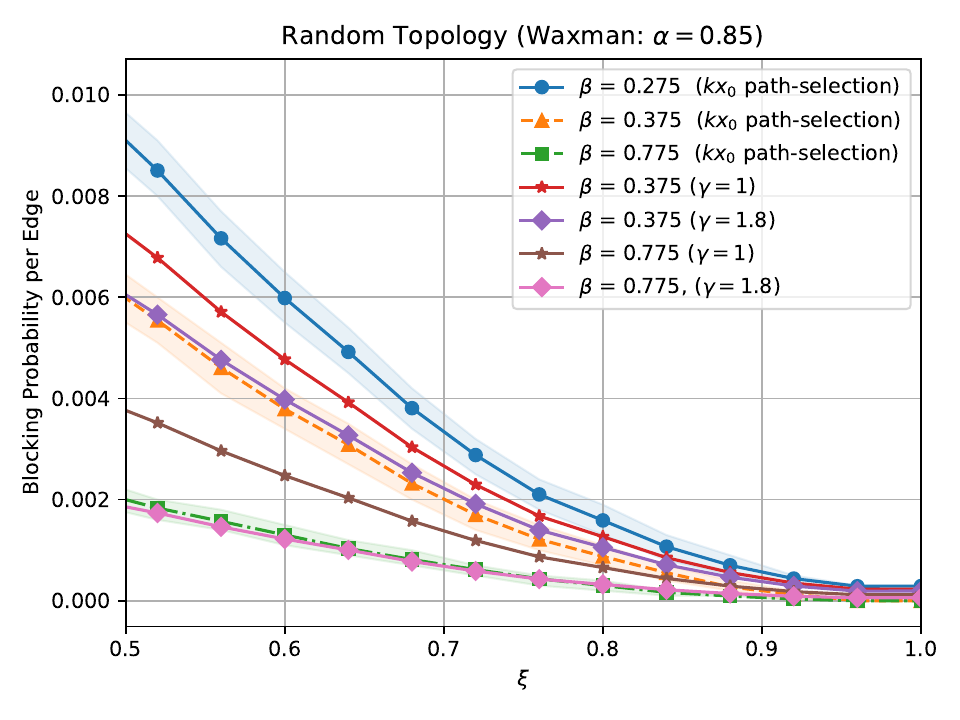}
    \caption{Blocking probability per edge (BP/E) as a function of the fraction of high-quality nodes ($\xi$) in a randomized transport network, with a fidelity threshold ($F_{th}$) of 0.53 and a source-destination count ($n_{sd}$) of 5, to model the performance of the $kx_0$ path-selection algorithm.}
    \label{fig:bpe_ref_wax_conf}
\end{figure}

\subsection{Average Degree}
So far, we have considered a random topology equivalent (in terms of number of nodes and links) to the reference regular topology. In this section, we investigate the effect of increasing the number of links (or average degree) in the network. To this end, we exploit the flexibility of the Waxman model in capturing the network density via the $\beta$ parameter. For clarity in our presentation, we focus on evaluating the performance of the most effective routing approach identified in our study ---the $kx_0$ path selection method--- against the traditional shortest-path method for reference. We consider, as before, $n_{sd} = 5$ only.

Fig.~\ref{fig:bpe_wax_conf} delineates the relationship between the BP/E and the parameter $\xi$ with a random topology. The figure clearly shows that as the network expands in complexity with an increased number of links, the $kx_0$ path selection method consistently outperforms the shortest path approach. Although not depicted in the figure, the $kx_0$ method also surpasses the other routing strategies under comparison.
We consider the BP/E instead of the BP because this

helps us better grasp how each new link addition individually affects the network. Specifically, we capture this effect by modelling the performance differentials between networks with varying average degrees ($\beta$) as follows

\begin{equation}\label{eq:bpe1}
    bpe_{\beta} = bpe_{\beta_{0}} \left( \frac{n_{e}^{\beta_{0}}}{n_{e}^{\beta}} \right)^{\gamma},
\end{equation}
where, $bpe_\beta$ is the BP/E for a random quantum network with Waxman parameters ($\alpha, \beta$) having $n_{e}^{\beta}$ number of links; $bpe_{\beta_{0}}$ is the BP/E for a random quantum network with Waxman parameters ($\alpha, \beta_{0}$) having $n_{e}^{\beta_{0}}$ number of links; $\gamma$ is a parameter that is selected to model the performance $bpe_\beta$ in terms of $bpe_{\beta_{0}}$.

As seen in Fig.~\ref{fig:bpe_ref_wax_conf}, the value of $\gamma = 1.8$ can accurately predict the performance improvement of the $kx_0$ path-selection approach in denser networks with $\beta = 0.375$ and $\beta = 0.775$ compared to the baseline case of $\beta = 0.275$, which entails a scalability that is slightly less than quadratic.
The results obtained in our simulated conditions show that, as the network scales, the performance of the $kx_{0}$ algorithm is given by

\begin{equation}\label{eq:bpe2}
    bpe_{\beta} = bpe_{\beta_{0}} \left( \frac{n_{e}^{\beta_{0}}}{n_{e}^{\beta}} \right)^{1.8}.
\end{equation}

\begin{figure*}
\centering
    \begin{subfigure}{0.52\textwidth}
        \includegraphics[width=0.95\linewidth]{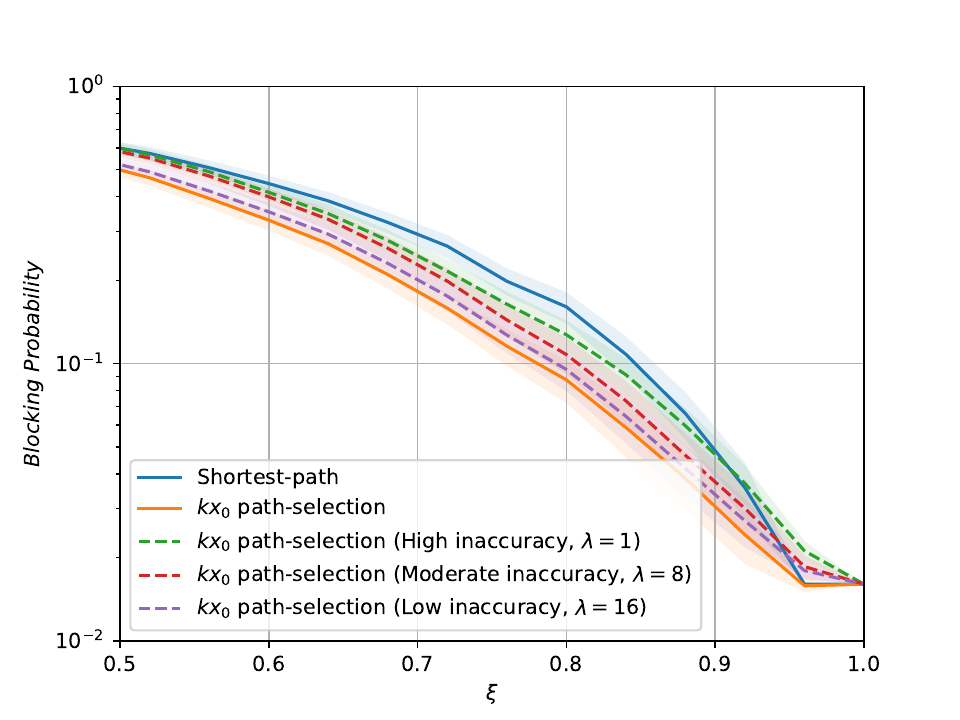}
        \caption{$n_{sd} = 5$}
    \end{subfigure}
    \hspace{-1cm}
    \begin{subfigure}{0.52\textwidth}
        \includegraphics[width=0.95\linewidth]{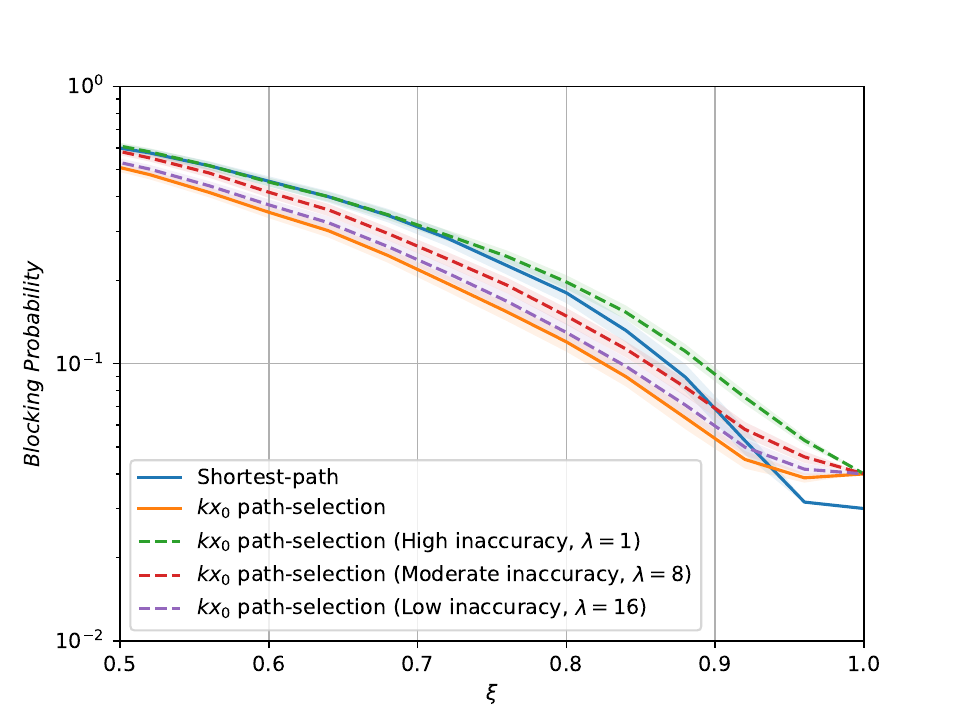}
        \caption{$n_{sd} = 8$}
    \end{subfigure}

    \caption{Robustness test: Blocking probability (BP) as a function of the fraction of high-quality nodes ($\xi$) in a randomised transport network, with a fidelity threshold ($F_{th}$) of 0.53 and a source-destination count ($n_{sd}$) of a) 5, b) 8. The robustness parameter ($\lambda$) models end-to-end fidelity inaccuracy, with $\lambda \in \{1, 8, 16\}$, corresponding to decreasing noise magnitude.}
    \label{fig:bp_robust_conf}
\end{figure*}

\begin{figure}[tb]
\centering
\includegraphics[width=0.8\columnwidth]{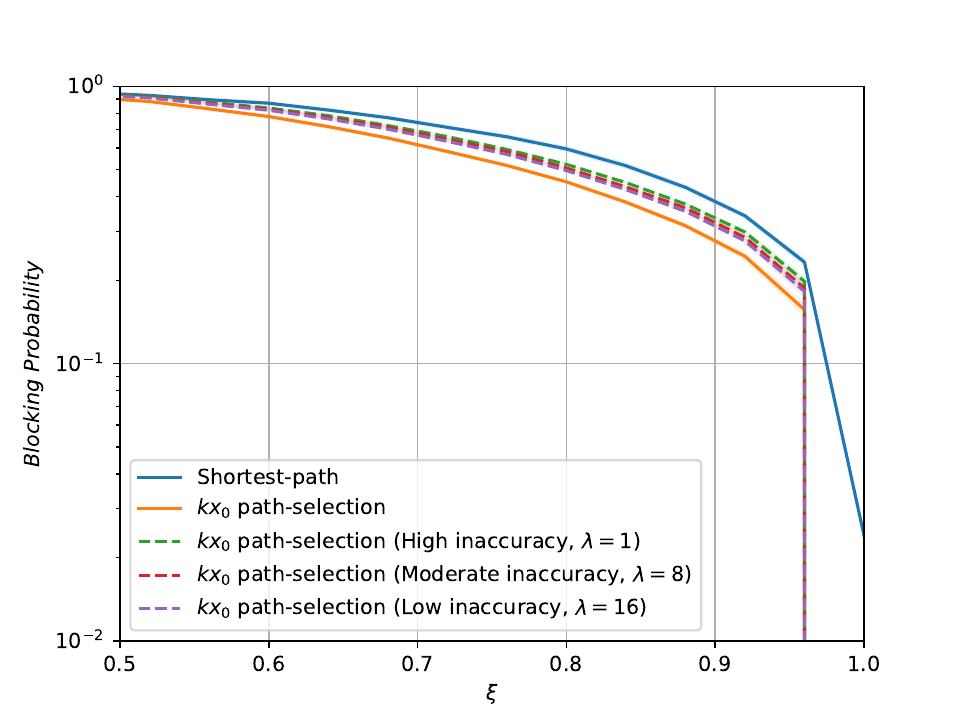}
    \caption{Robustness test: Blocking probability (BP) as a function of the fraction of high-quality nodes ($\xi$) in a regular transport network, with a fidelity threshold ($F_{th}$) of 0.53 and a source-destination count ($n_{sd} = 5$). The robustness parameter ($\lambda$) models end-to-end fidelity inaccuracy, with $\lambda \in \{1, 8, 16\}$, corresponding to decreasing noise magnitude.}
    \label{fig:bp_robust_conf_regular}
\end{figure}

\subsection{Robustness}
As described in Section~\ref{ssec:grey_white_box}, the end-to-end fidelity data acquired through sampling is central to the $kx_{0}$ path-selection approach. In this aspect, we test the robustness of the $kx_{0}$ approach by supplying \emph{inaccurate} end-to-end fidelity data to the algorithm. Specifically, we examine its performance on a random topology with $n_{sd} = \{ 5, 8\}$ and a regular topology with $n_{sd} = 5$. We sample the end-to-end fidelity inaccuracy from a normal distribution given by: \newline
\begin{equation}\label{eq:distribution}
Q = N \left( 0, \frac{|F_1 - F_2|}{\lambda}\right),
\end{equation}
where $F_1$ is the correct end-to-end fidelity of the path; $F_2$ is the end-to-end fidelity of the path by replacing an HQ repeater with an LQ one; $\lambda$ is a parameter modelling the end-to-end fidelity inaccuracy. Specifically, by varying $\lambda$, we modify the standard deviation of the noise added to the correct fidelity, thus controlling its magnitude. We classify these as follows: $\lambda = 1$ (High inaccuracy), $\lambda = 8$ (Moderate inaccuracy), and $\lambda = 16$ (Low inaccuracy).

Fig.~\ref{fig:bp_robust_conf}a for $n_{sd} = 5$ and Fig.~\ref{fig:bp_robust_conf}b for $n_{sd} = 8$ presents the outcomes of a robustness assessment for the $kx_{0}$ approach for $\lambda \in \{ 1, 8, 16\}$ in a random network. The blocking probability escalates as inaccuracies in the end-to-end fidelity data provided to the $kx_{0}$ approach increase or $\lambda$ decreases. For $n_{sd} = 5$, even with the introduction of high inaccuracy, the performance of $kx_{0}$ remains superior to that of the SP approach. The $kx_{0}$ continues to better SP under demanding condition i.e., with more SD pairs ($n_{sd} = 8$) in low and moderate inaccuracies cases. However, under more demanding conditions, i.e., when fidelity inaccuracies are high along with a higher number of source-destination pairs ($n_{sd} = 8$), the advantage of $kx_0$ over SP diminishes, as shown in Fig.~\ref{fig:bp_robust_conf}b. Similarly, Fig.~\ref{fig:bp_robust_conf_regular} shows the robustness results for the $kx_0$ approach in a regular network for $\lambda = \{ 1, 8, 16\}$. Consistent with the random topology results, $kx_0$ outperforms the SP approach even at high fidelity inaccuracies, although the performance differences between accuracy levels decrease. This reduced performance variation is due to the regular network structure, which constrains path diversity between SD pairs as compared to the random network.

Overall, these results suggest that the $kx_{0}$ approach demonstrates robust performance with inaccurate fidelity estimates, except under highly demanding conditions.

\begin{figure*}
\centering
    \begin{subfigure}{0.52\textwidth}
        \includegraphics[width=\linewidth]{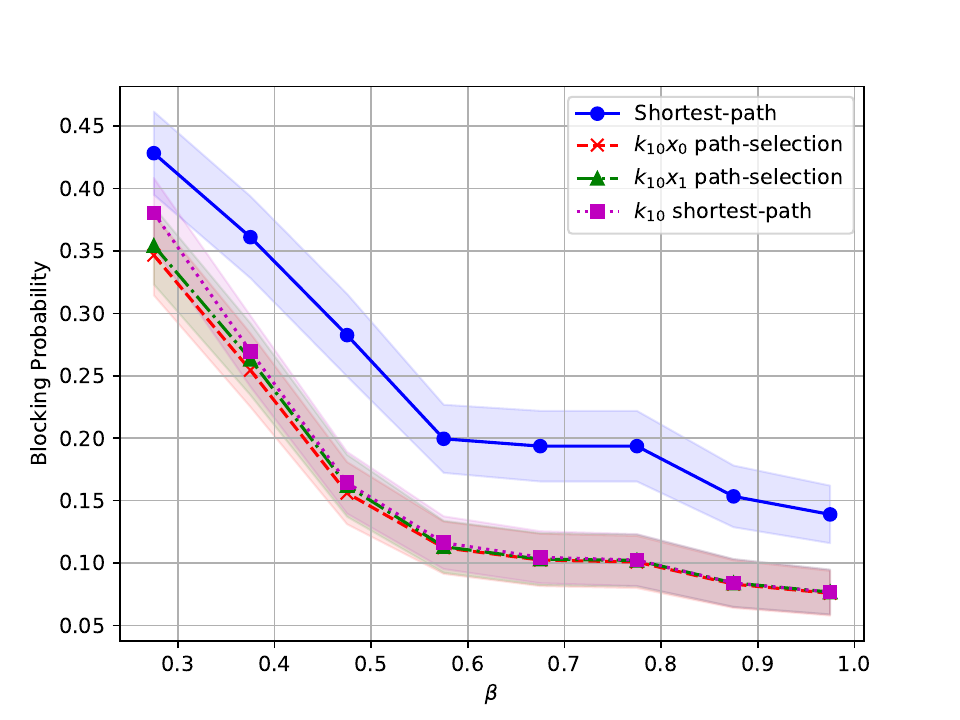}
        \caption{For $k = 10$}
        \label{fig:bp_wax_spec_conf10}
    \end{subfigure}
    \hspace{-1cm}
    \begin{subfigure}{0.52\textwidth}
        \includegraphics[width=\linewidth]{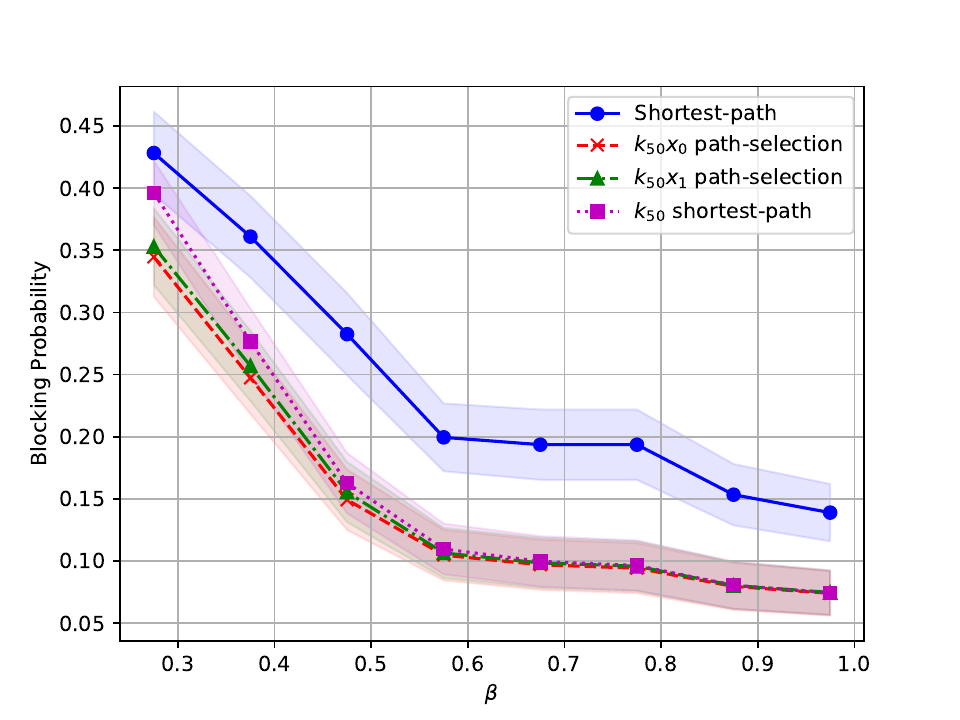}
        \caption{For $k = 50$}
        \label{fig:bp_wax_spec_conf50}
    \end{subfigure}

    \caption{Blocking probability vs. Average degree ($\beta$) with a fidelity threshold ($F_{th}$) = 0.53, number of source and destination ($n_{sd}$) = 5 in a random topology, and having a range of efficiency figures of quantum repeaters. Comparison of shortest-path with grey-box algorithms having algorithm configuration parameter: a) $k = 10$, b) $k = 50$.}
    \label{fig:bp_wax_spec_conf}
\end{figure*}

\subsection{Range of Efficiency Figures}
As a final case-study of this aspect, we test the performance of the considered routing algorithms by relaxing the constraint on two node classes and assigning instead efficiency figures that are randomly drawn in the range given by Eq.~\eqref{eq:range}, as detailed in Section~\ref{sec:topology}. We set $a = 10$ in Eq.~\eqref{eq:range} to attain practically useful performances of the routing policies. Setting a higher value of parameter $a$ would decrease the BP, even achieving BP~$=0$ for higher $\beta$ values. Conversely, setting a lower value would increase the blocking probabilities above 0.5, which can be assumed to be unrealistic for practical purposes. 

As a recap, the network configuration under consideration employs a random topology with $n_{sd} = 5$. In both the $k$-shortest path (with $k = 10$ as $k_{10}$) and $kx$-path selection (with $k = 10, \ x = 0$ as $k_{10}x_{0}$; $k = 10, \ x = 1$ as $k_{10}x_1$; $k = 50, \ x = 0$ as $k_{50}x_{0}$; $k = 50, \ x = 1$ as $k_{50}x_1$) approaches, we test the performance for $k = 10$ as well as $k = 50$. 

Fig.~\ref{fig:bp_wax_spec_conf} illustrates the correlation between the blocking probability and the Waxman parameter ($\beta$), with algorithm configurations, i.e., $k = 10$ (Fig.~\ref{fig:bp_wax_spec_conf}a) and $k = 50$ (Fig.~\ref{fig:bp_wax_spec_conf}b) comprising nodes of varying efficiency. In comparison to the SP, all the other policies yielded superior results, with $kx_{0}$ achieving top performance, as already observed in the case of two-class quality repeaters. 

With a lower $\beta$ value (i.e., a lower number of links), some differences emerge, as explained in the following. The parameter $k$, that is, the number of candidate shortest paths, has a minor influence on the performance, as illustrated in Fig.~\ref{fig:bp_wax_spec_conf}. For a higher value, i.e., $k = 50$, the performance of $kx_0$ and $kx_1$ is marginally better compared to a lower value, i.e., $k = 10$. In contrast, the performance of KSP slightly deteriorates when $k=50$. Although having additional shortest paths in the sample from which paths are selected does not seem beneficial for KSP. $kx_0$ and $kx_1$ benefit from the increased number of paths, even though the improvement is marginal. The reason is that KSP suffers from the same issue already discussed: the presence of additional paths leads to the selection of longer path lengths, which consumes valuable resources. Conversely, $kx_0$ efficiently utilises these additional paths without suffering from this drawback. 

However, as $\beta$ increases, leading to a richer diversity of potential paths, the differences in path selection among the methodologies diminish, and their performance begins to converge irrespective of parameter $k$. The greater path diversity, in fact, causes all methodologies to eventually choose the same paths, regardless of the specific parameter being optimised.

\subsection{Conclusions and future works}
In the above aspect of routing of this work, a grey-box approach to routing in quantum networks is introduced, lifting the conventional assumption that necessitates complete transparency and certainty regarding the characteristics of network components. This approach employs a sampling process to estimate the end-to-end fidelity of potential paths for source-destination pairs. Two new policies have been defined, which both assign paths on a first-come-first-serve basis: $k$ shortest-path (KSP), which assigns potentially longer/worse paths, provided that the minimum fidelity is met, to favour requests that may arrive later; and $kx$ path selection (evaluated with $x=0$ and $x=1$), which has a similar concern but only considers paths that are not $x$ hops longer than the minimum one. 
These policies with two alternatives have been compared via extensive simulations: shortest-path, which also does not require additional information about the network topology, and a knowledge-aware (KA) path selection that is assumed to have exact and precise knowledge about the noise figures of the quantum repeaters of the network.

A comprehensive comparison between regular and random topologies, equipped with the same resources, indicates that the performance of a routing algorithm in one topology does not necessarily translate to another. Performance is primarily influenced by the probability mass function of path lengths within each topology. For example, the performance of KA is better than that of KSP with a random topology, while the performance of KSP is better than that of KA with a regular topology. The proposed $kx_0$ path-selection approach surpasses all the other approaches evaluated in delivering end-to-end fidelity paths that exceed the fidelity threshold across various topologies and numbers of source-destination pairs employed, while maintaining a significant fairness among users. Robustness tests confirm that the $kx_0$ path-selection method is resilient to inaccuracies in the supplied end-to-end fidelity data, underscoring its practical applicability in real-world scenarios. From our simulations, it is evident that the blocking probability per edge associated with the $kx_0$ path-selection approach scales in a manner slightly less than quadratic, showcasing its potential scalability. Additionally, this approach demonstrates superior performance in scenarios involving a diverse range of node efficiency figures.

Despite the extensive analysis in this work, some research challenges remain open for possible future investigation. First, throughput, which depends on the stochastic failure/success of local link entanglement generation and entanglement swapping, could be studied together with the fidelity of end-to-end entanglements, which is the focus of this work. Second, a system-level simulation could be performed, not limited to the path selection, but also includes all the phases foreseen in a time-slot. Finally, it is possible to add priorities to the connections to provide users with a differentiated quality of service by appropriately considering them as part of the path selection algorithm.

\chapter{Calibration Aware Quantum Link Orchestration}\label{ch:calibration}

With this chapter, we begin the quest into the second scope of this work (see Section~\ref{sec:scope_objective}). As a reminder, we will be focusing on the third research question (see Section~\ref{sec:research_questions}) defined for this work in this chapter.

\section{Experimental motivation for calibration modelling}\label{sec:motivation_calibration}
As established in Section~\ref{sec:stages} and Section~\ref{sec:routing_forwarding_scheduling}, at the core of wide-area quantum networking lies \emph{entanglement distribution}. Entanglement is first generated between neighbouring nodes and then extended across the network via \emph{entanglement swapping} at intermediate nodes as seen in Chapter~\ref{ch:routing}. From an experimental perspective, recent milestones have demonstrated the feasibility of multi-hop and multinode operation across different platforms, including remote solid-state qubits linked into a multinode network \cite{pompili2021realization}, trapped-ion fiber links over hundreds of meters \cite{krutyanskiy2023entanglement}, telecom-compatible nanophotonic quantum memories \cite{knaut2024entanglement}, and city-scale automated entanglement distribution over deployed fiber \cite{craddock2024automated}. Entanglement-assisted functionality at the network level has also been demonstrated, e.g., teleportation between non-neighbouring nodes \cite{hermans2022qubit}. These advances underscore that entanglement distribution is achievable in practice, yet they also reveal a system bottleneck: \emph{calibration}.

In fiber-connected links, slow polarisation and spectral drift degrade the initial link fidelity over time, requiring periodic calibration during which links are temporarily unavailable. A representative dataset is shown in Fig.~\ref{fig:fid_vs_time}, derived from the trapped-ion fiber experiment of \cite{krutyanskiy2023entanglement}: the average transmitted fidelity falls from $\approx 1$ to $\approx 0.5$ over $\sim 10{,}000$~s ($\sim$166~min), after which a $\sim$2~min calibration restores the link. Such duty-cycling directly reduces end-to-end throughput, and crucially interacts with multi-path orchestration when links are shared among paths.  

The abstraction adopted here does not assume a specific microscopic origin of drift, but instead captures its operational effect on usable link quality under typical network operation. In this chapter, we especially account for this experimental constraint and formulate \emph{calibration-aware} entanglement distribution as a network optimisation problem, first for a linear quantum network chain and then extend it to any general quantum network.

\begin{figure}[!t]
\centering
\includegraphics[width=0.8\columnwidth]{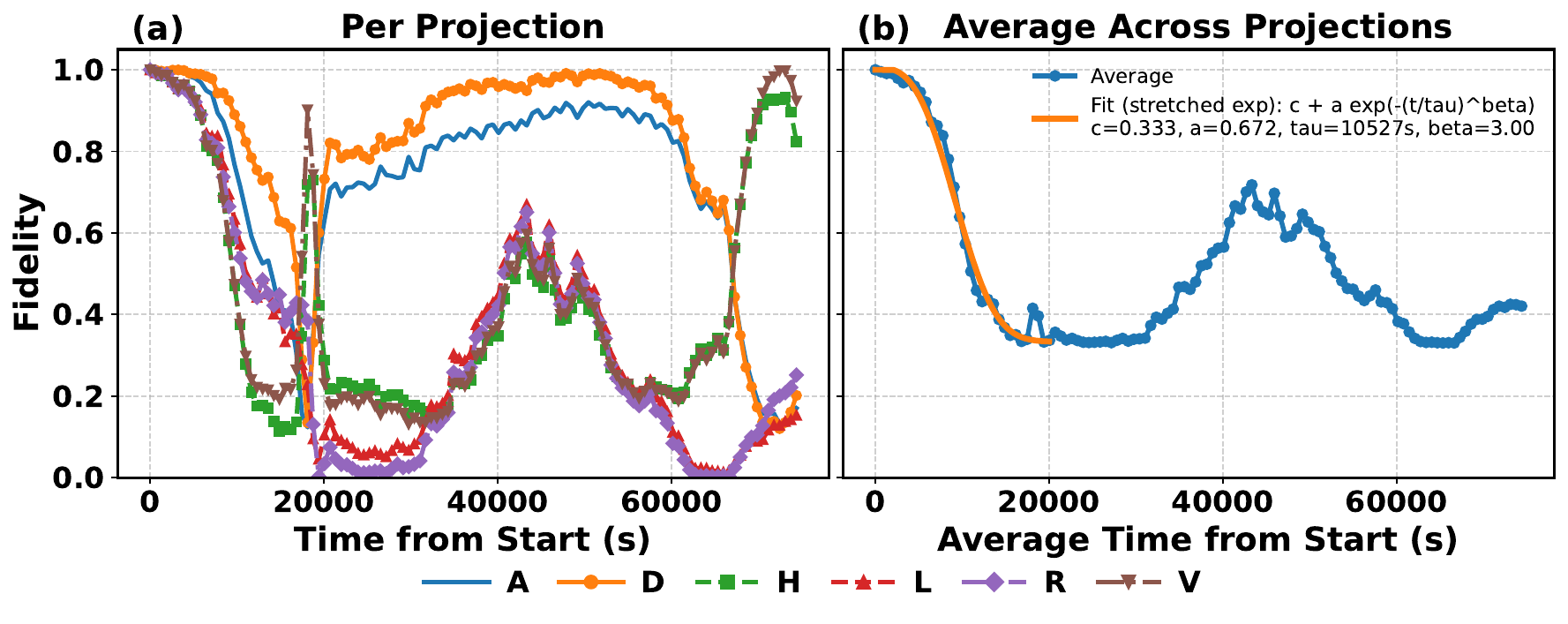}
\caption{Experimental polarisation/fidelity drift on the fiber link of \cite{krutyanskiy2023entanglement}. 
(a) Time evolution of polarisation states $\{A, D, H, L, R, V\}$. 
(b) The averaged fidelity decay over the activation period is well described by an exponential-family model (see fitted curve), motivating the exponential decay assumption adopted in the system model.}
\label{fig:fid_vs_time}
\end{figure}

\begin{table*}[!t]
\centering
\caption{Notation reference for chapter~\ref{ch:calibration}.}
\label{tab:notation}
\renewcommand{\arraystretch}{1.15}
\setlength{\tabcolsep}{6pt}
\begin{tabular}{p{0.22\textwidth} p{0.73\textwidth}}
\hline
\textbf{Symbol} & \textbf{Meaning} \\
\hline
$G=(V,E)$ & Quantum network graph with node set $V$ and link set $E$ \\
$e \in E$ & A physical quantum link (edge) \\
$\pi$ & A generic source to destination path \\
$N$ & Number of links in a linear chain (path length in links) \\
$i,j,k$ & Link indices; also used for partition sizes in QLO (see Theorem~\ref{theorem.chain}) \\
$\gamma \in \mathbb{N}$ & Optimization level (recursion depth) in QLO \\
$a_e$ & Activation phase duration of link $e$ \\
$c_e$ & Calibration phase duration of link $e$ (same for each link in this work) \\
$T$ & End-to-end throughput (entanglement delivery rate) \\
$C$ & Path dependent constant factor in throughput (success terms, attempt rate), e.g., $\mathcal{A}p^N q^{N-1}$ \\
$\mathcal{A}$ & Total attempt rate or number of attempts (model constant) \\
$T_s$ & Duration of one entanglement generation attempt \\
$x$ & Number of attempts in a fidelity window $(T_f)$, typically $x = T_f/T_s$ \\
$\tau_c$ & Calibration time (alternate notation for $c_e$ in Section~\ref{sec:system_model_calibration}) \\
$p,q$ & Success probabilities for external links and internal operations, respectively \\
$F_e$ & Initial fidelity of an entangled pair generated on link $e$ \\
$F_e^M$ & Best achievable initial fidelity on link $e$ (value as $a_e \to 0$), default numerical value $\approx 0.99$ \\
$F_e^{th}$ & Per link threshold on initial fidelity (minimum acceptable initial fidelity) \\
$\Gamma_e$ & Fidelity decay rate parameter for link $e$ (exponential drift model) \\
$F_{ete}$ & End-to-end fidelity after entanglement swapping along a path \\
$F_{ete}^{th}$ & End-to-end fidelity requirement (threshold) for a path \\
$p_1,p_2$ & Single qubit and two qubit operation success parameters in swapping model \\
$\eta$ & Bell state measurement efficiency parameter \\
$\mathcal{U}$ & Swapping operation factor, $\mathcal{U}=(p_1^2p_2)^{N-1}\left(\frac{4\eta^2-1}{3}\right)^{N-1}$ \\
$A$ & Drift model constant, $A = F_e^M - \tfrac{1}{4}$ \\
$L_e$ & Log fidelity variable, $L_e = \Gamma_e a_e = \ln\!\left(\frac{A}{F_e-\tfrac{1}{4}}\right)$ \\
$L_e^{th}$ & Per link log fidelity upper bound induced by $F_e^{th}$ \\
$\Omega$ & Equal activation optimum for a path (from end to end constraint), $\Omega_F$ (F-space) $\&$ $\Omega_L$ (L-space) \\
$\Omega_\pi$ & Optimal point for a specific path $\pi$ \\
$\mathcal{L}$ & End-to-end log fidelity budget for a chain, typically $\sum_{e\in\pi} L_e \le \mathcal{L}$ \\
$\mathcal{L}_\pi$ & Path specific budget for path $\pi$ in the network wide formulation \\
$K_e$ & Calibration term in $L$ space, typically $K_e=\Gamma_e c_e$ \\

\hline
\end{tabular}
\end{table*}

\section{System model}\label{sec:system_model_calibration}
Continuing with the trend of this work, we consider a general quantum network structure consisting of quantum repeaters, quantum links, quantum devices, and a network controller \cite{kumar2025routing, kumar2024routing}. The components of this quantum network consist of the necessary equipment, such as heralded sources of entanglement, detectors, and quantum memories, enabling execution of any network-level protocol. The quantum links are assumed to be optical fibers connecting neighbouring nodes, akin to the experimental setting in \cite{krutyanskiy2023entanglement}, and are continuously required to attempt the generation of an entangled pair, either for advanced or on-demand entanglement generation (see Section~\ref{sec:routing_forwarding_scheduling} and Fig.~\ref{fig:advanced_and_on_demand}) required for quantum routing. As discussed in Section~\ref{sec:motivation_calibration}, we model the degradation in the initial fidelity of a successfully generated entanglement pair along with a calibration period, as shown in Fig.~\ref{fig:fidelity_decay_ton}. 

\definition{In a quantum network, for a quantum link $e$, the link activation phase $(a_e)$ is the period during which it is actively generating entanglement.}
\definition{In a quantum network, for a quantum link $e$, the link calibration phase $(c_e)$ is the period dedicated to calibrating the quantum link for optimal entanglement generation.}

It is to be noted that the generation of the entangled pair through each link is inherently probabilistic. For a successful delivery of throughput, each of the links in the path connecting end users has to be successful within the fidelity period window as depicted in Fig.~\ref{fig:fidelity_decay_ton}.

\begin{figure}[!t]
\centering
\includegraphics[width=0.8\columnwidth]{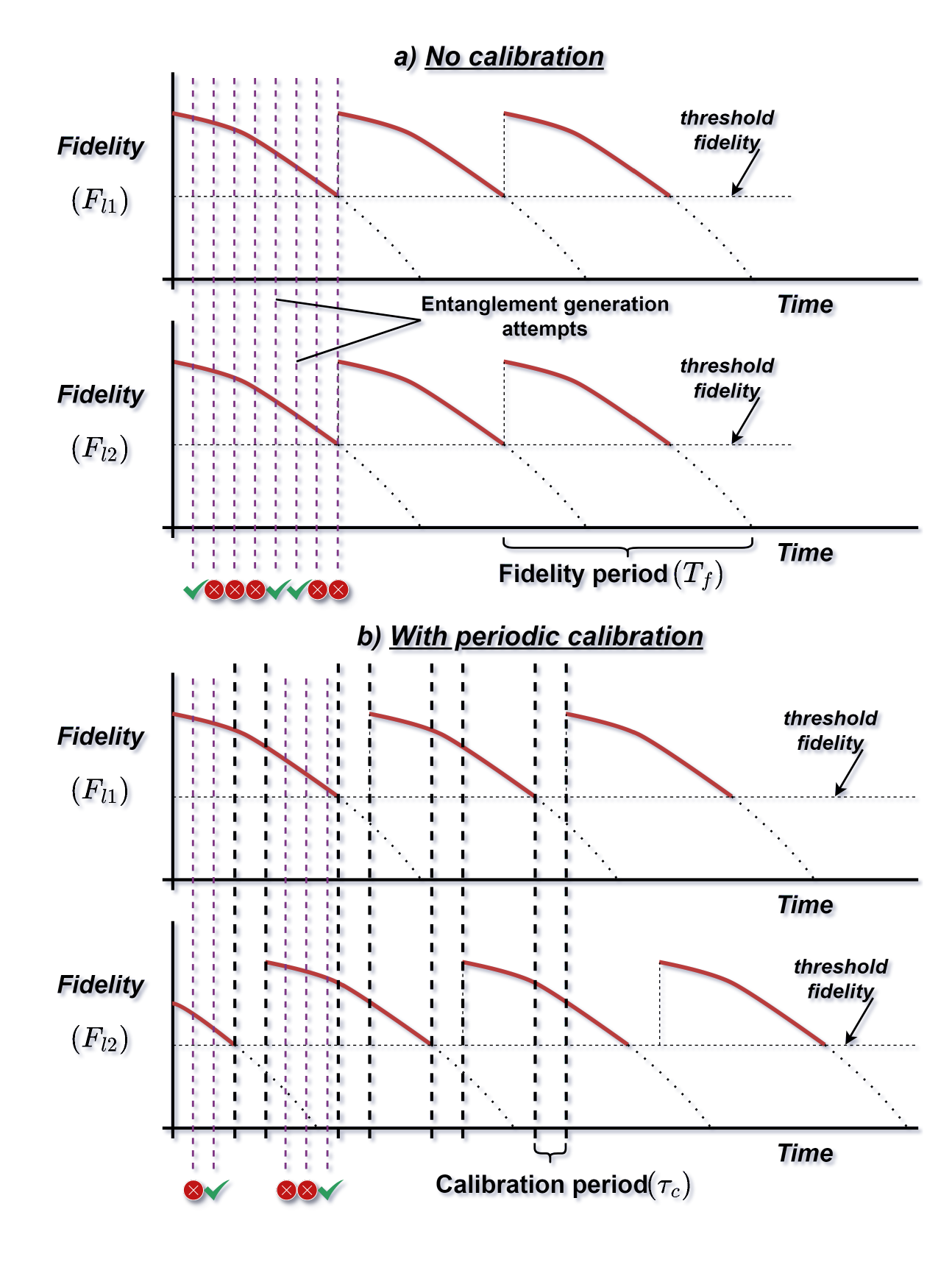}
\caption{An example of a two-link quantum network. Fidelity vs. Time for each link with \textbf{a) No calibration} and \textbf{b) With periodic calibration.} The initial fidelity of the generated entangled pair decays over a fidelity period $(T_f)$. A fixed calibration period $(\tau_c)$ is required to rejuvenate each link.}
    \label{fig:fidelity_decay_ton}
\end{figure}

\subsection{Calibration modelling}
We now define the calibration of the quantum network links and their types, and establish its model by accounting for it in the throughput between endpoints.
\definition{In a quantum network, the throughput $(T)$ is the rate at which quantum states can be successfully transmitted across the two endpoints.}

In a functioning quantum network, the calibration can be performed in two ways as follows:
\begin{enumerate}
    \item \textbf{Synchronised periodic calibration (SPC):} In SPC, all the quantum links are in sync to have overlapping activation and calibration phases.
    \item \textbf{Unsynchronised periodic calibration (USPC):} In USPC, the quantum links operate with non-overlapping activation and calibration phases.

\end{enumerate}
It is to be noted that calibration restores only the \emph{initial fidelity} of the entangled pairs and does not change the underlying attempt success probabilities $p$ and $q$. Thus, drift affects the quality of successful pairs but not the raw attempt rate.

Let a linear quantum network chain consisting of $N$ quantum links.
The probability of coinciding attempts in the no calibration case is:
\begin{equation}
    \mathcal{P}_{nc} = 1,
\end{equation}
which gives the throughput ($\mathcal{T} = \mathcal{A} p^{N}q^{N-1}.\mathcal{P}$) as \cite{pant2019routing}
\begin{equation}\label{eq:nc_throughput}
     \mathcal{T}_{nc} = \mathcal{A} p^N q^{N-1},
\end{equation}
where $\mathcal{A}$ = arbitrary number of total attempts; $p$ = success probability of external link; $q$ = success probability of internal link; $N$ = number of external links.

Similarly, for SPC, the throughput is
\begin{equation}\label{eq:spc_throughput}
    \mathcal{T}_{spc} = \mathcal{A} \ p^N q^{N-1} \left( \frac{x \ T_s}{x \ T_s + \tau_c} \right),
\end{equation}
where $x$ = number of attempts possible in fidelity period ($T_f$); $T_s$ = time taken by an attempt; $\tau_c$ = calibration period.

For USPC, 
    considering independent calibration of links, the probability of coinciding attempts in the USPC is simply
\begin{equation}
    \mathcal{P}_{uspc} = \left( \frac{x}{ x + \frac{\tau_c}{T_s}} \right)^{N},
\end{equation}
with a corresponding throughput of 
\begin{equation}\label{eq:uspc_throughput}
    \mathcal{T}_{uspc} = \mathcal{A} \ p^N q^{N-1}  \left( \frac{ x \ T_s}{x \ T_s + \tau_c} \right)^{N}.
\end{equation}

From Eq.~\eqref{eq:nc_throughput}, \eqref{eq:spc_throughput} \& \eqref{eq:uspc_throughput} we have the ratio of throughput in a linear quantum chain for the case of operation when there is no calibration, SPC, and USPC:
\begin{equation}
\mathcal{T}_{nc} : \mathcal{T}_{spc} : \mathcal{T}_{uspc} =
\begin{aligned}[t]  1 :
\left( \frac{x\,T_s}{x\,T_s + \tau_c}
\right) : \left(
    \frac{x\,T_s}{x\,T_s + \tau_c}
  \right)^{N}.
\end{aligned}
\end{equation}
From the above equation, it is clear that throughput decreases in synchronised periodic calibration with respect to the no calibration case by accounting for the calibration. In addition, the throughput then exponentially decreases with the number of links in the path involved in the case of unsynchronised periodic calibration with respect to synchronised periodic calibration.

\subsection{Throughput and fidelity modelling}
We now define the assumptions and model of throughput, initial fidelity of the EPR pair generated, and end-to-end fidelity.
\begin{definition}
    In a quantum network, end-to-end fidelity $(F_{ete})$ is defined as the fidelity of the long-distance entangled state produced between two endpoints through a series of entanglement swapping operations along the network path.
\end{definition}

\begin{assumption}\label{assumption.activationcalibrationratio}
    For a quantum network link $e$, the effective throughput $T$ is proportional to the ratio of activation to the total cycle time as in Eq.~\eqref{eq:spc_throughput} $$T \propto \left( \frac{a_e}{a_e + c_e} \right).$$
    where $a_e = x \ T_s$ is the activation phase of the link, and $c_e = \tau_c$ is the calibration phase of the link.
\end{assumption}
In this work, we will only consider the case of USPC with independent calibration. We leave behind the simpler case of SPC.

\begin{lemma}\label{lemma.throughputN}
    In a linear quantum network with $N$ links, the effective throughput across the network where each link $e$ is operating independently is given by (Eq.~\eqref{eq:uspc_throughput}) $$T = C \left( \frac{a_e}{a_e + c_e} \right)^N.$$
    where $C= \mathcal{A} \  p^Nq^{N-1}$ is some constant encompassing the successful establishment\footnote{The establishment of nodes and links here refers to the probabilistic success of entanglement swapping within nodes and the success of initial entanglement generation over links. While nodes and links are always physically present, it is this probabilistic establishment that makes them usable.} of nodes (with probability $q$) and links (with probability $p$) of the path.
\end{lemma}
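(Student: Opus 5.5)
The argument will combine Assumption~\ref{assumption.activationcalibrationratio} with the independence of the links under USPC. I start from the baseline throughput of a linear chain without calibration, Eq.~\eqref{eq:nc_throughput}, namely $\mathcal{T}_{nc}=\mathcal{A}p^Nq^{N-1}$. I will write it in the general form $\mathcal{T}=\mathcal{A}p^Nq^{N-1}\cdot\mathcal{P}$, where $\mathcal{P}$ is the probability that every link is in its activation phase at the instant an attempt is made. The success terms $p^N$ (external links) and $q^{N-1}$ (internal swaps) are unaffected by calibration, as stated just before the lemma. They can therefore be collected into the constant $C=\mathcal{A}p^Nq^{N-1}$, and the whole task reduces to computing $\mathcal{P}$.

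First I treat a single link $e$. Its operation is periodic, with cycle length $a_e+c_e$, of which only the activation phase $a_e=xT_s$ can host attempts. For an attempt instant uniformly distributed over the cycle, link $e$ is therefore available with probability $a_e/(a_e+c_e)$. This is exactly the proportionality in Assumption~\ref{assumption.activationcalibrationratio}, and it equals the factor $x/(x+\tau_c/T_s)$ after dividing numerator and denominator by $T_s$.

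Next I use independence. Under USPC the links calibrate independently, with unsynchronised phases; I model this as independent uniform phase offsets. The joint event that all $N$ links are simultaneously active then factorises into the product of the single-link probabilities. Since all links share the same $a_e$ and $c_e$ (the chapter takes $c_e=\tau_c$ for every link, and the lemma is written with a common $a_e$), the product is $\prod_{e}a_e/(a_e+c_e)=\left(a_e/(a_e+c_e)\right)^N$. This matches $\mathcal{P}_{uspc}$, and multiplying by $C$ reproduces Eq.~\eqref{eq:uspc_throughput}, which gives the claim.

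The main obstacle is justifying the factorisation rigorously. Strictly periodic schedules with fixed offsets are deterministic, so independence holds only as a long-time average over incommensurate or random phases. I will state this explicitly as the operational meaning of ``operating independently''. I will also remark that with heterogeneous parameters the same argument yields the general form $C\prod_e a_e/(a_e+c_e)$.
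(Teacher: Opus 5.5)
Your proposal is correct and follows essentially the same route as the paper, whose proof simply takes the single-link factor $a_e/(a_e+c_e)$ from Assumption~\ref{assumption.activationcalibrationratio} and multiplies it over the $N$ independently operating links; your uniform-attempt-time and independent-uniform-offset model makes explicit the probabilistic content that the paper leaves implicit. One small correction to your closing caveat: because all links share the period $a_e+c_e$, a fixed set of offsets does not give the product even as a long-time average (equal offsets, for instance, recover the single SPC factor), so read the factorisation as an average over independent random offsets, exactly as in your main argument, rather than as a time average over incommensurate phases.
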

\begin{proof}
    From assumption~\ref{assumption.activationcalibrationratio}, the throughput of a single link is proportional to the time it spends in the activation phase relative to the total cycle time. Extending this to $N$ links which are operating independently, the throughput is given by
\begin{equation}
    T = C \left( \frac{a_e}{a_e + c_e} \right)^N
\end{equation}
\end{proof}

\begin{remark}
    While we use $c_e$ for the calibration time of the link $e$ throughout this work, the calibration time for each link in the network is the same and equal to $c_e$. This is in line with the experimental execution of calibration, which requires precisely a fixed amount of time for the recalibration of the link.
\end{remark}

\begin{assumption}\label{assumption.initialfidelityexponential}
    The initial fidelity $F_e$ of the EPR pair generated via link $e$ decreases approximately exponentially in time $a_e (t)$ \cite{inesta2025entanglement}
    $$F_e =  \left( F_e^M - \frac{1}{4} \right) e^{- \Gamma_e a_e} + \frac{1}{4}$$ where $\Gamma$ is the exponential decay parameter and $F^M_e$ is the initial fidelity when $a_e \to 0. $
\end{assumption}

\begin{assumption}\label{assumption.endtoendfidelity}
    The end-to-end fidelity $F_{ete}$ decreases exponentially with the number of links $N$ \cite{dur1999quantum}
\begin{align}
F_{ete}(a_e) 
&= \frac{1}{4} \Bigg[ 1 
+ 3 \left( p_1^2 p_2 \right)^{N-1} 
\left( \frac{4 \eta^2 - 1}{3} \right)^{N - 1} \nonumber\\
&\quad \times \prod_{i=0}^{N-1} \left( \frac{4 F_i(a_e) - 1}{3} \right) \Bigg]
\label{eq:end_to_end_fidelity}
\end{align}

where $N$ = number of links; $F_i (a)$ = initial fidelity function of the link; $p_1$ = efficiency of single-qubit operation (Hadamard gate); $p_2$ = efficiency of two-qubit operation (CNOT gate); $\eta$ = efficiency of bell measurement.
\end{assumption}

\subsection{Constraints: Initial fidelity and end-to-end fidelity}
We now define the constraints on the initial fidelity and end-to-end fidelity.
\begin{lemma}\label{lemma.initialfidelityvariation}
    The activation phase $a_e$ is bounded by the initial fidelity threshold $F^{th}_e$ set for the EPR pair generated in link $e$. 
    $$a_e \leq \frac{1}{\Gamma_e} \ln{\left( \frac{F_e^M - \frac{1}{4}}{F_e^{th} - \frac{1}{4}} \right) }.$$
\end{lemma}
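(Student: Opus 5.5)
The plan is to derive the bound directly from the exponential drift model of Assumption~\ref{assumption.initialfidelityexponential}, together with the requirement that every pair generated on link $e$ during its activation phase meets the initial fidelity threshold, i.e. $F_e \geq F_e^{th}$. The key observation is that $F_e$, viewed as a function of the elapsed activation time, is monotonically decreasing: with $\Gamma_e>0$ and $F_e^M > \frac{1}{4}$, the map $a_e \mapsto \left(F_e^M-\frac{1}{4}\right)e^{-\Gamma_e a_e}+\frac{1}{4}$ is strictly decreasing. The worst (lowest) fidelity in an activation phase is therefore attained at its end, where the elapsed time equals $a_e$. Requiring the threshold to hold for all pairs in the phase thus reduces to the single inequality $\left(F_e^M-\frac{1}{4}\right)e^{-\Gamma_e a_e}+\frac{1}{4} \geq F_e^{th}$.

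Next I isolate the exponential. Subtracting $\frac{1}{4}$ and dividing by $F_e^M-\frac{1}{4}>0$ gives
\[
e^{-\Gamma_e a_e} \geq \frac{F_e^{th}-\frac{1}{4}}{F_e^M-\frac{1}{4}}.
\]
Taking the natural logarithm, which is monotone increasing, and then dividing by $-\Gamma_e<0$, which flips the inequality, yields the stated bound $a_e \leq \frac{1}{\Gamma_e}\ln\left(\frac{F_e^M-\frac{1}{4}}{F_e^{th}-\frac{1}{4}}\right)$.

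The only delicate point is making explicit the side conditions under which these manipulations are legitimate; I do not expect any real obstacle beyond that. I will state them at the outset:
\begin{itemize}
\item $\Gamma_e>0$;
\item $\frac{1}{4} < F_e^{th} \leq F_e^M$, so that the logarithm's argument is positive and at least one, and hence the bound is a nonnegative activation time.
\end{itemize}
If $F_e^{th} \leq \frac{1}{4}$, the constraint is vacuous, since the fidelity never drops below $\frac{1}{4}$ under the model. If $F_e^{th} > F_e^M$, no activation time is feasible. I will briefly note both cases for completeness.

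Finally, I will remark that the bound can equivalently be written in the log-fidelity variable as $L_e=\Gamma_e a_e \leq L_e^{th}$, which matches the notation used later.
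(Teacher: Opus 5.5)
Your proposal is correct and follows the same route as the paper: impose $F_e \geq F_e^{th}$ on the exponential drift model of Assumption~\ref{assumption.initialfidelityexponential} and solve for $a_e$. Your monotonicity argument (the lowest fidelity occurs at the end of the activation phase) and the explicit side conditions $\Gamma_e>0$ and $\frac{1}{4}<F_e^{th}\leq F_e^M$ make rigorous what the paper's one-line proof leaves implicit.
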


\begin{proof}
Follows from assumption~\ref{assumption.initialfidelityexponential} by setting  $F_e^m \geq F_e^{th}$.
\end{proof}
Lemma~\ref{lemma.initialfidelityvariation} gives the upper bound on the activation phase of the link $e$. At the upper bound, we define
\begin{equation}
    a_e = a_e^{th} = \frac{1}{\Gamma_e} \ln{\left( \frac{F_e^M - \frac{1}{4}}{F_e^{th} - \frac{1}{4}} \right) }
\end{equation}

The constraint on the fidelity of initially generated EPR pairs can be imposed to meet the expectations of purification protocols, which require certain fidelity of the initially generated EPR pairs to perform purification on the links.

\begin{lemma}\label{lemma.endtoendfidelityactivation}
    The activation phase $a_e$ is bounded by the end-to-end fidelity threshold $F_{ete}^{th}$ set for the EPR pair delivery along a path $$a_e \leq \frac{1}{\Gamma_e N} \ln{\left( 
 \frac{\left( \frac{4}{3} \right)^{N-1} \left( 
 F_e^M - \frac{1}{4}\right)^N \mathcal{U}}{F^{th}_{ete} - \frac{1}{4}}\right)}$$
 where $\mathcal{U} = \left( p_1^2 p_2 \right)^{N-1} \left( \frac{4 \ \eta^2 - 1}{3} \right)^{N - 1}$ defines the noisy operations during entanglement swapping.
\end{lemma}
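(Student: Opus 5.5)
The plan is to substitute the drift model of Assumption~\ref{assumption.initialfidelityexponential} into the end-to-end fidelity expression of Assumption~\ref{assumption.endtoendfidelity}, impose $F_{ete}\geq F^{th}_{ete}$, and solve the resulting inequality for $a_e$. This mirrors the proof of Lemma~\ref{lemma.initialfidelityvariation}, with the chain expression replacing the single-link one. Throughout I will work in the symmetric regime implicit in the statement. There every link on the path shares the same activation time $a_e$, decay rate $\Gamma_e$ and best fidelity $F_e^M$, so that $F_i(a_e)=F_e(a_e)$ for all $i$.

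\textbf{Step 1: per-link factor.} From Assumption~\ref{assumption.initialfidelityexponential}, writing $A=F_e^M-\tfrac14$, each link gives
\begin{equation*}
\frac{4F_i(a_e)-1}{3}=\frac{4}{3}\Bigl(F_e(a_e)-\tfrac14\Bigr)=\frac{4}{3}\,A\,e^{-\Gamma_e a_e}.
\end{equation*}
Hence the product over the $N$ links collapses to $\left(\tfrac43\right)^N A^N e^{-N\Gamma_e a_e}$.

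\textbf{Step 2: end-to-end expression.} Inserting this into Eq.~\eqref{eq:end_to_end_fidelity} and using the definition of $\mathcal{U}$ gives
\begin{equation*}
F_{ete}(a_e)=\frac14+\frac34\,\mathcal{U}\left(\frac43\right)^N A^N e^{-N\Gamma_e a_e}=\frac14+\left(\frac43\right)^{N-1}A^N\,\mathcal{U}\,e^{-N\Gamma_e a_e}.
\end{equation*}
The prefactor $\tfrac34$ has absorbed one power of $\tfrac43$, which produces the exponent $N-1$ appearing in the statement.

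\textbf{Step 3: invert the constraint.} Requiring $F_{ete}(a_e)\geq F^{th}_{ete}$ and rearranging gives
\begin{equation*}
e^{-N\Gamma_e a_e}\geq \frac{F^{th}_{ete}-\tfrac14}{\left(\tfrac43\right)^{N-1}A^N\,\mathcal{U}}.
\end{equation*}
Since $\ln$ is monotone increasing, taking logarithms and dividing by $-N\Gamma_e<0$ (which flips the inequality) yields exactly the claimed upper bound on $a_e$.

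The computation is routine. The main obstacle is the bookkeeping of hypotheses needed for the logarithm step to be legitimate.
\begin{itemize}
\item The quantities $F^{th}_{ete}-\tfrac14$, $A$ and $\mathcal{U}$ must all be strictly positive. This needs $F^{th}_{ete}>\tfrac14$, $F_e^M>\tfrac14$, and $\eta>\tfrac12$ so that $4\eta^2-1>0$.
\item One must also have $\Gamma_e>0$.
\item Homogeneity across links is essential. I would state explicitly that it is assumed, as in the equal-activation setting used for $\Omega$. Without it, the argument only bounds the sum $\sum_{e\in\pi}\Gamma_e a_e$, which is the $L$-space budget $\mathcal{L}$, rather than each individual $a_e$.
\end{itemize}
I would add a remark noting that the bound is vacuous (negative) if $F^{th}_{ete}$ exceeds the value reached at $a_e\to0$.
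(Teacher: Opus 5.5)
Your proposal is correct and is essentially the paper's own proof: the paper says the bound follows from Assumptions~\ref{assumption.initialfidelityexponential} and~\ref{assumption.endtoendfidelity} by imposing $F_{ete}\geq F^{th}_{ete}$ with $F_i=F_e$, and your Steps~1--3 carry out exactly that substitution and inversion, adding explicit homogeneity and positivity hypotheses that the paper leaves implicit. One small imprecision is in your heterogeneous remark: the paper defines $\mathcal{L}$ with $\mathcal{U}=1$, so the resulting bound on $\sum_{e\in\pi}\Gamma_e a_e$ is $\mathcal{L}+\ln\mathcal{U}$ rather than $\mathcal{L}$ itself.
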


\begin{proof}
Follows from assumption~\ref{assumption.initialfidelityexponential} and \ref{assumption.endtoendfidelity} by setting $F_{ete} \geq F^{th}_{ete}$ and assuming $F_i=F_e.$
\end{proof}
Note that here, the end-to-end fidelity expression assumes identical noisy-operation parameters $(p_1,p_2,\eta)$ and identical functional form of $F_e$ across all links. Heterogeneity considerations can be addressed by simply extending the notation. The end-to-end fidelity of EPR delivery is a general metric used in the evaluation of the quality of delivery. A constraint on it refers to a certain threshold end-to-end fidelity requested by users to run a quantum application on their end or certain threshold required by purification protocols on end-to-end similar to initial fidelity constraint.

\subsubsection{Fidelity and Log-fidelity variable spaces}
In the upcoming analysis, apart from fidelity, as will be seen, working in the Log-fidelity variable space is easier and suitable for analysis. Below is the relation between these two working spaces.

\begin{remark}
From assumption~\ref{assumption.endtoendfidelity}, we express initial fidelity in terms of the end-to-end fidelity threshold. Assume perfect operations, that is, $\mathcal{U}=1$ and same initial fidelity $F_i=F_e$ for each link we have
\begin{equation}
    F_e = \left( \frac{4}{3} \right)^{\left( \frac{1-N}{N} \right)} \left( F_{ete}^{th} - \frac{1}{4} \right)^{ \left( 1/N \right)}
\end{equation}
Let $L_e = \ln{\left( \frac{A}{F_e - 1/4} \right)}$ where $A= F_e^M - \frac{1}{4}$ then
\begin{equation}
    \boxed{F_e = \frac{1}{4} + \frac{A}{e^{L_e}}}
\end{equation}
The above equation is the relation between the fidelity variable space (F-space) $F_e$ and logarithm-fidelity variable space (L-space) $L_e$.
\end{remark}

\section{Optimal calibration in linear quantum chains}\label{sec:calibration_prelim}
In this section, we discuss the preliminaries of the problem under different levels of constraints in linear quantum chains under initial fidelity constraint in Section~\ref{ssec:initial_fidelity_const} and under end-to-end fidelity constraint in Section~\ref{ssec:ete_fidelity_cont}, which would enable the discussion of the link orchestration under constraints problem in Section~\ref{ssec:qlotheorem}.
\subsection{Linear quantum chains under initial fidelity constraint}\label{ssec:initial_fidelity_const}
\begin{theorem}[Optimal Throughput via Fidelity Threshold Selection]\label{theorem.optimalthroughput}
    Consider a quantum network chain comprising \( N \) links (that is, \( N+1 \) nodes), where each link \( e \) operates independently and sequentially undergoes an activation phase \( a_e \) and a calibration phase \( c_e \). Let \( F_i^{th} \) denote the threshold fidelity for initial entanglement generation on the \( i^{th} \) link. Then, the throughput \( T \) of the quantum network is maximised when each link is operated at its respective initial fidelity threshold
    $$F_e = F_e^{th} \ \text{or} \ L_e=L_e^{th}$$ From lemma~\ref{lemma.initialfidelityvariation}, in other words, the throughput $T$ is maximized when the activation period $a_e$ of links are operated at its respective thresholds. $$a_e = a_e^{th}.$$
\end{theorem}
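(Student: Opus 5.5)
The plan is to reduce the statement to a monotonicity argument on a product of one-variable functions, and then use Lemma~\ref{lemma.initialfidelityvariation} to identify the boundary of the feasible set. By Lemma~\ref{lemma.throughputN}, written per link to allow different activation times, the throughput of the chain is
\begin{equation*}
T = C \prod_{e=1}^{N} \frac{a_e}{a_e + c_e},
\end{equation*}
where $C=\mathcal{A}p^Nq^{N-1}$. Since calibration restores only the initial fidelity and leaves $p$ and $q$ unchanged, $C$ does not depend on the activation times $a_e$. The optimisation problem is therefore to maximise $T$ over $(a_1,\dots,a_N)$ subject only to the per-link constraints $0 < a_e \le a_e^{th}$ given by Lemma~\ref{lemma.initialfidelityvariation}. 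In this section no end-to-end constraint couples the links, so the feasible set is a box.

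First, I will note that each factor $g(a)=a/(a+c_e)$ with $c_e>0$ is strictly increasing on $(0,\infty)$, because $g'(a)=c_e/(a+c_e)^2>0$. All factors are positive, so $\ln T = \ln C + \sum_e \ln g(a_e)$ is separable and strictly increasing in each coordinate. Next, I will argue that on the box $\prod_e (0,a_e^{th}]$ the maximum is attained at the corner $a_e=a_e^{th}$ for all $e$. Indeed, if any $a_e<a_e^{th}$, raising that coordinate to $a_e^{th}$ strictly increases $T$ and keeps the point feasible, since the constraints are decoupled. Uniqueness follows from the same strict monotonicity.

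Finally, I will translate the optimum into the fidelity and log-fidelity variables. By Assumption~\ref{assumption.initialfidelityexponential}, $F_e$ is a strictly decreasing bijection of $a_e$, so $a_e=a_e^{th}$ is equivalent to $F_e=F_e^{th}$. Since $L_e=\Gamma_e a_e=\ln(A/(F_e-\tfrac14))$, it is also equivalent to $L_e=L_e^{th}$. This matches all three forms of the claim.

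The main obstacle is not the calculus but ensuring the modelling assumptions make the problem genuinely decoupled. I need $c_e$ to be fixed (Remark on constant calibration time), and I need $C$ to be independent of $a_e$. I also need $a_e^{th}>0$, which requires $F_e^M>F_e^{th}>\tfrac14$. I will state these conditions explicitly, because if $F_e^{th}\ge F_e^M$ the feasible set is empty and the theorem is vacuous.
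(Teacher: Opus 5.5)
Your proposal is correct and follows the paper's argument: by Lemma~\ref{lemma.throughputN} the throughput is strictly increasing in each activation time $a_e$, and since the only constraint is the decoupled per-link upper bound $a_e \le a_e^{th}$, the optimum sits at $a_e = a_e^{th}$ (equivalently $F_e = F_e^{th}$, $L_e = L_e^{th}$). Your explicit derivative, per-link product form, and nondegeneracy conditions $F_e^M > F_e^{th} > \tfrac14$ make precise what the paper's short proof leaves implicit.
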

\begin{proof}
    From Lemma~\ref{lemma.throughputN}, the throughput is strictly increasing in each activation duration $a_e$ (equivalently in each log-fidelity variable $L_e$) over the feasible region. Because the only restriction on $a_e$ is the upper bound imposed by the fidelity constraint, the maximum throughput is necessarily achieved at the largest feasible activation duration, namely $a_e^{th}$. No interior point can be optimal under a strictly increasing objective.
\end{proof}

\subsection{Linear quantum chains under end-to-end fidelity constraint}\label{ssec:ete_fidelity_cont}
\begin{theorem}[Uniform Activation-Fidelity Theorem]\label{theorem.equalactivation}
Consider a quantum network chain comprising $N$ links, where each link $e$ operates independently and sequentially undergoes an activation phase $a_e$ and a calibration phase $c_e$. Let $F_{ete}^{th}$ denote the end-to-end fidelity threshold for the network chain. Then, the throughput $T$ of the quantum network is maximised when all links are activated equally, and the fidelity of EPR generation in each link equals the optimal point $\Omega$ defined in Eq~\eqref{eq:omega_f} and \eqref{eq:omega_l}.
\end{theorem}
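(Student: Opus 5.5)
We will work in L-space, where $L_e=\Gamma_e a_e$. Throughout, we assume (as the statement's ``activated equally'' implicitly does) a common decay rate $\Gamma_e=\Gamma$ and common calibration time $c_e=c$.

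\textbf{Step 1: rewrite the objective.} Since links operate independently under USPC, Lemma~\ref{lemma.throughputN} applied link by link gives
\[
T=C\prod_{e=1}^{N}\frac{a_e}{a_e+c_e},
\]
of which the stated form is the special case of identical links. Taking logarithms, maximising $T$ is equivalent to maximising
\[
\Phi(L_1,\dots,L_N)=\sum_{e}\phi(L_e),\qquad \phi(L)=\ln\frac{L}{L+K},
\]
where $K=\Gamma c$.

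\textbf{Step 2: rewrite the constraint.} By Assumption~\ref{assumption.endtoendfidelity} together with the boxed relation $F_e=\tfrac14+A e^{-L_e}$, each factor in the end-to-end product satisfies $\frac{4F_e-1}{3}=\frac{4A}{3}e^{-L_e}$. The condition $F_{ete}\ge F_{ete}^{th}$ therefore becomes a linear budget
\[
\sum_e L_e\le\mathcal{L},\qquad
\mathcal{L}=\ln\frac{\left(\tfrac43\right)^{N-1}A^N\,\mathcal{U}}{F_{ete}^{th}-\tfrac14},
\]
which is exactly the sum form of Lemma~\ref{lemma.endtoendfidelityactivation}.

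\textbf{Step 3: show the budget binds.} The derivative $\phi'(L)=\frac{K}{L(L+K)}>0$, so $\Phi$ is strictly increasing in each coordinate. This is the same monotonicity argument used in Theorem~\ref{theorem.optimalthroughput}. Any optimum must therefore lie on the hyperplane $\sum_e L_e=\mathcal{L}$.

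\textbf{Step 4: show equal allocation is optimal.} This is the key step. The second derivative is
\[
\phi''(L)=-K\,\frac{2L+K}{L^2(L+K)^2}<0,
\]
so $\phi$ is strictly concave on $L>0$. Two arguments then apply, and either suffices:
\begin{itemize}
  \item \emph{Jensen's inequality:} $\sum_e\phi(L_e)\le N\phi\!\left(\tfrac1N\sum_e L_e\right)$, with equality iff all $L_e$ are equal.
  \item \emph{Lagrange multipliers:} $\phi'(L_e)=\lambda$ for all $e$. Since $\phi'$ is strictly decreasing, it is injective, which forces all $L_e$ to be equal.
\end{itemize}
Combined with Step 3, the unique maximiser is
\[
L_e=\Omega_L=\frac{\mathcal{L}}{N}.
\]
Translating back gives $a_e=\Omega_L/\Gamma$, which is equal activation, and $F_e=\Omega_F=\tfrac14+A e^{-\mathcal{L}/N}$, which coincides with the bound in Lemma~\ref{lemma.endtoendfidelityactivation}. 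These are the expressions we would record as Eq.~\eqref{eq:omega_f} and Eq.~\eqref{eq:omega_l}.

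\textbf{Main obstacle: heterogeneous links.} The argument above uses a common $\Gamma$ and $K$. If $\Gamma_e$ or $K_e=\Gamma_e c_e$ differ across links, the function $\phi_e$ varies with $e$, and the Lagrange condition $\phi_e'(L_e)=\lambda$ no longer forces equal values. In that case we would either restrict the theorem to identical links, as the statement's ``activated equally'' suggests, or state only that the optimum equalises marginal returns.

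\textbf{Edge case: infeasibility.} We must also check that $\mathcal{L}>0$, i.e.\ that the threshold is achievable at all. If it is not, the feasible set is empty and the theorem is vacuous.
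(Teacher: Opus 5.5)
Your proof is correct, and it establishes more than the paper's own proof does.

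\textbf{Comparison with the paper's route.} The paper's argument (Appendix, proof of Theorem~\ref{theorem.equalactivation}) shares your Step~2: setting $\mathcal{U}=1$, it converts $F_{ete}\ge F_{ete}^{th}$ into the linear budget $\sum_i L_i\le\mathcal{L}$. After that it simply writes ``considering equal contribution from each link,'' imposes $NL_e\le\mathcal{L}$, and takes equality. The optimality of the symmetric allocation is therefore assumed, not derived. The later appendix arguments (the hyperbola family and the distance to $\Omega_L$) likewise only compare points with $L_0=L_1$.

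Your Steps~3--4 supply the missing content. Monotonicity of $\phi(L)=\ln\frac{L}{L+K}$ forces the budget to bind. Strict concavity then gives, via Jensen, that $L_e=\mathcal{L}/N$ is the \emph{unique} maximiser.

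The paper's route buys brevity and the closed form in one line. Yours buys an actual optimality proof, uniqueness, and an explicit list of the hidden hypotheses (common $\Gamma_e$, $c_e$, $F_e^M$) under which ``equal activation'' and ``equal $L_e$'' coincide. Your heterogeneity caveat is right: with distinct $K_e$ the optimum equalises $\phi_e'(L_e)$, not $L_e$.

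The same concavity argument also underwrites Theorem~\ref{theorem.chain}. With identical concave $\phi$ and caps $L_i\le L_i^{th}$, the KKT conditions give $L_i=\min(L_i^{th},\mu)$, with $\mu$ fixed by the binding budget. This is exactly the water level that the recursive $\Omega_\pi^{\gamma}$ update computes.

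\textbf{Two bookkeeping points.}
\begin{enumerate}
\item To land exactly on Eq.~\eqref{eq:omega_l} you must set $\mathcal{U}=1$, as the paper does. Otherwise your $\Omega_L=\mathcal{L}/N$ carries an extra term $\frac{1}{N}\ln\mathcal{U}$.
\item With $\mathcal{U}=1$, your $\Omega_F=\frac14+Ae^{-\mathcal{L}/N}=\frac14+\left(\frac43\right)^{\frac{1-N}{N}}\left(F_{ete}^{th}-\frac14\right)^{1/N}$ is not literally the displayed Eq.~\eqref{eq:omega_f}, which omits the additive $\frac14$. Yours is the correct expression: it reproduces the paper's own value $\Omega_F=0.8923$ for $N=2$, $F_{ete}^{th}=0.8$. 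Note the discrepancy rather than claim the two coincide.
\end{enumerate}
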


\vspace{-0.5em}
\begin{align}
\Omega_F(F_{ete}^{th}, N) = F_e 
&= \left( \frac{4}{3} \right)^{\left( \frac{1 - N}{N} \right)} 
\left( F_{ete}^{th} - \frac{1}{4} \right)^{1/N} 
&& \text{(F-space)}
\label{eq:omega_f}
\end{align}

\vspace{-0.5em}
\begin{align}
\Omega_L(F_{ete}^{th}, N) = L_e 
&= \left( \frac{N - 1}{N} \right) \ln \left( \frac{4}{3} \right) 
+ \ln(A) \nonumber\\
&\quad - \frac{1}{N} \ln \left( F_{ete}^{th} - \frac{1}{4} \right)
\quad \text{(L-space)}
\label{eq:omega_l}
\end{align}

\noindent where $A = F_e^M - \frac{1}{4}$.

\begin{proof}
    Refer to Appendix~\ref{sec:proof_th_6.2}.
\end{proof}
Throughout this work, we will mainly be working in L-space; we will omit the subscript $L$ and simply denote the optimal point as $\Omega$. To denote the optimal point for a particular path $\pi$, we would use it as the subscript instead, as $\Omega_{\pi}$.

\subsection{Quantum Link Orchestration (QLO) Theorem}\label{ssec:qlotheorem}
We now formally define the throughput optimisation problem accounting for the periodic calibration in a linear quantum chain as Problem~\ref{prob:problem1}. We then treat the problem analytically and provide its solution as Theorem~\ref{theorem.chain}.
\begin{problem}\label{prob:problem1}[Throughput optimisation]
For a given quantum network chain $G(E, V)$ (also known as path $\pi$) characterised by end-to-end fidelity threshold $F_{ete}^{th}$ (or optimal point $\Omega_{\pi}$ in L-space) where each link alternates between activation phase $a_e$ and calibration phase $c_e$ with each link $E$ having the initial fidelity thresholds $F_e^{th}$ (or log-fidelity threshold variable $L_e^{th} = \Gamma_e a_e^{th} = \ln{\left( \frac{F_e^M - \frac{1}{4}}{F_e^{th} - \frac{1}{4}} \right) }$ where $\Gamma_e$ is the decay rate parameter and $F_e^M$ is the initial fidelity when $a_e \to 0$ in L-space), the throughput optimization problem is to find the activation phases $a_e$ (or log-fidelity variable $L_e$ in L-space) for each link satisfying the individual initial fidelity threshold and the end-to-end fidelity threshold such that the throughput of end-to-end entanglement across the path $\pi$ is maximized.
\end{problem}

\begin{theorem}[Quantum Link Orchestration Theorem]\label{theorem.chain}
    Consider a quantum network chain (also known as path $\pi$) having optimal point $\Omega_{\pi}$ comprising $N$ links, where each link $e$ operates independently and sequentially undergoes an activation phase $a_e$ and a calibration phase $c_e$. Let $L_i^{th}$ denote the threshold fidelity for initial entanglement generation on the $i^{th}$ link and $F_{ete}^{th}$ denote the end-to-end fidelity threshold for the network chain. Then, the throughput $T$ of the quantum network is maximised by allocating links $L_i$ as follows
    \begin{enumerate}
        \item For $ L_i^{th} \leq \Omega_{\pi}: L_i = L_i^{th}$
        \item For $\Omega_{\pi} < L_i^{th}: L_i = \Omega_{\pi}$
        \item For $L_j^{th} \leq \Omega_{\pi} \ \& \ \Omega_{\pi} < L_k^{th}$
        \begin{enumerate}
            \item if $\sum L_j^{th} + \sum L_k^{th} \leq \mathcal{L}: L_j = L_j^{th} \ \& \ L_k = L_k^{th}$
            \item if $\sum L_j^{th} + \sum L_k^{th} > \mathcal{L}: L_j = L_j^{th}$ \ \& \ repeat the procedure for $L_k $ links with their corresponding initial fidelity thresholds $L_k^{th} $ and optimal point for the path $\Omega_{\pi}^{\gamma} = \frac{1}{k} \left[ (j+k) \; \Omega_{\pi} - \left( \sum L_j^{th}\right) \right]$. Here $\gamma \in \mathbb{N}$ denotes the level of optimisation.
        \end{enumerate}
    \end{enumerate}
    where $\mathcal{L} = \sum_{i=0}^{N-1} L_i = (N-1) \ln{\left( \frac{4}{3} \right)} + N \ln{A} - \ln{ \left( F_{ete}^{th} - \frac{1}{4} \right)} $
\end{theorem}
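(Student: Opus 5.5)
The plan is to recast Problem~\ref{prob:problem1} in L-space as a separable concave maximisation over a box intersected with one budget constraint, and then read the three cases of the statement off its KKT (water-filling) conditions.

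\textbf{Reformulation.} By Lemma~\ref{lemma.throughputN} with link-specific activations, $T=C\prod_{e\in\pi} a_e/(a_e+c_e)$. Substituting $L_e=\Gamma_e a_e$ and $K_e=\Gamma_e c_e$, maximising $T$ is equivalent to maximising
\[
\Phi(L)=\sum_{e\in\pi}\phi_e(L_e),\qquad \phi_e(L)=\ln L-\ln(L+K_e),
\]
subject to $0<L_e\le L_e^{th}$ (Lemma~\ref{lemma.initialfidelityvariation}) and $\sum_e L_e\le\mathcal{L}$ (Lemma~\ref{lemma.endtoendfidelityactivation}, written in L-space as in the Remark). A direct check gives $\mathcal{L}=N\Omega_\pi$ from \eqref{eq:omega_l}.

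\textbf{KKT conditions.} Each $\phi_e$ is strictly increasing, with $\phi_e'(L)=K_e/(L(L+K_e))>0$, and strictly concave, since $\phi_e'$ is decreasing. The feasible set is convex and compact after closing at $0$, so the maximiser is unique and characterised by KKT. Stationarity gives $\phi_e'(L_e)=\mu+\nu_e$, where $\mu\ge0$ is the multiplier of the budget constraint and $\nu_e\ge0$ that of the cap $L_e\le L_e^{th}$.

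Following Theorem~\ref{theorem.equalactivation}, I work in the homogeneous setting where $K_e\equiv K$, so that $\phi'$ is a single decreasing function. Then every uncapped link satisfies $\phi'(L_e)=\mu$, and all uncapped links share a common level $\lambda$. The optimum therefore has the water-filling form $L_e=\min(L_e^{th},\lambda)$. Here $\lambda=\infty$ if the budget is slack, in which case every link sits at its cap, exactly as in Theorem~\ref{theorem.optimalthroughput}. Otherwise $\lambda$ is fixed by $\sum_e\min(L_e^{th},\lambda)=\mathcal{L}$.

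\textbf{Matching the cases.}
\begin{itemize}
\item If every $L_i^{th}\le\Omega_\pi$, then $\sum_i L_i^{th}\le N\Omega_\pi=\mathcal{L}$, so the budget is slack and $L_i=L_i^{th}$. This is case~1.
\item If every $L_i^{th}>\Omega_\pi$, then $\lambda=\Omega_\pi$ solves the budget equation with no link capped, which is Theorem~\ref{theorem.equalactivation} and case~2.
\item In the mixed case, with $j$ links satisfying $L^{th}\le\Omega_\pi$ and $k$ links with $L^{th}>\Omega_\pi$, case~3(a) is again the slack-budget situation.
\item In case~3(b) the budget binds, and $\lambda\ge\Omega_\pi$, because $\sum_e\min(L_e^{th},\Omega_\pi)\le\mathcal{L}$ and the left side is nondecreasing in $\lambda$. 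Hence the $j$ low-threshold links are capped at $L_j^{th}$. The residual problem on the $k$ remaining links has budget $\mathcal{L}-\sum L_j^{th}=(j+k)\Omega_\pi-\sum L_j^{th}$. It has the same structure, and its equal-split point is exactly $\Omega_\pi^{\gamma}$ as stated, so the recursion is just water-filling applied to the subproblem.
\end{itemize}

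\textbf{Termination.} The recursion terminates because each level of case~3(b) removes at least one link from the active set ($j\ge1$). Otherwise the remaining set falls into case~2 or case~3(a) at the next level, so there are at most $N$ levels. Its output satisfies the KKT conditions, and by strict concavity it is therefore the global optimum.

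\textbf{Main obstacle.} The step I expect to be delicate is justifying the common level $\lambda$ when the $\Gamma_e$, and hence the $K_e$, differ across links. In that case $\phi_e'$ differ, and the equal-$L$ rule is not exactly optimal. My first attempt will be to state the result under the homogeneity $K_e\equiv K$ already used in Theorem~\ref{theorem.equalactivation}, noting that the Remark fixes $c_e$ identical across links. If heterogeneous $\Gamma_e$ must be covered, I would present the theorem as optimal for the equal-$L$ surrogate, i.e.\ majorisation under a common concave objective. A secondary point to verify is that $\lambda\ge\Omega_\pi$ in case~3(b), which is what guarantees that fixing all low-threshold links at once, rather than one at a time, is consistent.
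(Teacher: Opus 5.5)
Your proof is correct, and it takes a genuinely different route from the paper's. The paper argues by case analysis, using Theorem~\ref{theorem.optimalthroughput} and Theorem~\ref{theorem.equalactivation} as black boxes. The first says throughput increases in each $L_i$, so links are pushed to their caps. The second says that when $\sum_i L_i\le\mathcal{L}$ binds, the equal split $\Omega_\pi$ is used. In case 3(b) the paper freezes the low-threshold links and recomputes the equal-split point on the rest. It supports that recursion only by the monotonicity $\Omega_\pi^{\gamma+1}\ge\Omega_\pi^{\gamma}$ of Appendix~\ref{sec:optimal_point_proof}, which shows that frozen links stay below later optimal points but not that freezing is optimal. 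Moreover, the appendix proof of Theorem~\ref{theorem.equalactivation} reaches $\Omega_L$ by \emph{assuming} equal contributions, so the optimality of the equal split is never actually established there.

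Your KKT/water-filling formulation closes both points at once. Strict concavity of $\phi(L)=\ln L-\ln(L+K)$ with a common $K$ forces all uncapped links to one level $\lambda$, giving $L_e=\min(L_e^{th},\lambda)$. The bound $\lambda\ge\Omega_\pi$, and hence $\Omega_\pi^{\gamma}\le\lambda$ at every level, is precisely what certifies the freezing, so the appendix lemma becomes a by-product. You also get uniqueness and global optimality, which the paper does not.

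Your route also exposes the hypothesis $K_e\equiv K$, i.e.\ a common $\Gamma_e$ given the common $c_e$. The paper uses this tacitly: its appendix writes $T=C\prod_i L_i/(L_i+K)$ with a single $K$. With heterogeneous $\Gamma_e$ the equal-level prescription of cases 2 and 3(b) is not optimal, so simply state this assumption rather than retreating to a surrogate objective. What the paper's presentation offers in return is a constructive, geometric picture tied directly to Algorithm~\ref{alg:QLO} and the two-link heat maps.

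One small correction: once the budget binds at level $0$, the total threshold sum exceeds $\mathcal{L}$ for good. Deeper levels can therefore only be case 2 or a further 3(b), never 3(a). This does not affect your termination argument.
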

\begin{proof}
    According to theorem~\ref{theorem.optimalthroughput}, $L_i = L_i^{th}$. In addition, from theorem~\ref{theorem.equalactivation} (Eq.~\eqref{eq:sumLi}), the end-to-end fidelity defines the constraint that is, $\sum_{i=0}^{N-1} L_i \leq (N-1) \ln{\left( \frac{4}{3} \right)} + N \ln{A} - \ln{ \left( F_{ete}^{th} - \frac{1}{4} \right)}$. Let the individual thresholds on the initial fidelity be $L_i \leq L_i^{th}$. 
    Essentially, the subsection of N-dim L-hyperspace under consideration is defined by
    \begin{enumerate}
        \item Equation of optimality: $L_i = L_i^{th}$
        \item End-to-end fidelity constraint: $\sum_{i=0}^{N-1} L_i \leq (N-1) \ln{\left( \frac{4}{3} \right)} + N \ln{A} - \ln{ \left( F_{ete}^{th} - \frac{1}{4} \right)}$
        \item Individual initial fidelity threshold constraint: $L_i \leq L_i^{th}$
    \end{enumerate}
    Consider the following cases:
    \begin{enumerate}
        \item For $L_i^{th} \leq \Omega_{\pi}$, all the individual initial fidelity thresholds are less than the optimal point $\Omega_L$. In such a case, constraints are fulfilled trivially; hence, using the equation of optimality, the throughput is maximised for $L_i=L_i^{th}.$
        \item For $\Omega_{\pi}<L_i^{th}$, all the individual initial fidelity thresholds are more than the optimal point $\Omega_{\pi}.$ In such a case, operating at optimal using the equation of optimality would violate constraint 2. Hence, to satisfy constraint 2, a simple optimal point is used, that is, the throughput is maximised when $L_i=\Omega_{\pi}.$
        \item For $L_j^{th} \leq \Omega_{\pi} \ \& \ \Omega_{\pi} < L_k^{th}$, that is, $j$ of the individual initial fidelity thresholds are less than the optimal point $\Omega_{\pi}$, and $k$ of the individual initial fidelity thresholds are more than the optimal point $\Omega_{\pi}$. In such a case, constraint 2 can be satisfied or violated depending upon the individual initial fidelity thresholds. 
        \begin{enumerate}
            \item If constraint 2 is satisfied that is, $\sum L_j^{th} + \sum L_k^{th} \leq \mathcal{L} $ where $\mathcal{L} = \sum_{i=0}^{N-1} L_i = (N-1) \ln{\left( \frac{4}{3} \right)} + N \ln{A} - \ln{ \left( F_{ete}^{th} - \frac{1}{4} \right)} $, then similar to case 1, using the equation of optimality, the throughput is maximized for $L_j=L_j^{th} \ \& \ L_k=L_k^{th}.$
            \item If the constraint 2 is violated, that is, $\sum L_j^{th} + \sum L_k^{th} > \mathcal{L}$ where $\mathcal{L} = \sum_{i=0}^{N-1} L_i = (N-1) \ln{\left( \frac{4}{3} \right)} + N \ln{A} - \ln{ \left( F_{ete}^{th} - \frac{1}{4} \right)} $ then, $j$ of the links are operated using the equation of optimality that is, $L_j=L_j^{th}$ and hence the original $N$-dim $L$-hyper-space is reduced to $(N-j)$-dim $L$-hyper-space. The remaining $k$ of the links are evaluated again with an increased optimal point given by $\Omega_{\pi}^{1} = \frac{1}{k} \left[ (j+k) \Omega_{\pi} - \left( \sum L_j^{th}\right) \right]$ following the same procedure. See appendix~\ref{sec:optimal_point_proof} for proof of $\Omega^{\gamma+1}_{\pi} \geq \Omega^{\gamma}_{\pi}$. The recursion follows until all the links are optimally allocated at the level of optimisation $\gamma$ with optimal point $\Omega_{\pi}^{\gamma+1} = \frac{1}{k} \left[ 
 (j+k) \; \Omega_{\pi}^{\gamma} - \sum_j L_j^{th} \right].$
        \end{enumerate}
    \end{enumerate}
\end{proof}

\subsection{Recursive Threshold Allocation Algorithm}\label{ssec:allocation_algorithm}
The theorem above describes how the \emph{initial fidelity} $L_i$ on each link must be chosen relative to the global 
optimal point $\Omega_{\pi}$ to maximise throughput. The following pseudocode in Algorithm~\ref{alg:QLO} captures the procedure.

\begin{algorithm}[t]
\caption{Quantum Link Orchestration for a Path $\pi$}
\label{alg:QLO}
\begin{algorithmic}[1]
\REQUIRE Path $\pi$ with $N$ links, initial fidelity thresholds $\{L_i^{th}\}_{i=1}^N$,
end-to-end budget $\mathcal{L}$, base optimal point $\Omega_\pi$
\ENSURE Allocated link values $\{L_i\}_{i=1}^N$

\STATE Initialise optimisation level $\gamma \gets 0$
\STATE Set current optimal point $\Omega \gets \Omega_\pi$
\STATE Initialise fixed set $\mathcal{F} \gets \emptyset$
\STATE Initialise remaining set $\mathcal{R} \gets \{1,2,\ldots,N\}$
\FOR{$i=1$ \TO $N$}
    \STATE $L_i \gets 0$
\ENDFOR

\WHILE{$\mathcal{R} \neq \emptyset$}

    \STATE Identify $\mathcal{S} \gets \{ i \in \mathcal{R} \mid L_i^{th} \le \Omega \}$

    \FOR{$i \in \mathcal{S}$}
        \STATE $L_i \gets L_i^{th}$
        \STATE $\mathcal{F} \gets \mathcal{F} \cup \{i\}$
        \STATE $\mathcal{R} \gets \mathcal{R} \setminus \{i\}$
    \ENDFOR

    \IF{$\mathcal{R} = \emptyset$}
        \RETURN $\{L_i\}_{i=1}^N$
    \ENDIF

    \STATE Compute $S_F \gets \sum_{i \in \mathcal{F}} L_i$
    \STATE Compute $S_R^{th} \gets \sum_{i \in \mathcal{R}} L_i^{th}$

    \IF{$S_F + S_R^{th} \le \mathcal{L}$}
        \FOR{$i \in \mathcal{R}$}
            \STATE $L_i \gets L_i^{th}$
        \ENDFOR
        \RETURN $\{L_i\}_{i=1}^N$
    \ELSE
        \STATE $j \gets |\mathcal{F}|$ \hspace{1em} $k \gets |\mathcal{R}|$
        \STATE Update optimal point:
        \STATE $\Omega \gets \frac{1}{k}\Big( (j+k)\Omega_\pi - S_F \Big)$
        \STATE $\gamma \gets \gamma + 1$
    \ENDIF

\ENDWHILE

\end{algorithmic}
\end{algorithm}

\begin{remark}[Explanation of the Algorithmic Steps] \label{remark.explanation}

Algorithm~\ref{alg:QLO} provides a constructive implementation of the allocation rules derived in Theorem~\ref{theorem.chain} for a given quantum network path $\pi$ consisting of $N$ links. The goal of the algorithm is to determine the per-link allocation values $\{L_i\}$ that maximise the end-to-end throughput while respecting both the individual link fidelity thresholds and the global end-to-end fidelity constraint.

The algorithm proceeds iteratively by fixing links whose allocations are unambiguous under the current optimal point and recursively updating the optimal point whenever the global constraint becomes active.

\begin{enumerate}
\item \textbf{Initialisation:}
The algorithm starts by setting the optimisation level $\gamma = 0$ and initialising the current optimal point $\Omega$ to the base optimal point $\Omega_\pi$ of the path. Two sets are maintained throughout the execution:
(i) the fixed set $\mathcal{F}$, containing links whose allocations have been permanently determined, and
(ii) the remaining set $\mathcal{R}$, containing links that are still subject to optimisation.
Initially, $\mathcal{F}$ is empty, $\mathcal{R} = \{1,2,\ldots,N\}$, and all link allocations are set to zero.

\item \textbf{Identification of trivially allocatable links:}
At each iteration, the algorithm identifies the subset
\[
\mathcal{S} = \{ i \in \mathcal{R} \mid L_i^{th} \le \Omega \},
\]
that is, the links whose individual fidelity thresholds lie below the current optimal point. According to cases (1) and (3) of Theorem~\ref{theorem.chain}, these links can safely operate at their threshold values without violating the end-to-end constraint. Consequently, for each $i \in \mathcal{S}$, the allocation is fixed as $L_i = L_i^{th}$, and the link is moved from the remaining set $\mathcal{R}$ to the fixed set $\mathcal{F}$. If this operation exhausts the remaining set, the algorithm terminates.

\item \textbf{Feasibility check of remaining links}
If unallocated links remain, the algorithm evaluates whether the remaining links can also be assigned their individual thresholds without violating the global fidelity budget. To this end, it computes
\[
S_F = \sum_{i \in \mathcal{F}} L_i
\quad \text{and} \quad
S_R^{th} = \sum_{i \in \mathcal{R}} L_i^{th}.
\]
If the condition $S_F + S_R^{th} \le \mathcal{L}$ is satisfied, the end-to-end fidelity constraint holds even when all remaining links operate at their thresholds. In this case, corresponding to case (3a) of Theorem~\ref{theorem.chain}, the algorithm assigns $L_i = L_i^{th}$ for all $i \in \mathcal{R}$ and terminates.

\item \textbf{Optimal point update and recursion}
If instead $S_F + S_R^{th} > \mathcal{L}$, the global constraint would be violated by assigning all remaining links their thresholds. This corresponds to case (3b) of Theorem~\ref{theorem.chain}. At this stage, the allocations in $\mathcal{F}$ are frozen, and the optimisation problem is reduced to the remaining $k = |\mathcal{R}|$ links. Letting $j = |\mathcal{F}|$, the optimal point is updated as
\[
\Omega \gets \frac{1}{k}\left( (j+k)\Omega_\pi - S_F \right),
\]
which coincides with the recursive optimal point $\Omega_\pi^\gamma$ defined in Theorem~\ref{theorem.chain}. The optimisation level $\gamma$ is incremented, and the procedure is repeated on the reduced link set with the updated optimal point.

\item \textbf{Termination and optimality}
The algorithm terminates once all links have been allocated. The resulting allocation satisfies the individual threshold constraints $L_i \le L_i^{th}$, the global end-to-end fidelity constraint $\sum_i L_i \le \mathcal{L}$, and achieves throughput optimality as characterised by Theorem~\ref{theorem.chain}. As shown in Appendix~\ref{sec:optimal_point_proof}, the recursive update of the optimal point is non-decreasing, ensuring convergence in a finite number of iterations.
\end{enumerate}
\end{remark}

\subsection{Complexity and Optimization Depth}\label{ssec:complexity}
\begin{corollary}[Worst-Case Complexity]
\label{cor:qlo_complexity}
In the worst case, Algorithm~\ref{alg:QLO} runs in $O(N^2)$ time.
\end{corollary}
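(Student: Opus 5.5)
The plan is to bound the number of iterations of the \texttt{while} loop in Algorithm~\ref{alg:QLO} and the cost of a single iteration, then multiply. The bookkeeping I would use is the size of the remaining set $\mathcal{R}$, which starts at $N$ and never grows. Links only move from $\mathcal{R}$ to $\mathcal{F}$, and never back.

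\textbf{Cost per iteration.} Computing $\mathcal{S}$ scans $\mathcal{R}$, which takes $O(|\mathcal{R}|)\le O(N)$ time. The loop that fixes links in $\mathcal{S}$ does $O(1)$ work per link, or $O(N)$ in total. The sums $S_F$ and $S_R^{th}$ each take $O(N)$ if recomputed from scratch; one could maintain them incrementally, but recomputation is enough for the bound. The feasibility test and the update of $\Omega$ are $O(1)$. So each pass through the loop costs $O(N)$.

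\textbf{Number of iterations.} I would show that every iteration except the last strictly shrinks $\mathcal{R}$. An iteration returns in two cases: when $\mathcal{R}$ becomes empty after fixing $\mathcal{S}$, or when the feasibility check $S_F+S_R^{th}\le\mathcal{L}$ holds. Otherwise the update branch runs and the loop repeats. The worry is an iteration with $\mathcal{S}=\emptyset$ that reaches the update branch and leaves $\mathcal{R}$ unchanged.

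I would handle this case first. If $\mathcal{S}=\emptyset$ in some iteration, then no links are fixed, so $j$, $k$ and $S_F$ are the same as when $\Omega$ was last computed. The new value $\frac{1}{k}((j+k)\Omega_\pi-S_F)$ therefore equals the current $\Omega$. The state is then identical on the next pass, which means such a pass is either the first pass or a fixed point.

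At the first pass with $\mathcal{S}=\emptyset$, every $L_i^{th}>\Omega_\pi$, which is case 2 of Theorem~\ref{theorem.chain}. The intended output there is $L_i=\Omega_\pi$ with termination, and I would read the algorithm as terminating at this point. Under that reading, every non-terminal iteration removes at least one link, so there are at most $N$ iterations. Appendix~\ref{sec:optimal_point_proof} supports this by showing that $\Omega$ is non-decreasing, so links fixed earlier are consistent with later values of $\Omega$.

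\textbf{Tightness.} Multiplying the two bounds gives $O(N)\cdot O(N)=O(N^2)$. To show the bound is attained, I would build thresholds $L_i^{th}$ such that each update raises $\Omega$ past exactly one more threshold. Then one link is fixed per iteration, there are $\Theta(N)$ iterations, and each costs $\Theta(N)$ for the scans.

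The main obstacle is the termination argument when $\mathcal{S}$ is empty, since the pseudocode as written does not visibly shrink $\mathcal{R}$ in that case. I would resolve it by invoking case 2 of Theorem~\ref{theorem.chain} together with the monotonicity in Appendix~\ref{sec:optimal_point_proof}, and stating that interpretation explicitly.
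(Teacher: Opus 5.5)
Your counting matches the paper's proof. The paper charges $O(|\mathcal{R}|)$ per pass for building $\mathcal{S}$, assumes exactly one link is fixed per pass to get $\sum_{m=1}^{N}O(m)=O(N^2)$, and absorbs the sums and updates as linear work. You multiply at most $N$ passes by $O(N)$ per pass, which gives the same bound. You are also right that the pseudocode does not guarantee a pass fixes a link; the paper simply assumes it. The gap is in how you resolve that case.

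\textbf{The dichotomy fails.} Your claim that an empty-$\mathcal{S}$ pass is ``either the first pass or a fixed point'' is false, because $\mathcal{S}$ can become empty for the first time at a level $\gamma\ge 1$.

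\emph{Counterexample.} Take $N=3$ and $\Omega_\pi=1$, so $\mathcal{L}=N\Omega_\pi=3$, with thresholds $0.5,2,2$.
Pass~1 fixes the first link. The budget check fails ($0.5+4>3$), and $\Omega$ becomes $\tfrac12(3-0.5)=1.25$.
Pass~2 then has $\mathcal{S}=\emptyset$, the check fails again, and $\Omega$ is recomputed as $1.25$.
The loop repeats forever with $L_2=L_3=0$.

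Here not every $L_i^{th}$ exceeds $\Omega_\pi$. The intended output is $L_2=L_3=\Omega_\pi^{1}=1.25$, which is case~2 of Theorem~\ref{theorem.chain} applied at the level reached through case~3b, not $\Omega_\pi$. The pseudocode never executes $L_i\gets\Omega$ anywhere, so a branch is missing. Reinterpreting the top-level case~2 does not supply it. If you only patch the first pass, your claim that every non-terminal iteration removes a link is false.

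\textbf{The fix.} Add the branch: if $\mathcal{S}=\emptyset$, set $L_i\gets\Omega$ for all $i\in\mathcal{R}$ and return, at cost $O(|\mathcal{R}|)$. This agrees with the budget check: all remaining thresholds exceed $\Omega$ and $|\mathcal{R}|\,\Omega=\mathcal{L}-S_F$, which forces $S_F+S_R^{th}>\mathcal{L}$. With this branch, every non-terminal pass fixes at least one link. So there are at most $N$ passes, and the $O(N^2)$ bound follows.

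Drop the appeal to Appendix~\ref{sec:optimal_point_proof}. Monotonicity of $\Omega$ plays no role in the count. It also permits $\Omega^{\gamma+1}=\Omega^{\gamma}$ (all $\epsilon_j=0$), so it cannot supply termination either. The tightness construction is unnecessary for an $O(\cdot)$ statement.
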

\begin{proof}
At each iteration, the algorithm scans the remaining set $\mathcal{R}$ to identify
$\mathcal{S}=\{i\in\mathcal{R}\mid L_i^{th}\le \Omega\}$, which costs $O(|\mathcal{R}|)$.
In the worst case, exactly one link is fixed per iteration, so $|\mathcal{R}|$ decreases
from $N$ to $1$. Hence the total scanning cost is
$\sum_{m=1}^{N} O(m)=O(N^2)$.
All additional per-iteration updates and sum computations are at most linear in $N$,
therefore the overall worst-case time complexity is $O(N^2)$.
\end{proof}

\begin{remark}[Optimisation Level]
\label{remark.optlevel}
Algorithm~\ref{alg:QLO} proceeds through a sequence of \emph{optimisation levels} indexed by
$\gamma \in \mathbb{N}$, where each level corresponds to an update of the effective optimal point
$\Omega_\pi^\gamma$ used to allocate the links that remain unresolved.

At the beginning of level $\gamma$, let $\mathcal{R}^\gamma$ denote the set of \emph{remaining} (unfixed) links,
and let $\mathcal{F}^\gamma$ denote the set of links already fixed at earlier levels. The current optimal point is
$\Omega_\pi^\gamma$ (with $\Omega_\pi^0 = \Omega_\pi$). The remaining links are partitioned as
\[
\mathcal{J}^\gamma \;=\; \{\, i \in \mathcal{R}^\gamma : L_i^{th} \le \Omega_\pi^\gamma \,\},
\qquad
\mathcal{K}^\gamma \;=\; \{\, i \in \mathcal{R}^\gamma : L_i^{th} > \Omega_\pi^\gamma \,\}.
\]
All links in $\mathcal{J}^\gamma$ are \emph{frozen} at their thresholds, i.e., $L_i = L_i^{th}$ for
$i \in \mathcal{J}^\gamma$, and moved from $\mathcal{R}^\gamma$ to $\mathcal{F}^\gamma$.

Next, the algorithm checks whether allocating the remaining links in $\mathcal{R}^\gamma$ at their thresholds
respects the end-to-end budget. Writing
\[
S_F^\gamma \;=\; \sum_{i \in \mathcal{F}^\gamma} L_i,
\qquad
S_R^{th,\gamma} \;=\; \sum_{i \in \mathcal{R}^\gamma} L_i^{th},
\]
if $S_F^\gamma + S_R^{th,\gamma} \le \mathcal{L}$ then the procedure terminates by assigning
$L_i = L_i^{th}$ for all $i \in \mathcal{R}^\gamma$.

Otherwise, the budget is violated, and the algorithm updates the optimal point for the unresolved links.
Let $j^\gamma = |\mathcal{F}^\gamma|$ and $k^\gamma = |\mathcal{R}^\gamma|$. The next-level optimal point is
\[
\Omega_\pi^{\gamma+1}
\;=\;
\frac{1}{k^\gamma}\Bigl( (j^\gamma + k^\gamma)\,\Omega_\pi \;-\; S_F^\gamma \Bigr),
\]
which matches the update rule in Theorem~\ref{theorem.chain}. The algorithm then continues to level
$\gamma+1$ on the reduced set of remaining links.

Since at least one link is fixed whenever $\mathcal{J}^\gamma \neq \emptyset$, the process terminates in a finite
number of levels, and the final allocation assigns each link either to its threshold $L_i^{th}$ or according to the
appropriate updated optimal point $\Omega_\pi^\gamma$.
\end{remark}

\begin{corollary}[Number of Orchestration Cases]
\label{cor:number-of-cases}
Consider the quantum network from Theorem~\ref{theorem.chain}, now extended to $P$ overlapping 
source-to-destination paths, and let $N$ be the total number of links. 
At the \emph{primary} (i.e., $\gamma=0$) level of optimization in Theorem~\ref{theorem.chain}, 
the total number of distinct ``cases'' or configurations that can arise across these $P$ paths is
\[
    P! \;\bigl(P + 1\bigr)^N.
\]
Furthermore, for every subsequent level of optimization $\gamma \ge 1$, the new configurations 
are counted by 
\[
1 \;+\; Q_{\gamma}!\;\bigl(Q_{\gamma} + 1\bigr)^{k_{\gamma}},
\]
where $Q_{\gamma}$ is the number of paths still requiring link-allocation decisions at level~$\gamma$, 
and $k_{\gamma}$ is the total number of links whose initial fidelities remain unallocated from the previous level.
\end{corollary}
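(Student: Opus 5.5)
The count $P!\,(P+1)^N$ will be obtained by a product (multiplication-principle) argument over two independent choices: a per-link label and an ordering of the paths. At the primary level $\gamma=0$, Theorem~\ref{theorem.chain} sorts each link $i$ according to whether its threshold satisfies $L_i^{th}\le\Omega_\pi$ or $\Omega_\pi<L_i^{th}$. With $P$ overlapping paths, each path $\pi_m$ carries its own optimal point $\Omega_{\pi_m}$. For a link $e$, what matters is where $L_e^{th}$ falls relative to the $P$ values $\Omega_{\pi_1},\dots,\Omega_{\pi_P}$. Once these values are sorted, they cut the real line into $P+1$ intervals, so each link independently takes one of $P+1$ positions. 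That gives $(P+1)^N$ assignments across the $N$ links.

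The ordering of the optimal points is not fixed in advance, since it depends on path lengths and on $F_{ete}^{th}$ through Eq.~\eqref{eq:omega_l}. I will therefore count the orderings as a separate factor. Assuming the $\Omega_{\pi_m}$ are pairwise distinct, there are $P!$ orderings. Combining the two factors yields $P!\,(P+1)^N$ configurations at level $0$. I will state explicitly that each such configuration is one case of Theorem~\ref{theorem.chain} applied path by path. Whether a link lies on a particular path only removes comparisons, so I will count an upper bound that treats every link as potentially compared with every $\Omega_{\pi_m}$, which is how I read ``distinct cases'' here.

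For $\gamma\ge1$, I will rely on Remark~\ref{remark.optlevel}. Only paths falling into case (3b) generate a new level, and there are $Q_\gamma$ of them. The only links still undecided are the $k_\gamma$ unallocated ones. Each such path has an updated optimal point $\Omega^{\gamma}_{\pi}$, and by Appendix~\ref{sec:optimal_point_proof} these points are non-decreasing. Rerunning the level-$0$ argument with $P$ replaced by $Q_\gamma$ and $N$ replaced by $k_\gamma$ gives $Q_\gamma!\,(Q_\gamma+1)^{k_\gamma}$. The additional $1$ counts the terminating outcome in which the budget check $S_F+S_R^{th}\le\mathcal{L}$ succeeds (case (3a)), so no further partition is needed.

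I expect the main obstacle to be justifying that the ordering of the $\Omega$'s and the link positions are independent and exhaustive. In particular, ties $\Omega_{\pi_m}=\Omega_{\pi_{m'}}$, and links lying on only some paths, could reduce the number of genuinely distinct cases. My plan is to assume generic (distinct) optimal points and to interpret the formula as the count over all combinatorially possible labelings, noting that this makes it an upper bound in degenerate instances.
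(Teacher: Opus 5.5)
Your proposal is correct and follows the paper's proof essentially verbatim: $P!$ orderings of the $P$ per-path cut points times $(P+1)^N$ assignments of links to the resulting segments by the multiplication principle; at level $\gamma\ge 1$, the same count with $Q_\gamma$ paths and $k_\gamma$ unallocated links, plus one extra immediate-termination configuration. The differences are cosmetic. You use the optimal points $\Omega_{\pi_m}$ as cut points where the paper speaks of the end-to-end thresholds, and the $P!$ factor is the same either way. You also make explicit that the formula counts (ordering, assignment) labelings, which the paper leaves implicit. Relative to genuinely distinct per-path cases, the formula overcounts even when the $\Omega$'s are distinct, not only when there are ties: swapping two adjacent cut points with no link between them leaves every path's case unchanged.
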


\begin{proof}
We first prove the count for the primary level $\gamma = 0$. 

\begin{enumerate}
    \item[\textbf{(1)}] \textbf{Permutations of Path Thresholds.} 
    Since there are $P$ overlapping paths, each has its own end-to-end fidelity threshold. 
    We can order these $P$ thresholds in $P!$ ways (e.g., from largest to smallest or any other ordering). 

    \item[\textbf{(2)}] \textbf{Assigning $N$ Links to ``Segments.''} 
    Once a particular ordering of these $P$ thresholds is chosen, the real axis of possible 
    link-fidelity values are partitioned into $P+1$ segments:
    \begin{itemize}
        \item one below the smallest threshold, 
        \item one above the largest threshold, 
        \item and $P-1$ in-between segments (i.e., between successive thresholds in the sorted list).
    \end{itemize}
    Each of the $N$ links (with its initial fidelity threshold $L_i^{th}$) is then assigned to exactly one 
    of these $P+1$ segments. This yields $(P + 1)^N$ possible assignments.
\end{enumerate}

By the multiplication principle, the total number of distinct orchestration cases at level $\gamma=0$ is 
\[
   P! \;\times\; (P + 1)^N 
   \;=\; P!\;\bigl(P + 1\bigr)^N.
\]

For each \emph{subsequent} level of optimization $\gamma \ge 1$, Theorem~\ref{theorem.chain} 
(Case~3b) shows that a new “configuration” arises whenever the sum of thresholds 
(for those links that fell on both sides of $\Omega_{\pi}^{\gamma}$) still violates or just satisfies 
the end-to-end constraint. Effectively, we freeze the already-allocated links 
and focus on $Q_{\gamma}$ paths that remain active (i.e., have unallocated links), with $k_{\gamma}$ 
such links are left to be assigned. These paths and links follow the same combinatorial logic 
as the primary level but with \emph{fewer} total paths/links, yielding $Q_{\gamma}!\,(Q_{\gamma} + 1)^{k_{\gamma}}$ 
configurations. Additionally, there is exactly one extra configuration 
corresponding to the situation where all thresholds are forced into compliance by the 
end-to-end budget (an immediate termination). Hence, the total for level~$\gamma$ is
\[
   1 \;+\; Q_{\gamma}!\;\bigl(Q_{\gamma} + 1\bigr)^{k_{\gamma}}.
\]

This completes the count of orchestration cases across all levels $\gamma$.
\end{proof}

\subsection{Numerical examples}\label{ssec:two_link_example}

\subsubsection{Two-link quantum linear chain}
Let's take an example of a two-link linear quantum chain and visualise the calibration problem with initial fidelity and end-to-end fidelity constraints.

\begin{figure}[!t]
\centering
\includegraphics[width=0.8\columnwidth]{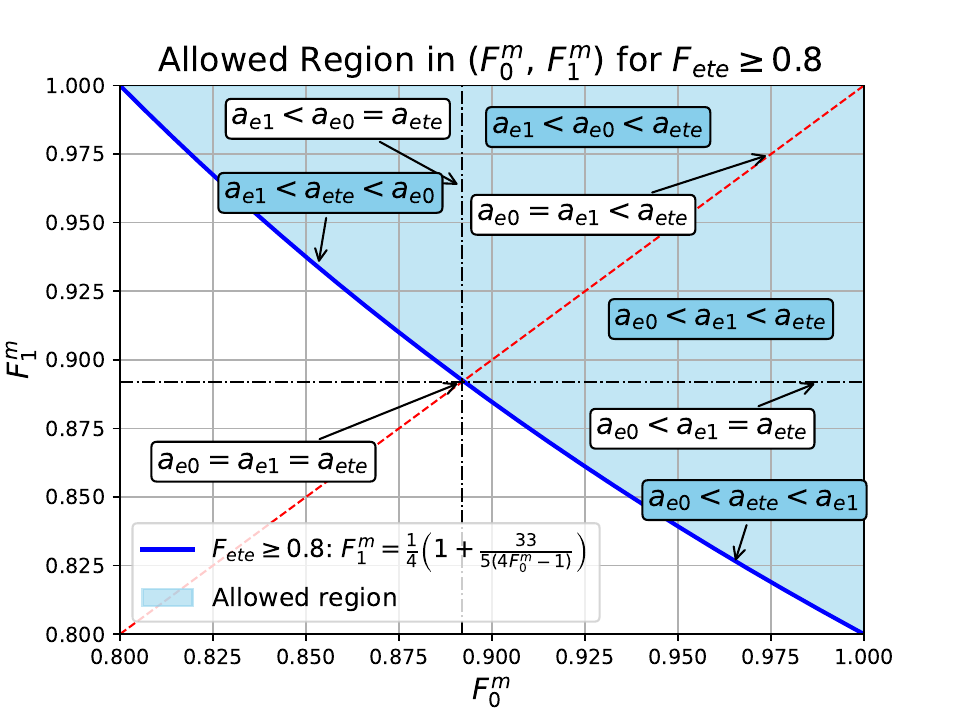}
\caption{Allowed region for a quantum linear chain of two links with end-to-end fidelity $F_{ete} \geq$ 0.8.}
    \label{fig:F0_vs_F1}
\end{figure}

\begin{figure}[!t]
\centering
\includegraphics[width=0.8\columnwidth]{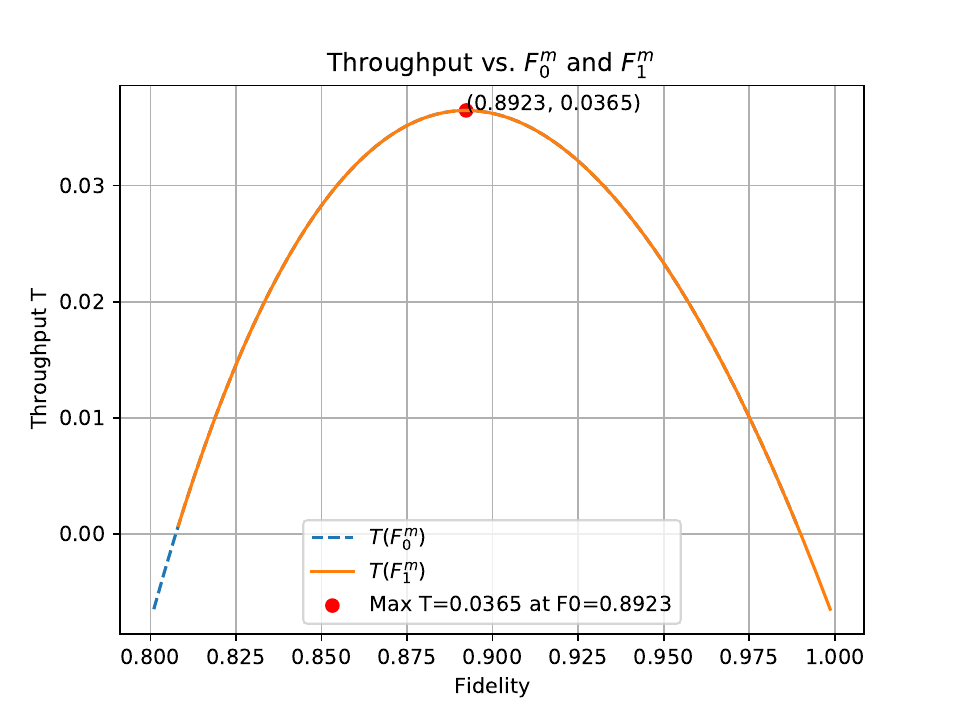}
\caption{Throughput variation at the boundary of the allowed region for a two-link network with an end-to-end fidelity threshold of 0.8. Constants: $C = 1, c_e = 1, F_e^M = 0.99, \Gamma_e = 0.6$.}
    \label{fig:maxT_F01}
\end{figure}

\begin{figure}[!t]
\centering
\includegraphics[width=0.8\columnwidth]{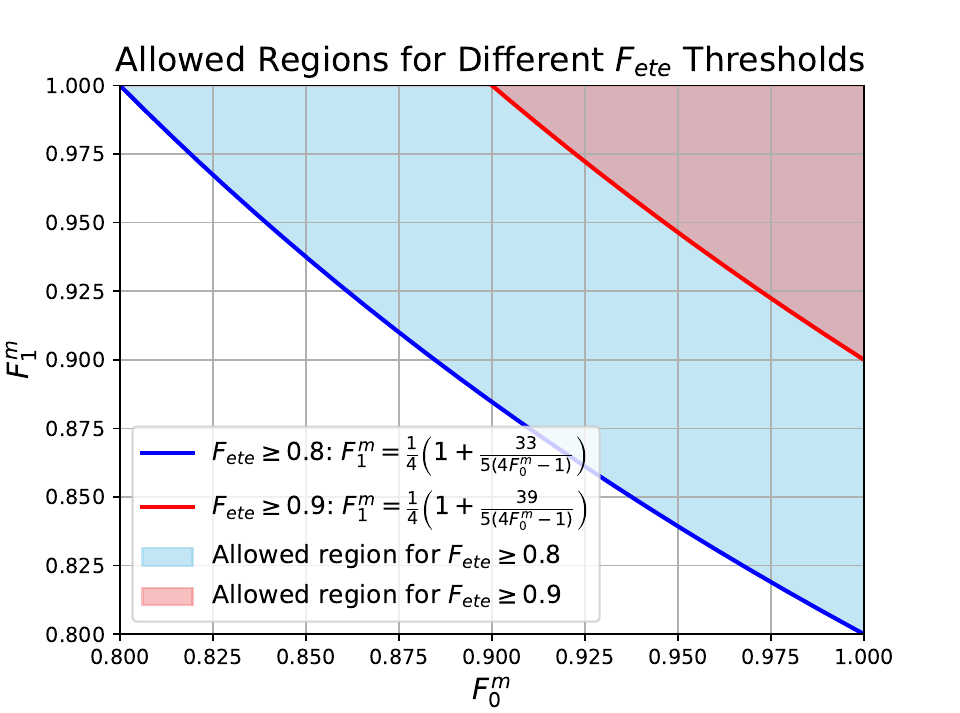}
\caption{Allowed region for a quantum linear chain of two links with end-to-end fidelities $F_{ete} \geq$ 0.8 and $F_{ete} \geq$ 0.9.}
\label{fig:F0_vs_F1_0.8_0.9}
\end{figure}

Utilising Eq.~\eqref{eq:end_to_end_fidelity} for a two-link chain, we have the end-to-end fidelity of EPR pair delivery in terms of the initial fidelity of EPR pair generation at each of the links as:
\begin{equation}
    F_{ete} = \frac{1}{4} \left[ 1 + 3 \left( \frac{4 F_0^m - 1}{3} \right) \left( \frac{4 F_1^m - 1}{3} \right) \right]
\end{equation}
Assuming an end-to-end fidelity threshold for the EPR pair delivery across the two-link linear quantum chain to be 0.8, Fig.~\eqref{fig:F0_vs_F1} depicts the allowed region for the selection of the initial fidelity of the generated EPR pair at each of the link which would meet the set end-to-end fidelity threshold. The allowed region in Fig.~\eqref{fig:F0_vs_F1} is divided into four distinct parts curated by $F_1^m=F_0^m$ (red dotted line) and $F_0^m=F_1^m=0.8923$ (black dotted lines). Point $\Omega_F = (F_0^m, F_1^m) =(0.8923, 0.8923)$, the central point on the graph, is the point of maximum throughput as seen in Fig.~\eqref{fig:maxT_F01}. Using the definitions of activation periods in lemma~\ref{lemma.initialfidelityvariation} and lemma~\ref{lemma.endtoendfidelityactivation} on this optimal point we have $a_{e0} = a_{e1} = a_{ete}$. Moving along the boundary of the allowed region introduces inequality among the activation periods with $a_{e1} < a_{ete} < a_{e0}$ along decreasing $F_0^m$ and $a_{e0} < a_{ete} < a_{e1}$ along increasing $F_0^m$. The lines separating parts in the allowed region work as a transition in the inequality of the activation period. For example, for a constant $F_0^m > 0.8923$, with increase in $F_1^m$, the inequality between the activation periods goes from $a_{e0} < a_{ete} < a_{e1}$ on the boundry to $a_{e0}<a_{e1}=a_{ete}$ at $F_1^m = 0.8923$. This is because of the inverse relation between activation period and fidelity. The increase in $F_1^m$ decreases $a_{e1}$ to make it equal to the $a_{ete}$. Further increasing $F_1^m$ at the same constant $F_0^m > 0.8923$ makes the activation period even smaller, which transitions to inequality of $a_{e0}< a_{e1} < a_{ete}$. Even further increasing $F_1^m$ leads to the transition line $F_1^m = F_0^m$ where the activation periods are related as $a_{e0}=a_{e1}<a_{ete}$. A slight increase now will transition the inequality to $a_{e1}< a_{e0}< a_{ete}$. The same observation can be done for any constant increasing $F_0^m$ with $F_1^m > 0.8923$. The size of the allowed region is dictated by the selection of the end-to-end fidelity threshold as shown in Fig.~\eqref{fig:F0_vs_F1_0.8_0.9}. A higher end-to-end fidelity threshold reduces the area of the allowed region from which the initial fidelity of the EPR pair generated for each link can be selected.

\begin{figure}[!t]
\centering
\begin{subfigure}[t]{0.49\textwidth}
    \centering
    \includegraphics[width=\linewidth]{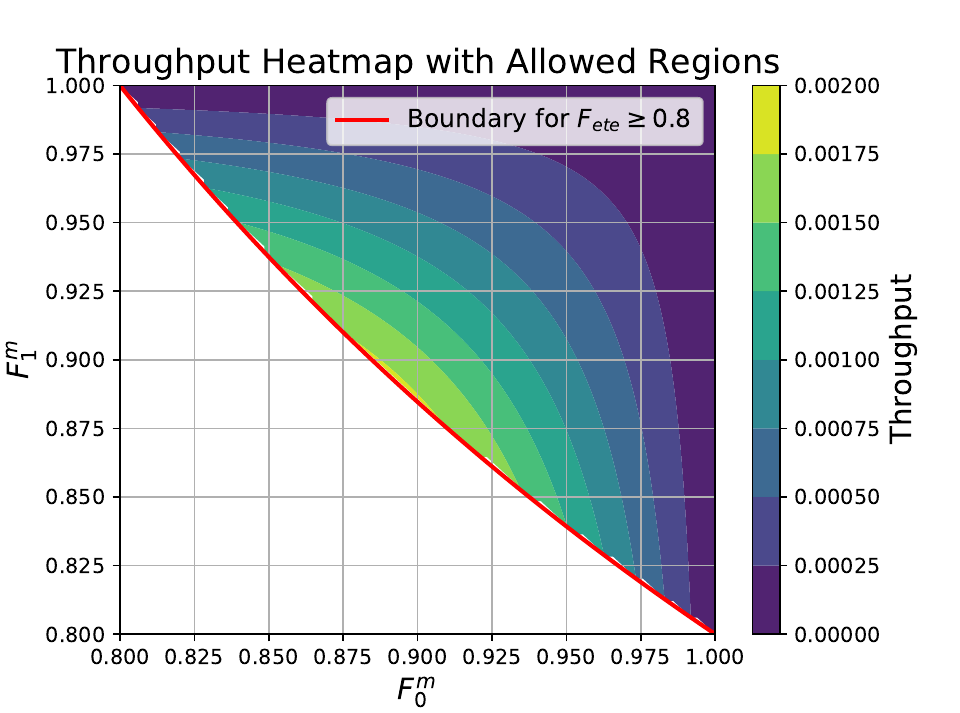}
    \caption{Throughput region, $F_{ete} \ge 0.8$.}
    \label{fig:heatT}
\end{subfigure}
\hfill
\begin{subfigure}[t]{0.49\textwidth}
    \centering
    \includegraphics[width=\linewidth]{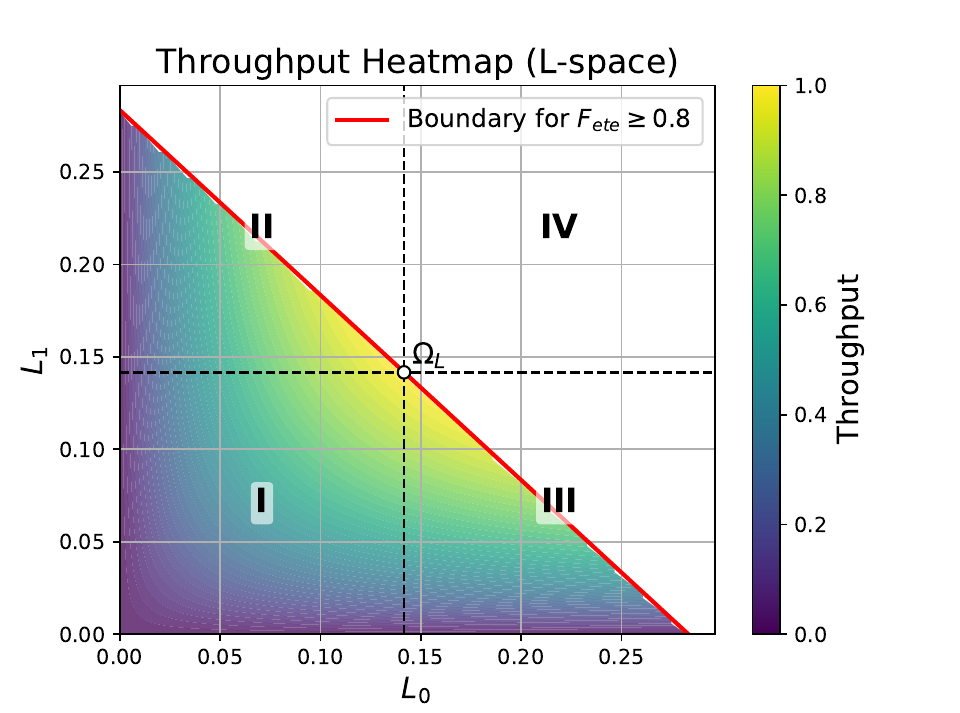}
    \caption{Throughput region in L-space, $F_{ete} \ge 0.8$.}
    \label{fig:heatT_Lspace}
\end{subfigure}
\caption{Heat maps of allowed throughput regions for a two-link quantum linear chain.}
\label{fig:heatT_combined}
\end{figure}

A more informative way of visualising allowed regions for a two-link network chain is done in Fig.~\eqref{fig:heatT_combined}, which gives the heat map of the throughput achieved for each operating point in the allowed region. Fig.~\eqref{fig:heatT} shows the normal fidelity variable space while Fig.~\eqref{fig:heatT_Lspace} shows the corresponding L-space. Similar to Fig.~\eqref{fig:F0_vs_F1}, the central point on the boundary is the point of maximum throughput $\Omega$ or $\Omega_L$(marked in Fig.~\eqref{fig:heatT_Lspace}). Given that we now have the heat map for the throughput visually present, let's write the constraint of our throughput optimisation problem and proceed visually towards tackling the problem. 
Using theorem~\ref{theorem.equalactivation}, for a two-link quantum linear chain, the allowed region in L-space is given by (see Eq~\ref{eq:sumLi} in appendix~\ref{sec:proof_th_6.2}):
\begin{equation}
   \text{C1:} \ L_0 + L_1 \leq \ln{ \left( A^2 \frac{80}{33} \right) }
\end{equation}
Remember that this equation is the manifestation of the end-to-end fidelity constraint in L-space.
Also, let the constraint on the initial fidelity of the generated EPR pair for each link:
\begin{equation}
\begin{aligned}
    \text{C2:} \ L_0 &\leq L_0^{th} \\
    \text{C3:} \ L_1 &\leq L_1^{th}
\end{aligned}
\end{equation}
Now let's consider cases corresponding to each quadrant of the Fig.~\ref{fig:heatT_Lspace}.
\begin{itemize}
    \item \textbf{Case $I$:} For $L_0^{th} \leq \Omega_L$ and $L_1^{th} \leq \Omega_L$, Constraint C1 is always satisfied, so according to theorem~\ref{theorem.optimalthroughput}, the throughput is maximised while operating at the thresholds, i.e., choose $L_0 = L_0^{th}$ and $L_1 = L_1^{th}$.
    \item \textbf{Case $II$:} For $L_0^{th} < \Omega_L$ and $\Omega_L < L_1^{th}$, as seen in the Fig.~\ref{fig:heatT_Lspace}, the area under this quadrant consists of allowed as well as not-allowed region (i.e., satisfy or violate Constraint C1). Hence, we have the following two sub-cases:
    \begin{enumerate}
        \item \textbf{Sub-case I:} If it lies in the allowed region, then the Constraint C1 is satisfied by definition. Then similar to case $I$ choose $L_0 = L_0^{th}$ and $L_1 = L_1^{th}.$
        \item \textbf{Sub-case II:} If it does not lie in the allowed region, Constraint C1 is violated. Then to maximize the throughput choose $L_0 = L_0^{th}$ while $L_1$ on the boundary of the allowed region at $L_0^{th}.$
    \end{enumerate}
    \item \textbf{Case $III$:} For $\Omega_L < L_0^{th}$ and $L_1^{th} < \Omega_L$, variable reverse case of case $II$.

    \item \textbf{Case $IV:$} For $\Omega_L < L_0^{th}$ and $\Omega_L < L_1^{th}$, this quadrant is entirely outside the allowed region, and it violates Constraint C1. To maximise the throughput, we simply choose $L_0 = L_1 = \Omega_L$, which is the point of highest throughput.
\end{itemize}

\section{Calibration aware orchestration in general networks}
Having established the optimal solution for calibration-aware orchestration in a linear quantum network chain, we now explore the problem in a general network. To this end, we first extend QLO to a general quantum network by utilising the \emph{minimum selection rule} which serves as a baseline heuristic (Section~\ref{ssec:minimum_selection_rule}). We then propose a greedy quantum link orchestration heuristic for the calibration problem in general quantum networks (Section~\ref{ssec:greedy_qlo}).

\subsection{Minimum selection rule benchmark}\label{ssec:minimum_selection_rule}

\begin{proposition}[General overlapping-link calibration via minimum selection]
\label{prop:general-overlap}
Consider a quantum network with edge set $E$ and a set of $P$ source--destination pairs, 
each admitting at least one path $\pi_{sd}$. By Theorem~\ref{theorem.chain}, 
each path $\pi_{sd}$ admits an individually optimal allocation 
$L_{\{\pi_{sd},e\}}$ for every link $e \in \pi_{sd}$.

For any link $e \in E$, let $\Gamma(e)$ denote the set of paths that traverse $e$.
A feasible benchmark allocation for shared-link calibration is obtained by assigning
\[
    L_e \;=\; \min_{\pi_{sd} \in \Gamma(e)}\; L_{\{\pi_{sd}, e\}},
    \qquad \forall e \in E \text{ with } \Gamma(e) \neq \varnothing.
\]
Links that are not used by any path incur no overlap constraint.
\end{proposition}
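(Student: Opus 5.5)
The claim to be proved is that the minimum-selection allocation is feasible for every path. Concretely, for each path $\pi_{sd}$ we must show that the allocation $L_e$, restricted to the links of $\pi_{sd}$, satisfies two things: the per-link threshold constraints $L_e \le L_e^{th}$, and the path's end-to-end budget $\sum_{e\in\pi_{sd}} L_e \le \mathcal{L}_{\pi_{sd}}$. Optimality is not claimed; the allocation is only a benchmark. The whole argument rests on one fact: in L-space every constraint is a monotone upper bound, so decreasing any coordinate preserves feasibility.

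\textbf{Step 1 (per-path feasibility).} For a fixed path $\pi_{sd}$, Theorem~\ref{theorem.chain} (equivalently Algorithm~\ref{alg:QLO}, as argued in Remark~\ref{remark.explanation}) returns an allocation with two properties:
\begin{itemize}
\item $L_{\{\pi_{sd},e\}} \le L_e^{th}$ for every $e \in \pi_{sd}$;
\item $\sum_{e\in\pi_{sd}} L_{\{\pi_{sd},e\}} \le \mathcal{L}_{\pi_{sd}}$.
\end{itemize}
The one point needing care is case 3b. There the recursively assigned value must itself be at most the threshold and must exactly exhaust the remaining budget. This holds because the recursion only assigns $\Omega^\gamma_\pi$ to links with $L_k^{th} > \Omega^\gamma_\pi$. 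It also holds because $\Omega^{\gamma+1}_\pi$ is defined so that $j$ frozen links plus $k$ links at $\Omega^{\gamma+1}_\pi$ sum to $(j+k)\Omega_\pi = \mathcal{L}$.

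\textbf{Step 2 (componentwise domination).} By definition of the minimum, for every $e\in\pi_{sd}$ we have $L_e = \min_{\pi'\in\Gamma(e)} L_{\{\pi',e\}} \le L_{\{\pi_{sd},e\}}$, since $\pi_{sd}\in\Gamma(e)$. Summing over $e\in\pi_{sd}$ gives $\sum_{e\in\pi_{sd}} L_e \le \mathcal{L}_{\pi_{sd}}$, which is the end-to-end constraint. The threshold constraint follows from $L_e \le L_{\{\pi_{sd},e\}} \le L_e^{th}$. Translating back to F-space via $F_e = \tfrac14 + A e^{-L_e}$, a smaller $L_e$ means a higher initial fidelity. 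Hence every path meets its $F^{th}_{ete}$, and every link meets its $F^{th}_e$, because $F_{ete}$ in Eq.~\eqref{eq:end_to_end_fidelity} is increasing in each $F_i$.

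\textbf{Step 3 (consistency).} The allocation assigns a single value $L_e$, and hence a single activation time $a_e = L_e/\Gamma_e$, to each shared link. This is exactly what shared-link calibration requires: a physical link has one duty cycle. For links with $\Gamma(e)=\varnothing$ no path constraint involves $L_e$, so any choice, for instance $L_e^{th}$ or idling, is admissible; this justifies the final sentence of the proposition.

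\textbf{Main obstacle.} Steps 2 and 3 are immediate monotonicity arguments. The only real work is Step 1, where one must verify that the recursive case 3b of Theorem~\ref{theorem.chain} never assigns a value exceeding a link's threshold. I would handle this by induction on $\gamma$, using the monotonicity $\Omega^{\gamma+1}_\pi \ge \Omega^\gamma_\pi$ from Appendix~\ref{sec:optimal_point_proof}. The inductive step has to show that the terminal level assigns $\Omega^\gamma_\pi$ only to links whose thresholds still exceed it.
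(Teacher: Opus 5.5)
Your proof is correct and takes essentially the same route as the paper, whose justification in Remark~\ref{remark:multi-path-explanation} is exactly your Step~2: the minimum is componentwise dominated by each path's own QLO allocation, and in L-space both the per-link thresholds and the end-to-end budget are upper bounds, so feasibility carries over. The paper takes per-path feasibility from Theorem~\ref{theorem.chain} for granted, so your Step~1 is extra detail, and the planned induction on $\gamma$ is unnecessary because the recursion assigns $\Omega^\gamma_\pi$ only to links with $L_i^{th} > \Omega^\gamma_\pi$. Argue this from the theorem's case analysis rather than from the printed Algorithm~\ref{alg:QLO}, which has no branch that performs this case-2 assignment.
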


\begin{algorithm}[!t]
\caption{Minimum selection rule benchmark for shared-link calibration}
\label{alg:multi-path}
\begin{algorithmic}[1]
\REQUIRE Edge set $E$ with per-link thresholds $L_e^{th}$; 
a set of source--destination pairs with paths $\{\pi_{sd}\}$; 
per-path QLO allocations $L_{\{\pi_{sd},e\}}$ for each $e \in \pi_{sd}$
\ENSURE Global link assignment $\{L_e : e \in E\}$

\FORALL{$e \in E$}
  \STATE $\Gamma(e) \gets \{\pi_{sd} \mid e \in \pi_{sd}\}$
  \IF{$\Gamma(e) = \varnothing$}
    \STATE $L_e \gets L_e^{th}$
  \ELSE
    \STATE $L_e \gets \min_{\pi_{sd} \in \Gamma(e)} \; L_{\{\pi_{sd},e\}}$
  \ENDIF
\ENDFOR
\RETURN $\{L_e : e \in E\}$
\end{algorithmic}
\end{algorithm}

\begin{remark}[Explanation of the minimum selection rule]
\label{remark:multi-path-explanation}
Algorithm~\ref{alg:multi-path} assigns each link $e$ in the network a calibration threshold
$L_e$ such that no path $\pi_{sd}$ is forced beyond its individually optimal per-link
allocation $L_{\{\pi_{sd},e\}}$.

For each link $e$, the algorithm first identifies the set $\Gamma(e)$ of paths that traverse
that link. If $e$ is unused by all paths, no overlap constraint arises and the link is
assigned its default threshold $L_e^{th}$. If $e$ is shared by one or more paths, the
algorithm assigns
\[
L_e = \min_{\pi_{sd} \in \Gamma(e)} L_{\{\pi_{sd},e\}},
\]
thereby ensuring that no path is forced to operate beyond its feasible region.

Because this assignment is performed independently for each link, the procedure resolves
overlapping-link constraints without iteration or coordination across paths. While this
rule does not yield a globally optimal solution for general quantum networks, it provides
a simple, deterministic, and computationally inexpensive benchmark against which more
sophisticated orchestration heuristics can be evaluated.
\end{remark}

\begin{corollary}[Time complexity of Algorithm~\ref{alg:multi-path}]
\label{cor:multi-path-complexity}
Let $|E|$ be the number of links in the network and let $\mathcal{P}$ denote the set of
candidate paths. For each path $\pi \in \mathcal{P}$, let $|\pi|$ be its hop count, and define the
total path length
\[
M \;=\; \sum_{\pi \in \mathcal{P}} |\pi|.
\]
If each path is stored as a list of links, then Algorithm~\ref{alg:multi-path} can be implemented
in $O(M)$ time by scanning all path-link incidences once and maintaining, for each link $e$,
the running minimum of $L_{\{\pi,e\}}$ over all paths $\pi$ that use $e$.

In a naive implementation that, for each link $e$, checks all paths for membership using a
linear scan of the path list, the cost becomes $O(|E|\cdot M)$ in the worst case. If one further
bounds $|\pi| \le L_{\max}$ for all $\pi \in \mathcal{P}$ and $|\mathcal{P}| = P$, then $M \le P L_{\max}$,
yielding the upper bound $O(|E|\cdot P \cdot L_{\max})$.
\end{corollary}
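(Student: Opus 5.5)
The plan is to prove the three claims in the order stated: the $O(M)$ incidence-scan implementation, then the $O(|E|\cdot M)$ naive implementation, then the bound $M \le P L_{\max}$. Throughout I take each per-path QLO value $L_{\{\pi,e\}}$ as precomputed input, as in the \textbf{Require} line of Algorithm~\ref{alg:multi-path}. Its cost, $O(|\pi|^2)$ per path by Corollary~\ref{cor:qlo_complexity}, is therefore not charged here. I also assume links are indexed $1,\dots,|E|$, so an array indexed by $e$ gives $O(1)$ access.

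\emph{Efficient implementation.} Allocate an array $\mathrm{best}[e]$ initialised to $+\infty$ for every $e\in E$. Then iterate over every path $\pi\in\mathcal{P}$ and every link $e$ in its stored list, setting $\mathrm{best}[e]\gets\min(\mathrm{best}[e],L_{\{\pi,e\}})$. A final pass sets $L_e\gets L_e^{th}$ wherever $\mathrm{best}[e]=+\infty$ (this is exactly $\Gamma(e)=\varnothing$) and $L_e\gets\mathrm{best}[e]$ otherwise.

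\emph{Correctness.} This reduces to a loop invariant: after the scan, $\mathrm{best}[e]=\min_{\pi\in\Gamma(e)}L_{\{\pi,e\}}$. The reason is that each path--link incidence $(\pi,e)$ with $e\in\pi$ is visited exactly once, and $\min$ is associative and commutative. The output therefore coincides with Proposition~\ref{prop:general-overlap} and Algorithm~\ref{alg:multi-path}.

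\emph{Cost.} Each incidence costs $O(1)$, and there are $\sum_\pi|\pi|=M$ of them.

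\emph{Main obstacle.} The initialisation and final pass cost $O(|E|)$, so the honest bound is $O(|E|+M)$. I would handle this in one of two ways. One is to note that every link in the edge set under consideration is traversed by some candidate path, so $|E|\le M$. The other is to use a hash map keyed only on traversed links, with absent keys defaulting to $L_e^{th}$ on lookup, so that the work is $O(M)$. I would state the claim as $O(|E|+M)$ and observe that it equals $O(M)$ under either convention.

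\emph{Naive implementation.} For each link $e$, the algorithm tests membership of $e$ in every path by scanning that path's list. This costs $\sum_\pi O(|\pi|)=O(M)$ per link, plus $O(1)$ per path to update the running minimum, which is absorbed because $|\mathcal{P}|\le M$. Summing over the $|E|$ links gives $O(|E|\cdot M)$.

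\emph{Final bound.} With $|\pi|\le L_{\max}$ and $|\mathcal{P}|=P$, we get $M=\sum_\pi|\pi|\le P L_{\max}$. Substituting into the naive bound yields $O(|E|\cdot P\cdot L_{\max})$. This last step is routine.
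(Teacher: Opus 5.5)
Your proposal is correct and follows the same argument as the paper. The paper gives no separate proof: its justification is built into the statement, namely one pass over the $M$ path--link incidences with a running minimum per link, an $O(M)$ membership scan per link in the naive version, and $M \le P L_{\max}$.

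Your caveat that the honest bound is $O(|E|+M)$ is a valid sharpening, not a gap. Algorithm~\ref{alg:multi-path} assigns $L_e^{th}$ to every unused link, so the paper's $O(M)$ tacitly assumes either $|E| = O(M)$ or a sparse output representation, exactly as you observe.
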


\begin{remark}[Overall complexity with overlapping paths]
\label{remark:overall-complexity}
When extending the linear-chain orchestration of Theorem~\ref{theorem.chain} to a general
quantum network with $P$ source--destination paths, the overall procedure naturally decomposes
into two stages.

\begin{enumerate}
    \item \textbf{Per-path orchestration (Stage A).}
    Theorem~\ref{theorem.chain} (implemented via Algorithm~\ref{alg:QLO}) is applied
    independently to each path $\pi_{sd}$. Let $N_\pi$ denote an upper bound on the number
    of links in any path. By Corollary~\ref{cor:qlo_complexity}, the worst-case complexity per
    path is $O(N_\pi^2)$. Summing over all $P$ paths yields a total cost of
    \[
        O\bigl(P \cdot N_\pi^2\bigr).
    \]

    \item \textbf{Merging overlapping links (Stage B).}
    After computing the individual per-path allocations $L_{\{\pi_{sd},e\}}$, shared links
    are harmonised using Algorithm~\ref{alg:multi-path}. As discussed in
    Corollary~\ref{cor:multi-path-complexity}, a naive implementation that determines, for
    each link $e \in E$, the set $\Gamma(e)$ of paths using that link and computes the minimum
    allocation incurs a worst-case cost of $O(E \cdot P)$.
\end{enumerate}

Combining both stages, the overall running time is
\[
    O\bigl(P \cdot N_\pi^2\bigr) \;+\; O\bigl(E \cdot P\bigr).
\]
The first term captures the cost of independent per-path orchestration, while the second term
accounts for resolving shared-link constraints across paths. In networks with many links but
relatively short paths, the overlap resolution step may dominate; conversely, when paths are
long, the per-path orchestration cost can be the leading factor.
\end{remark}

\begin{corollary}[Reduction of orchestration complexity with $l$ shared links]
\label{cor:reduced-complexity-l-shared-links}
Let $l$ denote the number of links in a quantum network that are shared by more than one
source--destination path, as per Proposition~\ref{prop:general-overlap}. After independently
computing the per-path optimal allocations via Theorem~\ref{theorem.chain}, the global
orchestration problem reduces to resolving calibration constraints on only these $l$ shared
links.
\end{corollary}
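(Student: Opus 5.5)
The statement to establish is Corollary~\ref{cor:reduced-complexity-l-shared-links}: after independent per-path orchestration via Theorem~\ref{theorem.chain}, the global orchestration problem reduces to resolving calibration constraints on only the $l$ shared links. I would prove it by partitioning the edge set according to the path-incidence sets $\Gamma(e)$ introduced in Proposition~\ref{prop:general-overlap}. Write
\[
E = E_0 \cup E_1 \cup E_{\ge 2},
\]
where $E_0=\{e: \Gamma(e)=\varnothing\}$, $E_1=\{e: |\Gamma(e)|=1\}$ and $E_{\ge 2}=\{e: |\Gamma(e)|\ge 2\}$. By definition of $l$, we have $|E_{\ge 2}|=l$. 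The plan is to show that the allocations on $E_0$ and $E_1$ are already fully determined by earlier results and cannot conflict. Any residual global decision therefore lives only on $E_{\ge 2}$.

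\textbf{Links in $E_0$.} These carry no path constraint. The only restriction is the per-link bound of Lemma~\ref{lemma.initialfidelityvariation}, so by Theorem~\ref{theorem.optimalthroughput} they are set to $L_e=L_e^{th}$. This matches line 4 of Algorithm~\ref{alg:multi-path} and involves no choice.

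\textbf{Links in $E_1$.} Such a link appears in exactly one path $\pi$. The per-path QLO value $L_{\{\pi,e\}}$ from Theorem~\ref{theorem.chain} is then the unique candidate: the minimum in Proposition~\ref{prop:general-overlap} is taken over a singleton. That value satisfies both $L_e\le L_e^{th}$ and $\pi$'s end-to-end budget $\mathcal{L}_\pi$, and no other path's constraint involves $e$.

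The key observation is that the constraint system decouples. The global problem has per-link bounds $L_e\le L_e^{th}$ and per-path budgets $\sum_{e\in\pi}L_e\le\mathcal{L}_\pi$, and a variable $L_e$ with $e\in E_0\cup E_1$ appears in at most one budget constraint. Fixing the variables on $E_{\ge 2}$ therefore separates the problem into independent single-path subproblems. Each of these is exactly Problem~\ref{prob:problem1} with some link values frozen, which is the setting of case 3(b) of Theorem~\ref{theorem.chain}. This yields the stated reduction: the global problem becomes a search over the $l$ variables $\{L_e\}_{e\in E_{\ge 2}}$, with every other variable a deterministic function of them via QLO.

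\textbf{Main obstacle.} The difficulty lies in making ``reduces to'' precise. Lowering a shared link (e.g.\ by the minimum rule) frees budget on each path through it, so QLO on those paths would, in principle, re-raise their private links. The private allocations are therefore not fixed at the initial per-path values; they are functions of the shared-link values. I would handle this by stating the reduction as a projection: for fixed shared values, the optimal private values are given by Theorem~\ref{theorem.chain} applied to the residual budget
\[
\mathcal{L}_\pi-\sum_{e\in\pi\cap E_{\ge 2}}L_e .
\]
The decision dimension is then $l$. For the weaker reading matching Algorithm~\ref{alg:multi-path}, only the $l$ shared links change from their per-path values, so the merge step touches $l$ entries. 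I would state both readings explicitly.
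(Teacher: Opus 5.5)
Your proposal is correct. Your ``weaker reading'' is in fact the whole of the paper's proof. The paper works only at the level of the minimum selection rule of Proposition~\ref{prop:general-overlap}: links used by a single path keep their per-path QLO values and create no coupling, and each of the $l$ shared links is fixed by its own $\min_{\pi\in\Gamma(e)}L_{\{\pi,e\}}$, independently of the order in which links are processed. The problem thus collapses to $l$ independent selections with a uniquely determined outcome.

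Your main argument is different and stronger. Every variable on $E_0\cup E_1$ enters at most one path's objective and at most one budget constraint. So for fixed shared values, the actual optimisation $\max\sum_\pi T_\pi$ splits into independent single-path problems. Its decision dimension is therefore $l$, and the private links are recovered by QLO (water-filling) on the residual budgets $\mathcal{L}_\pi-\sum_{e\in\pi\cap E_{\ge 2}}L_e$.

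What your route buys is a meaning of ``reduces to'' that concerns the optimum rather than a heuristic. It also makes explicit what the paper's proof glosses over. Under the minimum rule the private links stay frozen at their original per-path values, so budget freed by lowering a shared link is not redistributed. Moreover, the residual problem on the $l$ shared variables remains coupled across paths. Hence ``no residual coupling'' and ``uniquely determined'' are properties of the heuristic, not of the global optimum. Conversely, the paper's route is immediate and gives a concrete feasible assignment, whereas yours leaves a coupled $l$-variable optimisation still to be solved.

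Two minor points:
\begin{itemize}
\item The residual single-path problem is not literally case~3(b) of Theorem~\ref{theorem.chain}, where frozen links sit at their own thresholds below the current optimal point. It is a fresh instance of Problem~\ref{prob:problem1} on the private links with reduced budget, which the same recursion handles.
\item Links in $E_0$ enter no path's throughput. Setting them to $L_e^{th}$ is therefore a convention (matching Algorithm~\ref{alg:multi-path}), not a consequence of Theorem~\ref{theorem.optimalthroughput}.
\end{itemize}
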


\begin{proof}
From Corollary~\ref{cor:number-of-cases}, the general orchestration problem over $P$ overlapping
paths and $N$ links admits a large combinatorial configuration space, arising from the joint
assignment of link thresholds across paths. However, by Proposition~\ref{prop:general-overlap},
all links that are used by exactly one path retain their individually optimal allocations and
introduce no coupling with other paths.

Consequently, only the $l$ links that are shared by multiple paths require coordination.
For each such link $e$, the minimum selection rule assigns a unique calibration threshold
\[
L_e = \min_{\pi_{sd} \in \Gamma(e)} L_{\{\pi_{sd},e\}},
\]
independently of the order in which other links are processed. Thus, the original global
orchestration problem collapses to $l$ independent selection operations, one per shared link,
with no residual combinatorial coupling between them.

This represents a drastic reduction in orchestration complexity: instead of exploring a
high-dimensional configuration space over all $N$ links, coordination is required only on
the $l \ll N$ shared links, and even there the solution is uniquely determined.
\end{proof}

\subsection{Greedy quantum link orchestration}
\label{ssec:greedy_qlo}

While the minimum selection rule provides a simple and conservative benchmark for resolving
overlapping-link constraints, it does not exploit the calibration budget freed on each path when
shared links are reduced below their path-optimal values. To address this limitation, we propose
\emph{Greedy Quantum Link Orchestration} (GRO), a heuristic that redistributes the freed budget
locally on each path while preserving feasibility across all paths.

The heuristic proceeds in two stages. First, each source--destination path is solved
independently using Theorem~\ref{theorem.chain}, yielding provisional per-path link allocations
$L_{e,\pi}^\star$. Second, conflicts on shared links are resolved by assigning each shared link
the minimum of its per-path optimal values, as in the minimum selection rule. The budget freed
by this conservative assignment is then reallocated greedily on each affected path, but only
over links that are not shared with other paths. In this way, GRO can improve throughput relative
to the benchmark while avoiding new inter-path dependencies.

\paragraph{Shared links and prioritisation.}
Let $S_e$ denote the set of paths that traverse link $e$. GRO operates on the set of shared links
\[
S \;=\; \{\, e \in E \mid |S_e| \ge 2 \,\}.
\]
To prioritise which shared links to process first, each $e \in S$ is assigned a weight
\[
w(e) \;=\;
\bigl( \max_{\pi \in S_e} L_{e,\pi}^\star
      - \min_{\pi \in S_e} L_{e,\pi}^\star \bigr)\,|S_e|.
\]
This \emph{harm index} captures both the disagreement between paths on the preferred calibration
level of $e$ and the number of paths affected by that disagreement. Shared links are processed
in descending order of $w(e)$.

\paragraph{Greedy reallocation.}
When a shared link $e$ is fixed to
\[
L_e = \min_{\pi \in S_e} L_{e,\pi}^\star,
\]
each affected path $\pi \in S_e$ incurs a reduction in its allocated budget relative to its
independent optimum. Let $\mathcal{L}_\pi$ denote the end-to-end budget of path $\pi$ implied by
the fidelity constraint. The residual budget
\[
\Delta_\pi \;=\; \mathcal{L}_\pi - \sum_{e' \in \pi} L_{e'}
\]
is then reallocated over the \emph{unshared} links of $\pi$, following the same principle as the
linear-chain optimal solution, with per-link caps given by $L_{e',\pi}^\star$. Since these links
are exclusive to $\pi$, this local reallocation does not interfere with any other path. The pseudocode for the greedy quantum link orchestration is given as Algorithm~\ref{alg:gro}.

\begin{remark}[Explanation of the GRO Algorithmic Steps]
\label{remark:gro-explanation}

Algorithm~\ref{alg:gro} implements the \emph{Greedy Quantum Link Orchestration (GRO)} heuristic.

The algorithm proceeds in a sequence of conceptually distinct steps:

\begin{enumerate}
\item \textbf{Path–link incidence construction.}
For each physical link $e \in E$, the algorithm first constructs the set
\[
S_e = \{\pi \in \mathcal{P} \mid e \in \pi\},
\]
which records all source--destination paths that traverse $e$. This identifies whether a link
is exclusive to a single path or shared by multiple paths, and thus whether it induces coupling
constraints.

\item \textbf{Initialization of non-shared links.}
Links that belong to exactly one path ($|S_e|=1$) are immediately assigned their independently
optimal per-path allocation $L_{e,\pi}^\star$. Since these links are not shared, this assignment
cannot violate any other path constraint and requires no coordination.

\item \textbf{Identification and prioritisation of shared links.}
The algorithm collects all shared links into the set
\[
S = \{ e \in E \mid |S_e| \ge 2 \}.
\]
Each shared link $e$ is assigned a \emph{harm index}
\[
w(e) =
\bigl( \max_{\pi \in S_e} L_{e,\pi}^\star
      - \min_{\pi \in S_e} L_{e,\pi}^\star \bigr)\,|S_e|,
\]
which quantifies both the disagreement among paths on the preferred calibration of $e$ and the
number of paths affected. Shared links are processed in descending order of $w(e)$ so that the
links inducing the largest aggregate loss are resolved first.

\item \textbf{Conservative resolution of shared links.}
For each shared link $e \in S$, the algorithm assigns
\[
L_e = \min_{\pi \in S_e} L_{e,\pi}^\star,
\]
which is identical to the minimum selection rule. This guarantees feasibility for all paths
using $e$, since no path is forced beyond its individually optimal allocation.

\item \textbf{Greedy local reallocation on affected paths.}
Fixing a shared link may reduce the total allocation available to the paths that traverse it.
For each affected path $\pi \in S_e$, the algorithm computes the residual budget
\[
\Delta_\pi = \mathcal{L}_\pi - \sum_{e' \in \pi} L_{e'}.
\]
This residual is then redistributed greedily over the \emph{unshared} links of $\pi$, i.e.,
\[
\pi_{\mathrm{un}} = \{ e' \in \pi \mid |S_{e'}| = 1 \},
\]
subject to the per-link caps $L_{e',\pi}^\star$ and the end-to-end constraint
$\sum_{e' \in \pi} L_{e'} \le \mathcal{L}_\pi$. Because these links are exclusive to $\pi$, this
reallocation cannot interfere with any other path.

\item \textbf{Termination.}
After all shared links have been processed, every link has a well-defined allocation. The
resulting assignment $\{L_e\}_{e \in E}$ is feasible for all paths and weakly dominates the
minimum selection benchmark by recovering locally usable calibration budget whenever possible.
\end{enumerate}

\end{remark}

\paragraph{Complexity and scope.}
Let $m = |S|$ be the number of shared links, and let $M=\sum_{\pi\in\mathcal{P}}|\pi|$ denote the
total path length. Sorting the shared links costs $O(m \log m)$. Constructing the sets $S_e$ and
performing the local reallocations can be implemented in time proportional to $O(M)$ with
appropriate bookkeeping. Hence, the overall complexity is
\[
O(m \log m + M).
\]
Although GRO does not guarantee global optimality, it improves upon the minimum selection
benchmark by attempting to recover locally usable calibration budget while maintaining feasibility
across all paths.

\begin{algorithm}[!t]
\caption{Greedy Quantum Link Orchestration (GRO) Heuristic}
\label{alg:gro}
\begin{algorithmic}[1]
\REQUIRE Candidate paths $\mathcal{P}$; for each $\pi \in \mathcal{P}$ its budget $\mathcal{L}_\pi$;
per-path optima $L_{e,\pi}^\star$ for all $\pi \in \mathcal{P}$ and all $e \in \pi$
\ENSURE Global link assignment $\{L_e : e \in E\}$

\STATE For each link $e \in E$, compute $S_e \gets \{\pi \in \mathcal{P} \mid e \in \pi\}$
\STATE Initialise $L_e$ as undefined for all $e \in E$

\STATE \textit{/* Initialise non-shared links at their independent optima */} 
\FOR{$e \in E$}
  \IF{$|S_e| = 1$}
    \STATE Let $S_e = \{\pi\}$; set $L_e \gets L_{e,\pi}^\star$
  \ENDIF
\ENDFOR

\STATE Build shared-link list $S \gets \{\, e \in E : |S_e| \ge 2 \,\}$
\FOR{$e \in S$}
  \STATE $w(e) \gets \Big(\max_{\pi\in S_e} L_{e,\pi}^\star - \min_{\pi\in S_e} L_{e,\pi}^\star\Big)\,|S_e|$
\ENDFOR
\STATE Sort $S$ in descending order of $w(e)$

\FOR{each link $e$ in $S$}
  \STATE \textit{/* Resolve the shared-link conflict conservatively */} 
  \STATE $L_e \gets \min_{\pi \in S_e} \; L_{e,\pi}^\star$

  \FOR{$\pi \in S_e$}
    \STATE $\pi_{\text{un}} \gets \{e' \in \pi \mid |S_{e'}| = 1\}$
    \STATE $\Delta_\pi \gets \mathcal{L}_\pi - \sum_{e' \in \pi} L_{e'}$
    \STATE Reallocate $\Delta_\pi$ over links in $\pi_{\text{un}}$ up to caps $L_{e',\pi}^\star$,
    while maintaining $\sum_{e' \in \pi} L_{e'} \le \mathcal{L}_\pi$
  \ENDFOR
\ENDFOR

\RETURN $\{L_e\}_{e \in E}$
\end{algorithmic}
\end{algorithm}

\paragraph{Simulation-based performance evaluation.}
The theoretical analysis of the GRO heuristic establishes feasibility and the qualitative
improvement over the minimum selection benchmark, but it does not quantify how close
GRO comes to the true optimum for realistic network instances.
A natural evaluation methodology proceeds as follows.

\textit{Network model and approaches.}
Consider a grid-based quantum network in which $n$ source-destination (SD) pairs
are drawn uniformly at random over the graph, with each pair assigned an individual
end-to-end fidelity threshold $F_{ete,\pi}^{th}$ drawn uniformly from a specified
fidelity range. For each value of $n$ and each random SD-pair combination, five
approaches can be evaluated and compared (summarised in tab.~\ref{tab:comparison}):
\begin{table}[t]
\centering
\caption{Comparison of calibration strategies}
\label{tab:comparison}
\setlength{\tabcolsep}{3pt}
\renewcommand{\arraystretch}{1.1}

\begin{tabular}{>{\centering\arraybackslash}p{0.13\columnwidth}
                >{\centering\arraybackslash}p{0.23\columnwidth}
                >{\centering\arraybackslash}p{0.23\columnwidth}
                >{\centering\arraybackslash}p{0.24\columnwidth}}
\hline
\textbf{Method} &
\makecell[c]{\textbf{Local thresholds} \\ $(F^{th}_e)$} &
\makecell[c]{\textbf{E2E thresholds} \\ $(F^{th}_{ete})$} &
\makecell[c]{\textbf{Shared link} \\ \textbf{treatment}} \\
\hline
\textbf{REF} & $\checkmark$ & $\times$ & $\times$ \\
\textbf{QLO} & $\checkmark$ & $\checkmark$ & $\times$ \\
\textbf{MIN/GRO} & $\checkmark$ & $\checkmark$ & $\checkmark$ \\
\textbf{NUM} & $\checkmark$ & $\checkmark$ & $\checkmark$ \\
\hline
\end{tabular}
\end{table}

\begin{enumerate}
    \item \textbf{REF} (Reference baseline): each link is operated at its individual
    hardware threshold $L_e^{th}$, with no end-to-end fidelity constraint enforced.
    This is an infeasible upper bound on throughput serving as a reference ceiling.

    \item \textbf{QLO} (per-path, independent): Theorem~\ref{theorem.chain} is applied
    independently to each path, with no coordination on shared links.
    In a general network, this assignment is physically infeasible since a shared link
    cannot simultaneously satisfy the conflicting optimal values assigned to it by
    different paths. The resulting throughput therefore constitutes a second infeasible
    upper bound, tighter than REF but still unachievable.

    \item \textbf{MIN} (minimum selection rule): the benchmark of
    Section~\ref{ssec:minimum_selection_rule}, which clamps each shared link to the
    minimum of the per-path QLO values and performs no further budget recovery.
    This is feasible by construction and serves as the conservative lower bound
    among the constrained approaches.

    \item \textbf{GRO} (greedy quantum link orchestration): Algorithm~\ref{alg:gro},
    which extends MIN by redistributing freed budget over unshared links on each
    affected path. GRO is feasible and weakly dominates MIN.

    \item \textbf{NUM} (numerical optimisation): a local numerical solver, such as
    sequential quadratic programming (SLSQP), initialised from the GRO solution as a
    warm start and applied to the original objective $\max \sum_\pi T_\pi$
    subject to the linear fidelity-budget constraints.
    NUM finds a local optimum of the true problem and provides a practical upper bound
    on the achievable constrained throughput, against which GRO can be compared.
\end{enumerate}
The principal metric is \emph{observed throughput}: the sum of throughputs over paths
whose delivered end-to-end fidelity satisfies the per-path threshold
$F_{ete,\pi}^{th}$, averaged over the random SD-pair combinations.

\textit{Expected ordering and interpretation.}
The five approaches admit the ordering
\[
    T^{\mathrm{REF}} \;\ge\; T^{\mathrm{QLO}} \;\ge\; T^{\mathrm{NUM}}
    \;\ge\; T^{\mathrm{GRO}} \;\ge\; T^{\mathrm{MIN}},
\]
where the first two inequalities reflect the progressive tightening of feasibility
constraints and the last two reflect the improvement of recovery over no recovery.
The practically relevant quantity is the GRO--NUM gap
$T^{\mathrm{NUM}} - T^{\mathrm{GRO}}$, which measures the throughput that GRO leaves
unrealised relative to the best locally achievable constrained solution.

\textit{Dependence on network size.}
As $n$ increases, the fraction of shared links grows and the unshared headroom
$H_\pi = \sum_{e \in \pi_{\mathrm{un}}} r_e$ available for GRO's greedy reallocation
shrinks. Consequently, the residual freed budget that GRO cannot redistribute increases
with $n$, and the GRO--NUM gap is expected to widen with network size.
At small $n$, where most links are unshared, GRO is expected to be essentially
optimal and near-indistinguishable from NUM. At larger $n$, the joint optimisation
performed by NUM provides a measurable advantage, quantifying the regime in which
a more sophisticated solver, would be warranted.

\chapter{Programmable Quantum Repeater Nodes}
\label{ch:programmable_repeaters}

With this chapter, we begin the quest into the third scope of this work (see Section~\ref{sec:scope_objective}). As a reminder, we will be focusing on the fourth research question (see Section~\ref{sec:research_questions}) defined for this work in this chapter.

\section{Motivation for programmability in quantum networks}
\label{sec:prog_motivation}

Scalability and flexibility in classical communication networks have been largely enabled by programmability. The introduction of software-defined networking (SDN) and protocol-independent data planes made it possible to decouple network control from hardware implementation, allowing centralised controllers to dynamically configure network behaviour \cite{mckeown2008openflow, kreutz2014software, bosshart2014p4}. In contrast, most existing quantum network proposals implicitly assume fixed-function repeater nodes, where the set of local quantum operations is hard-coded at design time.

This lack of programmability poses a fundamental limitation for the quantum internet. Quantum network protocols such as entanglement swapping, purification, and routing require nodes to perform different local operations depending on protocol stage, network conditions, and control-plane decisions. Moreover, realistic quantum hardware is subject to drift, noise, and device-dependent imperfections, which necessitate adaptive control and in situ calibration.

Motivated by these considerations, this chapter introduces a programmable abstraction for quantum repeater nodes\cite{kumar2026nvcenter}. The central idea is to expose the physically implementable local operations of a repeater node through an instruction-set interface that can be driven by a classical controller. This enables flexible deployment of quantum network protocols without modifying node hardware, and creates a bridge between network-layer control and physical-layer implementation.

\section{NV center based quantum repeater architecture}
\label{sec:nv_architecture}

NV-centers in diamond have emerged as a powerful solid-state platform for quantum information
processing and quantum networking. Over the past two decades, extensive research has established their capabilities for
entanglement distribution, quantum memory, and multi-qubit control. Bernien et al. \cite{bernien2013heralded} achieved heralded entanglement between NV-centers separated by three meters by simultaneous measurement of two photons, each of which is entangled with an NV-center,
and within the same group, Hensen et al. \cite{Hensen2015LoopholeFreeBell} reported a loophole-free violation of Bell’s inequality with electron spins
separated by 1.3 km. Kalb et al. \cite{kalb2017entanglement} further demonstrated entanglement distillation between solid-state quantum
network nodes.

Beyond experiments, theoretical proposals have outlined how NV-centers can serve as quantum repeaters combining
entanglement generation, storage, purification, and swapping \cite{childress2005fault, nemoto2016photonic}. These architectures envision entanglement swapping
via Bell-state measurements on pairs of qubits entangled with neighbouring nodes.
These results provide the foundation for viewing NV-centers as building blocks of large-scale entanglement distribution
systems.

Complementary to experiments distributing entanglement, a broad range of studies have characterised the capabilities achievable within a single NV-center, forming the relevant operational regime for the NV-center based quantum repeater architecture, which are as follows:
\begin{enumerate}
    \item \emph{Optical Initialisation and Readout}: Spin-selective optical pumping and spin-dependent fluorescence contrast enable electron spin initialisation and measurement \cite{jelezko2006single}.
    \item  \emph{Single-Qubit Control}: Coherent microwave-driven Rabi oscillations have demonstrated high-fidelity rotations of
the electron spin \cite{dobrovitski2013quantum}.
\item  \emph{Electron-Nuclear Coupling}: Hyperfine interactions facilitate coherent control and entanglement between the
electron spin and nearby nuclear spins \cite{childress2006coherent}.
\item \emph{Quantum Memory}: Nuclear spins have been used as long-lived quantum memories, maintaining coherence for
timescales up to minutes \cite{maurer2012room}.
\item  \emph{Error Correction and Repetitive Readout}: Neumann et al. \cite{neumann2010single} demonstrated repetitive readout of an electron
spin via CNOT-like gates to nuclear spins. Waldherr et al. \cite{waldherr2014quantum} further achieved quantum error correction within a
small register.
\item \emph{Multi-Qubit Registers}: The realisation of the universal control of nuclear spin registers
up to ten qubits \cite{bradley2019ten, abobeih2018one, taminiau2014universal}.
\end{enumerate}
These studies suggest that a single NV-center can serve as a programmable quantum processor with initialisation, manipulation, and readout of the electron spin, conditional logic between nuclear and electronic spins, and limited entanglement operations within the local register.
However, such systems do not inherently enable entanglement swapping between remote nodes without photonic interfaces.

Architecturally, field-programmable spin array designs aim to realise reconfigurable spin-based processors by tuning local parameters across NV arrays \cite{wang2023field}. At the network level, routing and entanglement distribution have been explored via centralised, software-defined-network-style control planes with global link-state knowledge \cite{pant2019routing} and via fully distributed, decentralised protocols \cite{chakraborty2019distributed}. Beyond these approaches, quantum-native control architectures have also been proposed, where the control plane itself is placed in superposition. For instance, \cite{caleffi2025quantum} introduces an entanglement-defined controller (EDC) that manages a quantum control plane with superposed network addresses, enabling quantum-native routing. While their architecture introduces superposition at the network layer, our framework keeps the controller classical and introduces superposition within the node, that is, in the nuclear-spin register, to achieve programmability, diagnostics, and LCU-type formulations.

\subsection{Quantum repeater versus quantum router}

Till now, we have only utilised the term quantum repeater in the context of quantum networking and quantum internet. We have also used the term routing only as a process done by a quantum repeater. However, the term \emph{quantum router} in addition to \emph{quantum repeater} does emerge in the literature, and they both are used to describe intermediate network elements, and are sometimes employed interchangeably in the literature. However, these terms correspond to distinct abstraction layers and serve different conceptual roles within a quantum network architecture. Clarifying this distinction is important for correctly interpreting the node model introduced in this chapter and in general awareness of the terminology used in the field while for relating it to the routing and orchestration mechanisms discussed earlier in the work.

A \emph{quantum repeater} denotes a physical network node as discussed in earlier chapters, whose primary purpose is to enable multi-hop entanglement distribution in the presence of loss, noise, and decoherence. At a functional level, a repeater is characterised by its ability to generate entanglement with neighbouring nodes, store quantum states in local memory, and perform local quantum operations such as entanglement swapping, purification, and measurement. These operations are inherently local and concern the manipulation of quantum states at a single network vertex. As such, the notion of a quantum repeater captures the \emph{capabilities} of a node, rather than its role in end-to-end path selection or network-wide coordination.

A \emph{quantum router}, by contrast, is a network-layer abstraction that builds upon repeater functionality by incorporating address-based forwarding and path selection. Quantum routers determine how quantum states or entangled resources are forwarded across a network toward a specified destination, coordinating multiple links and managing the interaction between classical and quantum communication planes. This role is analogous to routing in classical communication networks and requires additional control-plane mechanisms, such as destination awareness, forwarding state, and scheduling across competing demands. An explicit formulation of this perspective is provided in \cite{huberman2020quantum}, where routing functionality is introduced on top of entanglement-enabled links to support flexible, multi-destination quantum networking.

The focus of this chapter is on the \emph{quantum repeater node} rather than on routing functionality. Each programmable NV-center node is modelled as a quantum repeater equipped with local programmability: it performs entanglement swapping and controlled quantum operations on locally stored qubits under the supervision of a classical controller, but it does not make routing or forwarding decisions. Instead, the architecture developed here provides the physical and logical building blocks from which quantum routers can be constructed when combined with higher-layer routing, forwarding, and scheduling mechanisms, as discussed in Chapters~4--6.

Specifically, a single repeater node is modelled as comprising one optically interfaced electron spin and a register of nuclear spins. The electron spin acts as the data qubit on which network-relevant quantum operations are performed, while the nuclear-spin register functions as a local control program that selects and configures these operations. A classical controller issues instruction vectors that determine the preparation of the nuclear register and the execution of the corresponding electron-spin operations. This separation reflects the conceptual distinction between data-plane and control-plane functionality familiar from classical networking, but is realised here entirely within a single quantum device rather than through distinct hardware modules.

By adopting this abstraction, the work isolates node-level quantum programmability from network-level routing concerns. This separation is consistent with the cross-layer design philosophy developed throughout the work: routing algorithms, calibration-aware link operation, and node architecture are designed to interact through well-defined interfaces, while remaining analytically and conceptually separable.

\section{Instruction set architecture for quantum repeaters}
\label{sec:isa}
\subsection{ISA Abstraction and System Model}
\label{sec:isa_model}

Consider an NV-center node with one electron spin $E_0$ and $n$ nuclear spins $N_1, \dots, N_n$, which together form a control register. 
Then the Hilbert space is given by
\begin{equation}
    \mathcal{H} = \mathcal{H}_{E_{0}} \otimes \left( \bigotimes_{i=1}^{n} \mathcal{H}_{N_i} \right),
\end{equation}
with $\dim (\mathcal{H}) = 2^{1+n}$. Considering a generalised model of a node having many electron spins, here, $0$ signifies the index of the electron spin involved.

\begin{figure}[tb]
\centering
\includegraphics[width=0.8\columnwidth ]{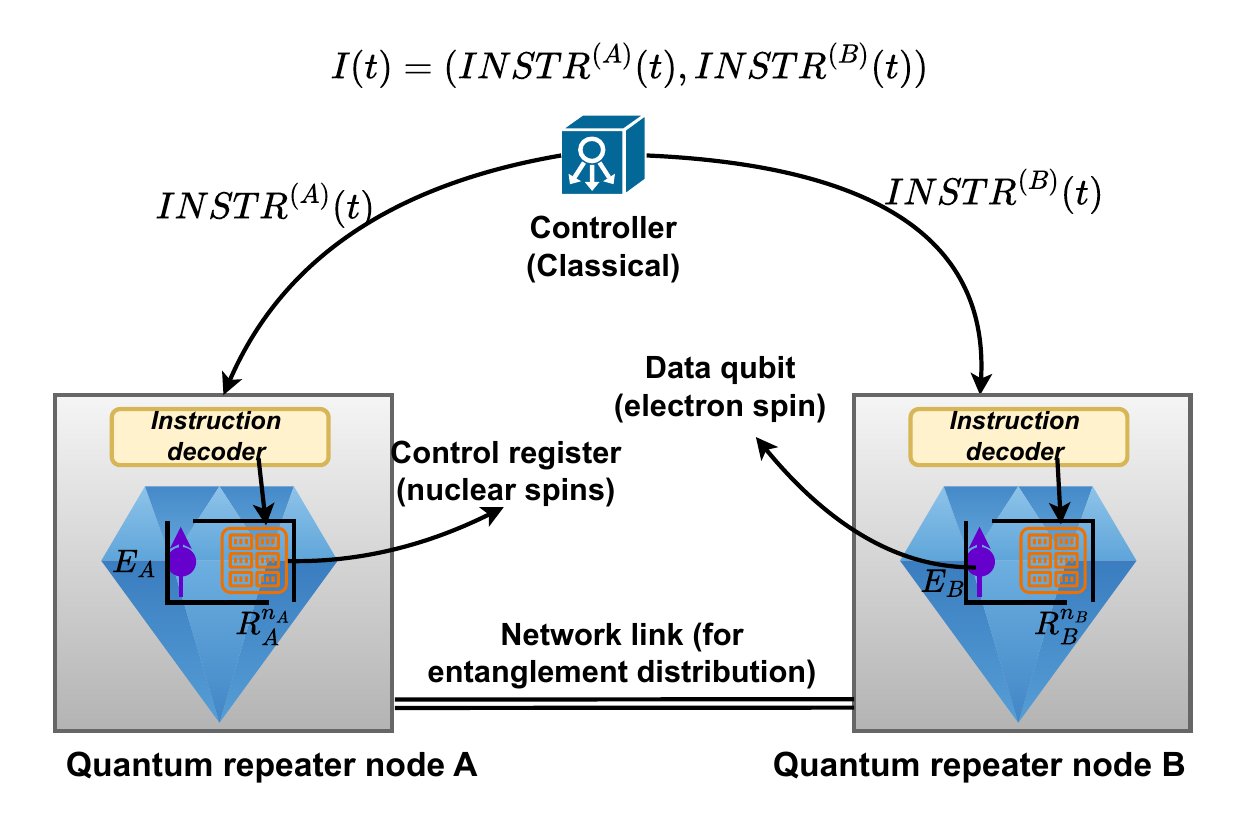}
\caption{Two-node example. A classical controller broadcasts the instruction vector \(\mathbf{I}(t)\). Each node decodes it into nuclear register preparation and microwave (MW) or radio-frequency (RF) pulse sequences that realise the selected electron spin operation.}
\label{fig:isa}
\end{figure}

A quantum network comprises $M$ such nodes, each controlled by a centralised classical controller, as shown in Fig.~\ref{fig:isa} for a two-node example. 
At any time-step $t$, the controller broadcasts an instruction vector
\begin{equation}
\mathbf{I}(t) = \big(\INSTR^{(1)}(t), \dots, \INSTR^{(M)}(t)\big),
\end{equation}
where each element $\INSTR^{(m)}(t)$ specifies the operation to be executed by node $m$. The \emph{controller-driven} execution model implicitly assumes a time-slotted
schedule: each instruction round \(t\) corresponds to a network-wide control slot during which all nodes complete the operations specified by \(\mathbf{I}(t)\) before proceeding to the next round. While asynchronous
execution is theoretically possible, synchronisation ensures that all spin operations complete within the coherence time of the most short-lived qubit involved in the protocol, thereby preserving end-to-end or protocol fidelity. This time-slotted abstraction aligns with the architecture advocated in \cite{beauchamp2025modular}, where slotted control
is shown to be beneficial for near-term quantum networks.
Each node contains a \emph{local decoder} that interprets its incoming instruction and configures both the nuclear register and the control pulse sequence required to enact the instruction.

\paragraph*{Instruction format}
Let $n$ denote the number of nuclear control qubits assigned to electron spin qubit $E_0$, so the register exposes $2^n$ addresses. Each instruction sent to an electron spin has the structure
\begin{equation}
\begin{aligned}
\INSTR^{(0)} = \big(
&\textsf{OPCODE},\;
 \textsf{PARAMS},\\
&\textsf{PATTERN}\subseteq\{0,\dots,2^n\!-\!1\},\;
 \textsf{MODE}
\big),
\end{aligned}
\label{eq:instr_def}
\end{equation}
where:
\begin{itemize}
\item \textsf{OPCODE} selects a primitive gate, e.g.\ $X$, $Z$, $H$, $R_y(\theta)$, $\text{CNOT}(E\!\to\!N_j)$, or $\text{MEASURE}$;
\item \textsf{PARAMS} carries any continuous parameters such as rotation angles or phases, and 
or addressing information for multi-qubit gates, such as the identifiers of control and target qubits (\texttt{control = $E_i$}, \texttt{target = $E_j$});
\item \textsf{PATTERN} identifies the subset of nuclear configurations that enable the operation;
\item \textsf{MODE}\,$\in\{\textsf{deterministic},\textsf{coherent}\}$ specifies the mode of programmability of nuclear register.
\end{itemize}

When \textsf{MODE} is \textsf{deterministic}, the local decoder initialises the nuclear register into one configuration from the set defined by \textsf{PATTERN}, producing a deterministic operation. 
When \textsf{MODE} is \textsf{coherent}, the decoder prepares a coherent superposition over the nuclear register configurations, allowing the corresponding electron spin operations to occur in superposition within the same execution cycle.

Physically, these logical instructions are realised through local microwave (MW) and radio-frequency (RF) control fields. 
Each node includes classical electronics that generate the MW pulses required for manipulating the electron spin and the RF pulses used to drive nuclear-spin transitions. 
The decoder configures these pulse generators according to $\INSTR^{(0)}(t)$ so that the required initialisation and controlled electron operation are performed. 
In this way, the \emph{instruction-set abstraction} encapsulates both the logical operation and its hardware realisation through MW/RF control at each node.

\paragraph*{Execution semantics}
An instruction acting in \textsf{coherent} mode realizes a conditional unitary of the form \cite{nielsen1997programmable}
\begin{equation}
U^{(0)} = \sum_{a_i\in \textsf{PATTERN}} |a_i\rangle\!\langle a_i|_N \otimes U_{E_0}(a_i),
\end{equation}
where $U_{E_0}(a_i)$ corresponds to the primitive selected by \textsf{OPCODE}$,$ parameterized by \textsf{PARAMS}. In \textsf{deterministic} mode the same instruction specializes to a single selected branch \(U^{(0)} = \ket{a}\!\bra{a}_N \otimes U_E(a)\) for some \(a \in \textsf{PATTERN}\).

The global network evolution at time-step $t$ is then
\begin{equation}
U_{\text{network}}(t) = \bigotimes_{m=1}^M U^{(m)}_{\INSTR^{(m)}(t)},
\end{equation}
and a complete protocol of $T$ instruction rounds is represented as
\begin{equation}
U_{\text{protocol}} = U_{\text{network}}(T)\cdots U_{\text{network}}(1).
\end{equation}

While the instruction format presented here is specifically designed for NV center-based quantum repeaters with the architecture described in Section~\ref{sec:nv_architecture}, the core abstraction principles are broadly applicable. The instruction set distinguishes between:
\begin{itemize}
    \item \textbf{Architecture-independent operations:} entanglement generation, swapping, measurement (applicable to any quantum repeater)
    \item \textbf{Architecture-specific details:} register addressing, timing parameters, control mechanisms (require adaptation for different platforms)
\end{itemize}
For other quantum repeater architectures (trapped ions, quantum dots, atomic ensembles), the same high-level operation types would be supported, but specific encoding and timing parameters would differ based on hardware characteristics. The abstraction layer (Section~\ref{sec:controller_driven}) is designed to accommodate such variations through architecture-specific backend implementations.
\subsection{Deterministic Register Control}
\label{sec:deterministic_control}

In the deterministic mode defined by the ISA, the nuclear-spin register serves as a classical control program while the electron spin functions as the data qubit. 
The controller initialises the nuclear register into a specific configuration selected by the instruction’s \textsf{PATTERN} field (one of the allowed basis states), and the corresponding electron spin operation is applied according to the \textsf{OPCODE} and \textsf{PARAMS}. 
This realises classical programmability, where each nuclear configuration \emph{deterministically} selects a single operation on the electron spin.

The required conditional gates, such as controlled rotations or CNOT operations, are implemented through the same MW and RF pulse sequences described in Section~\ref{sec:isa}. 
To illustrate, consider a node consisting of two nuclear spins ($N_1$, $N_2$) and one electron spin ($E_0$).  
Before instruction execution, the joint state is initialised as
\begin{equation}
\ket{\Psi_{\text{initial}}} =
\ket{0}_{N_1} \ket{0}_{N_2} \otimes
\bigl( \alpha \ket{0}_{E_0} + \beta \ket{1}_{E_0} \bigr),
\label{eq:deterministic_initial}
\end{equation}
where $\ket{0}_{N_1} \ket{0}_{N_2}$ denotes one of the four basis configurations of the nuclear register.  
The action taken by the node depends deterministically on this configuration, as summarised below.

\begin{figure}[tb]
    \centering
    \includegraphics[width=0.8\columnwidth]{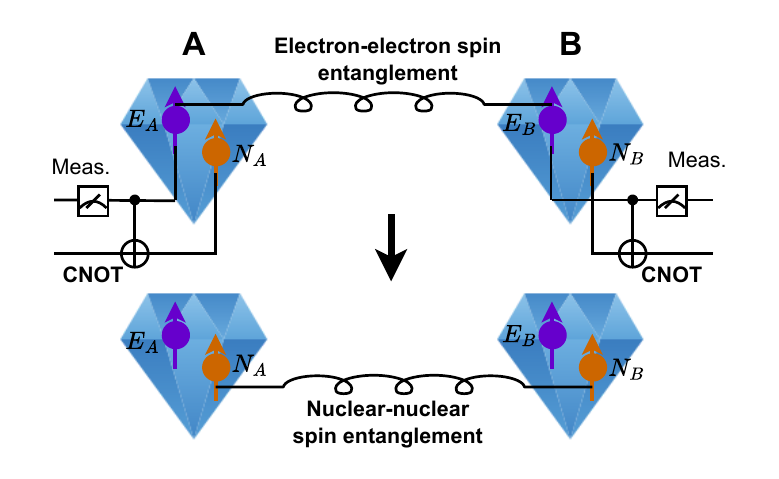}
    \caption{Entanglement transfer from electron-electron spin to nuclear-nuclear spin in neighbouring NV-center nodes.}
    \label{fig:NV_entanglement_transfer}
\end{figure}

\begin{enumerate}
    \item \textbf{$N_1=0, N_2=0$ (Idle)}
    
    No operation is applied to the electron spin:
    \begin{equation}
    \ket{\Psi_{\text{after}}} = 
    \ket{0}_{N_1}\ket{0}_{N_2} \otimes 
    \bigl( \alpha \ket{0}_{E_0} + \beta \ket{1}_{E_0} \bigr).
    \end{equation}

    \item \textbf{$N_1=0, N_2=1$ ($X$ gate)}
    
    The decoder triggers a bit-flip operation on the electron:
    \begin{equation}
    \ket{\Psi_{\text{after}}} = 
    \ket{0}_{N_1}\ket{1}_{N_2} \otimes 
    \bigl( \alpha \ket{1}_{E_0} + \beta \ket{0}_{E_0} \bigr).
    \end{equation}
    
    \item \textbf{$N_1=1, N_2=0$ ($R_y(\theta)$ rotation)}
    
    A rotation about the $y$-axis is applied to the electron spin:
    \begin{equation}
    \ket{\Psi_{\text{after}}} = 
    \ket{1}_{N_1}\ket{0}_{N_2} \otimes 
    \Bigl( \cos\frac{\theta}{2}\ket{0}_{E_0} + \sin\frac{\theta}{2}\ket{1}_{E_0} \Bigr),
    \end{equation}
    corresponding to a change of measurement basis or purification step.

    \item \textbf{$N_1=1, N_2=1$ ($\text{CNOT}_{{E_0}\rightarrow A}$)}
    
    For this configuration, the controller instructs a CNOT operation with the electron spin as control and an ancillary nuclear spin $A$ (say) as target.  
    The initial state, including $A$, is
    \begin{equation}
    \ket{\Psi_{\text{initial}}} =
    \ket{0}_{N_1}\ket{0}_{N_2} \otimes 
    \bigl( \alpha \ket{0}_{E_0} + \beta \ket{1}_{E_0} \bigr) \otimes \ket{0}_A.
    \end{equation}
    After applying $\text{CNOT}_{{E_0}\rightarrow A}$,
    \begin{equation}
    \ket{\Psi_{\text{after}}} =
    \ket{1}_{N_1}\ket{1}_{N_2} \otimes 
    \bigl( \alpha \ket{0}_{E_0}\ket{0}_A + \beta \ket{1}_{E_0}\ket{1}_A \bigr),
    \end{equation}
    transferring the electron’s entanglement to the ancillary nuclear spin for long-lived storage as illustrated in Fig.~\ref{fig:NV_entanglement_transfer} and discussed below. 
\end{enumerate}
Figure~\ref{fig:NV_entanglement_transfer} schematically illustrates the corresponding procedure in a realistic network setting, in which entanglement initially established between electron spins is coherently transferred to nuclear spins. In a distributed scenario, the protocol proceeds as follows.

Consider two neighbouring NV centres, denoted by nodes A and B, whose electron spins are prepared in a maximally entangled Bell state as shown in Fig.~\ref{fig:NV_entanglement_transfer}:
\begin{equation}
    \ket{\Psi_{E_A E_B}} = \frac{1}{\sqrt{2}}\bigl( \ket{0}\ket{0} + \ket{1}\ket{1} \bigr)_{E_A E_B}.
\end{equation}
Assuming that the nuclear spins at both nodes are initialised in the state $\ket{0}$, the combined electron–nuclear system is described by
\begin{equation}
    \ket{\Psi_{\text{initial}}}
    = \frac{1}{\sqrt{2}}\bigl( \ket{0}_{E_A}\ket{0}_{E_B} + \ket{1}_{E_A}\ket{1}_{E_B} \bigr)
    \otimes \ket{0}_{N_A}\ket{0}_{N_B}.
\end{equation}

Next, local controlled-NOT (CNOT) gates are applied independently at both nodes, using the electron spins as control qubits and the nuclear spins as target qubits, i.e.,
\begin{equation}
    \mathrm{CNOT}_{E \rightarrow N} \qquad \text{(applied at nodes A and B)}.
\end{equation}
As a result, the joint state of the system evolves to
\begin{equation}
    \ket{\Psi_{\text{after CNOT}}}
    = \frac{1}{\sqrt{2}}\bigl(
    \ket{0}_{E_A}\ket{0}_{N_A} \otimes \ket{0}_{E_B}\ket{0}_{N_B}
    + \ket{1}_{E_A}\ket{1}_{N_A} \otimes \ket{1}_{E_B}\ket{1}_{N_B}
    \bigr).
\end{equation}

Finally, a projective measurement\footnote{Even in the absence of explicit measurement on the electron spins, the nuclear spins remain entangled. The measurement is included here solely for clarity of presentation.} is performed on the electron spins. Conditioning on the measurement outcome, the nuclear spins are left in an entangled Bell state:
\begin{equation}
    \ket{\Psi_{N_A N_B}} = \frac{1}{\sqrt{2}}\bigl( \ket{0}_{N_A}\ket{0}_{N_B} + \ket{1}_{N_A}\ket{1}_{N_B} \bigr).
\end{equation}

In the discussion above, this transfer process is emulated locally by applying an entangling gate between the electron spin and a nuclear spin under the control condition $N_1 = 1$ and $N_2 = 1$.

\begin{figure}[tb]
    \centering
    \includegraphics[width=0.65\columnwidth]{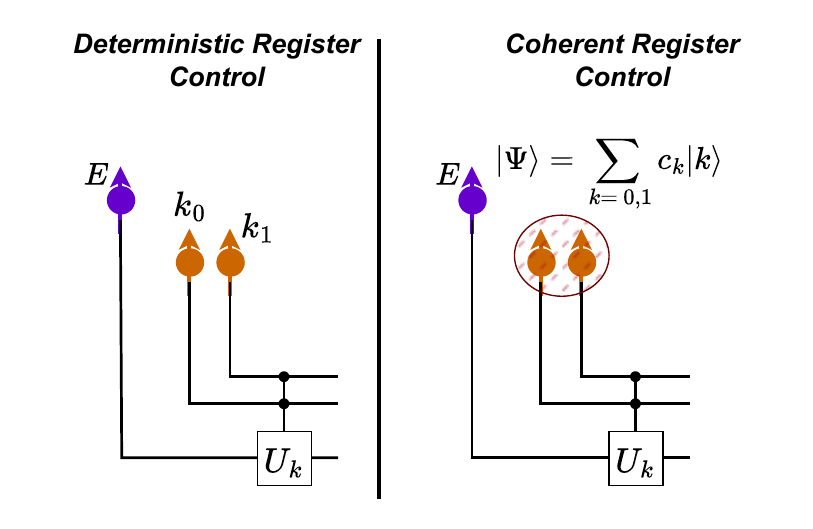}
    \caption{Programmability modes. Deterministic mode initialises the nuclear register in a basis state that selects one operation. Coherent mode prepares and reads the register in a rotated basis to enact a linear combination of operations on the electron spin.}
    \label{fig:deterministic_coherent_control}
\end{figure}

\subsection{Coherent Register Control}
\label{sec:coherent_control}

Building upon the deterministic case in Section~\ref{sec:deterministic_control}, we now extend to \emph{coherent register control}, where the nuclear spins are prepared in superposition rather than fixed classical configurations. In this mode, the nuclear register acts as a quantum address register, coherently selecting among multiple electron-spin side unitaries within a single execution cycle.

\paragraph*{Superposed nuclear register} Consider again two nuclear spins $N_1$ and $N_2$ prepared independently in arbitrary superpositions:
\[
|\Psi_{N_1 N_2}\rangle = 
\big(\alpha_0 |0\rangle + \alpha_1 |1\rangle\big)_{N_1}
\otimes
\big(\beta_0 |0\rangle + \beta_1 |1\rangle\big)_{N_2},
\]
which can be written explicitly as
\begin{equation}\label{eq:N1N2_superposition}
|\Psi_{N_1 N_2}\rangle
= \alpha_0 \beta_0 |00\rangle
+ \alpha_0 \beta_1 |01\rangle
+ \alpha_1 \beta_0 |10\rangle
+ \alpha_1 \beta_1 |11\rangle,
\end{equation}

where the amplitudes $\alpha_i \beta_j$, $\forall \ i, \ j \in \{0, 1\}$ represent the probability amplitudes of selecting nuclear bits in basis $|ij\rangle$. Now, each computational basis configuration $\ket{N_1 N_2}$ corresponds to a deterministic operation from Section~\ref{sec:deterministic_control}. We represent these collectively by a controlled operation \cite{nielsen1997programmable}

\begin{equation}\label{eq:U_router_alt}
    U_{\text{repeater}} = |00\rangle\!\langle 00| \otimes I_{E_0} + |01\rangle\!\langle 01| \otimes U_1 + |10\rangle\!\langle 10| \otimes U_2 + |11\rangle\!\langle 11| \otimes U_3,
\end{equation}
which can be written compactly as follows,
\begin{equation}\label{eq:U_router}
U_{\mathrm{repeater}} = 
\sum_{i,j \in \{0,1\}} |ij\rangle\!\langle ij| \otimes U_{ij},
\end{equation}
where
\begin{equation}\label{eq:basis_unitaries}
U_{00}=I, \quad
U_{01}=X, \quad
U_{10}=R_y(\theta), \quad
U_{11}=\text{CNOT}_{E_{0}\to A}.
\end{equation}
Let the initial state of the electron spin be:
\begin{align}
|\psi_{E_0}\rangle = \gamma_0 |0\rangle + \gamma_1 |1\rangle,
\end{align}
The joint initial state of the electron and the two nuclear spins is:
\begin{equation}
|\Psi_{\text{initial}}\rangle = |\Psi_{N_1 N_2} \rangle \otimes |\psi_{E_0}\rangle.
\end{equation}
Using Eq.~\eqref{eq:N1N2_superposition}:
\begin{equation}
     |\Psi_{\text{initial}}\rangle = (\alpha_0 \beta_0 |00\rangle + \alpha_0 \beta_1 |01\rangle + \alpha_1 \beta_0 |10\rangle + \alpha_1 \beta_1 |11\rangle) \otimes |\psi_{E_0}\rangle 
\end{equation}
Using Eq.~\eqref{eq:U_router_alt}, on applying the controlled operation:
\begin{equation}
|\Psi_{\text{after}} \rangle = U_{\text{repeater}} | \Psi_{\text{initial}} \rangle
\end{equation}

\begin{align}\label{eq:entangled_router}
\boxed{%
\begin{aligned}
|\Psi_{\text{after}}\rangle = {} & \alpha_0 \beta_0 |00\rangle \otimes I_{E_0} |\psi_{E_0}\rangle \\
&+ \alpha_0 \beta_1 |01\rangle \otimes X |\psi_{E_0}\rangle \\
&+ \alpha_1 \beta_0 |10\rangle \otimes R_y(\theta) |\psi_{E_0}\rangle \\
&+ \alpha_1 \beta_1 |11\rangle \otimes \text{CNOT}_{E \to A} |\psi_{E_0}\rangle
\end{aligned}}
\end{align}

Equation~\eqref{eq:entangled_router} shows that the repeater is now placed in a coherent superposition of performing different electron-side operations, weighted by the amplitudes of the nuclear register. This is the key distinction from deterministic control, where only one branch is active per run.

\paragraph*{Generalized formulation}
For an $n$-qubit nuclear register with $K=2^n$ configurations $\{\ket{k}\}$ and associated unitaries $\{U_k\}$, the general controlled operation is (Eq.~\eqref{eq:U_router})
\begin{equation}\label{eq:U_r_generalised}
U_{\mathrm{repeater}} = \sum_{k=0}^{K-1} \ketbra{k} \otimes U_k,
\end{equation}
acting on a register prepared in superposition
\begin{equation}
\ket{\psi_N} = \sum_{k=0}^{K-1} c_k \ket{k},
\qquad \sum_k |c_k|^2 = 1.
\end{equation}
The post-operation state is then

\begin{equation}\label{eq:qaddr_general}
U_{\mathrm{repeater}}\big(\ket{\psi_N}\otimes\ket{\psi_{E_0}}\big)
= \sum_{k=0}^{K-1} c_k\, \ket{k}\otimes U_k\ket{\psi_{E_0}}.
\end{equation}
Measuring the nuclear register in the computational basis collapses the system to one branch $U_k$, reproducing classical programmability. The advantage of coherent control emerges only when the nuclear register is measured in a rotated basis, producing interference between the electron-side operations as shown in Section~\ref{sec:diagnostics}.

\section{Controller-driven quantum network operation}
\label{sec:controller_driven}

One of the most common protocols in quantum networks is entanglement purification. In this section, we showcase the typical implementation of the ISA by using the example of the BBPSSW purification protocol~\cite{bennett1996purification}.

\begin{figure}[tb]
    \centering
    \includegraphics[width=0.8\columnwidth]{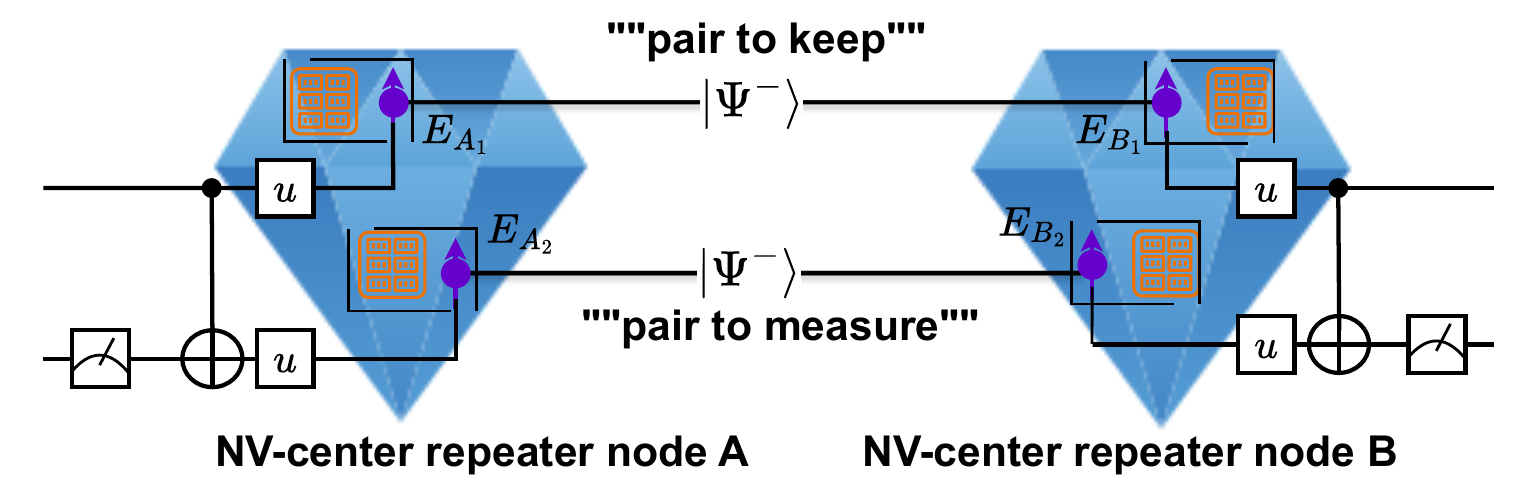}
    \caption{An example: BBPSSW purification protocol in NV-center nodes.}
    \label{fig:isa_purification}
\end{figure}

Consider two NV-center nodes, $A$ and $B$, that each hold two entangled electron-spin pairs shared between them through a noisy quantum channel, as shown in Fig.~\ref{fig:isa_purification}:
\begin{itemize}
    \item Pair~1: $(E_{A1}, E_{B1})$: the pair to keep.
    \item Pair~2: $(E_{A2}, E_{B2})$: the pair to measure.
\end{itemize}

The controller sends out a vector of structured instructions to both nodes to execute the Bennett \emph{et al.} purification protocol. Each instruction to electron-spin $E_e$ at node $m$ follows the format
\[
\INSTR^{(m, e)} = \big(\textsf{OPCODE},\; \textsf{PARAMS},\; \textsf{PATTERN},\; \textsf{MODE}\big),
\]
where in this case \textsf{MODE}=\textsf{deterministic}. The local decoder at each node interprets the received instruction to configure the relevant electron and nuclear spins and to generate the corresponding MW/RF pulse sequence as summarised in table~\ref{tab:BBPSSW_Uk}. The protocol proceeds as follows:
\newcommand{\colI}{0.14\columnwidth}
\newcommand{\colO}{0.20\columnwidth}
\def\colD{0.58\columnwidth}

\begin{table}[t]
\centering
\caption{Instruction set \(U_k\) sufficient to implement the BBPSSW protocol.}
\label{tab:BBPSSW_Uk}
\renewcommand{\arraystretch}{1.1}
\setlength{\tabcolsep}{4pt}
\footnotesize
\begin{tabular}{p{\colI} p{\colO} p{\colD}}
\hline
\textbf{Index} (\textsf{PATTERN})
& \textbf{Operation} (\textsf{OPCODE})
& \textbf{Description} \\
\hline
000 & \(I\)                & Idle or no operation \\
001 & \(X\)                & Bit flip (Pauli \(X\)) \\
010 & \(Y\)                & Bit and phase flip (Pauli \(Y\)) \\
011 & \(Z\)                & Phase flip (Pauli \(Z\)) \\
100 & \(\mathrm{CNOT}\)    & Two-qubit entangling gate \\
101 & \(\mathrm{MEASURE}\) & Projective measurement in the computational basis \\
\hline
\end{tabular}
\end{table}

\begin{enumerate}
    \item \textbf{Bilateral twirling to form rotationally symmetric Werner states.}
    The controller issues
    \[
    \mathbf{I}(t) = \Big(\INSTR^{(A,e)}(t),\,\INSTR^{(B,e)}(t)\Big),
    \]
\[
\begin{aligned}
\Rightarrow\;
\mathbf{I}(t)
&= \Big(
   \big(\INSTR^{(A, 0)}(t),\, \INSTR^{(A, 1)}(t)\big), \\
&\qquad
   \big(\INSTR^{(B, 0)}(t),\, \INSTR^{(B, 1)}(t)\big)
  \Big).
\end{aligned}
\]

    where each instruction has
\begin{equation}
\begin{aligned}
\textsf{OPCODE} &= \textsf{rand}\{I, X, Y, Z\} = u (say), \\
\textsf{PARAMS} &= \emptyset, \\
\textsf{PATTERN}   &= 000, \\
\textsf{MODE}   &= \textsf{deterministic}.
\end{aligned}
\label{eq:pauli_fields}
\end{equation}

    Both nodes apply the same random single-qubit rotation $u$ to their qubits, effectively depolarising each pair into a rotationally symmetric Werner state.

    \item \textbf{Bilateral CNOT from “pair to keep” to “pair to measure”.}
    The controller issues
    \[
\begin{aligned}
\INSTR^{(A)}(t{+}1)
&= \big(
   \textsf{CNOT},\;
   \textsf{PARAMS}=(E_{A1}\!\to\!E_{A2}), \\
&\qquad
   100,\;
   \textsf{deterministic}
  \big).
\end{aligned}
\]

and similarly for node $B$ as $E_{B1}\!\to\!E_{B2}$

so that both nodes perform a CNOT with the keep-pair qubit as control and the measure-pair qubit as target.

    \item \textbf{Local measurement of target qubits.}
    The controller issues
\[
\begin{aligned}
\INSTR^{(A)}(t{+}2)
&= \big(
   \textsf{MEASURE},\;
   \textsf{PARAMS}=(E_{A2}, \\
&\qquad\textsf{basis}=Z),\;
   101,\;
   \textsf{deterministic}
  \big).
\end{aligned}
\]

    and similarly for node $B$ on $E_{B2}$. Each node measures its target qubit in the computational basis and sends the classical outcome to the other node.

    \item \textbf{Parity check and sifting.}
    If the measurement outcomes are identical (even parity), the control pair $(E_{A1}, E_{B1})$ is kept; otherwise, it is discarded. This sequence is iterated over surviving pairs to progressively raise their fidelity.
\end{enumerate}

The instruction set required for this protocol thus consists of the primitives
\[
\{\textsf{PAULI},\, \textsf{CNOT},\, \textsf{MEASURE}\},
\]
which correspond to the essential elements of local node operation under controller supervision.

Let the corresponding local quantum operation at node $m$ be denoted by $\mathcal{E}^{(m)}_{\textsf{OPCODE}}$. For $\textsf{PAULI}$ and $\textsf{CNOT}$, the operation is unitary:
\begin{equation}
\mathcal{E}^{(m)}_{\textsf{OPCODE}}(\rho)
= U^{(m)}_{\textsf{OPCODE}}\rho\,U^{(m)\dagger}_{\textsf{OPCODE}},
\label{eq:Emk_def_new}
\end{equation}
where $U^{(m)}_{\textsf{OPCODE}} \in \{I,X,Y,Z,\mathrm{CNOT}\}$ acts on the addressed qubit(s).

For $\textsf{MEASURE}$ in the computational basis,
\begin{equation}
\begin{aligned}
\mathcal{E}^{(m)}_{\textsf{MEASURE}}(\rho)
&= \sum_{b\in\{0,1\}}
   \big(\Pi_b\,\rho\,\Pi_b\big)
   \otimes \ket{b}\!\bra{b}_{C_m},\\
\Pi_0 &= \ket{0}\!\bra{0},\quad
\Pi_1 = \ket{1}\!\bra{1},
\end{aligned}
\label{eq:Em5_def_new}
\end{equation}
optionally recording the classical outcome in a register $C_m$. Here, $\rho$ denotes the quantum state at node $m$, and $\Pi_0$ and $\Pi_1$ are the projectors onto the computational basis states $\ket{0}$ and $\ket{1}$, respectively. 
The sum over $b$ accounts for both possible outcomes of the measurement.
The term $\Pi_b\,\rho\,\Pi_b$ represents the \emph{state collapse} associated with outcome $b$, while the tensor product with $\ket{b}\!\bra{b}_{C_m}$ optionally records the classical result into a register $C_m$ associated with node $m$.
This record allows the measurement outcome to be used in later classical-control steps of a protocol (e.g., parity checks in BBPSSW).

\begin{center}
\fbox{%
\parbox{0.96\columnwidth}{%
\textbf{Note:}
In Sections~\ref{sec:deterministic_control} and~\ref{sec:coherent_control}, all instructions corresponded to unitaries. Here, the \textsf{MEASURE} operation is non-unitary and irreversible. 
Accordingly, the model extends from a set of unitaries $U_k$ to a set of completely positive trace-preserving (CPTP) maps $\mathcal{E}_{\textsf{OPCODE}}$, encompassing both reversible and measurement-based actions under the same instruction framework.
}}
\end{center}

Having established the controller-driven operation model, we now explore how the programmable ISA framework enables two critical practical capabilities: 
\begin{enumerate}
    \item diagnostic, calibration, and fidelity witnessing procedures (Section~\ref{sec:diagnostics}), which leverage the ISA's instruction-based control to characterize and optimize node performance, and
    \item scalability to multiple electron spins per node (Section~\ref{ssec:gen_espin}), which exploits the ISA's modularity to increase network throughput while maintaining programmability.
\end{enumerate}

\section{Diagnostics, calibration, and fidelity witnessing}
\label{sec:diagnostics}

The instruction set architecture introduced in Sections~\ref{sec:isa} and \ref{sec:controller_driven} provides not only a programming abstraction for quantum repeater operations, but also a foundation for network diagnostics, calibration, and fidelity witnessing. Specifically:

\begin{enumerate}
    \item \textbf{Diagnostics:} The deterministic register control instructions (Section~\ref{sec:deterministic_control}) enable systematic probing of quantum states without disrupting network operation, allowing real-time detection of hardware degradation or calibration drift.
    
    \item \textbf{Calibration:} The coherent register control operations (Section~\ref{sec:coherent_control}) can be orchestrated to perform in-situ recalibration sequences, adjusting control parameters based on measured performance metrics without taking repeaters offline.
    
    \item \textbf{Fidelity witnessing:} The measurement instructions combined with classical feedback enable implementation of fidelity witness protocols that provide real-time quality metrics on probed state to the network layer (connecting to the calibration-aware routing discussed in Chapter~\ref{ch:calibration}).
\end{enumerate}

This unified view demonstrates how programmability at the hardware layer directly supports the network-level requirements identified in earlier chapters.

A key advantage of coherent register control is its ability to enable interference-based diagnostics. By measuring the nuclear register in a rotated basis, it is possible to extract overlaps between different electron-side operations applied to the same input state.

Let the nuclear register be projected onto a general readout state
\begin{equation}
\ket{\phi} = \sum_{k=0}^{K-1} d_k \ket{k},
\qquad \sum_k |d_k|^2 = 1.
\end{equation}
Projecting Eq.~\eqref{eq:qaddr_general} gives the (unnormalized) electron state
\begin{equation}\label{eq:router_action}
\begin{aligned}
\big(\bra{\phi}\otimes I\big)\, U_{\mathrm{r}}\big(\ket{\psi_N}\otimes\ket{\psi_{E_0}}\big)
&= \Big( \sum_{k=0}^{K-1} d^{\!*}_k \bra{k} \Big) \\
& \times \Big(\sum_{k=0}^{K-1} c_k\, \ket{k} \otimes U_k \ket{\psi_{E_0}}\Big),
\end{aligned}
\end{equation}
\begin{equation}\label{eq:Kraus-operator}
\Rightarrow (\bra{\phi}\otimes I)\,U_{\mathrm{repeater}}(\ket{\psi_N}\otimes\ket{\psi_{E_0}})
= \Big(\sum_{k=0}^{K-1} d_k^{\!*}c_k\,U_k\Big)\ket{\psi_{E_0}}.
\end{equation}
Then, taking the square of the probability amplitude, the corresponding success probability of having outcome $|\phi \rangle$ is given by:
\begin{equation}
p_\phi = \big\|(\sum_k d_k^{\!*}c_k\,U_k)\ket{\psi_{E_0}}\big\|^2.
\end{equation}
Note that the electron state $|\psi_E \rangle$ is not normalised. Hence, the conditional action on the electron is a \emph{linear combination} of the branch unitaries with complex weights $d_k^{\!*}c_k$, and the probability of obtaining that action depends on the interference between branches. Let us take an example of two unitaries to show how to obtain fidelity witnessing.
\paragraph*{Example: Two-branch interference}
Consider a single nuclear qubit with unitaries $\{U_0,U_1\}$ and program state $\ket{\psi_N}=(\ket{0}+\ket{1})/\sqrt{2}$. After applying $U_{\mathrm{repeater}}$,
\begin{equation}\label{eq:two-branch-after}
\begin{aligned}
\ket{\Psi_{\mathrm{after}}}
&= U_{\mathrm{repeater}}\big(\ket{\psi_N}\otimes\ket{\psi_{E_0}}\big) \\
&= \tfrac{1}{\sqrt{2}}\Big( \ket{0}\otimes U_0\ket{\psi_{E_0}}
   + \ket{1}\otimes U_1\ket{\psi_{E_0}} \Big).
\end{aligned}
\end{equation}
Measuring the nuclear spin in the $X$ basis,
\(
\ket{\phi_\pm}=(\ket{0}\pm\ket{1})/\sqrt{2},
\)
yields the unnormalized electron states

\begin{align}
(\bra{\phi_\pm}\otimes I)\ket{\Psi_{\mathrm{after}}}
&= \frac{1}{\sqrt{2}}
   \Big(\tfrac{1}{\sqrt{2}}\bra{0}\;\pm\;\tfrac{1}{\sqrt{2}}\bra{1}\Big) \notag \\
&\quad \times
   \Big(\ket{0}\otimes U_0\ket{\psi_{E_0}}
   + \ket{1}\otimes U_1\ket{\psi_{E_0}}\Big), \\
\Rightarrow\;
(\bra{\phi_\pm}\otimes I)\ket{\Psi_{\mathrm{after}}}
&= \tfrac{1}{2}\Big(U_0\ket{\psi_{E_0}} \;\pm\; U_1\ket{\psi_{E_0}}\Big), 
\end{align}
\begin{align}
 \Rightarrow \ket{\psi_{E_0}^{(\pm)}}&= (\bra{\phi_\pm}\otimes I)\ket{\Psi_{\mathrm{after}}} = \frac{1}{2}\Big(U_0 \;\pm\; U_1\Big) \ket{\psi_{E_0}}.
\end{align}
Hence, the corresponding success probability is:

\begin{align}
p_\pm
&= \left\langle \psi_{E_0} \left| \tfrac{1}{2}(U_0^\dagger \pm U_1^\dagger)\, 
                      \tfrac{1}{2}(U_0 \pm U_1) \right| \psi_{E_0} \right\rangle, \notag \\
p_\pm
&= \tfrac{1}{4} \Big( \langle \psi_{E_0} | U_0^\dagger U_0 | \psi_{E_0} \rangle
   + \langle \psi_{E_0} | U_1^\dagger U_1 | \psi_{E_0} \rangle \notag \\
&\quad \pm \langle \psi_{E_0} | U_0^\dagger U_1 | \psi_{E_0} \rangle
   \pm \langle \psi_{E_0} | U_1^\dagger U_0 | \psi_{E_0} \rangle \Big). \label{eq:p_pm_expanded}
\end{align}

Since $U^{\dagger} U =I$ so $\langle \psi_{E_0} | U_0^{\dagger} U_0 | \psi_{E_0} \rangle = \langle \psi_{E_0} | U_1^{\dagger} U_1 | \psi_{E_0} \rangle = \langle \psi_{E_0} | I | \psi_{E_0} \rangle = 1$. 

And since $\langle \psi_{E_0} | U_0^\dagger U_1 | \psi_{E_0} \rangle$ is hermitian conjugate of $\langle \psi_{E_0} | U_1^\dagger U_0 | \psi_{E_0} \rangle$ i.e., $\langle \psi_{E_0} | U_0^\dagger U_1 | \psi_{E_0} \rangle$ = $\langle \psi_{E_0} | U_1^\dagger U_0 | \psi_{E_0} \rangle^{\dagger}$. Hence $\langle \psi_{E_0} | U_0^{\dagger} U_1 | \psi_{E_0} \rangle + \langle \psi_{E_0} | U_1^{\dagger} U_0 | \psi_{E_0} \rangle = 2 \ \mathrm{Re} \langle \psi_{E_0} | U_0^{\dagger} U_1 | \psi_{E_0} \rangle$ which gives
\begin{align}
\Rightarrow p_\pm &= \tfrac{1}{4} \Big( 1 + 1 \pm 2\,\mathrm{Re}\,\langle \psi_{E_0} | U_0^\dagger U_1 | \psi_{E_0} \rangle \Big),
\end{align}

\begin{equation}\label{eq:p_pm_basic}
\Rightarrow p_\pm = \tfrac{1}{2}\big(1 \pm \mathrm{Re}\langle\psi_{E_0}|U_0^\dagger U_1|\psi_{E_0}\rangle\big).
\end{equation}

And the corresponding (normalized) electron states after measurement of nuclear qubit is as follows:
\begin{equation}\label{eq:electronpm_map}
\ket{\psi_{E_0}^{(\pm)}} = \frac{\big(U_0 \pm U_1\big)\ket{\psi_{E_0}}}{\sqrt{\langle \psi_{E_0} | (U_0^{\dagger} \pm U_1^{\dagger}) \big(U_0 \pm U_1\big) | \psi_{E_0} \rangle}},
\end{equation}
\begin{equation}\label{eq:estate_overlap_nophase}
\Rightarrow \ket{\psi_{E_0}^{(\pm)}} \;\propto\; \big(U_0 \pm U_1\big)\ket{\psi_{E_0}}.
\end{equation}
From Eq.~\eqref{eq:p_pm_basic} and Eq.~\eqref{eq:estate_overlap_nophase}, we know the information about the interference term (the mapped state after the measurement and its probability) from the application of the two branches of unitary. The interference term $\mathrm{Re}\langle\psi_{E_0}|U_0^\dagger U_1|\psi_{E_0}\rangle$ encodes the overlap between the two implemented unitaries, that is, the information inaccessible in deterministic control.

\paragraph*{Phase scanning and fidelity witnessing}
If, instead, the nuclear register is prepared with a relative phase:
\begin{equation}
\ket{\psi_N(\varphi)} \;=\; \frac{\ket{0}+e^{i\varphi}\ket{1}}{\sqrt{2}} ,
\end{equation}
while we keep the same $X$-basis measurement. The analogue of \eqref{eq:two-branch-after} after inclusion of a relative phase is
\begin{equation}
\ket{\Psi_{\mathrm{after}}(\varphi)} \;=\; \frac{1}{\sqrt{2}}\Big( \ket{0}\otimes U_0\ket{\psi_{E_0}} \;+\; e^{i\varphi}\ket{1}\otimes U_1\ket{\psi_{E_0}} \Big).
\end{equation}
Similar to the two-branch interference case, repeating the steps we have:
\begin{align}
\Rightarrow\;
\ket{\psi_{E_0}^{(\pm)} (\varphi)}
&= (\bra{\phi_\pm}\otimes I)\ket{\Psi_{\mathrm{after}}(\varphi)} \notag \\
&= \tfrac{1}{2}\Big(U_0\ket{\psi_{E_0}}
   \;\pm\; e^{i\varphi} U_1\ket{\psi_{E_0}}\Big),
\label{eq:psi_e_pm}
\end{align}

\begin{align}
\Rightarrow \ket{\psi_{E_0}^{(\pm)} (\varphi)}
&= \frac{1}{2}\Big(U_0 \;\pm\; e^{i\varphi} U_1\Big) \ket{\psi_{E_0}}.
\end{align}
Similarly, the success probability is:
\begin{equation}
p_\pm (\varphi)
= \left\langle \psi_{E_0} (\varphi) \left| \tfrac{1}{2}(U_0^\dagger \pm e^{i \varphi} U_1^\dagger)\, \tfrac{1}{2}(U_0 \pm e^{-i\varphi} U_1) \right| \psi_{E_0} (\varphi) \right\rangle,
\end{equation}
\begin{align}
\Rightarrow\;
p_\pm (\varphi)
&= \tfrac{1}{4} \Big( \langle \psi_{E_0} | U_0^\dagger U_0 | \psi_{E_0} \rangle
   + \langle \psi_{E_0} | U_1^\dagger U_1 | \psi_{E_0} \rangle \notag \\
&\quad \pm \langle \psi_{E_0} | e^{i \varphi}\, U_0^\dagger U_1 | \psi_{E_0} \rangle
   \pm \langle \psi_{E_0} | e^{-i\varphi}\, U_1^\dagger U_0 | \psi_{E_0} \rangle \Big).
\label{eq:p_pm_phi}
\end{align}

Using the similar facts about the hermitian operators as above, we have:
\begin{align}
\Rightarrow\;
p_\pm(\varphi)
&= \tfrac{1}{4}\Big(
2
\;\pm\; e^{i\varphi}\bra{\psi_{E_0}} U_0^\dagger U_1 \ket{\psi_{E_0}}
\notag \\
&\hspace{2.5em}
\;\pm\; e^{-i\varphi}\bra{\psi_{E_0}} U_1^\dagger U_0 \ket{\psi_{E_0}}
\Big),
\label{eq:p_pm_phi_compact}
\end{align}

\begin{equation}\label{eq:p_pm_phi_final}
p_\pm(\varphi)
= \tfrac{1}{2}\big(1 \pm \mathrm{Re}[e^{i\varphi}\langle\psi_{E_0}|U_0^\dagger U_1|\psi_{E_0}\rangle]\big).
\end{equation}
In addition to the basic case without $\varphi$ discussed above, the phase $\varphi$ in the program state rotates the accessed quadrature. By sweeping $\varphi$, one can read out both real and imaginary parts of $\bra{\psi_{E_0}} U_0^\dagger U_1 \ket{\psi_{E_0}}$ in a single coherent experiment without needing two separate runs with stabilised relative phase using the classical deterministic-register control approach.

\paragraph*{\textbf{\underline{Extracting out the complete state with two-phase settings.}}} Using $e^{i\varphi} = \cos{\varphi}+i \sin{\varphi}$ in Eq~\eqref{eq:p_pm_phi_final} we have:

\begin{equation}
\begin{aligned}
\Rightarrow\;
p_{\pm}(\varphi)
&= \tfrac{1}{2}\Big(1 \ \pm \mathrm{Re}\Big[
     \cos{\varphi}\,\bra{\psi_{E_0}} U_0^{\dagger} U_1 \ket{\psi_{E_0}} \\
&\qquad\quad + i \sin{\varphi}\,\bra{\psi_{E_0}} U_0^{\dagger} U_1 \ket{\psi_{E_0}}
   \Big]\Big).
\end{aligned}
\label{eq:p_pm_phi_reim}
\end{equation}

Taking just $p_+$ measurements with $\varphi_1=0$ and $\varphi_2=-\pi/2$. Let $a = \bra{\psi_{E_0}} U_0^{\dagger} U_1 \ket{\psi_{E_0}}$.
\begin{equation}
\begin{aligned}
\text{For} \ \varphi_1 &= 0 \\
\Rightarrow\quad
p_+^{(1)} &= \tfrac12\big(1+\mathrm{Re}[\,1\cdot a + i\cdot 0\cdot a\,]\big) \\
&= \tfrac12\big(1+\mathrm{Re}\,a\big).
\end{aligned}
\label{eq:phi1_case}
\end{equation}

\begin{equation}
\begin{aligned}
\text{For} \ \varphi_2 &= -\tfrac{\pi}{2} \\
\Rightarrow\quad
p_+^{(2)} &= \tfrac12\big(1+\mathrm{Re}[\,0\cdot a + i\cdot(-1)\cdot a\,]\big) \\
&= \tfrac12\big(1+\mathrm{Im}\,a\big).
\end{aligned}
\label{eq:phi2_case}
\end{equation}

Inverting these to recover the overlap amplitude:
\begin{equation}
\mathrm{Re}\,a \;=\; 2\,p_+^{(1)} - 1,
\qquad
\mathrm{Im}\,a \;=\; 2\,p_+^{(2)} - 1.
\end{equation}
Hence, the state is:
\begin{equation}\label{eq:amplitude_calculated}
    \Rightarrow a = \bra{\psi_E} U_0^{\dagger} U_1 \ket{\psi_E} = (2 p_+^{(1)} - 1) + i\, (2 p_+^{(2)} - 1).
\end{equation}

\paragraph*{Fidelity between the two applied unitaries}
Let's say we have two implemented unitaries $U_0$ and $U_1$ acting on the same electron input $\ket{\psi_{E_0}}$, and we want to quantify how close their outputs are. The relevant \emph{state fidelity} is as follows:
\begin{equation}
F_{\mathrm{state}}(\psi_{E_0};U_0,U_1) =
\big|\bra{\psi_{E_0}}U_0^\dagger U_1\ket{\psi_{E_0}}\big|^2,
\label{eq:Fstate-def}
\end{equation}
which is straightforward from the eq~\eqref{eq:amplitude_calculated}:
\begin{equation}
    F_{\mathrm{state}}(\psi_{E_0};U_0,U_1) = (2 p_+^{(1)} - 1)^2 + (2 p_+^{(2)} - 1)^2.
\end{equation}

\paragraph*{Interpretation}
Coherent register control thus enables the NV-center node to act as an \emph{interferometric fidelity witness}, comparing the action of two or more operations on a chosen input state without full process tomography. The nuclear register serves as a coherent selector that allows interference between different branches of control, providing in situ calibration and diagnostic capability not possible under deterministic, classical programmability.

\begin{center}
\fbox{%
\parbox{0.96\columnwidth}{%
\textbf{Note:}
By this approach, the repeater does not give direct access to the
\emph{global} gate fidelity
\(F(U_0,U_1)=|\mathrm{Tr}(U_0^\dagger U_1)|^2/d^2\),
which would require full process tomography. Instead, by preparing a chosen
input state \(\ket{\psi_{E_0}}\), the repeater yields the state-dependent overlap
\(\langle\psi_{E_0}|U_0^\dagger U_1|\psi_{E_0}\rangle\). This serves as a
\emph{fidelity witness}: it certifies the closeness of \(U_0\) and \(U_1\) on the probed
state, providing partial diagnostic information without reconstructing the
complete channels.
}}
\end{center}
This capability is also useful for \emph{in situ calibration}: the applied pulse or pulse sequence can be tuned so that the implemented operation better matches the intended unitary. In addition, having a coherent repeater enables situations where the node can coherently evaluate or compare multiple unitaries before committing to one, something that is impossible under classical register control. In the classical case, once the nuclear register is prepared in a definite basis state, the corresponding unitary is fixed, and no further adjustment or comparison is possible.

Such diagnostic capabilities have no classical analogue and represent a uniquely quantum benefit of the proposed programmability framework.

\section{Generalisation to multiple electron spins per node}\label{ssec:gen_espin}

Building on the single-electron ISA model, this section generalizes the abstraction to quantum repeaters with multiple electron spins per node. This extension is directly motivated by the diagnostics and calibration requirements discussed in Section~\ref{sec:diagnostics}: multiple registers enable \emph{concurrent} operation and calibration, where some electron spins serve entanglement requests while others undergo maintenance procedures.

The ISA formulation so far has assumed a single electron spin denoted by $E_0$ and $n$ nuclear spins per node, as described in Sections~\ref{sec:deterministic_control} and~\ref{sec:coherent_control}. 
We now generalise to the case of $E_e$ electron spins per node, where each electron $E_e$ is paired with a local nuclear-spin register $\{N_{e,1}, \dots, N_{e,r}\}$ that provides its control. 
The ratio $r$ thus denotes the number of nuclear spins per electron.

The Hilbert space for a single electron–nuclear cluster is
\begin{equation}
\mathcal{H}_{\mathrm{cluster}}^{(e)} 
= \mathcal{H}_{E_e} \otimes \bigotimes_{i=1}^{r} \mathcal{H}_{N_{e,i}},
\end{equation}
where $\mathcal{H}_{E_e} \cong \mathbb{C}^2$ and $\mathcal{H}_{N_{e,i}} \cong \mathbb{C}^2$. 
The node-level Hilbert space is therefore
\begin{equation}
\mathcal{H}_{\mathrm{node}} = 
\bigotimes_{e=1}^{E} \mathcal{H}_{\mathrm{cluster}}^{(e)}, 
\qquad
\dim(\mathcal{H}_{\mathrm{node}}) = 2^{E(1+r)}.
\end{equation}

For each electron $E_e$, the control register spans a computational basis $\{\ket{k}_{N_e}\}_{k=0}^{2^r-1}$, defining the repeater operator corresponding to Eq.~\eqref{eq:U_r_generalised}:
\begin{equation}
U_{\mathrm{repeater}}^{(e)} = 
\sum_{k=0}^{2^r-1} \ketbra{k}_{N_e} \otimes U^{(e)}_k,
\end{equation}
where each $U^{(e)}_k$ is a local unitary (or joint electron–nuclear gate) determined by the controller-issued instruction $\INSTR^{(e)}$. 
Assuming independent control over each electron spin, the node-level evolution is
\begin{equation}
U_{\mathrm{node}} = 
\bigotimes_{e=1}^{E} U_{\mathrm{repeater}}^{(e)}.
\end{equation}

Each instruction $\INSTR^{(e)}$ takes the same structured form as defined in Eq.~\eqref{eq:instr_def}.

For a network with $M$ nodes, each containing $E_m$ electron spins and $r_m$ nuclear spins per electron, the global instruction vector at round $t$ is
\begin{equation}
\mathbf{I}(t) = 
\big\{\,\INSTR^{(m,e)}(t)\ \big|\ 
m=1,\dots,M,\ e=1,\dots,E_m\,\big\},
\end{equation}
and the corresponding evolution is
\begin{equation}
U_{\mathrm{network}}(t) = 
\bigotimes_{m=1}^{M} \ 
\bigotimes_{e=1}^{E_m} 
U^{(m,e)}_{\mathrm{r}}.
\end{equation}

Higher $E$ enables greater parallelism but increases the controller’s instruction-issuing overhead. 
Larger $r$ broadens the instruction set but lengthens nuclear re-initialisation cycles, potentially limiting protocol repetition rates as evaluated in Section~\ref{ssec:throughput_model}.

Current diamond platforms have demonstrated multi-electron operation within a single crystal up to two NV electron spins, with coherent dipolar coupling and room-temperature entanglement \cite{dolde2013room}.
On the memory side, a fully controlled ten-qubit (one electron + nine $^{13}$C nuclei) register has been operated as a processor with minute-scale quantum memory \cite{bradley2019ten}. Beyond that, isotopically engineered samples have been identified and individually addressed much larger neighbourhoods of nuclear spins around a single NV; for example, \cite{abobeih2022fault} reports control of dozens of $^{13}$C spins and uses a subset to realise a fault-tolerant logical qubit.
Taken together, these results motivate our modelling choice of one electron spin with a multi-qubit nuclear register as the baseline node today, while the ISA generalisation to $E_{e}>1$ electron spins per node should be read as a forward-looking architectural extension that becomes relevant as multi-electron spin NV modules mature \cite{taminiau2014universal}.

\subsection{Performance Model: Throughput vs. Nuclear Re-initialization}\label{ssec:throughput_model}

In the time-slotted model (Sec.~\ref{sec:isa}), each instruction round $t$ first re-initialises the nuclear register, followed by local MW/RF control on the electron spin and nuclear register and optional readout. A simple node-level performance metric can be defined as the per-electron spin \emph{round throughput} $R$ (rounds/s), given by:
\begin{equation}
R \;\approx\; \frac{1}{t_{\mathrm{MW}} + t_{\mathrm{RF}} + t_{\mathrm{meas}} + t_{\mathrm{class}} + t_{\mathrm{reinit}}(r)},
\label{eq:throughput}
\end{equation}
where $t_{\mathrm{MW}}$ and $t_{\mathrm{RF}}$ denote aggregated microwave and radio-frequency pulse times needed to execute the corresponding round (e.g., $X$, $Y$, and $\text{CNOT}$), $t_{\mathrm{meas}}$ captures any qubit readout invoked by the instruction, $t_{\mathrm{class}}$ is any classical-controller latency folded into the slot, and $t_{\mathrm{reinit}}(r) = \mathrm{\tau_{reset}} \ . \ r$ is the nuclear register re-initialisation time for a register of size $r$ nuclear spins with per-nuclear spin reset time of $\mathrm{\tau_{reset}}$. So, $t_\mathrm{reinit}$ grows with $r$ (e.g., linearly if spins are reset sequentially), reflecting the cost of programmability.

If a node contains $E_e$ electron spins operating in parallel independently (Sec.~\ref{ssec:gen_espin}), then the idealised node round throughput simply scales as
\begin{equation}
R_{\mathrm{node}} \;\approx\; E_e \, R,
\end{equation}
subject to controller and crosstalk constraints between spins. For multi-round protocols such as BBPSSW purification (Sec.~\ref{sec:controller_driven}), the time to complete one round is simply $1/R$.

\paragraph*{Assumptions}
The model is intentionally minimal. We treat $t_{\mathrm{MW}}, t_{\mathrm{RF}}, t_{\mathrm{meas}}, t_{\mathrm{class}}$ as configuration-dependent constants for a given protocol step; the only variable we sweep is $t_{\mathrm{reinit}}(r)$ to highlight the programmability-rate trade-off. Coherent control adds a small fixed overhead (phase preparation and rotated-basis readout), which can be absorbed into $t_{\mathrm{MW}}{+}t_{\mathrm{RF}}{+}t_{\mathrm{meas}}$.

\begin{figure}[t]
  \centering
  \includegraphics[width=0.8\columnwidth]{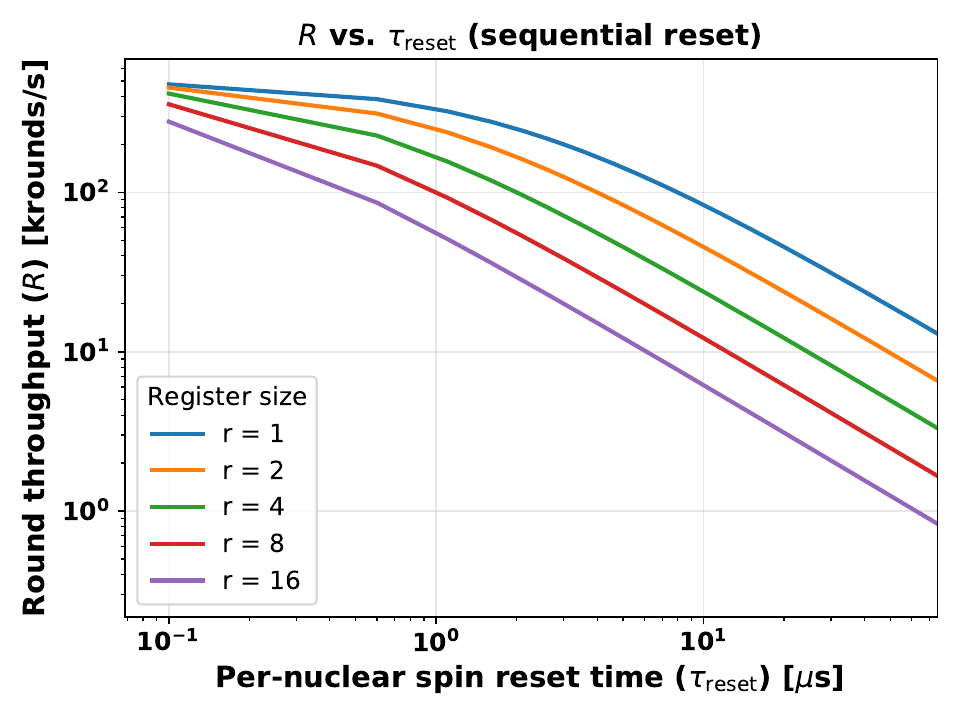}
  \caption{Per-electron round throughput $R$ vs Per-nuclear spin reset time $\tau_{\mathrm{reset}}$ for register sizes $r\in\{1,2,4,8, 16\}$ with fixed overheads ($t_{\mathrm{MW}}{+}t_{\mathrm{RF}}{+}t_{\mathrm{meas}}{+}t_{\mathrm{class}}$). Larger $r$ expands the ISA address space (more programmable operation choices) but increases re-initialisation time.}
  \label{fig:throughput_vs_reinit}
\end{figure}

Fig.~\ref{fig:throughput_vs_reinit} shows that as $\mathrm{\tau_{reset}}$ grows, throughput falls hyperbolically, with steeper degradation for larger $r$. Thus, beyond a register size that saturates the required instruction set, increasing $r$ yields diminishing returns: it broadens programmability while slowing repetition rate. The condition $2^r \ge p$ ensures that all $p$ distinct operations required by a protocol can be mapped to unique register states. This provides insights into how to choose $r$ that balances programmability against protocol latency in a slotted controller-driven network.

\section{Relation to Kraus operators and linear combination of unitaries}
\label{sec:lcu_kraus}

Equation~\eqref{eq:Kraus-operator} can be recast as a general operator acting on the electron spin subsystem:
\begin{equation}
K_{\phi,\psi_N} = 
\sum_{k=0}^{K-1} d_k^{\!*} c_k\, U_k,
\end{equation}
where $\ket{\psi_N} = \sum_k c_k \ket{k}$ is the prepared nuclear-register state and $\ket{\phi} = \sum_k d_k \ket{k}$ is the measurement outcome. 

Upon obtaining outcome $\ket{\phi}$, the electron spin state undergoes the transformation
\begin{equation}
\begin{aligned}
\ket{\psi_{E_0}} 
&\longmapsto 
\frac{K_{\phi,\psi_N}\ket{\psi_{E_0}}}
{\big\|K_{\phi,\psi_N}\ket{\psi_{E_0}}\big\|},\\
p_\phi 
&= \big\|K_{\phi,\psi_N}\ket{\psi_{E_0}}\big\|^2,
\end{aligned}
\label{eq:kraus_map}
\end{equation}
where $p_\phi$ is the probability of the outcome.

For a general mixed input state of electron spin $\rho_E$, this becomes $p_\phi = \mathrm{Tr}[K_{\phi,\psi_N}\rho_E K_{\phi,\psi_N}^\dagger]$. The above is a generalised version of the two-branch example Eq.~\eqref{eq:electronpm_map}, 
where $K_{\phi,\psi_N}^{\pm} = \tfrac{1}{2}(U_0 \pm U_1)$.

More generally, an orthonormal measurement $\{\ket{\phi_j}\}_j$ on the nuclear register, with $\ket{\phi_j} = \sum_k d_{j,k}\ket{k}$, defines a \emph{quantum instrument} on the electron spin  (Eq.~\eqref{eq:Kraus-operator}):
\begin{equation}
\begin{gathered}
K_j = \bra{\phi_j} U_{\mathrm{repeater}} \ket{\psi_N}
    = \sum_k d_{j,k}^{\!*} c_k\, U_k, \\
\mathcal{E}_j(\rho_E) = K_j \rho_E K_j^\dagger,
\end{gathered}
\label{eq:kraus_operator_final}
\end{equation}

where each $\mathcal{E}_j$ is a completely positive (CP), trace-non-increasing map with probability $p_j = \mathrm{Tr}[K_j\rho_E K_j^\dagger]$. 
The overall transformation $\sum_j \mathcal{E}_j$ is completely positive and trace preserving (CPTP).

Equation~\eqref{eq:kraus_operator_final} directly implements a \emph{Linear Combination of Unitaries (LCU)}:
\begin{equation}
K_j = \sum_k \alpha_{j,k} U_k, 
\qquad \alpha_{j,k} = d_{j,k}^{\!*} c_k,
\end{equation}
linking the nuclear-register superposition coefficients to the effective combination weights. 
This formalism connects node-level programmability to established techniques in Hamiltonian simulation and quantum channel decomposition~\cite{childs2012hamiltonian,chakraborty2024implementing}.

A related mathematical structure appears in the recent ``quantum path'' framework, or \emph{quantum SWITCH}, proposed in \cite{caleffi2023beyond} that places the order of two channels in coherent superposition. Our ISA instead uses a local nuclear register to coherently select among electron spin-side unitaries within a node. Extending the ISA to noisy CPTP operations and order superposition is interesting future work.
\chapter{Cross Layer Insights and Design Principles}\label{ch:crosslayer}

The preceding chapters developed three complementary building blocks for making quantum networks operational under realistic constraints: network layer routing under limited knowledge and heterogeneity (Chapters~\ref{ch:routing} and~\ref{ch:performance_evaluation}), link layer operation with calibration overhead and drift (Chapter~\ref{ch:calibration}), and node level programmability via an instruction set abstraction (Chapter~\ref{ch:programmable_repeaters}). While each component can be studied in isolation, the core message of this thesis is that quantum networks will only scale when these layers are designed and operated jointly.

This chapter aims to distill cross-layer insights that emerge when the three contributions of this work are viewed as a single system. The goal is twofold. First, to clarify how assumptions made at one layer silently constrain the design space at another layer. Second, to provide design principles that are directly actionable for near-term deployments and testbeds, where the dominant limitations are not only loss and noise, but also drift, partial knowledge, heterogeneous hardware, and constrained control.

\section{Interplay between routing and calibration}
\label{sec:interplay_routing_calibration}

A classical intuition is that routing selects paths, while link maintenance is a background function that keeps links available. In quantum networks, this separation is fragile because link quality degrades with operational usage, elapsed time, or both, and calibration restores an operational budget depleted by entanglement generation. Calibration, as observed in this work, therefore behaves like a consumable resource that competes with entanglement generation for time and link availability.

\subsection{Routing must internalise operationally evolving link quality}
The calibration model in Chapter~\ref{ch:calibration} implies that an ``available'' link is not a binary state. Instead, the link offers a fidelity profile that evolves over its operational window as a function of elapsed time, usage intensity, or both. Consequently, a route that is feasible at the moment of selection can become infeasible during execution if the schedule of entanglement generation, swapping, and classical signalling pushes operations beyond the link's usable degradation budget. This has two immediate implications.

First, routing metrics should incorporate an operational notion of feasibility, not only an abstract end-to-end fidelity computed from static parameters. A path selection rule that ignores drift will systematically overestimate success, leading to higher blocking probability at the network level.

Second, path choice changes the global calibration burden. Selecting a path that repeatedly uses a subset of links concentrates operational stress and accelerates the depletion of their calibration budget, increasing the frequency of recalibration on those links. In contrast, distributing traffic across multiple high-quality regions of the topology can reduce the peak calibration load, even if some selected paths are longer.

\subsection{Calibration policies reshape the effective topology}
Calibration converts a physical topology into an effective time-expanded topology. Over an operational horizon, each physical edge alternates between activation and calibration phases, which means that the network layer effectively sees a graph whose edge set changes over time. Two routes with identical hop count can therefore have very different success probabilities depending on where each link is within its activation cycle.

A useful design principle is as follows:

\paragraph{Design principle 1:} Treat calibration as a first-class scheduling constraint and expose it to routing through a coarse state variable, for example ``freshness'' of each link, rather than hiding it behind static link quality parameters.

This principle does not require full white box knowledge. Even in grey box settings, a controller can maintain a small amount of state derived from end-to-end estimates, for example, recent success rate or inferred fidelity window, and use it to bias path selection away from links near the end of their activation phase.

\subsection{Routing and calibration form a coupled optimisation problem}
Chapter~\ref{ch:calibration} shows that even for a linear chain, optimally allocating activation time across links under fidelity constraints is a nontrivial optimisation problem. In general networks, the coupling becomes stronger because paths share links. From a cross-layer viewpoint, the network-wide objective is no longer simply ``find disjoint paths'' or ``calibrate each link optimally''. It is now to:

\begin{itemize}
\item choose which paths to instantiate,
\item choose when to instantiate them,
\item choose how to allocate activation time and recalibration effort to the links they share.
\end{itemize}

This suggests a decomposition that aligns with how near-term networks can realistically be controlled.

\paragraph{Design principle 2:} Use of a hierarchical control. Let routing decide candidate paths using robust end-to-end metrics, then let link orchestration allocate calibration and activation time locally or per shared subset, and finally let node-level control realise the required operations.

This matches the thesis architecture: grey box or knowledge-aware routing generates candidate paths, quantum link orchestration resolves shared resources, and the programmable node abstraction executes the required operations via a central controller.

\subsection{Operational fairness is calibration fairness}
Chapter~\ref{ch:performance_evaluation} emphasises fairness across competing source destination demands. Calibration introduces a second notion of fairness: fairness in how calibration overhead is distributed across the infrastructure. A routing policy can appear fair in terms of accepted requests while still being unfair to the hardware, by repeatedly stressing a small set of links and forcing frequent recalibration cycles.

\paragraph{Design principle 3:} Include calibration cost as a regulariser in routing decisions. When multiple feasible paths exist, prefer the path that reduces marginal calibration load, even if its static fidelity estimate is slightly worse.

In practice, this can be implemented as a small penalty term that increases when a link has recently been calibrated or is close to its drift limit, encouraging diversification of traffic.

\section{Hardware constraints shaping network protocols}
\label{sec:hardware_constraints_protocols}

Quantum networking protocols are often presented as abstract sequences of entanglement generation, swapping, purification, and classical communication. Chapter~\ref{ch:programmable_repeaters} makes explicit that the feasibility and cost of these operations depend on the node architecture and its control interface. Once node-level constraints are acknowledged, several protocol-level assumptions must be revisited.

\subsection{Programmability changes what is ``a protocol''}
A fixed function repeater implements a predetermined set of operations. A programmable repeater exposes a library of operations that can be selected by a controller based on network conditions. This shifts the boundary between protocol design and protocol execution.

\paragraph{Design principle 4:} Separate protocol intent from protocol realisation. Protocols should be specified at the network layer in terms of the abstract quantum transformations to be achieved and the associated resource constraints, without committing to a specific sequence of hardware-level operations. The concrete execution of these protocols is then realised at the node level by mapping the abstract intent onto an instruction set architecture (ISA) that exposes the available physical operations and can evolve as hardware capabilities change.

Under this separation, introducing a new purification variant or an alternative entanglement swapping strategy does not require redesigning the entire repeater firmware. Instead, new protocols can be realised by composing or extending existing ISA primitives, with the controller selecting the appropriate instruction sequences based on current network conditions.

\subsection{Resource constraints translate into protocol choices}
NV center based nodes highlight concrete constraints: limited electron spins, finite nuclear register size, and reset overhead that grows with register capacity. This creates a direct trade-off between expressive programmability and raw throughput.

\paragraph{Design principle 5:} Treat programmability as a tunable resource. For a given deployment, select an operating point that balances required protocol diversity against throughput and reset overhead.

A practical example is the choice of how many distinct operations the node must support concurrently. Early networks may benefit from a smaller ISA that prioritises stable execution and faster cycling, while later networks may expand the ISA to support richer diagnostics, adaptive purification, or coherent control features.

\subsection{Diagnostics and witnessing as network primitives}
Classical networks rely on extensive telemetry. Quantum networks cannot freely measure quantum states without destroying them. Chapter~\ref{ch:programmable_repeaters} proposes diagnostics and fidelity witnessing mechanisms that are compatible with programmable hardware.

\paragraph{Design principle 6:} Build observability into the architecture. Provide explicit diagnostic operations in the ISA, and reserve network capacity for periodic inference of link and node health.

This supports both calibration and routing: routing benefits from improved end-to-end performance estimates, and calibration benefits from faster detection of drift regimes.

\section{Scalability and near-term feasibility}
\label{sec:scalability_nearterm}

Scalability in quantum networks is not only about adding more nodes. It is about maintaining operability when knowledge is incomplete, when parameters drift, and when heterogeneous upgrades occur.

\subsection{Partial knowledge as a scalability requirement}
The grey box approach shows that requiring full link-level knowledge may not be robust. As networks grow, the measurement and maintenance of detailed parameters becomes increasingly difficult and stale.

\paragraph{Design principle 7:} Optimise for decision quality under uncertainty, not for optimality under perfect knowledge. Prefer routing objectives that remain stable when performance estimates are noisy or outdated.

This provides a concrete roadmap for near-term deployments: start with topology and end-to-end measurements, then incorporate richer information only when it demonstrably improves outcomes.

\subsection{Incremental upgrades must be exploited, not ignored}
Heterogeneous repeater efficiencies are not an exception; they are the expected deployment path. Hardware will improve unevenly across a network.

\paragraph{Design principle 8:} Design routing and orchestration to benefit from partial upgrades. Even coarse awareness of node classes can deliver significant gains in fidelity and blocking probability, and it provides a clear incentive structure for incremental deployment.

\subsection{Complexity must be controlled at every layer}
Even when an optimisation problem is solvable analytically in special cases, general networks require heuristics, approximations, and decomposition. The thesis contributions point to a pragmatic posture.

\paragraph{Design principle 9:} Use layered approximations with explicit complexity budgets. For example,
\begin{enumerate}
\item limit candidate paths using a small $k$ shortest path set,
\item apply link orchestration only on the subgraph induced by accepted paths,

\item perform local node-level compilation and scheduling at programmable repeater nodes, rather than global pulse- or circuit-level synthesis.

\end{enumerate}

This ensures that control remains tractable as the network grows.

\section{Comparison with classical networking paradigms}
\label{sec:comparison_classical}

The quantum internet is often described by analogy to the classical internet, yet the analogy breaks in specific, instructive ways. This section summarises the most important correspondences and mismatches that inform design.

\subsection{Routing versus forwarding}
Classical routing selects a path, forwarding executes packet-by-packet delivery, and congestion control adapts to loss and delay. In quantum networks, forwarding is an active quantum operation, entanglement swapping, and it consumes the resource it creates. No packet can be buffered and resent without cost because of no cloning.

\paragraph{Insight:} Path establishment is closer to circuit setup than packet forwarding, but the circuit itself is probabilistic and perishable.

\subsection{Telemetry versus destructive measurement}
Classical networks instrument everything. Quantum networks cannot measure their internal state freely. This shifts the burden to indirect inference, occasional sacrificial measurements, and hardware-embedded diagnostics.

\paragraph{Insight:} Observability must be co-designed with the hardware interface. Rather than unrestricted measurement, ISA should provide diagnostic primitives whose impact on the quantum state is limited, predictable, and explicitly accounted for.

\subsection{Software-defined networking and controller-driven quantum networks}
The controller-driven model in Chapter~\ref{ch:programmable_repeaters} is motivated by software-defined networking in that it centralises policy and enables programmability. The key difference is that the controller must respect quantum constraints, including destructive measurement, coherence limits, and probabilistic outcomes.

\paragraph{Insight:} The quantum analogue of SDN requires a tighter integration between the control plane and hardware primitives. The ISA provides a concrete interface at which this integration can occur without hard-coding entire protocols into devices.

\subsection{Summary of cross-layer principles}
The cross-layer principles of this thesis can be summarised succinctly.

\begin{enumerate}
\item Routing, calibration, and node control are coupled. Any design that assumes independence will fail under drift and heterogeneity.
\item Partial knowledge is not a weakness; it is the realistic operating regime. Robust policies should be preferred over fragile optimal ones.
\item Calibration is a resource. Its cost must be accounted for explicitly in routing and scheduling decisions.
\item Programmability is the mechanism that allows network layer intent to reach the hardware without redesigning devices for every protocol change.
\item Scalability requires hierarchical control and bounded complexity at each layer.
\end{enumerate}

These principles provide the conceptual bridge from the technical results of Chapters~\ref{ch:routing} to~\ref{ch:programmable_repeaters} to the research outlook and future directions in Chapter~\ref{ch:conclusion}.

\chapter{Conclusions and Future Research Directions}
\label{ch:conclusion}

This thesis set out to address a central gap in the design of quantum networks: the lack of a coherent framework that connects abstract network-level protocols with the operational realities of quantum hardware. Beyond studying routing, link operation, and node design as independent problems, the insights developed in the Chapter~\ref{ch:crosslayer} argue that quantum networks must be understood and engineered as tightly coupled, cross-layer systems.

The contributions of this thesis can be grouped into three main themes. First, the formulation and analysis of routing problems under partial knowledge, heterogeneity, and fidelity constraints. Second, the modelling and optimisation of quantum link calibration under drift and finite activation windows. Third, the abstraction of programmable quantum repeater nodes through an instruction set architecture that bridges physical control and network-level intent. Together, these components provide a unified perspective on how quantum networks can be made operable, scalable, and adaptable under realistic assumptions.

This chapter concludes the thesis by summarising the main contributions and outlining promising directions for future research. The discussion focuses not on incremental extensions, but on research avenues that naturally emerge once quantum networks are viewed as software-defined, calibration-aware, and hardware-aware systems.

\section{Summary of contributions}
\label{sec:summary_contributions}

\subsection{Routing under partial knowledge and heterogeneity}

Chapters~\ref{ch:routing} and~\ref{ch:performance_evaluation} addressed the problem of routing in quantum networks where network knowledge is incomplete, noisy, or unavailable. Instead of assuming full knowledge of network components, the proposed models rely on grey box and end-to-end performance indicators that are realistically observable in near term systems.

The key contribution lies in demonstrating that routing decisions based on partial knowledge can still achieve strong performance guarantees, both in terms of blocking probability and delivered end-to-end fidelity, when compared to idealised full knowledge baselines. The analysis further showed that modest awareness of heterogeneity, for example, through repeater classes or coarse efficiency parameters, yields significant performance gains without incurring prohibitive monitoring.

These results question the implicit assumption that accurate, fine-grained link state information is a prerequisite for effective quantum routing, and instead support a design philosophy where robustness to uncertainty is prioritised.

\subsection{Calibration aware link operation}

Chapter~\ref{ch:calibration} introduced a model for quantum link operation that explicitly accounts for calibration overhead, drift, and finite activation windows. By treating calibration as a consumable resource that directly impacts throughput and fidelity, the chapter reframed link operation as an optimisation problem rather than a background maintenance task.

Analytical results for simple topologies, combined with a benchmark and heuristic for more general cases, showed that naive or static calibration policies can significantly underperform compared to adaptive strategies that balance usable activation across links. Importantly, the analysis revealed that calibration decisions at the link level propagate upward, shaping which network level paths are feasible and sustainable over time.

This contribution establishes calibration as a first-class concern in quantum network design and highlights the necessity of exposing calibration state, at least in coarse form, to higher layers.

\subsection{Programmable repeater nodes and instruction set abstraction}

Chapter~\ref{ch:programmable_repeaters} focused on the node level, proposing an instruction set architecture for programmable NV center based quantum repeaters. The chapter demonstrated how common network-relevant operations such as entanglement swapping, purification, diagnostics, and fidelity witnessing can be expressed as composable instructions executed under classical control.

By separating protocol intent from physical implementation, the proposed abstraction enables flexibility, extensibility, and hardware evolution without requiring protocol redesign. The instruction set model also clarifies the boundary between fast local control and slower network-level orchestration, providing a concrete interface at which cross-layer coordination can occur.

Considering these contributions together, the node-level abstraction complements the routing and calibration models by providing a realistic execution substrate for network decisions.

\section{Implications for quantum network design}
\label{sec:implications_design}

A unifying insight of this thesis is that many challenges attributed to quantum mechanics alone, such as fragility or limited scalability, are in fact consequences of architectural mismatches between layers. When routing ignores calibration, when calibration ignores hardware constraints, or when hardware hides essential state from the controller, performance degrades in systematic and avoidable ways.

The results of this work suggest several high-level implications.

First, quantum network protocols should be designed with explicit awareness of time. Fidelity is not only a function of path length, but also of when operations occur relative to calibration and drift cycles.

Second, programmability is not a luxury feature but a scalability enabler. As networks grow and hardware diversifies, fixed-function repeaters become a bottleneck for evolution and experimentation.

Third, partial knowledge could be embraced as the default operating regime. Systems that require perfect information to function well will not scale beyond small testbeds.

These implications point toward a future quantum internet that resembles a controlled, software-defined infrastructure rather than a static assembly of quantum links.

\section{Future research directions}
\label{sec:future_directions}

The framework developed in this thesis opens several promising avenues for future research. The directions outlined below are not exhaustive, but represent natural extensions that build directly on the concepts and models introduced here.

\subsection{Joint routing and calibration optimisation}

While this thesis analysed routing and calibration in complementary but distinct layers, a fully integrated optimisation remains an open problem. Future work could formulate joint objectives that simultaneously select paths, schedule entanglement attempts, and allocate calibration effort across the network.

Such formulations would likely require decomposition techniques, approximate solvers, or learning assisted heuristics to remain tractable. Exploring these approaches would further clarify the trade-offs between optimality, complexity, and robustness.

\subsection{Learning assisted network control}

The grey box routing paradigm naturally aligns with learning based methods. Reinforcement learning or Bayesian inference techniques could be used to infer link quality, drift regimes, or node behaviour from end-to-end observations, without explicit calibration data.

An important research question is how to combine learning with the strong structural constraints imposed by quantum mechanics, such as no cloning and destructive measurement. Hybrid approaches that integrate analytical models with data-driven adaptation could be promising.

\subsection{Extending the instruction set abstraction}

The instruction set architecture proposed in Chapter~\ref{ch:programmable_repeaters} provides a baseline abstraction for programmable repeaters. Future work needs to extend this abstraction to include richer control primitives, error analysis and its mitigation strategies, or multi-qubit operations as hardware capabilities evolve.

Another direction is the standardisation of such instruction sets across platforms, enabling heterogeneous networks composed of different physical technologies to interoperate at the control plane level.

\subsection{Experimental validation and testbed integration}

While the models and abstractions in this thesis are grounded in experimentally motivated parameters, full validation requires deployment in real or emulated testbeds. Integrating the proposed routing, calibration, and control mechanisms into experimental platforms would provide valuable feedback on model fidelity, control latency, and observability assumptions.

Such experiments would also help identify which aspects of the architecture are most sensitive to noise, drift, and hardware imperfections.

\subsection{Application-driven protocol design}

Finally, future work should consider how specific applications, such as distributed sensing, clock synchronisation, or blind quantum computation, map onto the cross-layer framework developed here. Different applications impose different fidelity, latency, and reliability requirements, which in turn shape routing, calibration, and control decisions.

Understanding these mappings will be essential for transitioning from general-purpose quantum networking research to application-specific deployments.

\section{Closing remarks}
\label{sec:closing_remarks}

The quantum internet will not emerge from isolated advances in hardware, protocols, or theory alone. It will emerge from architectures that reconcile the probabilistic, fragile nature of quantum information with the need for scalable, adaptable, and controllable networks.

This thesis has argued that such reconciliation is possible when routing, calibration, and node design are treated as interdependent components of a single system. By embracing partial knowledge, exposing hardware constraints, and leveraging programmability, quantum networks can be designed to operate robustly even under realistic limitations.

It is hoped that the ideas and models presented here will contribute to the development of quantum networks that are not only theoretically sound but also practically deployable.

\cleardoublepage
\appendix

\chapter*{Appendices}
\addcontentsline{toc}{chapter}{Appendices}

\cleardoublepage
\chapter{Appendix A: Proofs and Derivations}

\section{Proof of Theorem~\ref{theorem.equalactivation}}\label{sec:proof_th_6.2}
\begin{proof}
    From assumption~\ref{assumption.endtoendfidelity} with $\mathcal{U} = 1$ we have
\begin{equation}
    F_{ete} (a_e) = \frac{1}{4} \left[ 1 + 3 \prod_{i=0}^{N-1} \left( \frac{4 \ F_i (a_e) - 1}{3} \right) \right]
\end{equation}
Let the quantum network require an end-to-end fidelity threshold of $F_{ete}^{th}.$ So
\begin{equation}
    F_{ete}^{th} \leq F_{ete} (a_e) = \frac{1}{4} \left[ 1 + 3\prod_{i=0}^{N-1} \left( \frac{4 \ F_i (a_e) - 1}{3} \right) \right]
\end{equation}
\begin{equation}
    F_{ete}^{th} \leq \frac{1}{4}\left[ 1 + 3\prod_{i=0}^{N-1} \left( \frac{4 \ F_i (a_e) - 1}{3} \right) \right]
\end{equation}
\begin{equation}
    \Rightarrow 3^N \left( \frac{4 F_{ete}^{th} -1}{3} \right) \leq \prod_{i=0}^{N-1} \left(4 F_i -1\right)
\end{equation}
\begin{equation}
    \Rightarrow \frac{4^N}{3^N} \left( \frac{3}{4 F_{ete}^{th} -1} \right)  \geq \frac{ 1}{ \prod_{i=0}^{N-1} \left( F_i - \frac{1}{4} \right)}
\end{equation}
multiply by $\left( F_e^M - \frac{1}{4} \right)^N$ on both sides
\begin{equation}
    \Rightarrow \frac{4^N}{3^N} \left( \frac{3}{4 F_{ete}^{th} -1} \right) \left( F_e^M - \frac{1}{4} \right)^N \geq \frac{ \left( F_e^M - \frac{1}{4} \right)^N}{ \prod_{i=0}^{N-1} \left( F_i - \frac{1}{4} \right)}
\end{equation}
Taking logarithms on both sides and using the definition of $L_i = \ln{\left( \frac{ F_e^M - \frac{1}{4}}{F_i - \frac{1}{4}} \right)}$ and $A =  F_e^M - \frac{1}{4}$, we have
\begin{equation}
    \ln{\left[ \frac{4^{N-1}}{3^{N-1}} \frac{\left( F_e^M - \frac{1}{4}\right)^N}{F_{ete}^{th} - \frac{1}{4}} \right] } \geq \sum_{i=0}^{N-1} L_i
\end{equation}
\begin{equation}\label{eq:sumLi}
    \boxed{\sum_{i=0}^{N-1} L_i \leq (N-1) \ln{\left( \frac{4}{3} \right)} + N \ln{A} - \ln{ \left( F_{ete}^{th} - \frac{1}{4} \right)}    }
\end{equation}
In the above equation, the LHS denotes the contribution from all the links with an upper bound given by the RHS. Considering equal contribution from each link, we have 
\begin{equation}
    N L_e \leq (N-1) \ln{\left( \frac{4}{3} \right)} + N \ln{A} - \ln{ \left( F_{ete}^{th} - \frac{1}{4} \right)} 
\end{equation}
The maximum possible value for $L_e$ is for the equality, hence
\begin{equation}\label{eq:omegaL}
    \boxed{\Omega_L = L_e = \left( \frac{N-1}{N} \right) \ln{ \left( \frac{4}{3} \right)} + \ln{(A)}-\frac{1}{N} \ln{\left( F_{ete}^{th} - \frac{1}{4}\right)}}
\end{equation}
\end{proof}

\section{Optimal point at $\gamma+1$ layer of optimization is greater than or equal to at $\gamma$ layer}\label{sec:optimal_point_proof}
\begin{proof} 
From theorem~\ref{theorem.chain}, the recurrence relation between the optimal point is
\begin{equation}
    \Omega_{\pi}^{\gamma+1} = \frac{1}{k} \left[ 
 (j+k) \; \Omega_{\pi}^{\gamma} - \sum_j L_j^{th} \right]
\end{equation}
\begin{equation}\label{eq.recurrence_optimal_reduced}
    \Rightarrow k \; \Omega_{\pi}^{\gamma+1}=(j+k) \;\Omega_{\pi}^{\gamma} - \sum_j L_j^{th}
\end{equation}
Now since $L_j^{th} \leq \Omega_{\pi}^{\gamma}$, let $L_j^{th} = \Omega_{\pi}^{\gamma} - \epsilon_j$ where $\epsilon_j \geq 0$ dictates individual closeness of initial fidelity thresholds of each link to the optimal point of the path $\pi$.

From eq~\ref{eq.recurrence_optimal_reduced} we get,
\begin{equation}
    k \; \Omega_{\pi}^{\gamma+1} = (j+k) \; \Omega_{\pi}^{\gamma} - \sum_j \left( \Omega_{\pi}^{\gamma} - \epsilon_j \right)
\end{equation}
\begin{equation}
    \Rightarrow k \; \Omega_{\pi}^{\gamma+1} = (j+k) \; \Omega_{\pi}^{\gamma} - j \; \Omega_{\pi}^{\gamma} + \sum_j \epsilon_j
\end{equation}
\begin{equation}
    \Rightarrow \Omega_{\pi}^{\gamma+1} = \Omega_{\pi}^{\gamma} + \frac{1}{k} \left( \sum_j \epsilon_j \right)
\end{equation}
\begin{equation}
\boxed{
    \Omega_{\pi}^{\gamma+1} \geq \Omega_{\pi}^{\gamma} }
\end{equation}
\end{proof}

\section{Throughput as a family of hyperboloids}
From lemma~\ref{lemma.throughputN}, the throughput for a linear quantum chain network with heterogeneous initial fidelity generation is given by
    \begin{equation}\label{eq:throughputLgeneral}
        T = C \prod_{i=0}^{N-1} \left( \frac{L_i}{L_i + K} \right)
    \end{equation} where we transform the variables to L-space with $L_i = \ln{ \left( \frac{F_e^M - \frac{1}{4}}{F_i - \frac{1}{4}} \right)}$, $K=\Gamma_e c_e$.
    
Considering the two–link case, we obtain
\begin{equation}
    T = C \left( \frac{L_0}{L_0+K} \right)\left( \frac{L_1}{L_1+K} \right).
\end{equation}
Rearranging this expression, we can write
\begin{equation}\label{eq:hyperbola}
    \boxed{L_0L_1 \Bigl( 1 - \frac{T}{C} \Bigr) - \frac{T K}{C}(L_0+L_1) - \frac{K^2T}{C} = 0.}
\end{equation}
Thus, for fixed $T$, $C$, and $K$, the above equation defines a conic section in the $(L_0,L_1)$–plane. Its quadratic form in $L_0$ and $L_1$ has discriminant
\[
B^2-4AC=\left(1-\frac{T}{C}\right)^2,
\]
so that, letting
\[
\alpha \triangleq 1-\frac{T}{C},
\]
we have $B^2-4AC=\alpha^2$. In our case, with $0<T<1$ and $C=1$, it follows that $\alpha>0$. Hence, the throughput curves for the two–link case form a family of hyperbolas. By extension, the general $N$–link throughput defines a family of hyperboloids in $N$–dimensional space.

\section{Maximum throughput is the shortest distance from the optimal point $\Omega_L$}
\begin{proof} 
From eq~\ref{eq:hyperbola} for the two–link throughput model, we have
\begin{equation}
    L_0L_1 \Bigl( 1 - \frac{T}{C} \Bigr) - \frac{T K}{C}(L_0+L_1) - \frac{K^2T}{C} = 0
\end{equation}
defines a hyperbola in the $(L_0,L_1)$–plane. In the symmetric (or balanced) case we set
\[
L_0=L_1=L,
\]
so that the hyperbola reduces to
\begin{equation}
    L^2\Bigl( 1 - \frac{T}{C} \Bigr) - \frac{2TK}{C} \,L - \frac{K^2T}{C} = 0.
\end{equation}

From eq~\ref{eq:omegaL}, the optimal point for the two-link case is as follows
\begin{equation}
\boxed{\Omega_L = L_e = \frac{1}{2}\ln\left(\frac{4}{3}\right) + \ln(A) - \frac{1}{2}\ln\left(F_{ete}^{th} - \frac{1}{4}\right),}
\end{equation}
Then the corresponding optimal operating point in $(L_0,L_1)$–space is 
\[
P = (\Omega_L,\Omega_L).
\]

The throughput at any symmetric operating point is
\begin{equation}
T = C \left(\frac{L}{L+K}\right)^2.
\end{equation}
In particular, when $L=\Omega_L$, the maximum throughput is
\begin{equation}
T^* = C \left(\frac{\Omega_L}{\Omega_L+K}\right)^2.
\end{equation}

The Euclidean distance between an arbitrary point $(L_0,L_1)$ on the hyperbola and the optimal point $P=(\Omega_L,\Omega_L)$ is
\begin{equation}
d = \sqrt{(L_0-\Omega_L)^2 + (L_1-\Omega_L)^2}.
\end{equation}
In the symmetric case ($L_0=L_1=L$), this becomes
\begin{equation}
d = \sqrt{2}\,|L-\Omega_L|.
\end{equation}
Clearly, the minimum distance is achieved when
\begin{equation}
\boxed{L = \Omega_L}
\end{equation}

\end{proof} 

\cleardoublepage
\phantomsection
\small
\bibliographystyle{unsrt}
\bibliography{thesis}

\end{document}